  \let\oldparagraph\paragraph
  \renewcommand{\paragraph}{
    \@ifstar
      \xxxParagraphStar
      \xxxParagraphNoStar
  }
  \newcommand{\xxxParagraphStar}[1]{\oldparagraph*{#1}\mbox{}}
  \newcommand{\xxxParagraphNoStar}[1]{\oldparagraph{#1}\mbox{}}
  \let\oldsubparagraph\subparagraph
  \renewcommand{\subparagraph}{
    \@ifstar
      \xxxSubParagraphStar
      \xxxSubParagraphNoStar
  }
  \newcommand{\xxxSubParagraphStar}[1]{\oldsubparagraph*{#1}\mbox{}}
  \newcommand{\xxxSubParagraphNoStar}[1]{\oldsubparagraph{#1}\mbox{}}
\patchcmd\longtable{\par}{\if@noskipsec\mbox{}\fi\par}{}{}
\def\maxwidth{\ifdim\Gin@nat@width>\linewidth\linewidth\else\Gin@nat@width\fi}
\def\maxheight{\ifdim\Gin@nat@height>\textheight\textheight\else\Gin@nat@height\fi}
\def\fps@figure{htbp}
  \renewcommand*\contentsname{Table of contents}
  \newcommand\contentsname{Table of contents}
  \renewcommand*\listfigurename{List of Figures}
  \newcommand\listfigurename{List of Figures}
  \renewcommand*\listtablename{List of Tables}
  \newcommand\listtablename{List of Tables}
  \renewcommand*\figurename{Figure}
  \newcommand\figurename{Figure}
  \renewcommand*\tablename{Table}
  \newcommand\tablename{Table}
\newcommand{\anon}{1}
\newtheorem{theorem}{Theorem}
\newtheorem{proposition}[theorem]{Proposition}
\newtheorem{example}{Example}%
\newtheorem{remark}{Remark}%
\newtheorem{definition}{Definition}
\newtheorem{lemma}{Lemma}
\begin{document}

\def\spacingset#1{\renewcommand{\baselinestretch}%
{#1}\small\normalsize} \spacingset{1}


\if1\anon
{
  \title{\bf Compressed Bayesian Tensor Regression}
  \author{Roberto Casarin\\
    Department of Economics, Ca' Foscari University of Venice\\
    and \\
    Radu Craiu \\
    Department of Statistical Science, University of Toronto\\
    and\\
    Qing Wang \\
    Department of Economics, Ca' Foscari University of Venice
    }
  \maketitle
} \fi

\if0\anon
{
  \bigskip
  \bigskip
  \bigskip
  \begin{center}
    {\LARGE\bf Compressed Bayesian Tensor Regression}
\end{center}
  \medskip
} \fi

\bigskip
\begin{abstract}
To address the common problem of high dimensionality in tensor regressions, we introduce a generalized tensor random projection method that embeds high-dimensional tensor-valued covariates into low-dimensional subspaces with little loss of information about the response. The method is flexible, allowing for tensor-wise, mode-wise, or combined random projections as special cases. A Bayesian inference framework is provided, featuring a hierarchical prior distribution and a low-rank parameter representation. Strong theoretical support is provided for the concentration properties of random projections and for the posterior consistency of Bayesian inference. An efficient Gibbs sampler is developed to perform inference on the compressed data. To mitigate the sensitivity introduced by random projections, Bayesian model averaging is employed, with normalizing constants estimated using reverse logistic regression. An extensive simulation study is conducted to examine the effects of different tuning parameters. Simulations indicate, and real-data applications confirm, that compressed Bayesian tensor regressions can achieve better out-of-sample predictions while significantly reducing computational costs compared to standard Bayesian tensor regressions.
\end{abstract}

\noindent%
{\it Keywords:} Bayesian inference, model averaging, posterior consistency, random projection, tensor regression
\vfill

\newpage
\spacingset{1.8} 

\section{Introduction}
Dimensionality reduction has been a key area of interest in learning from high-dimensional data. Traditional dimensionality reduction techniques, e.g., principal component analysis (PCA), linear discriminant analysis, factor analysis, and sufficient dimensionality reduction, suffer from severe computational restrictions that increase exponentially with the dimensions of the data \citep[e.g., see][for a comparison between PCA and random projection]{dasgupta2013experiments}. 

In this paper, we consider random projection techniques that use randomly generated matrices to embed high-dimensional data points into a lower-dimensional space. Under fairly general assumptions, random projection preserves pairwise distances within a certain tolerance, as proved in the celebrated Johnson-Lindenstrauss (JL) lemma \citep{beals_extensions_1984}. Random projection has been successfully applied in statistics to reduce computational costs or to improve the efficiency of a standard method or model when applied to large datasets. For instance, \cite{indyk1998approximate}, \cite{ailon2009fast}, \cite{datar2004locality} used it for the efficient approximation of the nearest-neighbor search; \cite{chakraborty_efficient_2023}, \cite{li_tec_2021},  \cite{cannings2017random} applied it to high-dimensional classification;
\cite{guhaniyogi_bayesian_2015}, \cite{farahmand2017random}, \cite{koop_bayesian_2019} introduced random projection into inference for large dynamic regression models, and \cite{gailliot2024data}, \cite{ guhaniyogi2016compressed} applied random projection in Bayesian non-parametric regression. 

In this paper, we focus on tensor regression models, which have recently become popular across many fields for inference and statistical learning on multi-dimensional data \citep{zhou2013tensor,zhou2014regularized,guhaniyogi2017bayesian, billio_bayesian_2022, Billio02012024,luo2025bayesian,CASARIN2025105427}. We consider Bayesian scalar--on-tensor linear regressions, where dimensionality reduction is essential to reduce the number of parameters to estimate \citep{guhaniyogi2017bayesian,guhaniyogi2020bayesian}. In this sense, tensor decompositions have been used to extract factors from the covariate tensor or to parameterize the coefficient tensor in a hierarchical prior setting. However, when the number of covariates is so large that optimal factor extraction is infeasible, random projection offers a viable, easy-to-implement solution that provides strong theoretical guarantees for preserving the explanatory power of the covariates.

Different random projection strategies for tensors have been studied in the literature. \cite{rakhshan_tensorized_2020} proposed two types of tensorized random projections to map a mode-$d$ tensor to a $q$-dimensional vector using low-rank random projection tensors constructed by Tensor Train \citep{oseledets2011tensor} or canonical polyadic (CP) representations such that each entry in $\mathbb{R}^q$  is computed from the inner product of a distinct random projection tensor and the tensor predictor. \cite{shi_higher-order_2019} proposed a higher-order count sketch (HCS) that reduces the dimension of the original tensor while still preserving the higher-order data structure. 
Similarly, \cite{li_tec_2021} proposed a random projection of a tensor by exploiting its CP representation, where the random projection is performed by randomly projecting each margin from the CP decomposition to a lower dimension.

Random projection results in loss of information, and the theoretical efforts to demonstrate the limited severity of such loss rely on the  JL lemma which asserts that any set of $n$ points in the $d$-dimensional Euclidean space can be embedded into the $k$-dimensional Euclidean space such that all pairwise distances are preserved within an arbitrarily small factor, $\epsilon>0$, for $k=\mathcal{O}(\epsilon^{-2}\log n)$.  
Central to the JL embedding is a $k\times d$ random projection matrix $\boldsymbol{\Phi}$. The original recipe requires $\boldsymbol{\Phi}$ to meet three properties, namely, spherical symmetry, orthogonality, and normality \citep{ailon2009fast}. 
Several variants of JL embeddings have been developed to simplify and sharpen the lemma. \cite{indyk1998approximate} showed that the JL guarantee can still be obtained without enforcing orthogonality and normality.  \cite{achlioptas_database-friendly_2003} not only dropped the spherical symmetry condition, but also proposed a sparse way to construct the random projection matrix. Each entry is independently drawn from a discrete distribution with atoms $-\sqrt{\psi}$, $0$, and $\sqrt{\psi}$ with probability $1/2\psi$, $1-1/\psi$, and $1/2\psi$ where $\psi=1$ or $3$. To encourage sparsity in the random projection matrix and speed up computation, \cite{li_very_2006} used $\psi\gg 3$ (e.g., $\psi=\sqrt{D}$, where $D$ is the number of features, or covariates). \cite{matouvsek2008variants} considered a version of the JL lemma with independent sub-Gaussian projection entries. 

This paper brings several contributions to the existing collection of methods for Bayesian tensor regression via random projections as it: i) extends the HCS method in \cite{shi_higher-order_2019} and the projection technique in \cite{li_tec_2021} to the case of tensor predictors; ii) provides concentration inequalities for the proposed projection by exploiting some properties of the Meijer G function in a significant departure from the existing literature \citep{mathai_h-function_2010,stojanac_products_2018}; iii) integrates the projection into a tensor regression framework; iv) demonstrates posterior consistency for the proposed compressed tensor regression; v) proposes a Monte Carlo sampling procedure for posterior approximation under different prior specifications. 
The paper is organized as follows. Section \ref{sec:model} introduces the compressed tensor regression model as well as the probabilistic bounds for the tensor random projection. Section \ref{sec:mcmc} presents the Gibbs sampler for sampling the posterior density. Section \ref{sec:asy} presents the theoretical properties of posterior consistency for the coefficient,  with all proofs included in the Supplements. Section \ref{sec:numerical} presents the simulation results and a data analysis. The Python code used in this section can be found at \url{https://github.com/qingwang13/CBTR_open.git}. Section \ref{sec:conclu} concludes and discusses future directions of research.

\section{A Compressed Bayesian Tensor Model}\label{sec:model}

\subsection{Preliminaries}\label{sec:prelim}
We introduce some basic notations for tensor and multi-linear algebra, which will be used to establish the Generalized Tensor Random Projection (GTRP). A more comprehensive and detailed introduction to tensors can be found in \cite{kolda_tensor_2009}.

A $N$-order real-valued tensor is a $N$-dimensional array $\mathcal{X}\in\mathbb{R}^{I_1\times \cdots \times I_N}$ with entries $\mathcal{X}_{i_1,\ldots,i_N}, i_n=1,\ldots,I_n$ and $n=1,\ldots,N$. The order is the number of tensor dimensions (also known as modes). We introduce some operators on real tensors. The first one is the $n$-mode product. Let $\mathcal{X}\in \mathbb{R}^{I_1\times \cdots\times I_n \times \cdots \times I_N}$ and $H\in \mathbb{R}^{J\times I_n}$, the $n$-mode product between $\mathcal{X}$ and $H$ is denoted by $\mathcal{Z}=\mathcal{X}\times_n H$ and yields the tensor $\mathcal{Z}\in\mathbb{R}^{I_1\times\cdots\times I_{n-1}\times J\times I_{n+1}\times\cdots\times I_N}$ with entries
\begin{align*}
    \mathcal{Z}_{i_1,\ldots, i_{n-1}, j,i_{n+1}\ldots, i_N} = \sum_{i_n=1}^{I_n}\mathcal{X}_{i_1,\ldots, i_{n},\ldots,i_N}H_{j,i_n}.
\end{align*}
The second operator is the contracted product, which generalizes the $n$-mode product. Let $\mathcal{X}\in \mathbb{R}^{J_1\times\cdots\times J_R\times I_{1}\times \cdots \times I_N}$ and $\mathcal{Y}\in\mathbb{R}^{K_1\times \cdots \times K_M\times I_1\times\cdots\times I_N}$. The contracted product between $\mathcal{X}$ and $\mathcal{Y}$ over the modes $(I_1,\ldots,I_N)$ is denoted by $\mathcal{Z=}\mathcal{X}\times_{\{R+1,\ldots,R+N\}}^{\{K+1,\ldots,K+N\}}\mathcal{Y}$, where the subscript and superscript index sets indicates the modes of $\mathcal{X}$ and $\mathcal{Y}$, respectively, that are contracted. The contracted product yields the tensor $\mathcal{Z}\in\mathbb{R}^{J_1\times\cdots\times J_R\times K_1\times\cdots\times K_M}$ with entries
\begin{align*}
    \mathcal{Z}_{j_1,\ldots, j_{R}, k_1,\ldots, k_M}=\sum_{i_1=1}^{I_1}\cdots\sum_{i_N=1}^{I_N}\mathcal{X}_{j_1,\ldots, j_{R},i_1,\ldots,i_N}\mathcal{Y}_{k_1,\ldots,k_M,i_1,\ldots,i_N}.
\end{align*}
The inner product of two tensors is a special case of the contracted product. The inner product of two tensors requires that both tensors have equal dimensions. Assuming $\mathcal{X,Y}\in\mathbb{R}^{I_1\times\cdots\times I_N}$ the inner product yields a scalar and is given by
\begin{align*}
\left<\mathcal{X,Y}\right>=\sum_{i_1=1}^{I_1}\cdots\sum_{i_N=1}^{I_N}\mathcal{X}_{i_1,\ldots,i_N}\mathcal{Y}_{i_1,\ldots,i_N}.
\end{align*}
\subsection{Tensor random projection}
A compressed Bayesian tensor regression (CBTR) model has the form
\begin{align}
    y_j = \mu + \left< \mathcal{B}, \texttt{GTRP}(\mathcal{X}_j) \right> + \sigma\varepsilon_j, \quad \varepsilon_j \overset{iid}{\sim} \mathcal{N}(0, 1), \label{eq. cbtr}
\end{align}
$j=1, \ldots, n$, where $\mu\in\mathbb{R}$ is the intercept, $\mathcal{B} \in \mathbb{R}^{q_1\times \cdots \times q_M}$ is the coefficient tensor, $\mathcal{X}_j \in \mathbb{R}^{p_1\times \cdots \times p_N}$ is the covariate tensor for the $j$th observation, and $\left<\cdot,\cdot\right>$ is the scalar product for tensors \citep{kolda_tensor_2009}. $\texttt{GTRP}(\mathcal{X}_j)$ denotes the \texttt{Generalized Tensor Random Projection (GTRP)}  operator applied to $\mathcal{X}_j$ defined as 
\begin{align}
    \texttt{GTRP}(\mathcal{X}) \coloneqq \mathcal{X}\times_1 H_1 \times_2 \ldots \times_R H_R \times_{\{R+1,\ldots,N\}}^{\{M-R+1,\ldots,M-R+N-R\}} \mathcal{H},
    \label{eq: gtrp}
\end{align}
where $\mathcal{X}\in \mathbb{R}^{p_1\times \cdots \times p_N}$, $\times_n$ and $\times_{\{\}}^{\{\}}$ denote the $n$-mode and the contracted products defined in Section \ref{sec:prelim}. $H_m \in\mathbb{R}^{q_m\times p_m}$, $m=1, \ldots, R$ and  $\mathcal{H} \in \mathbb{R}^{q_{R+1} \times \cdots \times q_{M} \times p_{R+1}\times \cdots \times p_N}$ 
are the random projection matrices and $(M+N-2R)$-mode random projection tensor, respectively, with $R< M \leq N$. Without loss of generality, we assumed mode-wise projection for the first $R$ modes, since the mode ordering can be chosen by the practitioner.
The \texttt{GTRP} proposed in Eq. \eqref{eq: gtrp} generalizes in two aspects the existing random projections for tensors. First, it extends the projection for 3-mode tensors to tensors with a general number of modes $N$. Secondly, the projection reduces the dimension of the covariate space, allowing fewer covariates within each mode and fewer modes. We define two distinct types of random projection that are used to construct our \texttt{GTRP}. The first type combines covariates of a given mode, while preserving the elements in the other modes. For that given mode, the proposed approach is similar to the classical techniques used in regression models, where new linear combinations of covariates are created to reduce collinearity.
\begin{definition}\label{def1}
    A random projection \texttt{GTRP-MW} is called mode-wise when $\texttt{GTRP-MW}(\mathcal{X}) \coloneqq \mathcal{X}\times_m H_m$ where $\mathcal{X}\in\mathbb{R}^{p_1\times \cdots \times p_N}$ and $H_m\in\mathbb{R}^{q_m\times p_m}$. The mode-wise projection reduces the size of mode $m$ of $\mathcal X$ from $p_m$ to $q_m$, yielding $\texttt{GTRP-MW}(\mathcal{X}) \in \mathbb{R}^{p_1\times\cdots\times q_m\times\cdots\times p_N}$.
\end{definition}
The second type constructs random linear combinations of entries across multiple modes simultaneously through a tensor contraction.
\begin{definition}\label{def2}
    A random projection \texttt{GTRP-TW} is called tensor-wise when $\texttt{GTRP-TW}(\mathcal{X}) \coloneqq\mathcal{X}\times_{\{R+1,\ldots,N\}}^{\{M+1,\ldots,M+N-R\}} \mathcal{H}$ where  $\mathcal{X}\in\mathbb{R}^{p_1\times \cdots\times p_{R}\times p_{R+1}\times\cdots \times p_N}$ and $\mathcal{H}\in\mathbb{R}^{q_1\times\cdots\times q_M\times p_{R+1}\times \cdots \times p_N}$. The tensor-wise projection contracts the modes $R+1,\ldots,N$ of $\mathcal X$ with the last $N-R$ modes of $\mathcal H$, yielding $\texttt{GTRP-TW}(\mathcal{X}) \in \mathbb{R}^{p_1\times\cdots\times p_R\times q_1\times\cdots\times q_M}$. 
\end{definition}

It is apparent that \texttt{GTRP-MW}$(\mathcal{X})$ effectively changes the size of the mode $m$ from $p_m$ to $q_m$, while still keeping the $N$-mode structure of $\mathcal{X}$, whereas \texttt{GTRP-TW}$(\mathcal{X})$ can be used to change the number of modes or the sizes of the modes, or both.  To gain an intuition of the \texttt{GTRP} in \eqref{eq: gtrp}, we consider some  special cases that can serve as reference:
\begin{itemize}
    \item [(a)] If $R=0$, $M=1$, \texttt{GTRP} corresponds to the random projection from $N$th-order tensor to $q_1$ dimensional vector: $\mathbb{R}^{p_1\times\ldots\times p_N} \rightarrow \mathbb{R}^{q_1}$. This setting does not exploit the original multiple-mode data structure and it is equivalent to the random projection in  \cite{achlioptas_database-friendly_2003} with $d = p_1\times \cdots \times p_N$ and $k=q_1$ applied to the vectorized tensor.
    \item[(b)] If $R=0$, $M\geq 1$, only \texttt{GTRP-TW}$(\mathcal{X})_{i_1,\ldots, i_M} = \left<\mathcal{X}, \mathcal{H}_{i_1,\ldots,i_M,:}\right>$ is carried out, which returns an $M$-mode tensor. If $M=N$, the number of modes will be preserved, while only the dimensions along each mode will be reduced. If $M<N$, then not only the dimensions of the tensor will be reduced, but the number of modes will also be reduced from $N$ to $M$.
    \item[(c)] If $R>0$, $N=M=R+1$, only \texttt{GTRP-MW}$(\mathcal{X})$ is carried out, where the dimension along each mode is reduced from $p_m$ to $q_m$, but the number of modes is preserved.
    \item[(d)] If $R \geq 1, M \geq R+1$, the \texttt{GTRP} involves both mode-wise random projection for the first $R$ modes and tensor-wise random projection for the $(R+1)$th to $N$th modes. Mode reduction can be achieved by choosing $M<N$.
\end{itemize}

To illustrate the effect of mode preservation in our general \texttt{GTRP}, the following 2-mode covariate example, with one of the projection matrices being the identity, provides some insights.
\begin{example}
    Considering a mode-wise random projection for a $3\times 2$ matrix $\mathcal{X}$, $f(\mathcal{X}) = \mathcal{X}\times_1 H_1 \times_2 H_2$,  where $H_1$ is a $3 \times 3$ identity matrix, $H_2$ is a $1 \times 2$ random row vector, this will map $\mathcal{X}$ into a $3\times 1$ vector $\boldsymbol{x}$ with the entries, 
    \begin{align*}
        \boldsymbol{x}_{i_1,i_2} = \sum_{j_1=1}^3\sum_{j_2=1}^2 \mathcal{X}_{j_1,j_2}H_{1,i_1,j_1}H_{2,i_2,j_2}=
        \sum_{j_2=1}^2 \mathcal{X}_{i_1,j_2}H_{2,i_2,j_2}.
    \end{align*}
    Since the random projection matrix $H_1$ is the identity matrix, consistent with the definition of \texttt{GTRP-MW}, the random projection will only be performed in the second mode, returning a vector where the component $i_1$-th is a linear combination of the elements of the row $i_1$-th of $\mathcal{X}$.
\end{example}

As shown in the above illustrations, the value of $R$ controls the extent of mode-wise random projection. The choice of using only mode-wise random projection, tensor-wise random projection, or a combination of the two should be evaluated on the basis of specific application requirements, as discussed in the numerical illustration section. In addition, a trade-off between model performance and computational cost may be considered. When dealing with very high--dimensional data with a large number of modes, mode reduction can be performed by selecting $M<N$ to achieve computational feasibility. In contrast, when preserving structural information is deemed necessary, the number of modes can remain unchanged by setting $M=N$, while reducing only the dimension along each mode.

Alternative random projections can be used. For instance, CP, TT, and Kronecker Product (KP) \citep{rakhshan_tensorized_2020,feng2024deep} decompositions can be applied with a given rank $D$ to generate low-rank random projection tensors. 

\begin{example}
    Considering random projections using the CP and TT methods in \cite{rakhshan_tensorized_2020} to map a $p_1\times p_2$ matrix $\mathcal{X}$ into a vector $\boldsymbol{x}$ as follows:
    \begin{equation}
    \texttt{CPRP}(\mathcal{X})_i = \left<\sum_{d=1}^D A^{1}_{i,:,d} \circ A^{2}_{i,:,d}, \mathcal{X}\right>, \quad \texttt{TTRP}(\mathcal{X})_i = \left<\mathcal{G}^1_i \times \mathcal{G}^2_i, \mathcal{X}\right>,
    \end{equation}
    where $A^n_i \in \mathbb{R}^{p_n\times D}$, $n=1,2$ and $\mathcal{G}^1_i \in \mathbb{R}^{1\times p_1\times D} \text{ and } \mathcal{G}^2_i \in \mathbb{R}^{D\times p_2\times 1}$, $i=1,...,q_1$. 
\end{example}


Note that our mode-wise random projection can be thought of as constructing CP random projection tensors with rank $1$ for each embedded entry. More importantly, CP, TT, and KP random projections map an order-$N$ tensor to a vector that collapses all structural information; however, our methods still preserve the tensor structure, which can be valuable for practical applications.

A wide variety of distributions can be used to construct random projection matrices or tensors, provided that the entries are iid with mean zero and finite fourth moment \citep{mukhopadhyay_targeted_2020}. A simple way to generate projections is to assume that the elements of $H_m$ and $\mathcal{H}$ are i.i.d. from a standard normal distribution. \cite{dasgupta_elementary_2003} gives a concise proof of the JL lemma under the assumption of standard Gaussian entries. Nevertheless, the dense projection matrix used in classical random projection is not well-suited for high-dimensional problems. Thus, sparse \citep{achlioptas_database-friendly_2003} and very sparse random projections \citep{kane2014sparser} have been proposed. In more applied literature, the Rademacher distribution is used in \cite{rakhshan2021rademacher}, to encourage sparsity in the constructed random projection matrices/tensors. In this paper, we follow \cite{achlioptas_database-friendly_2003} and \cite{li_very_2006} and assume that the entries are independent random variables having the following discrete distribution
\begin{equation}
r \in \{-\sqrt{\psi}, 0, +\sqrt{\psi}\}, \quad 
\mathbb{P}(r = \pm\sqrt{\psi}) = \frac{1}{2\psi}, \quad 
\mathbb{P}(r = 0) = 1 - \frac{1}{\psi}.
\label{eq:rpd}
\end{equation}
\subsection{Model properties}
In our model, the random projection $\texttt{GTRP}(\mathcal{X})$ projects the covariate tensor $\mathcal{X}_j\in \mathbb{R}^{p_1 \times \cdots \times p_N}$ onto a lower-dimensional subspace, that is, $\texttt{GTRP}(\mathcal{X}_j) \in \mathbb{R}^{ q_1 \times \cdots \times q_M}$, $j=1,\ldots,n$. The following results show that, when projecting, the distances between points in the original sample spaces are preserved by random projection under some suitable conditions. In the following, we define the constants $p(N)=\prod_{m=1}^{N}p_m$, $q(M)=\prod_{m=1}^{M}q_m$ and $c(N,M)=p(N)/q(M)$.

When $R = 0$ and $M=1$, then $\texttt{GTRP}(\mathcal{X}_j)$ randomly projects all tensor entries into a vector space and the following JL concentration inequality holds uniformly in both the number of elements in each mode and in the number of modes.

\begin{proposition}[A JL inequality for tensor-wise random projection]\label{cor: jl}
Let $\mathbb{X}$ be an arbitrary set of $n$ order $N$ tensors in $\mathbb{R}^{p_1 \times \cdots \times p_N}$. Define $\texttt{GTRP-TW}(\mathcal{X})=\mathcal{X}\times_{\{1,\ldots,N\}}^{\{1,\ldots,N\}}\mathcal{H}$ with $\mathcal{H}$ an $N+1$ order random tensor in $\mathbb{R}^{p_1 \times \cdots \times p_N\times q_1}$ with entries from the distribution in \eqref{eq:rpd}, and the multilinear mapping $f(\mathcal{X})=\frac{1}{\sqrt{q(M)}}\texttt{GTRP-TW}(\mathcal{X})$ from $\mathbb{R}^{p_1 \times \cdots \times p_N}$ to $ \mathbb{R}^{q_1}$. Given $\epsilon, \beta >0$, and a positive integer $q_1\geq q_0$ where $q_0 = (4+2\beta)(\epsilon^2/2-\epsilon^3/3)^{-1}\log n$, $f$ satisfies with high probability and for all tensors $\mathcal{U}, \mathcal{V} \in \mathbb{X}$:
\begin{equation*}  
        (1-\epsilon)\lVert\mathcal{U}-\mathcal{V}\rVert^2 \leq \lVert f(\mathcal{U}) - f(\mathcal{V})\rVert^2 \leq (1+\epsilon)\lVert\mathcal{U}-\mathcal{V}\rVert^2.
\end{equation*}
\end{proposition}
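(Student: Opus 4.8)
The plan is to reduce the statement to the classical Johnson--Lindenstrauss lemma for an Achlioptas-type sparse projection, so that the only genuinely new ingredient is a moment generating function (MGF) bound obtained through Meijer $G$-function identities. First I would observe that, when $R=0$ and $M=1$, the contraction $\texttt{GTRP-TW}(\mathcal{X})=\mathcal{X}\times_{1:N}\mathcal{H}_{1:N}$ depends on $\mathcal{X}$ only through its vectorization $w:=\mathrm{vec}(\mathcal{X})\in\mathbb{R}^{p(N)}$: the $i$-th output coordinate equals $\langle w,\boldsymbol{r}_i\rangle$, where the vectors $\boldsymbol{r}_1,\dots,\boldsymbol{r}_{q_1}$, read off from the disjoint slices $\mathcal{H}_{:,\dots,:,i}$ of $\mathcal{H}_{1:N}$, are independent with i.i.d.\ entries from \eqref{eq:rpd}. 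Thus $f$ is exactly the Achlioptas projection $\mathbb{R}^{p(N)}\to\mathbb{R}^{q_1}$ of special case (a), up to the scalar $\sqrt{c(N,M)}$, which I would first check is the value making $f$ norm preserving in expectation, i.e.\ $\mathbb{E}\lVert f(\mathcal{W})\rVert^2=\lVert\mathcal{W}\rVert^2$. By linearity of $f$ it then suffices to fix a pair $\mathcal{U},\mathcal{V}\in\mathbb{X}$, set $\mathcal{W}=\mathcal{U}-\mathcal{V}$ and --- by homogeneity, assuming $\lVert\mathcal{W}\rVert=1$ --- to show $\lVert f(\mathcal{W})\rVert^2\in[1-\epsilon,1+\epsilon]$ with probability at least $1-2\exp\{-q_1(\epsilon^2/2-\epsilon^3/3)/2\}$; a union bound over the at most $n^2/2$ pairs together with $q_1\ge q_0=(4+2\beta)(\epsilon^2/2-\epsilon^3/3)^{-1}\log n$ drives the failure probability below $n^{-\beta}$, which is the claimed ``with high probability''.

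Next I would run the usual Chernoff argument on $S=\sum_{i=1}^{q_1}Y_i$ with $Y_i=\langle w,\boldsymbol{r}_i\rangle^2$ i.i.d., $\mathbb{E}[Y_i]=\lVert w\rVert^2=1$, so that $\lVert f(\mathcal{W})\rVert^2=\kappa S$ with $\kappa$ the normalizing constant and $\mathbb{E}[\kappa S]=1$. The upper tail comes from $\mathbb{P}(S\ge(1+\epsilon)q_1)\le e^{-t(1+\epsilon)q_1}\bigl(\mathbb{E}e^{tY_1}\bigr)^{q_1}$ for $t>0$ and the lower tail from the symmetric bound with $e^{-tY_1}$, so everything hinges on controlling $\mathbb{E}e^{tY_1}=\mathbb{E}\exp\{t\langle w,\boldsymbol{r}_1\rangle^2\}$ uniformly over the unit sphere. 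This is where the Meijer $G$-function machinery enters: I would represent the law of the quadratic form $\langle w,\boldsymbol{r}_1\rangle^2$ --- a squared weighted sum of the three-point variables in \eqref{eq:rpd} --- through (products of) Meijer $G$-functions and use classical contour-integral identities for them to obtain the $\chi^2_1$-type envelope $\mathbb{E}e^{tY_1}\le(1-2t)^{-1/2}$ for $0<t<1/2$, and similarly $\mathbb{E}e^{-tY_1}\le(1+2t)^{-1/2}$. Granting this envelope, the rest is the standard optimization of \cite{dasgupta_elementary_2003}: taking $t=\epsilon/(2(1+\epsilon))$ and $t=\epsilon/(2(1-\epsilon))$ and using $\log(1+x)\le x-x^2/2+x^3/3$ yields the two tail bounds $\exp\{-q_1(\epsilon^2/2-\epsilon^3/3)/2\}$, after which the union bound over pairs and the stated lower bound on $q_1$ finish the argument.

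The hard part will be the uniform MGF bound, precisely because $w$ may be any point of the unit sphere. For ``flat'' $w$ the sum $\langle w,\boldsymbol{r}_1\rangle$ is close to Gaussian and the comparison with $\chi^2_1$ is routine, but for ``spiky'' $w$ --- say $w$ supported on a single coordinate --- $\langle w,\boldsymbol{r}_1\rangle$ is essentially one rescaled atom of \eqref{eq:rpd}, so $e^{tY_1}$ carries a heavy atom and one must still verify that the envelope $(1-2t)^{-1/2}$ is not violated over the admissible range of $t$; the Meijer $G$ representation is what makes this worst case tractable and is the step that departs from \cite{mathai_h-function_2010,stojanac_products_2018}. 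The remaining points --- the independence of the $q_1$ linear functionals and the verification that $\sqrt{c(N,M)}$ is the correct normalization --- are routine once the vectorization in the first step is in place.
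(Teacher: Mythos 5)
Your first step --- vectorize $\mathcal{X}$, observe that for $R=0$, $M=1$ the contraction $\mathcal{X}\times_{1:N}\mathcal{H}_{1:N}$ is exactly the Achlioptas projection $\mathbb{R}^{p(N)}\to\mathbb{R}^{q_1}$ with independent rows read off the slices of $\mathcal{H}$ --- is precisely the paper's entire proof: after this identification the paper simply sets $d=p_1\cdots p_N$, $k=q_1$ and invokes Theorem~1.1 of \cite{achlioptas_database-friendly_2003} as a black box, whose constant is exactly $q_0=(4+2\beta)(\epsilon^2/2-\epsilon^3/3)^{-1}\log n$. Your Chernoff/union-bound skeleton and the choice of exponents are also the standard ones behind that theorem, so the architecture is fine.

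The genuine gap is the step you yourself flag as the hard part: the uniform envelope $\mathbb{E}\exp\{t\langle w,\boldsymbol{r}_1\rangle^2\}\le(1-2t)^{-1/2}$ over the unit sphere, which you propose to obtain ``through Meijer $G$-function identities.'' That tool does not apply here. In this paper the Meijer $G$ machinery (Lemmas in Appendix A.2) computes the MGF of a \emph{product of independent} variables, which is the structure created by the multi-mode products $H_{1,j_1,i_1}\cdots H_{N,j_N,i_N}$ in the mode-wise Theorem~\ref{thm: jl}; the tensor-wise quantity $\langle w,\boldsymbol{r}_1\rangle^2$ is instead a squared \emph{weighted sum} of i.i.d.\ three-point variables from \eqref{eq:rpd}, whose law depends on $w$ and has no product decomposition, hence no tractable $G$-function representation. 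The content actually needed is Achlioptas' moment-domination argument, $\mathbb{E}\langle w,\boldsymbol{r}_1\rangle^{2k}\le\mathbb{E}g^{2k}$ for standard Gaussian $g$, which is a delicate combinatorial worst-case analysis --- tight exactly in the spiky case you mention (for $\psi=3$ the fourth moments coincide, and for $\psi\gg3$ the domination, and with it your envelope, can fail for spiky $w$) --- and your proposal does not supply it nor a substitute. So as written the proof is incomplete at its central estimate; the economical repair is the paper's own route, namely to stop after your vectorization step and cite \cite{achlioptas_database-friendly_2003} directly rather than re-derive its tail bound. (Minor additional point: if the entries of $\mathcal{H}$ have variance one as in \eqref{eq:rpd}, the expectation-preserving scaling is $1/\sqrt{q_1}$, so your proposed check of the factor $\sqrt{c(N,M)}=\sqrt{p(N)/q_1}$ would not come out as stated and needs to be reconciled with the paper's normalization convention.)
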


Similarly, a concentration inequality can be proved when projecting mode-wise, that is, $R=M-1, M=N$. The concentration bound is uniform in the number of elements in each mode, but not in the number of modes.

\begin{theorem}[JL inequality for mode-wise random projection] \label{thm: jl}
    Let $\mathbb{X}$ be an arbitrary set of $n$ order $N$ tensors in $\mathbb{R}^{p_1 \times \cdots \times p_N}$.  Let $\epsilon, \beta > 0$ and set
    \begin{align*}
        q_0 = \frac{4+2\beta}{\frac{\epsilon^2}{3^N-1}-\frac{(3^{N+1}-2)\epsilon^3}{3(3^N-1)^3}} \log n.
    \end{align*}
Assume a sequence of positive integers $q_j$ $j=1,\ldots,N$ satisfy $q(M)=q(N) \geq q_0$ 
with probability at least $1 - n^{-\beta}$.
    
 Define 
 $\texttt{GTRP}(\mathcal{X})=\mathcal{X}\times_1 H_1\times_2 \cdots \times_{N} H_N$, 
 where the entries of $H_m \in \mathbb{R}^{p_m \times q_m}$ for $m = 1, \ldots, N$ are independently distributed following the distribution given in \eqref{eq:rpd}, and the multilinear mapping $f(\mathcal{X})=\frac{1}{\sqrt{q(M)}}\texttt{GTRP}(\mathcal{X})$ from $\mathbb{R}^{p_1 \times \cdots \times p_N}$ to $ \mathbb{R}^{q_1 \times \cdots \times q_N}$. Then for all $\mathcal{U}, \mathcal{V} \in \mathbb{X}$, $f$ satisfies
    \begin{align*}
        (1-\epsilon)\lVert \mathcal{U} - \mathcal{V}\rVert^2 \leq \lVert f(\mathcal{U}) - f(\mathcal{V}) \rVert^2 \leq (1+\epsilon)\lVert \mathcal{U} - \mathcal{V}\rVert^2.
    \end{align*}
\end{theorem}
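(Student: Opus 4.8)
The plan is to adapt the single-tensor Achlioptas argument underlying Proposition \ref{cor: jl} to the multilinear map $f$, absorbing the cost of iterating $N$ mode products. Since $\texttt{GTRP}$ is linear in its tensor argument, $f(\mathcal{U})-f(\mathcal{V})=f(\mathcal{U}-\mathcal{V})$, and by positive homogeneity it suffices to fix a tensor $\mathcal{W}$ with $\lVert\mathcal{W}\rVert=1$ and prove $\mathbb{P}\bigl(|\,\lVert f(\mathcal{W})\rVert^{2}-1\,|>\epsilon\bigr)\le n^{-(2+\beta)}$; a union bound over the $\binom{n}{2}<n^{2}$ difference tensors $\mathcal{U}-\mathcal{V}$ then yields the failure probability $n^{-\beta}$ claimed in Theorem \ref{thm: jl}.

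After vectorising, $\texttt{GTRP}(\mathcal{W})$ is the image of $\mathrm{vec}(\mathcal{W})\in\mathbb{R}^{p(N)}$ under the $q(N)\times p(N)$ matrix whose $(\mathbf{i},\mathbf{j})$ entry is $\prod_{m=1}^{N}(H_{m})_{j_{m}i_{m}}$. A short calculation shows that the product of $N$ independent copies of the law \eqref{eq:rpd} with parameter $\psi$ is again of that form with parameter $\psi^{N}$; hence each such entry has mean zero, unit variance, vanishing odd moments, and fourth moment $\psi^{N}=3^{N}$. This $3^{N}$ plays the role that $\mathbb{E}[r^{4}]=3$ plays in the classical bound and is precisely why the inequality is uniform in the mode sizes $p_{m}$ but degrades with the number of modes $N$. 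The new difficulty, absent in Proposition \ref{cor: jl}, is that these $p(N)q(N)$ entries are \emph{not} independent: two entries sharing an input index $j_{m}$ or an output index $i_{m}$ along some mode share a common factor.

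One first checks that the scaling is correct, $\mathbb{E}\lVert f(\mathcal{W})\rVert^{2}=\lVert\mathcal{W}\rVert^{2}$, from the mean-zero independence of the $H_{m}$ entries and the definition of $c(N,M)$. The core of the proof is a Chernoff bound on $\lVert f(\mathcal{W})\rVert^{2}$: one bounds the Laplace transforms $\mathbb{E}\exp(\pm h\lVert f(\mathcal{W})\rVert^{2})$, $h>0$, by expanding the squares into index sums whose expectations factor mode by mode into mixed moments of \eqref{eq:rpd}. The dependence among entries can be organised either by conditioning successively on $H_{N},\dots,H_{2}$ — given which $\lVert f(\mathcal{W})\rVert^{2}$ is a sum of $q_{1}$ i.i.d.\ quadratic forms in the columns of $H_{1}$ — or by summing the mixed moments directly; in either case one needs the laws, equivalently the Mellin transforms, of sums and products of the three-point variables \eqref{eq:rpd}, and these are expressible through Meijer $G$-functions, which is what makes the Laplace-transform integrals tractable and delivers the tail estimate. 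Optimising $h$ in Markov's inequality turns the bound into $\exp(-q(N)g_{N}(\epsilon))$, where $g_{N}$ equals, up to a factor $2$, the reciprocal of the denominator appearing in $q_{0}$; taking the worse of the two tails and imposing $\exp(-q(N)g_{N}(\epsilon))\le n^{-(2+\beta)}$ returns exactly the condition $q(N)\ge q_{0}$ (the $4+2\beta$ in $q_{0}$ being the $2+\beta$ here combined with the intrinsic factor $\tfrac12$ in $g_{N}$).

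The main obstacle is coping with the dependence among the effective projection entries together with their heavy tails. For $\psi^{N}=3^{N}\ge 9$, i.e.\ $N\ge 2$, the Gaussian-domination shortcut that simplifies Achlioptas's proof for $\psi\in\{1,3\}$ no longer holds, so there is no reduction to a sub-Gaussian $\chi^{2}$ bound and the full Bernstein-type Laplace-transform estimate must be carried through. Pinning down the exact $\epsilon^{3}$ coefficient $(3^{N+1}-2)/\bigl(3(3^{N}-1)^{3}\bigr)$ in the denominator of $q_{0}$ is the delicate bookkeeping, and it is precisely what the Meijer $G$-function identities are there to control.
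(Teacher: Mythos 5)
Your skeleton (linearity to reduce to a single unit-norm tensor, a two-sided Chernoff bound on $\lVert f(\mathcal{W})\rVert^2$, and a union bound over the $\binom{n}{2}$ differences) matches the paper, and your observations that the effective vectorized projection entries are again of the form \eqref{eq:rpd} with parameter $\psi^N$ and that these entries are dependent are both correct. But the core of the proof --- an actual bound on $\mathbb{E}\exp\bigl(\pm h\lVert f(\mathcal{W})\rVert^2\bigr)$ sharp enough to reproduce the stated $q_0$ --- is asserted rather than carried out, and the mechanism you propose for it is not the one that works. You discard Gaussian domination on the grounds that the aggregated entry has parameter $\psi^N\ge 9$; the paper instead keeps the Achlioptas-type domination but applies it factor-by-factor: first a worst-case reduction showing the moments of a projected entry are maximized at the uniform (unit) tensor, then mode-wise moment domination of each normalized column sum by a $\mathcal{N}(0,1/p_m)$ variable, so the dominating law is a product of $N$ independent Gaussians. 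The Meijer $G$-function enters only there, as the exact moment generating function of the square of that Gaussian product (equivalently of a product of independent Gamma$(1/2)$ variables, computed via Laplace transforms of Fox $H$-functions); it is not used for ``sums and products of the three-point variables,'' and it is unclear how your Laplace-transform estimate would be made tractable without that domination step. Moreover, the theorem's constants do not come from the Meijer-$G$ upper-tail bound at all: the displayed $q_0$ arises from the lower tail, where one truncates $\mathbb{E}\exp\{-hQ^2\}\le 1-h\,\mathbb{E}Q^2+\tfrac{h^2}{2}\mathbb{E}Q^4$, uses the fourth-moment bound $\mathbb{E}Q^4\le 3^N/p(N)^2$, optimizes $h$ approximately ($h\approx p(N)\epsilon/(3^N-1)$) and expands; that is where $\epsilon^2/(3^N-1)$ and $(3^{N+1}-2)\epsilon^3/\bigl(3(3^N-1)^3\bigr)$ come from. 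None of this bookkeeping appears in your proposal.

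A second gap is the factorization over the $q(N)$ output entries. You correctly note they are dependent and suggest conditioning on $H_2,\ldots,H_N$, but that only decouples the index $i_1$: entries sharing an output index along any of the modes $2,\ldots,N$ still share columns of those matrices, and you do not explain how to iterate the conditioning or otherwise justify a bound of the form $\mathbb{E}\exp\{h\sum_{\mathbf{i}}Q_{\mathbf{i}}^2\}\le\bigl(\mathbb{E}\exp\{hQ_{1,\ldots,1}^2\}\bigr)^{q(N)}$, which is the step the whole Chernoff argument hinges on (the paper also relies on it, so making it precise is exactly where a complete write-up must do real work). As it stands, the proposal is a plan whose hardest steps --- the domination argument, the tractable MGF, the dependence factorization, and the optimization that produces the exact $\epsilon^2$ and $\epsilon^3$ coefficients in $q_0$ --- are deferred, so it does not yet constitute a proof.
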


Theorem \ref{thm: jl} extends classical JL inequalities from vectors to multi-mode tensors that are projected along each mode. It also provides a theoretical foundation for the use of structured random projection in scalable Bayesian tensor regression. Note that setting $N=1$ in Theorem \ref{thm: jl} yields the JL inequality from Proposition \ref{cor: jl}.

\begin{figure}[]
    \centering
    \vspace*{3pt}
    \includegraphics[width=.45\linewidth]{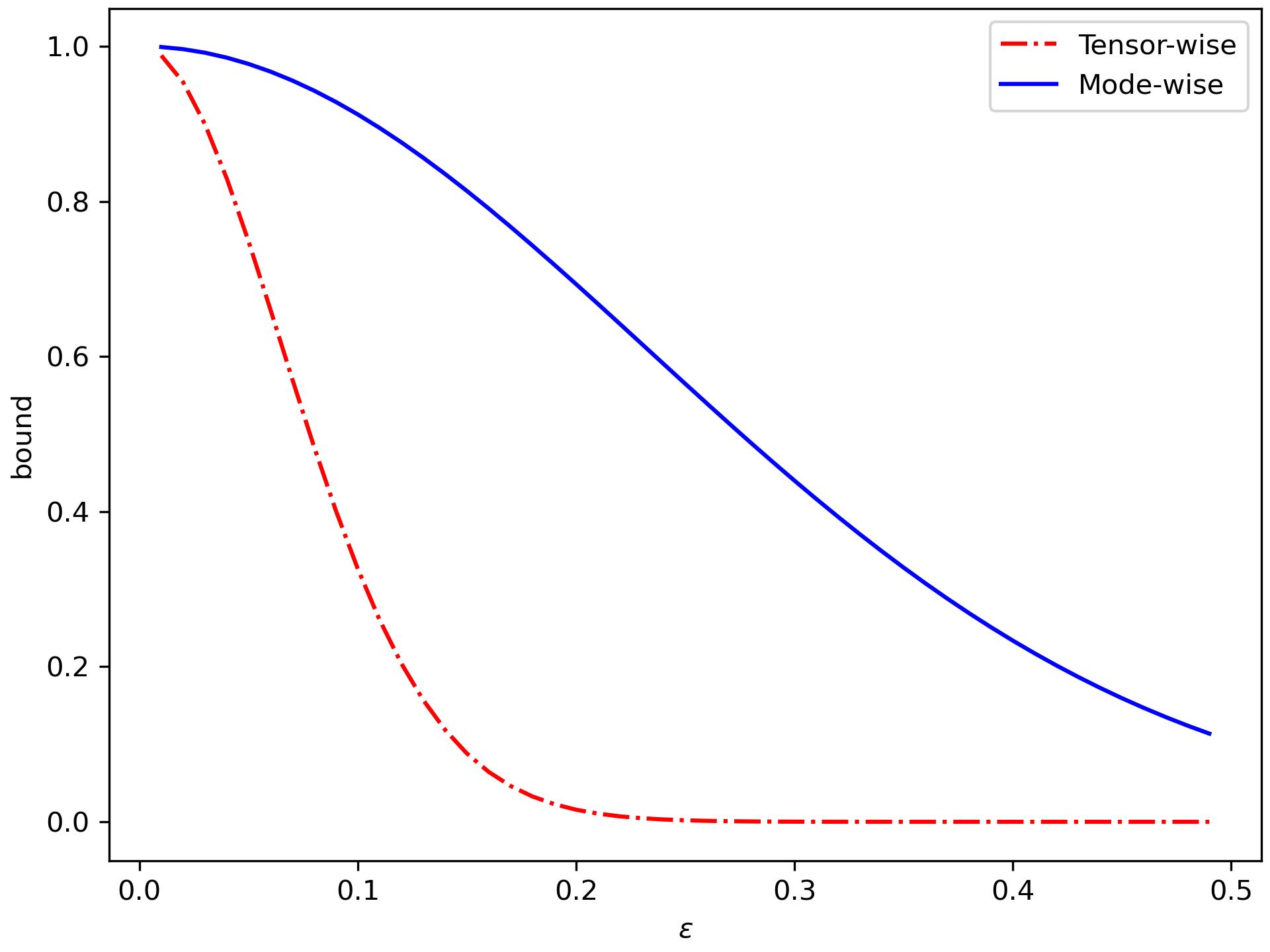}
    \caption{The plot shows the two bounds obtained by tensor-wise random projection according to Proposition \ref{cor: jl} (red curve) and mode-wise random projection (blue curve) according to Theorem \ref{thm: jl}. We considered a $3$-mode $\mathbb{R}^{20\times 60 \times 50}$ tensor (i.e., $N=3$, $p_1=20$, $p_2=60$ and $p_3=50$) projected into a $\mathbb{R}^{480}$ vector and a $\mathbb{R}^{4\times 12 \times 10}$ tensor (i.e., $M=N=3$, $q_1=4$, $q_2=12$ and $q_3=10$) with the mode-wise and tensor-wise projection, respectively, assuming $n=10^4$ data points, and a concentration rate $\beta=0.2$.}
    \label{fig: bounds_compa}
\end{figure}

To get JL-embedding, we need that for each of the ${n \choose 2}$ pairs of $\mathcal{U}, \mathcal{V} \in \mathbb{X}$, the squared norm of $(\mathcal{U}-\mathcal{V})$ is maintained within a factor of $1\pm \epsilon$. If we can show that for some $\beta >0$ and any fixed tensor $\mathcal{A} \in \mathbb{R}^{p_1 \times \cdots \times p_N}$,
\begin{align*}
    \Pr[(1-\epsilon)\lVert\mathcal{A}\rVert^2 \leq \lVert f(\mathcal{A})\rVert^2 \leq (1+\epsilon)\lVert\mathcal{A}\rVert^2] \geq 1-\frac{2}{n^{2+\beta}},
\end{align*}
then the probability of not getting a JL-embedding is bounded by ${n \choose 2} \times \frac{2}{n^{2+\beta}} < \frac{1}{n^{\beta}}$.

A comparison of the bounds obtained from tensor- and mode-wise random projections is shown in Figure \ref{fig: bounds_compa}. The results show the trade-off between maintaining the original data structure, such as some of the tensor modes, and the dimensions of the random subspace. In the case where all modes are preserved, the dimensionality reduction (blue line) is less effective than the case where the original structure vanishes completely (red line). Two main advantages of preserving the covariates' structure are interpretability and reduced computational cost.

The bounds presented above are optimal, as they have been derived following a Chernoff--Cramér procedure. Alternative concentration bounds for our projections can be derived to provide some guarantees on the preservation of distance. For example, based on a general hyper-contractivity result of the Hanson--Wright type for polynomials of Gaussian and Rademacher variables \citep{hanson1971bound,rakhshan_tensorized_2020, kane2014sparser} and bounds on the moments up to the fourth-order, one can show the following bounds.
\begin{theorem}[Alternative bounds using hyper-contractivity] \label{thm: alt_bnd}
    The JL-embedding can be achieved for \texttt{GTRP} with mode-wise random projection if $q(N) \geq q_0$, such that
    \begin{eqnarray}
        q_0 > C \varepsilon^{-2}3^N\left(2+\beta\right)^{2N}\log^{2N}n,
    \end{eqnarray}
    with $C$ an absolute constant.  
\end{theorem}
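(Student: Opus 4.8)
The plan is to follow the standard ``concentration for a single tensor, then union bound'' route, but to obtain the per-tensor tail from a fourth-moment estimate fed into a hyper-contractive inequality, rather than from the Chernoff--Cram\'er argument underlying Theorem~\ref{thm: jl}. As spelled out after Theorem~\ref{thm: jl}, it suffices to show that for some $\beta>0$ and every fixed $\mathcal{A}\in\mathbb{R}^{p_1\times\ldots\times p_N}$,
\begin{align*}
 \Pr\bigl[\,\bigl|\,\lVert f(\mathcal{A})\rVert^2-\lVert\mathcal{A}\rVert^2\,\bigr|>\epsilon\lVert\mathcal{A}\rVert^2\,\bigr]\le \frac{2}{n^{2+\beta}},
\end{align*}
since a union bound over the $\binom{n}{2}$ differences $\mathcal{U}-\mathcal{V}$ then yields a JL-embedding with probability at least $1-n^{-\beta}$. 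By homogeneity of $f$ we may take $\lVert\mathcal{A}\rVert=1$ and set $Z=\lVert f(\mathcal{A})\rVert^2=c(N,M)\lVert\texttt{GTRP}(\mathcal{A})\rVert^2$. Expanding $\texttt{GTRP}(\mathcal{A})_{\mathbf{i}}=\sum_{\mathbf{j}}\mathcal{A}_{\mathbf{j}}\prod_{m=1}^N H_{m,i_m j_m}$ and squaring exhibits $Z$ as a polynomial in the independent entries of $H_1,\ldots,H_N$ of total degree $2N$ and of degree at most $2$ in each individual entry; using only mean zero and unit variance of the law \eqref{eq:rpd}, a direct computation gives $\mathbb{E}[Z]=1$, which is precisely what fixes the scaling $c(N,M)$.

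Next I would bound $\mathrm{Var}(Z)$. Expanding $\mathbb{E}[Z^2]$ yields a sum over quadruples of multi-indices; grouping terms by which indices coincide and using the moments of \eqref{eq:rpd} (odd moments vanish, $\mathbb{E}[r^2]=1$, $\mathbb{E}[r^4]=\psi$, so with the Achlioptas choice $\psi=3$ the fourth moment is $3$), the surviving contributions factorize over the $N$ modes with a per-mode factor bounded by a constant multiple of $3$, while the normalization contributes the factor $1/q(N)=1/\prod_{m=1}^N q_m$. Since $\lVert\mathcal{A}\rVert=1$ this produces a bound of the form $\mathrm{Var}(Z)=\lVert Z-\mathbb{E}Z\rVert_2^2\le C_0\,3^N/q(N)$ for an absolute constant $C_0$ (more sharply, a bound proportional to $(3^N-1)/q(N)$, matching the exponent seen in Theorem~\ref{thm: jl}).

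With $W=Z-\mathbb{E}Z$ recognized as a mean-zero polynomial chaos of degree $2N$ in independent symmetric variables with bounded moments, I would then invoke a Hanson--Wright-type hyper-contractive moment inequality for such polynomials \citep{hanson1971bound,rakhshan_tensorized_2020}: for every real $p\ge 2$,
\begin{align*}
 \lVert W\rVert_p\le (C_1 p)^{N}\,\lVert W\rVert_2 ,
\end{align*}
with $C_1$ absolute (one only checks that \eqref{eq:rpd}, being a bounded symmetric law, satisfies the moment hypotheses of the cited result). Markov applied to $|W|^p$ gives
\begin{align*}
 \Pr[\,|W|>\epsilon\,]\le \Bigl(\tfrac{(C_1 p)^{N}\sqrt{\mathrm{Var}(Z)}}{\epsilon}\Bigr)^{p}.
\end{align*}
Choosing $p\asymp(2+\beta)\log n$ makes the right-hand side at most $2n^{-(2+\beta)}$ as soon as $\sqrt{\mathrm{Var}(Z)}\lesssim\epsilon\,\bigl((2+\beta)\log n\bigr)^{-N}$; combined with $\mathrm{Var}(Z)\le C_0 3^N/q(N)$ this holds whenever $q(N)\ge C\,\epsilon^{-2}3^N(2+\beta)^{2N}\log^{2N}n$, with $C$ absorbing $C_0$, $C_1^{2N}$ and the numerical factors, and feeding this back into the union bound finishes the proof.

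The main obstacle I expect is the variance estimate: extracting the clean $3^N/q(N)$ scaling needs a careful combinatorial accounting of the index-coincidence patterns in $\mathbb{E}[Z^2]$ and a check that the cross-mode contributions genuinely factorize with a per-mode constant governed by the fourth moment of \eqref{eq:rpd}. The hyper-contractivity step is essentially a black box once the chaos degree ($2N$) and the moment conditions are in place, and the Markov optimization is routine. One should not try to optimize constants here: the route is deliberately lossy, which is exactly why the resulting $q_0$ carries $\log^{2N}n$ and $(2+\beta)^{2N}$ instead of the $\log n$ of Theorem~\ref{thm: jl}.
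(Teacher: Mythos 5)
Your proposal is correct and follows essentially the same route as the paper: establish expected isometry, bound $\mathrm{Var}(\lVert f(\mathcal{A})\rVert^2)$ by a quantity of order $3^N/q(N)$ via the fourth-moment/per-mode factorization, and then apply hyper-contractivity for the degree-$2N$ polynomial chaos before tuning parameters and union-bounding over pairs. The only cosmetic difference is that you use the hyper-contractive moment inequality plus Markov with $p\asymp(2+\beta)\log n$, whereas the paper invokes the equivalent concentration-inequality form of \cite{schudy2012concentration} directly.
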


\begin{remark}
For the CP and TT projections, the following bounds on embedding dimensions have been obtained in \cite{rakhshan_tensorized_2020}
\begin{eqnarray}
&& q_0 > C' \varepsilon^{-2}3^{N-1}\left(1+\frac{2}{R}\right)^N\log^{2N}\left(\frac{n^{2+\beta}}{2}\right)\\
&& q_0 > C'' \varepsilon^{-2}\left(1+\frac{2}{R}\right)^N\log^{2N}\left(\frac{n^{2+\beta}}{2}\right),
\end{eqnarray}        
where $R$ denotes the rank of the random projection tensor, and  $C'$ and $C''$ are absolute constants.
\end{remark}
The bounds given above are exponential and apply to general projection tensors even when the entries are not normally distributed. While they provide a Chernoff--like estimate, the bounds are not optimal in the Chernoff--Cramér sense.  We note that they depend on constants that are not easy to compute. The bounds in this section provide a theoretical basis for the methodological developments proposed in this paper. The bounds in Figure \ref{fig: bounds_compa} are derived under general assumptions about the covariate tensor $\mathcal{X}$ and thus are conservative for those cases where $\mathcal{X}$ exhibits a more restrictive structure, such as high sparsity, or sparsity aligned with several coordinates.  The difference in performance between tensor- and mode-wise projections, as suggested by Figure \ref{fig: bounds_compa}, has been confirmed by the numerical experiments in Section \ref{sec:numerical} and depends on the sparsity pattern in the covariate tensor. 

\subsection{Prior distributions}
We consider two alternative prior specifications. In the first one, we assume independent Gaussian and inverse gamma prior distributions. $
    \mathcal{B}\sim \mathcal{TN}_{p_1,\ldots,p_N}(\mathbf{0},\Sigma_1,\ldots,\Sigma_N), \mu\sim\mathcal{N}(0,\sigma^2_{\mu}), \sigma^2\sim\mathcal{IG}(a,b).$

In the second specification, we assume a hierarchical prior structure which builds on, as in \cite{guhaniyogi2017bayesian}, a Parallel Factor (PARAFAC) representation of $\mathcal{B}$ for further dimensionality reduction on tensor coefficients $\mathcal{B} = \sum_{d=1}^D \boldsymbol{\gamma}^{(d)}_1\circ\dots\circ \boldsymbol{\gamma}^{(d)}_N,$
where $\circ$ denotes the \textit{external product} of vectors, and $\boldsymbol{\gamma}^{(d)}_m$ are the margins from PARAFAC decomposition of tensor coefficient $\mathcal{B}$.
At first level, we assume that the margins from the PARAFAC decomposition are independent and follow multivariate normal distributions with  zero  mean vector and scales given by the product of the scalars $\tau$, $\zeta^{(d)}$, and the diagonal matrix $W^{(d)}_{m} = \text{diag}(w^{(d)}_{m,1},\ldots,w^{(d)}_{m,j_m},\ldots,$ $w^{(d)}_{m,q_{m}})$, i.e. 
\begin{align}
    \boldsymbol{\gamma}_{m }^{(d)}\;\sim\;\mathcal{N}_{q_{m}}(\boldsymbol{0},\tau\zeta^{(d)}W_{m }^{(d)}), \quad  m=1,\ldots,M, \quad d=1,\ldots,D.\label{eq.gamma}
\end{align}
This random-scale specification allows for shrinkage at different levels. 

To complete the hierarchical prior, at the second level, we modify the priors from \cite{guhaniyogi2017bayesian} and assume the following prior distributions for the scales.
\begin{align}
&\tau   \sim  \mathcal{IG}(a_\tau , b_\tau),\quad
w_{m,j_{m} }^{(d)}\sim \mathcal{E}xp((\lambda_{m }^{(d)})^2/2)\notag \\
&\lambda_{m }^{(d)}\sim \mathcal{G}a(a_\lambda ,b_\lambda),\quad (\zeta^{(1)},\ldots,\zeta^{(D)})\sim \mathcal{D}ir(\alpha,\ldots,\alpha),\label{eq.zeta.prior}
\end{align}
$m=1,\ldots, M$, $d=1,\ldots, D$ where $\mathcal{IG}(a,b), \mathcal{G}a(a,b)$, $\mathcal{E}xp(\lambda)$ and $\mathcal{D}ir(\nu_1,\ldots,\nu_D)$ denote the Inverse Gamma, Gamma, Exponential and Dirichlet distributions, respectively. The only difference compare to \cite{guhaniyogi2017bayesian} is assuming the prior distribution of global shrinkage parameter $\tau$ is an Inverse Gamma instead of Gamma, largely due to the fact that $\tau$ appears as a variance parameter in the Gaussian prior of $\boldsymbol{\gamma}_{m}^{(d)}$, it is natural to assume $\tau\overset{\text{i.i.d.}}{\sim} \mathcal{IG}(a,b)$ to get a more tractable full conditional distribution in the posterior approximation procedure.

\section{Posterior approximation} \label{sec:mcmc}
\subsection{Gibbs sampling}
The joint posterior distribution $f(\boldsymbol{\gamma}_{m}^{(d)},\zeta^{(d)}, \tau, \lambda_{m}^{(d)}, w_{m}^{(d)}, \sigma^2, \mu \mid \boldsymbol{y}, \texttt{GTRP}(\mathcal{X}))$ is not tractable, so it must be approximated using the Monte Carlo method. We achieve this using a custom-built  Gibbs sampler. Below, we describe the conditional sampling steps required by the algorithm's design. The derivation of the full conditionals can be found in Appendix \ref{app:Gibbs}.1. The sampler cycles between the following steps:

\begin{enumerate}
    \item Draw $\boldsymbol{\gamma}_{m}^{(d)}$ from a multivariate normal distribution $f(\boldsymbol{\gamma}_{m}^{(d)}\mid \boldsymbol{y}, \texttt{GTRP}(\mathcal{X}), \boldsymbol{\gamma}_{-m}, \tau, \boldsymbol{\zeta},$ $ \boldsymbol{w},\mu,\sigma^2)$ for $d\in\{1,\ldots,D\} $ and $ m\in\{1,\ldots,M\}$.
\end{enumerate}

Let us denote the Generalized Inverse Gaussian distributions with $\mathcal{GIG}$. The Gibbs updates for the remaining parameters and hyper--parameters are:

\begin{enumerate}
\setcounter{enumi}{1}
    \item Draw $\zeta^{(d)}$ from the $\mathcal{GIG}$ distribution $f(\zeta^{(d)} \mid  \boldsymbol{\gamma}_{}^{(d)}, \tau, \boldsymbol{w}^{(d)})$.
    \item Draw $\tau$ from the $\mathcal{IG}$ distribution $f(\tau \mid  \boldsymbol{\gamma},\boldsymbol{\zeta}, \boldsymbol{w})$.
    \item Draw $\lambda^{(d)}_{m}$ from a Gamma distribution $f(\lambda^{(d)}_{m} \mid \boldsymbol{\gamma}^{(d)}_{m}, \tau, \zeta^{(d)} )$.
    \item Draw $w^{(d)}_{m,j_m}$ from the $\mathcal{IG}$ distribution $f(w^{(d)}_{m,j_m} \mid \gamma^{(d)}_{m,j_m}, \lambda^{(d)}_{m}, \tau_, \zeta^{(d)})$.
    \item Draw $\sigma^2$ from the $\mathcal{IG}$ distribution $f(\sigma^2\vert \boldsymbol{y}, \texttt{GTRP}(\mathcal{X}), \mu,\boldsymbol{\gamma})$.
    \item Draw $\mu$ from the Gaussian distribution $f(\mu \mid \boldsymbol{y}, \texttt{GTRP}(\mathcal{X}), \boldsymbol{\gamma}, \sigma^2)$.
\end{enumerate}
The full conditional distributions of the Gibbs sampler for the hierarchical Normal-Inverse Gamma prior are given in Appendix \ref{app:Gibbs}.2. Both variants of the Gibbs algorithm involve conditional densities that are available in closed form.

\subsection{Model averaging}
Reliance on a single random projection is a risky approach, since one may not know the optimal type of projection or how far the projection matrix is from an optimal one. Moreover, it is straightforward to parallelise the computation and substantially reduce the time to obtain estimates or predictions from several projections. Many combination methods are available in the literature to aggregate predictions from different models. If the interest is in prediction, then Bayesian predictive stacking has been shown to be advantageous in the setting where the true model is not included in the model space \citep{wolpert1992stacked, yao2018using, yao2021bayesian}.  Since none of the compressed tensor regressions is expected to be the true data-generating model, \cite{gailliot2024data} adapted predictive stacking to data sketching and incorporated strategies to reduce computational load. When model inference is computationally expensive, and the interest is in emphasizing the models with higher marginal posterior probabilities, then Bayesian model averaging \citep[BMA,][]{raftery1997bayesian} is recommended \citep[see, for instance][who aggregated different random projections via BMA]{guhaniyogi_bayesian_2015, guhaniyogi2016compressed}. In this paper, we use Bayesian model averaging to combine different compressed tensor regressions. 
 
Specifically, we generate $L$ different random projections for each compressed tensor regression using entries randomly drawn from the distribution proposed in \eqref{eq:rpd}. Let $\mathcal{M}_{\ell}, \ell=1,\ldots,L,$ represent the model in \eqref{eq. cbtr} with $\texttt{GTRP}^{(\ell)}(\cdot)$ denoting the distinct random projection for $\mathcal{M}_{\ell}$. We further denote $f_{\ell}$ the predictive density for $\mathcal{M}_{\ell}$ and $\boldsymbol{\theta}^{(\ell)} =(\mu^{(\ell)}, \mathcal{B}^{(\ell)},{\sigma^2}^{(\ell)})$ its parameters, $\mathcal{D} = \{(y_j, \texttt{GTRP}(\mathcal{X}_j)), j=1,\ldots,n\}$ the observed data, and we are interested in the predictive density of $y_{n+j'}$ given $\mathcal{X}_{n+j'}$:
\begin{align}
    &f(y_{n+j'} | \texttt{GTRP}^{(\ell)}(\mathcal{X}_{n+j'}), \mathcal{D}) = \sum_{\ell=1}^{L} p_\ell(\mathcal{M}_{\ell} | \mathcal{D})f_\ell(y_{n+j'} | \texttt{GTRP}^{(\ell)}(\mathcal{X}_{n+j'}), \mathcal{D},\mathcal{M}_{\ell}) \label{pred-dist}\\
    &f_\ell(y_{n+j'} | \texttt{GTRP}^{(\ell)}(\mathcal{X}_{n+j'}), \mathcal{D},\mathcal{M}_{\ell})=\int f_\ell(y_{n+j'} | \texttt{GTRP}^{(\ell)}(\mathcal{X}_{n+j'}), \boldsymbol{\theta}^{(\ell)}, \mathcal{M}_{\ell})p_\ell(\boldsymbol{\theta}^{(\ell)} | \mathcal{M}_{\ell}, \mathcal{D})d\boldsymbol{\theta}^{(\ell)}
    \label{mod-pred_dist}
    \end{align}
for $j'=1,\ldots,m$ where $m$ is the size of the validation set. Since the normalizing constant $c_{\ell} = p_\ell(\mathcal{M}_{\ell}\mid \mathcal{D})$ of $p_\ell(\boldsymbol{\theta}^{(\ell)}\mid \mathcal{M}_{\ell}, \mathcal{D})$ is not available in closed form,  we approximate it using reverse logistic regression  \citep{geyer1994estimating}.

To approximate the predictive density in \eqref{pred-dist}, we first evaluate empirically the predictive distribution of $y_{n+j',s}^{(\ell)}$ produced by the $\ell$th random projection. In particular, at the $s$th MCMC step a random draw $y_{n+j',s}^{(\ell)}$ is generated from the posterior predictive distribution
\begin{align}
    y_{n+j',s}^{(\ell)} \mid \texttt{GTRP}^{(\ell)}(\mathcal{X}_{n+j'}), \boldsymbol{\theta}_s^{(\ell)} \sim \mathcal{N}\left(\mu_{s}^{(\ell)}+\left<\texttt{GTRP}^{(\ell)}(\mathcal{X}_{n+j'}), \mathcal{B}_{s}^{(\ell)}\right>, {\sigma_{s}^{2}}^{(\ell)}\right),
\end{align}
where $\boldsymbol{\theta}_{s}^{(\ell)}$, $s=1,\ldots,S$ denote the MCMC draws from the posterior distribution. 

We pool $y_{n+j',s}^{(\ell)}$ across $\ell$ and $s$ to obtain an empirical distribution which approximates the distribution of $y_{n+j'}$.  Let $\tilde{y}_{n+j'}^{(\ell)}$  denotes the approximated prediction given $\mathcal{D}$ and the $\ell$th projection $\texttt{GTRP}^{(\ell)}(\mathcal{X}_{n+j'})$, we approximate the posterior predictive mean with
$
\tilde{y}_{n+j'}=\sum_{\ell=1}^{L}w_{\ell}\tilde{y}_{n+j'}^{(\ell)},\quad \tilde{y}_{n+j'}^{(\ell)}=\frac{1}{S}\sum_{s=1}^{S}y_{n+j',s}^{(\ell)},
$
where $w_{\ell} = c_{\ell}/\sum_{k=1}^L c_{k}$,  for all $\ell=1,\ldots,L$.

To evaluate the quantiles
of the predictive distribution $f(y_{n+j'} \mid \texttt{GTRP}(\mathcal{X}_{n+j'},\mathcal{D})$
define $z_{n+j',s} = \sum_{\ell=1}^L u_{n+j',s}^{(\ell)}y_{n+j',s}^{(\ell)}$, where $(u_{n+j',s}^{(1)}, \ldots,u_{n+j',s}^{(L)}) \sim \text{Multinomial }(1, (w_1, \ldots,w_L))$. Because $
    P(z_{n+j',s} \leq t) =
    \sum_{\ell=1}^{L}P\left(y_{n+j',s}^{(\ell)} \leq t\right)w_{\ell}
$
 we have $f(t\mid \texttt{GTRP}(\mathcal{X}_{n+j'},{\mathcal{D}}) = \sum_{\ell=1}^L w_{\ell}f^{(\ell)}(t\mid \texttt{GTRP}(\mathcal{X}_{n+j'}), \mathcal{D},\mathcal{M}_{\ell})$. So the quantiles for the density $f$ in \eqref{pred-dist} can be evaluated from the sample quantiles of the $L$ predictive distributions defined in \eqref{mod-pred_dist}. 



\section{Posterior Consistency}\label{sec:asy}
Projection of the tensor predictor is justifiable from a computational point of view, but the statistical validity of the resulting inference must be theoretically defensible. We present in this section theoretical results that demonstrate that the predictions generated by the original and compressed tensor predictors and variables can be made arbitrarily close for particular choices of the projection matrix. 

\subsection{Notation and background}
To show the posterior consistency of the model predictions, we consider, without loss of generality, the modewise random projection of the $3$-mode tensors onto $3$-mode tensors with a reduced number of elements along each modes. Let $\mathcal{X}_j \in \mathbb{R}^{p_{1, n}\times p_{2, n}\times p_{3, n}}$ denote the 3-mode tensor predictor for observation $j = 1, \ldots, n$. We assume that there is a true tensor coefficient $\mathcal{B}_0 \in \mathbb{R}^{p_{1, n}\times p_{2, n}\times p_{3, n}}$. Denote by $\texttt{GTRP-M}(\mathcal{X}_j), \; \mathcal{B} \in \mathbb{R}^{q_{1, n}\times q_{2, n}\times q_{3, n}}$ the compressed tensor predictor and coefficient, respectively. Let $p_n = p_{1, n}\times p_{2, n}\times p_{3, n}$ and $q_n = q_{1, n}\times q_{2, n}\times q_{3, n}$ denote the number of predictors for a given sample size $n$ before and after compression, respectively. 
In addition, assume that all the covariates are bounded, which means $\lvert x_{jkl} \rvert < 1$ and $\lim_{n\to\infty} \sum_{j=1}^{p_{1,n}}\sum_{k=1}^{p_{2,n}}\sum_{l=1}^{p_{3,n}} \lvert b_{jkl, 0} \rvert < K$.

Let $f_0$ be the true posterior predictive density given the predictors $\mathcal{X}$, and $f$ be the predictive density given the coefficients $\mathcal{B}$ drawn from its posterior distribution and the predictors $\mathcal{X}$. Let $
\nu_{\mathcal{X}}(d\mathcal{X})$ be the probability measure for $\mathcal{X}$, and $\nu_y(dy)$ be the dominating measure for conditional densities $f$ and $f_0$. We assume that the true relationship between the response $y$ and the predictors $\mathcal{X}$ follows a parametric generalized linear model (GLM) of the form $f(y \mid \mathcal{X}, \mathcal{B}_0) = \exp \{a(h)y + b(h) + c(y)\}$, where $h = \left<\mathcal{X}, \mathcal{B}_0\right>$. In the case of a normal linear regression, with mean $h$ and variance $\sigma^2$, the density is obtained by choosing $a(h) = h/\sigma^2$, $b(h)=-h^2(2\sigma^2)^{-1}-1/2\ln (2\pi\sigma^2)$ and $c(y) = -y^2(2\sigma^2)^{-1}$.

The following measures of closeness are used to show posterior consistency. The Hellinger distance between $f$ and $f_0$ and the Kullback-Leibler divergence of $f$ from $f_0$ are $
    d(f,f_0) = \sqrt{\iint \left(\sqrt{f} - \sqrt{f_0}\right)^2\nu_{\mathcal{X}}(d\mathcal{X})\nu_y(dy)}$ and $ 
    d_{KL}(f,f_0) = \iint f_0 \ln \left(\frac{f_0}{f}\right)\nu_{\mathcal{X}}(d\mathcal{X})\nu_y(dy),$ respectively. In addition, we define $
    d_t(f,f_0) = t^{-1}\left(\iint f_0 \left(\frac{f_0}{f}\right)^t\nu_{\mathcal{X}}(d\mathcal{X})\nu_y(dy) - 1\right), \; \forall t>0. $

\subsection{Posterior results}
We present two important theoretical results on posterior consistency of CBTR using two different priors for the tensor coefficients $\mathcal{B}$: the Gaussian prior and the PARAFAC prior. Let \(\Tilde{\lambda}_n \) and $\underline{\lambda}_n$ be the largest and smallest eigenvalues of the covariance matrices of the Gaussian prior. The following theorems on consistency are proven by verifying that the sufficient conditions $a, b$ and $c$ in Theorem 4 of \cite{jiang_bayesian_2007}  are satisfied. The theoretical results derived in this section rely on the following assumptions for a sequence $\varepsilon_n$ satisfying $0<\varepsilon_n^2<1$ and $n\varepsilon_n^2 \to \infty$:

\medskip

\par \textbf{A.1}
        \(\frac{q_n\log(1/\varepsilon_n^2)}{n\varepsilon_n^2} \to 0, \quad \frac{\log(q_n)}{n\varepsilon_n^2} \to 0, \quad \frac{q_n\log G(\theta_n\sqrt{8\tilde{\lambda}_n n\varepsilon_n^2})}{n\varepsilon_n^2} \to 0 \label{eq: assump_1}
        \).      
\smallskip
        
\par\noindent Where $G(R) = 1 + R\sup_{\lvert h \rvert \leq R}\lvert a'(h)\rvert\sup_{\lvert h \rvert \leq R} \lvert \frac{b'(h)}{a'(h)}\rvert$, $\theta_n = \sqrt{q_np_n}$. Assumption A.1 imposes restrictions on the growth rate of the number of regressors, $q_n$, so that $q_n$ grows sublinearly with the total number of observations. Intuitively, this assumption prevents the projected model from being ``too'' complex.
\smallskip

\par \textbf{A.2}\textit{
        \(
            \tilde{\lambda}_n\leq Bq_n^v, \quad \underline{\lambda}_n \geq B_1\left(\log(q_n)\right)^{-1} 
        \)
        for some positive constants $B$, $B_1$, $v$.} 
\smallskip

\par\noindent Assumption \textbf{A.2} imposes some constraints on the prior covariance matrix of $\mathcal{B}$ by bounding the eigenvalues of the covariance matrix to ensure that the prior is well-defined and does not allow it to be too diffuse or too concentrated. 
\smallskip

\par \textbf{A.3}\textit{
        \(\frac{\log\left(\lVert \texttt{GTRP}(\mathcal{X}) \rVert\right)}{n\varepsilon_n^2} \to 0, \quad \lVert \texttt{GTRP}(\mathcal{X}) \rVert^2 > 8\frac{(K^2 + 1)}{B_1}\frac{\log(q_n)}{n\varepsilon_n^2},\quad 
        \forall \mathcal{X} = \mathcal{X}_1, \ldots, \mathcal{X}_n
        \).}
\smallskip

\par\noindent Assumption \textbf{A.3} ensures that the tensor random projection operation \texttt{GTRP}$(\cdot)$ does not excessively distort the norm of the tensor covariates $\mathcal{X}$, thus preserving the power of the covariates to explain the responses. This assumption is typically satisfied with high probability for carefully designed random projections as described in Proposition \ref{cor: jl} and Theorem \ref{thm: jl}. 
%
\smallskip
\par \textbf{A.4} There exists a pseudo-true compressed coefficient $\mathcal{B}_0^c$ such that $\lvert \left<\mathcal{X}_i, \mathcal{B}_0\right>-\left<\texttt{GTRP}(\mathcal{X}_i), \mathcal{B}_0^c\right>\rvert \leq o(q_{0,n}^{-1})$, where $q_{0,n}$ is the lower bound given by Proposition \ref{cor: jl} and Th.\ref{thm: jl}.

\smallskip

\par \textbf{A.5} \textit{$D(\log (\lVert\texttt{GTRP}(\mathcal{X}_i)\rVert)+\log C_0DM)\sum_{m=1}^M q_{m,n}<Mn\varepsilon_n^2  C$ for some positive constant $C$ and $C_0$.}

\smallskip

\par \textbf{A.6}  \textit{$\varepsilon_n^2 = n^{\delta}$ with $b-1<\delta<0$ where $\sum_{m=1}^M q_{m,n} = \mathcal{O}(n^b)$.}

\par\noindent Assumption \textbf{A.4} and \textbf{A.5} target PARAFAC priors on compressed tensor coefficients $\mathcal{B}$. Assumption \textbf{A.4} controls the complexity of the model by bounding the projection norm $\lVert\texttt{GTRP}(\mathcal{X}_i)\rVert$, the PARAFAC component $D$, and the number of coefficients $D\sum_{m=1}^Mq_{m,n}$. 
Assumption \textbf{A.5} mainly specifies how fast the posterior contracts, at a rate slower than $n^{-1}$, but still converging. It also controls the growth of the projected dimension: the total number of compressed parameters $q_{m,n}$ must grow sublinearly with $n$. Altogether, assumption \textbf{A.5} ensures that the predictive distribution does not overfit as $n$ grows.

\begin{theorem} \label{thm: pos_cons}
Assume that \textbf{A.1}, \textbf{A.2} and \textbf{A.3}  hold. Let $\mathcal{B} \sim \mathcal{TN}(\boldsymbol{0}, \boldsymbol{\Sigma}_{1}, \ldots, \boldsymbol{\Sigma}_{N})$ a priori and \(\Tilde{\lambda}_n \) and $\underline{\lambda}_n$ be the largest and smallest eigenvalues of $\boldsymbol{\Sigma}_{1}, \ldots, \boldsymbol{\Sigma}_{N}$. In addition, suppose all the covariates are bounded, i.e., $\lvert x_{jkl} \rvert < 1$ and $\lim_{n\to\infty} \sum_{j=1}^{p_{1,n}}\sum_{k=1}^{p_{2,n}}\sum_{l=1}^{p_{3,n}} \lvert b_{jkl, 0} \rvert < K$. 

For a sequence $\varepsilon_n$ satisfying $0<\varepsilon_n^2<1$ and $n\varepsilon_n^2 \to \infty$, then
    \begin{align}
        E_{f_0}\pi \left(d(f, f_0)>4\varepsilon_n \mid \{y_j, \mathcal{X}_j\}_{j=1}^n\right) \leq 4e^{-n\varepsilon_n^2/2},
    \end{align}
    where $\pi(\cdot \mid \{y_j, \mathcal{X}_j\}_{j=1}^n)$ is the posterior measure.
\end{theorem}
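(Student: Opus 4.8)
The plan is to verify the three sufficient conditions of Theorem~4 in \cite{jiang_bayesian_2007} for the compressed model, viewed as a generalized linear model with the $q_n$-dimensional predictor $\texttt{GTRP}(\mathcal{X}_j)$ and coefficient $\mathcal{B}\in\mathbb{R}^{q_{1,n}\times q_{2,n}\times q_{3,n}}$: (a) the prior puts mass at least $e^{-n\varepsilon_n^2}$ on a Kullback--Leibler type neighbourhood of $f_0$; (b) there is a sieve $\mathcal{F}_n$ in $\mathcal{B}$-space whose $\varepsilon_n$-covering number in the Hellinger metric is at most $e^{n\varepsilon_n^2}$; (c) the prior mass of $\mathcal{F}_n^{c}$ decays like $e^{-3n\varepsilon_n^2}$. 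Once these hold, Jiang's theorem delivers $E_{f_0}\pi[d(f,f_0)>4\varepsilon_n\mid(y_j,\mathcal{X}_j)_{j=1}^n]\le 4e^{-n\varepsilon_n^2/2}$ verbatim. Two preliminary remarks streamline the checks. First, the covariate bounds $|x_{jkl}|<1$ together with $\sum|b_{jkl,0}|<K$ give $|h_0|=|\langle\mathcal{X}_j,\mathcal{B}_0\rangle|<K$, so the linear predictor stays in a fixed compact set and the GLM Lipschitz lemmas of \cite{jiang_bayesian_2007} --- which bound $d$, $d_{KL}$ and $d_t$ by differences of linear predictors through the modulus $D(\cdot)$ --- apply. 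Second, we argue on the event that $\texttt{GTRP}$ satisfies the distance-preservation guarantee of Theorem~\ref{thm: jl}; this is exactly what makes Assumption~\textbf{A.3} hold, and by the remark following Theorem~\ref{thm: jl} it occurs with probability at least $1-n^{-\beta}$ for the random projections considered.

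\emph{Condition (a).} We first exhibit a compressed ``oracle'' coefficient $\mathcal{B}^{\star}$ whose induced density is Kullback--Leibler-close to $f_0$. The natural candidate is $\mathcal{B}^{\star}=c(N,M)\,\texttt{GTRP}(\mathcal{B}_0)$: applying Theorem~\ref{thm: jl} to the finite family $\{\mathcal{X}_j\pm\mathcal{B}_0\}_{j=1}^{n}$ and polarizing the norm inequality gives $\langle\texttt{GTRP}(\mathcal{X}_j),\mathcal{B}^{\star}\rangle=\langle\mathcal{X}_j,\mathcal{B}_0\rangle+r_{j,n}$, the residual $r_{j,n}$ being controlled by the distortion permitted in Assumption~\textbf{A.3}, whose lower bound on $\|\texttt{GTRP}(\mathcal{X})\|^2$ is what prevents the compression from annihilating the signal. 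The GLM bounds then give $d_{KL}(f_{\mathcal{B}^{\star}},f_0)$ and $d_t(f_{\mathcal{B}^{\star}},f_0)$ of order $\varepsilon_n^2$. It remains to lower-bound the tensor-normal mass of a neighbourhood of $\mathcal{B}^{\star}$: sandwiching the eigenvalues of $\boldsymbol{\Sigma}_1,\ldots,\boldsymbol{\Sigma}_N$ between $\underline{\lambda}_n$ and $\bar{\lambda}_n$ via Assumption~\textbf{A.2}, the Gaussian mass of a ball of radius $\delta$ around $\mathcal{B}^{\star}$ is at least $\exp\{-\|\mathcal{B}^{\star}\|^{2}/(2\underline{\lambda}_n)-c\,q_n\log(\bar{\lambda}_n^{1/2}\,\theta_n/\delta)\}$; choosing $\delta$ of the order dictated by $D(\theta_n\sqrt{8\bar{\lambda}_n n\varepsilon_n^2})$, this is $\ge e^{-n\varepsilon_n^2}$ by precisely the three conditions in Assumption~\textbf{A.1} (the first absorbing the $\log(1/\varepsilon_n^2)$ factor, the second the $\log q_n$ factor, the third the $\log D(\cdot)$ factor) together with \textbf{A.3}.

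\emph{Conditions (b) and (c).} Take the sieve $\mathcal{F}_n=\{\mathcal{B}:\|\mathcal{B}\|\le R_n\}$ with $R_n=\sqrt{8\bar{\lambda}_n n\varepsilon_n^2}$. The map $\mathcal{B}\mapsto f_{\mathcal{B}}$ is Hellinger-Lipschitz with modulus governed by $\max_j\|\texttt{GTRP}(\mathcal{X}_j)\|$ and by $D$ evaluated at $\theta_n R_n$ --- here $\theta_n=\sqrt{q_np_n}$ serves as a uniform bound on $\|\texttt{GTRP}(\mathcal{X}_j)\|$ --- so a volumetric covering of the Euclidean ball $\mathcal{F}_n\subset\mathbb{R}^{q_n}$ yields $\log N(\varepsilon_n,\mathcal{F}_n,d)\le c\,q_n\log\!\big(\theta_n R_n\,D(\theta_n R_n)/\varepsilon_n\big)$, which is $\le n\varepsilon_n^2$ by the third condition in \textbf{A.1} and the requirement $\log\|\texttt{GTRP}(\mathcal{X})\|/(n\varepsilon_n^2)\to0$ from \textbf{A.3}; this is (b). For (c), the tensor-normal tail bound gives $\pi(\mathcal{F}_n^{c})=\pi(\|\mathcal{B}\|>R_n)\le\exp\{-R_n^{2}/(2\bar{\lambda}_n)\}=e^{-4n\varepsilon_n^2}\le e^{-3n\varepsilon_n^2}$, with \textbf{A.2} ($\bar{\lambda}_n\le Bq_n^{v}$) ensuring the bound is non-trivial. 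Feeding (a)--(c) into Theorem~4 of \cite{jiang_bayesian_2007} yields the stated inequality.

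\emph{Main obstacle.} The crux is Condition~(a): the oracle $\mathcal{B}^{\star}$ must simultaneously (i) reproduce all $n$ true linear predictors $h_0(\mathcal{X}_j)$ to within an error that is $o(\varepsilon_n)$ after the GLM Lipschitz step --- which forces one to upgrade the pairwise-distance guarantee of Theorem~\ref{thm: jl} to an inner-product statement via polarization on an enlarged finite set, and to lean on the lower bound in \textbf{A.3} --- and (ii) have Frobenius norm small enough that a prior ball around it still carries mass $\ge e^{-n\varepsilon_n^2}$. Reconciling (i) and (ii) is exactly the tension that Assumptions~\textbf{A.1}--\textbf{A.3} are calibrated to resolve, and keeping the bookkeeping of the two compression scalings $c(N,M)=p_n/q_n$ and $\theta_n=\sqrt{q_np_n}$ consistent across the prior-mass bound, the covering bound, and the tail bound is where most of the care is required; everything else follows the GLM machinery already packaged in \cite{jiang_bayesian_2007}.
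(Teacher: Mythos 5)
Your scaffolding---verifying the entropy, sieve-complement and prior-mass conditions of Theorem 4 in \cite{jiang_bayesian_2007} and reading off the stated bound---is exactly the paper's strategy, and your handling of the entropy and sieve-complement conditions is close in spirit to the paper's (which uses an entrywise $\ell^\infty$ sieve $\lvert b_{jkl}\rvert\le b_n=\sqrt{8\tilde{\lambda}_n n\varepsilon_n^2}$ with a union bound and Mill's ratio rather than a Frobenius ball with a volumetric cover; either route works, modulo the dimension factor you drop in the Gaussian tail bound for $\pi(\lVert\mathcal{B}\rVert>R_n)$).

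The genuine gap is in your prior-mass condition. You construct an oracle $\mathcal{B}^{\star}=c(N,M)\,\texttt{GTRP}(\mathcal{B}_0)$ and invoke polarization of Theorem \ref{thm: jl} to claim $\langle\texttt{GTRP}(\mathcal{X}_j),\mathcal{B}^{\star}\rangle=\langle\mathcal{X}_j,\mathcal{B}_0\rangle+r_{j,n}$ with $r_{j,n}$ ``controlled by A.3''. But the polarization error is of order $\epsilon\,(\lVert\mathcal{X}_j\rVert^2+\lVert\mathcal{B}_0\rVert^2)$, and with $\lvert x_{jkl}\rvert<1$ one only has $\lVert\mathcal{X}_j\rVert\le\sqrt{p_n}$; making $r_{j,n}=O(\varepsilon_n^2)$ (which the mean-value/Lipschitz step demands, since the target is $d_t<\varepsilon_n^2/4$) would force a JL distortion $\epsilon\lesssim\varepsilon_n^2/p_n$, hence an embedding dimension $q_n\gtrsim 3^N p_n^2\varepsilon_n^{-4}\log n$---incompatible with the compression regime and nowhere implied by A.1--A.3 (A.3 only constrains $\log\lVert\texttt{GTRP}(\mathcal{X})\rVert$ and lower-bounds $\lVert\texttt{GTRP}(\mathcal{X})\rVert^2$; it says nothing about inner products with a projected $\mathcal{B}_0$). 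Similarly, your small-ball bound needs $\lVert\mathcal{B}^{\star}\rVert^2/\underline{\lambda}_n\approx (p_n/q_n)K^2\log q_n\lesssim n\varepsilon_n^2$, which the assumptions do not deliver. The paper avoids an oracle entirely: under the tensor-normal prior, $\langle\texttt{GTRP}(\mathcal{X}),\mathcal{B}\rangle$ is a mean-zero Gaussian, so $\lvert\langle\texttt{GTRP}(\mathcal{X}),\mathcal{B}\rangle-\langle\mathcal{X},\mathcal{B}_0\rangle\rvert^2/\mathrm{Var}$ is a noncentral $\chi^2_1$; Proposition \ref{prop: a2} (following Guhaniyogi--Dunson) lower-bounds the probability of it being small by a Skellam probability $P(X-Y\ge 2)$, which, via Bessel-function bounds together with the eigenvalue conditions in A.2 and the lower bound on $\lVert\texttt{GTRP}(\mathcal{X})\rVert^2$ in A.3, is at least $e^{-n\varepsilon_n^2/4}$; the mean-value step then converts this event into $\{d_t(f,f_0)<\varepsilon_n^2/4\}$. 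Without this (or a comparable distributional argument) your verification of the prior-mass condition does not go through under the stated assumptions.
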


\begin{theorem} \label{thm: pos_parafac}
\sloppy
    Assume that \textbf{A.1}, \textbf{A.4} and \textbf{A.5} hold. Let $\boldsymbol{\gamma}_{m }^{(d)}\; \sim\;\mathcal{N}_{p_{m }}(\boldsymbol{0},\tau\zeta^{(d)}W_{m }^{(d)})$ a priori, and further assume that all covariates are bounded, i.e., $\lvert x_{jkl} \rvert < 1$ and $\lim_{n\to\infty} \sum_{j=1}^{p_{1,n}}\sum_{k=1}^{p_{2,n}}\sum_{l=1}^{p_{3,n}} \lvert b_{jkl, 0} \rvert < K$. For a sequence $\varepsilon_n$ satisfying $0<\varepsilon_n^2<1$ and $n\varepsilon_n^2 \to \infty$, 
    then
    \fussy
    \begin{align}
        E_{f_0}\pi \left(d(f, f_0)>4\varepsilon_n \mid \{y_j, \mathcal{X}_j\}_{j=1}^n\right) \leq 4e^{-n\varepsilon_n^2/2},
    \end{align}
    where $\pi(\cdot \mid \{y_j, \mathcal{X}_j\}_{j=1}^n)$ is the posterior measure.
       
\end{theorem}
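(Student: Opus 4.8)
The plan is to verify the three sufficient conditions---call them (a), (b), (c)---from Theorem 4 of \cite{jiang_bayesian_2007}, now applied to the compressed GLM with the PARAFAC prior on $\mathcal{B}$, following the same template used to establish Theorem \ref{thm: pos_cons} but replacing the Gaussian prior analysis with the hierarchical PARAFAC structure. Concretely, the generic Jiang conditions require, for a sieve $\Theta_n$ of parameters and the posterior rate $\varepsilon_n$: (a) a prior-mass (Kullback--Leibler) condition $\pi\{d_{KL}(f,f_0)\le \varepsilon_n^2\}\gtrsim e^{-cn\varepsilon_n^2}$; (b) an entropy/covering-number bound $\log N(\varepsilon_n,\Theta_n,d)\lesssim n\varepsilon_n^2$; and (c) a prior-tail condition $\pi(\Theta_n^c)\lesssim e^{-cn\varepsilon_n^2}$. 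First I would pass from the Hellinger/KL distances on $f$ to distances on the linear predictor $h=\langle \texttt{GTRP}(\mathcal{X}),\mathcal{B}\rangle$, using the GLM structure and the function $D(R)$ exactly as in the proof of Theorem \ref{thm: pos_cons}: boundedness of covariates ($|x_{jkl}|<1$) together with $\|\texttt{GTRP}(\mathcal{X})\|$ controls how a perturbation of $\mathcal{B}$ translates into a perturbation of $h$, hence of $d_{KL}(f,f_0)$ and $d(f,f_0)$. The inputs to control these are $\|\mathcal{B}-\mathcal{B}_0\|$ and, for the PARAFAC parametrization, $\|\boldsymbol{\gamma}_m^{(d)}-\boldsymbol{\gamma}_{m,0}^{(d)}\|$ through the multilinear map $\mathcal{B}=\sum_d \boldsymbol{\gamma}_1^{(d)}\circ\cdots\circ\boldsymbol{\gamma}_N^{(d)}$.

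For the prior-mass condition (a), I would exhibit a PARAFAC decomposition $(\boldsymbol{\gamma}_{m,0}^{(d)})$ of (an approximation to) $\mathcal{B}_0$ and lower-bound the prior probability of a neighborhood $\{\|\boldsymbol{\gamma}_m^{(d)}-\boldsymbol{\gamma}_{m,0}^{(d)}\|\le \rho_n\ \forall m,d\}$. Here the hierarchical structure helps: conditionally on $\tau,\zeta^{(d)},W_m^{(d)}$ the margins are Gaussian, so the neighborhood probability is a product of Gaussian small-ball probabilities whose logarithm is of order $-D\sum_m q_{m,n}\log(1/\rho_n)$ up to scale factors; integrating over the hyperpriors for $\tau$ (Inverse Gamma), $w_{m,j}^{(d)}$ (Exponential), $\lambda_m^{(d)}$ (Gamma), and $\boldsymbol{\zeta}$ (Dirichlet) contributes only polynomial or mild exponential factors, which is where Assumption \textbf{A.4}---bounding $D(\log\|\texttt{GTRP}(\mathcal{X}_i)\|+\log D)\sum_m q_{m,n}$ by $Mn\varepsilon_n^2 C$---is used to absorb all these contributions into $e^{-cn\varepsilon_n^2}$. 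Assumption \textbf{A.5}, giving $\varepsilon_n^2=n^{\delta}$ with $b-1<\delta<0$ and $\sum_m q_{m,n}=\mathcal{O}(n^b)$, is precisely what makes $D\sum_m q_{m,n}=o(n\varepsilon_n^2)$ hold while still $n\varepsilon_n^2\to\infty$, reconciling (a) with (b) and (c).

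For the entropy condition (b) and the tail condition (c), I would take the sieve $\Theta_n$ to be a ball $\{\|\boldsymbol{\gamma}_m^{(d)}\|\le M_n\}$ in the PARAFAC coordinates (plus bounds on $\mu,\sigma^2$) with $M_n$ chosen of the form $\theta_n\sqrt{8\bar\lambda_n n\varepsilon_n^2}$ as in \textbf{A.1}. The covering number of this Euclidean ball in dimension $D\sum_m q_{m,n}$ at resolution $\varepsilon_n$ has log of order $D\sum_m q_{m,n}\log(M_n/\varepsilon_n)$, which by \textbf{A.1} (third clause) and \textbf{A.4} is $o(n\varepsilon_n^2)$; the multilinear map $\mathcal{B}\mapsto h$ is Lipschitz on this ball with constant controlled by $\|\texttt{GTRP}(\mathcal{X})\|$ and $M_n$, so the $d$-covering number of the induced density class inherits the same bound. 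For (c), the Gaussian tails of $\boldsymbol{\gamma}_m^{(d)}$ (conditionally on the scales, which themselves have light tails under the Inverse Gamma/Gamma/Exponential hyperpriors) give $\pi(\Theta_n^c)\le e^{-cn\varepsilon_n^2}$ once $M_n$ grows at the rate in \textbf{A.1}; the Dirichlet prior on $\boldsymbol{\zeta}$ is compactly supported so contributes nothing. Assembling (a), (b), (c) and invoking Jiang's Theorem 4 yields $E_{f_0}\pi[d(f,f_0)>4\varepsilon_n\mid (y_j,\mathcal{X}_j)_{j=1}^n]\le 4e^{-n\varepsilon_n^2/2}$.

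The main obstacle I anticipate is handling the non-identifiability and the multilinear (non-additive) nature of the PARAFAC parametrization when proving the prior-mass condition (a): unlike the plain Gaussian prior on $\mathcal{B}$ in Theorem \ref{thm: pos_cons}, where a small ball around $\mathcal{B}_0$ is directly a Gaussian small-ball event, here one must first fix a \emph{specific} factorization of $\mathcal{B}_0$ (of rank at most $D$), argue that closeness of all margins implies closeness of the reconstructed tensor (a product-rule / telescoping bound whose constants involve the magnitudes $\|\boldsymbol{\gamma}_{m,0}^{(d)}\|$, themselves controlled via the $\ell_1$-bound $\sum|b_{jkl,0}|<K$), and only then lower-bound the Gaussian small-ball probability after integrating out the random scales $\tau\zeta^{(d)}W_m^{(d)}$. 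Keeping the dependence on $D$ and on $\sum_m q_{m,n}$ explicit throughout---so that \textbf{A.4} and \textbf{A.5} can be applied at the end---is the delicate bookkeeping step; everything else is a routine adaptation of the argument already carried out for Theorem \ref{thm: pos_cons}.
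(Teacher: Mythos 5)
Your overall strategy (verify the three sufficient conditions of Theorem 4 in \cite{jiang_bayesian_2007}) matches the paper, and your treatment of the entropy and prior-tail conditions is close in spirit to what is actually done: the paper carries over the entrywise sieve and covering argument from Theorem \ref{thm: pos_cons} essentially unchanged, and handles the tail by pushing the bound $\lvert b_{jkl}\rvert>b_n$ onto the PARAFAC margins, $\pi(\lvert b_{jkl}\rvert>b_n)\le\pi\bigl(\lvert\gamma^{(d)}_{m,j_m}\rvert>(b_n/D)^{1/M}\bigr)$, followed by Mill's ratio with $b_n=D(8\tilde\lambda_n n\varepsilon_n^2)^{M/2}$ --- a minor variant of your light-tails argument.

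The genuine gap is in your prior-mass (Kullback--Leibler) condition. You propose to fix a rank-$D$ factorization $(\boldsymbol{\gamma}^{(d)}_{m,0})$ of (an approximation to) $\mathcal{B}_0$ and lower-bound the prior mass of $\{\lVert\boldsymbol{\gamma}^{(d)}_m-\boldsymbol{\gamma}^{(d)}_{m,0}\rVert\le\rho_n\}$. But $\mathcal{B}_0$ lives in the \emph{uncompressed} space $\mathbb{R}^{p_{1,n}\times p_{2,n}\times p_{3,n}}$, while the PARAFAC prior is placed on the compressed coefficient in $\mathbb{R}^{q_{1,n}\times q_{2,n}\times q_{3,n}}$; the relevant neighborhood is $\{\lvert\langle\texttt{GTRP}(\mathcal{X}_i),\mathcal{B}\rangle-\langle\mathcal{X}_i,\mathcal{B}_0\rangle\rvert<\Delta_n\}$, and there is simply no element of the prior's support at which your small ball can be centered. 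Making your route work would require constructing a compressed rank-$D$ surrogate $\mathcal{B}^\ast$ with $\langle\texttt{GTRP}(\mathcal{X}_i),\mathcal{B}^\ast\rangle\approx\langle\mathcal{X}_i,\mathcal{B}_0\rangle$ uniformly over the sample, with controlled norm and controlled margins --- none of which you construct and none of which follows from \textbf{A.1}, \textbf{A.4}, \textbf{A.5}. The paper avoids this entirely (as it already did for the Gaussian prior via Proposition \ref{prop: a2}): it bounds $\lvert\langle\mathcal{X}_i,\mathcal{B}_0\rangle\rvert\le K$ using $\lvert x_{jkl}\rvert<1$ and the $\ell_1$ bound on $\mathcal{B}_0$, reduces condition (c) by triangle and Cauchy--Schwarz inequalities to a small-ball event for $\lVert\mathcal{B}\rVert$ \emph{around zero}, factors this through $\prod_{d,m}P(\lVert\boldsymbol{\gamma}^{(d)}_m\rVert\le\kappa_n)$ using sub-multiplicativity of the outer-product norm, and then integrates out $w^{(d)}_{m,j_m},\lambda^{(d)}_m,\tau,\zeta^{(d)}$ explicitly (following \cite{guhaniyogi2017bayesian}) before invoking \textbf{A.4}--\textbf{A.5} to show $-\log P\le n\varepsilon_n^2/4$. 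Consequently, the ``main obstacle'' you anticipate --- non-identifiability of the PARAFAC factorization of $\mathcal{B}_0$ and telescoping bounds --- is an artifact of the wrong centering and never arises in the actual proof; the delicate step is instead the explicit hyperprior integration and the bookkeeping showing that the $-q_{m,n}\log\kappa_n$ term (which carries $\log\lVert\texttt{GTRP}(\mathcal{X}_i)\rVert+\log D$, the precise quantities controlled by \textbf{A.4}) is $o(n\varepsilon_n^2)$.
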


In Theorems \ref{thm: pos_cons} and \ref{thm: pos_parafac}, we have assumed that the $\ell_1$ norm of the tensor coefficient $\mathcal{B}$ is bounded by a constant $K$. While this is in agreement with \cite{guhaniyogi_bayesian_2015} and \cite{ mukhopadhyay_targeted_2020}, it would be a natural extension to let $K$ grow slowly with $n$ (e.g., $K_n=o(n)$). In our current setting, this would require an admissible growth rate for $K_n: \frac{K_n^2 \log(q_n)}{n\varepsilon_n^2} \to 0$ which implies $K_n=o(\sqrt{n\varepsilon_n^2/\log(q_n)})$. Under the canonical high–dimensional choice $\varepsilon_n^2 \sim (q_n \log n)/n$, this reduces to $K_n = o\!\left(\sqrt{q_n \log n/\log(q_n)}\right)$, and further to $K_n = o(\sqrt{q_n})$ when $q_n$ grows polynomially in $n$ (e.g., $q_n = n^a$ for $0<a<1$). A complete characterization of the trade–off between the growth of $K_n$ and posterior contraction remains an open problem and presents a promising avenue for future research.

\section{Numerical Illustrations} \label{sec:numerical}
The posterior consistency analysis was established under the broader GLM framework for theoretical generality, while numerical illustrations are presented for Gaussian regression models.  Similar exposition strategies are common in the literature \cite{guhaniyogi_bayesian_2015,mukhopadhyay_targeted_2020} since under standard regularity conditions, non-Gaussian GLM models can be well approximated by Gaussian ones.
\subsection{Simulations}
We performed simulations under different settings for the type of random projection (tensor-wise and mode-wise), covariate tensor dimensions ($20 \times 20$ and $60 \times 60$ mode-2 tensors), and the number of observations (from $500$ to $2000$ at an interval of $500$). In addition, we investigated the sensitivity to compression rate, defined as $r=1/C(N,M)$, where we recall $C(N,M)=p(N)/q(M)$ with  $p(N)=\prod_{m=1}^{N}p_m$, and $q(M)=\prod_{m=1}^{M}q_m$, and different values of the sparsity coefficient $\psi$ used in generating projection matrices (tensors) and the PARAFAC decomposition rank.

The configurations of the tensor coefficient are presented in panel (a) of Figure \ref{fig: sim_types} and are labeled circle ({\bf CI}), cross ({\bf CR}), line ({\bf L}), and block ({\bf B}). The {\bf CI} and {\bf CR} configurations are symmetric along all modes and are sparse with different sparsity levels. The {\bf L} and {\bf B} configurations are asymmetric along at least one mode and represent scenarios in which projections that preserve that mode can improve results. The tensor covariates are drawn independently from the standard normal distribution. The efficiency of Gibbs sampling has been proved computationally on a tensor regression model without projection \citep{CASARIN2025105427}. See Appendix \ref{app:NumRes} for an illustrative example of MCMC output.

For each simulation setting, we performed $L=10$ independent random projections of the same type and combined the results using Bayesian model averaging. This required $2560$ simulations for a given $\psi$. We evaluated the performance of different models using posterior predictive checks. Several quantities are used to evaluate the model fitting. The distance of the actual data from their mean is defined as 
\begin{align}
    d_{j}=(y_j-\bar{y})^2,\qquad j=n+1,\ldots,n+m, \qquad \bar{y}=\frac{1}{m}\sum_{j=n+1}^{n+m}y_j. \label{eq: dist}
\end{align}
The root mean square error across the $L$ independent projections of the same type is defined as
\begin{equation}
    RMSE_{j,n}=\sqrt{\frac{1}{L}\sum_{\ell=1}^{L}(y_j-\tilde{y}_{j,n})^2},\, j=n+1,\ldots,n+m,
\end{equation}
where $\tilde{y}_{j,n}$ is the point prediction obtained for the $j$th out-of-sample item, based on a training sample of size $n$.

\subsubsection{Type of projection}

The top plots in Figure \ref{fig: rmse} show the RMSE (vertical axis) for the different baseline settings, where $20\times 20$ (panel a) and a $60\times 60$ (panel b) true tensor coefficients are used in generating $n=1,500$ i.i.d. samples from the tensor-regression model. In each plot, the RMSEs are reported for each projection method (horizontal axis) and configuration setting (different lines and symbols). The tensor-wise projection (first symbol in the four lines) underperformed the mode-wise projections in our four simulation settings.

The bottom plots in Figure \ref{fig: rmse} show the RMSE (vertical axis) for different training sample sizes (horizontal axis) for the simulation setting {\bf CR} with different types of random projections (different lines). Note that the larger dots in each line indicate the RMSEs reported in the blue line of panel (a). There is a clear downward-sloping trend as the training sample size increases across all random projection types, with mode-wise projections outperforming the tensor-wise. Among the mode-wise projections, the one preserving the second mode performs best (dotted line).

We investigate the features of the different projection methods by comparing the actual values $y_{n+j}$ in the test set with their predicted values $\tilde{y}_{n+j}^{(\ell)}$ (scatter plots in Figure \ref{fig: sim_types}).
Column 1 of panel (b) has been obtained using tensor-wise random projection (\texttt{GTRP-TW}) and Bayesian tensor regression on a training sample of $n=1,000$ observations and a test sample of $m=500$ observations. Compression rate $r=0.36$ and sparsity coefficient $\psi=3$ are used to generate the random projection tensors. Each plot reports the true (horizontal axis) and the predicted response variable (vertical axis). The estimation and prediction exercise has been performed using $L=10$ independent projections of the same type (different colored dots) and different data-generating settings (different rows). 

\begin{figure}[ht!]
    \centering
     \begin{tabular}{cc}
     
     \small (a) $20\times 20$ coefficient tensor, all settings & \small(b) $60\times 60$ coefficient tensor, all settings \\
    \includegraphics[trim={.5cm .5cm 0 .52cm}, clip, width=0.35\linewidth]{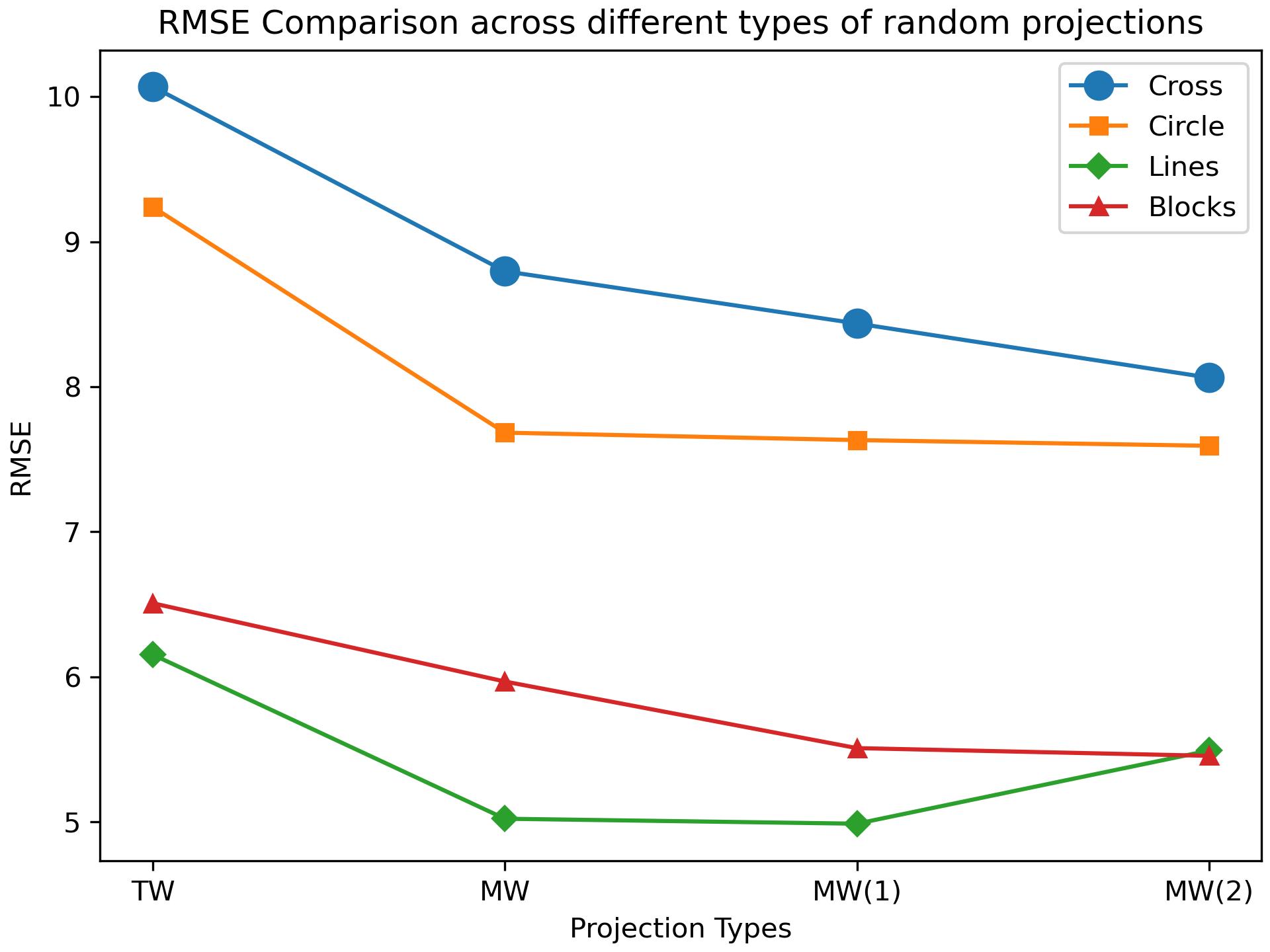}& \includegraphics[trim={.5cm .5cm 0 .52cm}, clip, width=.35\linewidth]{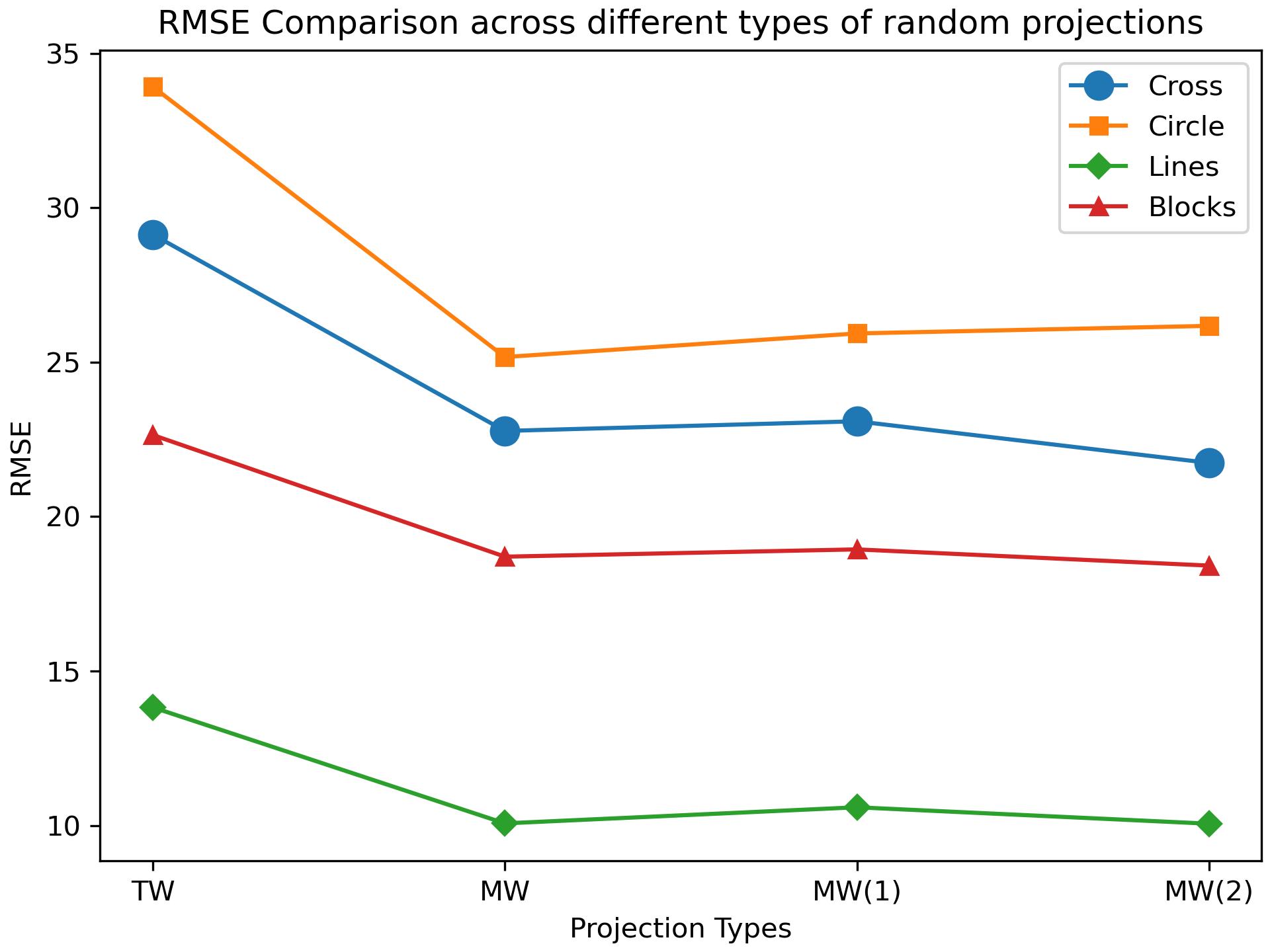}\\
     \small (c) $20\times 20$ coefficient tensor, {\bf CR} setting & \small(d) $60\times 60$ coefficient tensor, {\bf CR} setting \\
    \includegraphics[trim={0 .5cm 0 .52cm}, clip, width=0.35\linewidth]{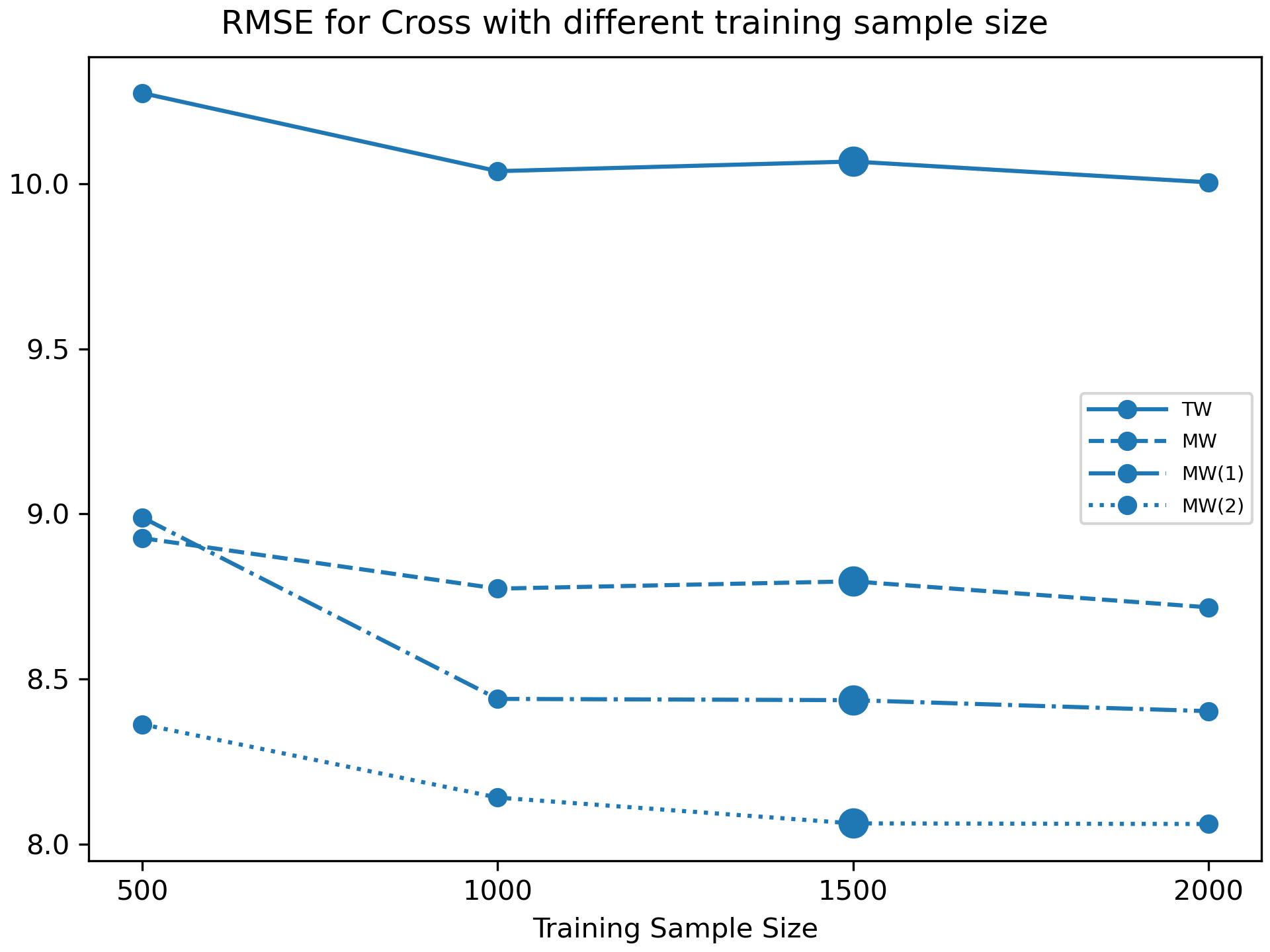}&\includegraphics[trim={0 .5cm 0 .52cm}, clip, width=.35\linewidth]{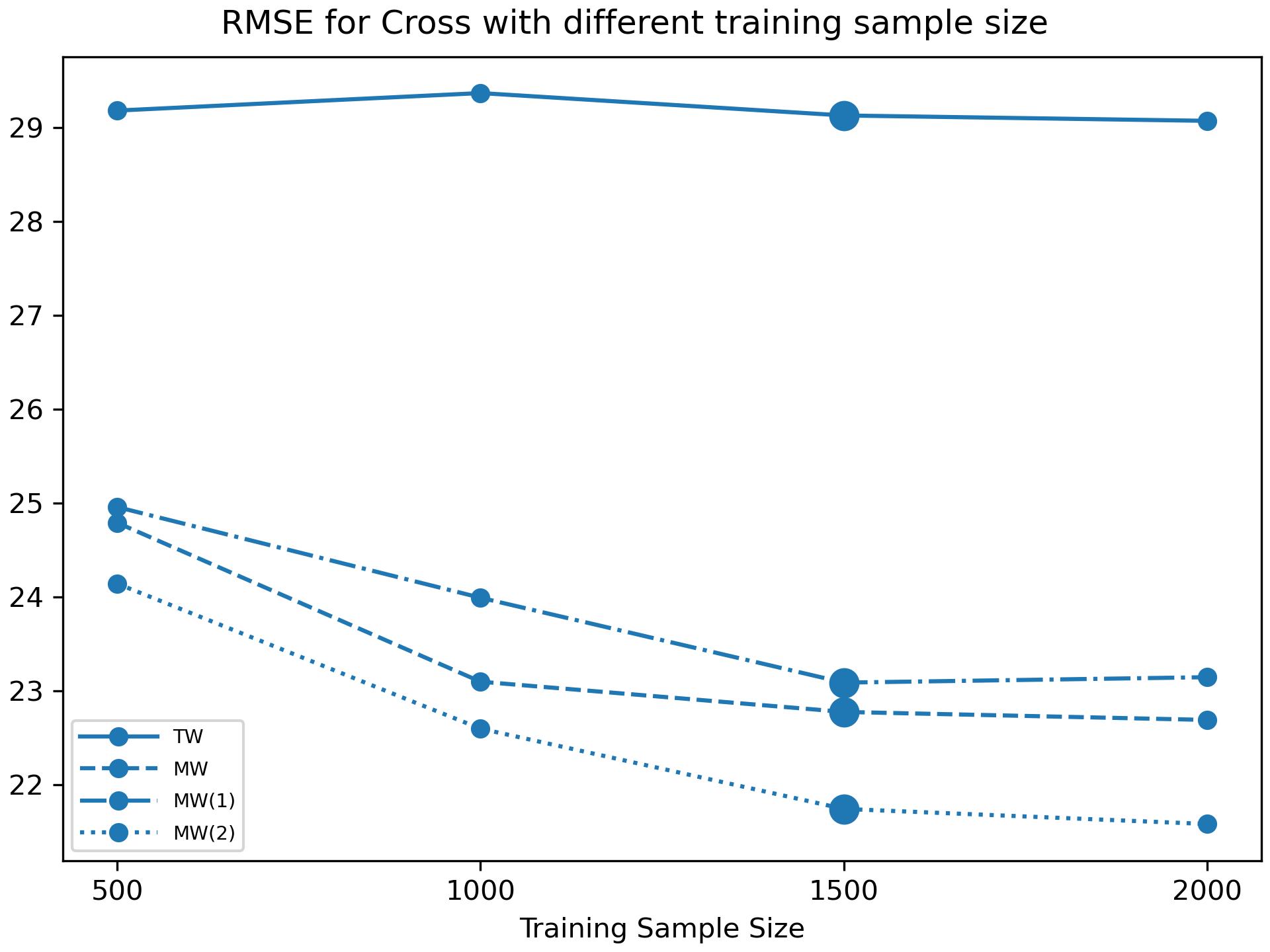} \\  \end{tabular}
    \caption{RMSE comparison across types of random projections (TW: tensor-wise, MW: mode-wise, MW(1): mode-wise preserving the first mode, and MW(2): mode-wise preserving the second mode), settings (blue: {\bf CR}, orange: {\bf CI}, green: {\bf L} and red: {\bf B}), and dimensions ((a):  $20 \times 20$ and (b): $60 \times 60$). The top panels show the RMSEs (vertical axis) obtained for a training sample of size 1500 for different projection types (horizontal axis) in different settings (colors and symbols).  The bottom panels show the RMSEs (vertical axis) for different training sample sizes (horizontal axis) and different projection types (line types). Each estimate is obtained via BMA over $L=10$ independent projection matrices of the same type and $500$ data points from the validation set. The larger dots in plots (c) and (d) indicate the RMSEs reported in the blue line of plots (a) and (b), respectively.}
    \label{fig: rmse}
\end{figure}

\begin{figure}[t]
    \centering
    \begin{tabular}{ccccc}
    \scriptsize (a) True Coefficient & \multicolumn{4}{c}{\scriptsize (b) Forecast Performance}\\
    &\scriptsize TW & \scriptsize MW & \scriptsize MW$(1)$ & \scriptsize MW $(2)$ \\
        \raisebox{2mm}{\includegraphics[width=.19\linewidth]{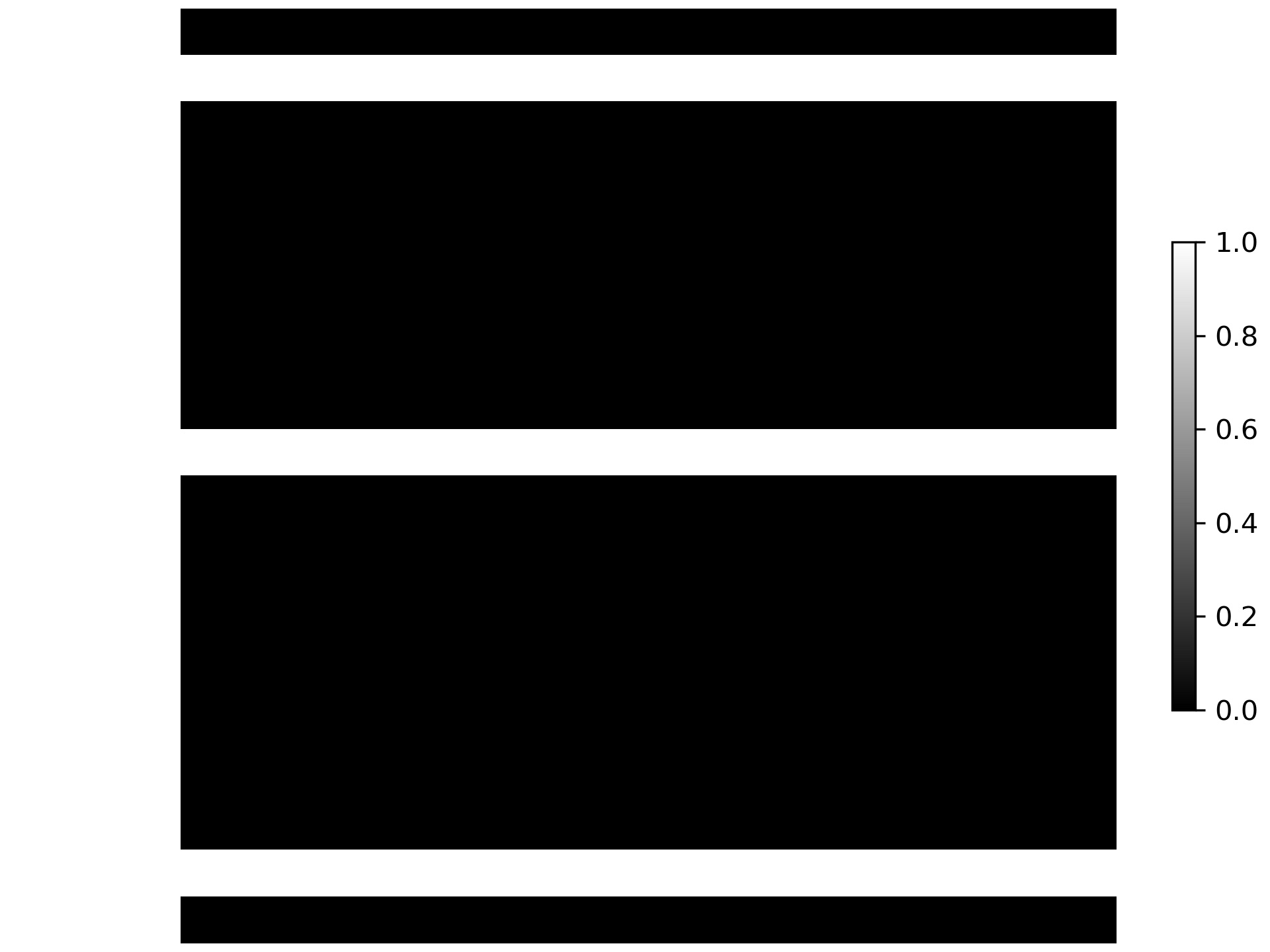}} & \includegraphics[width=.16\linewidth]{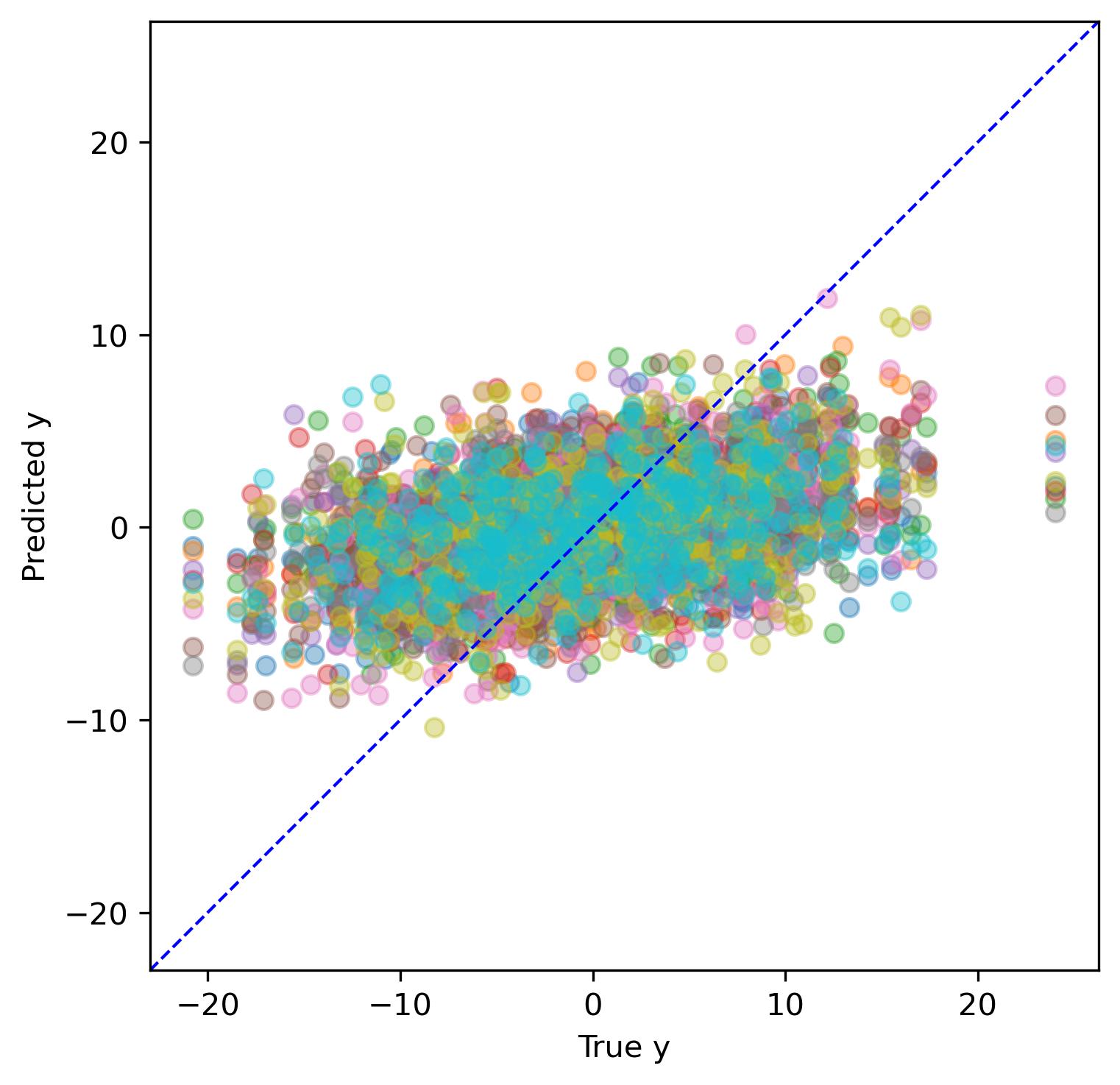} & \includegraphics[width=.16\linewidth]{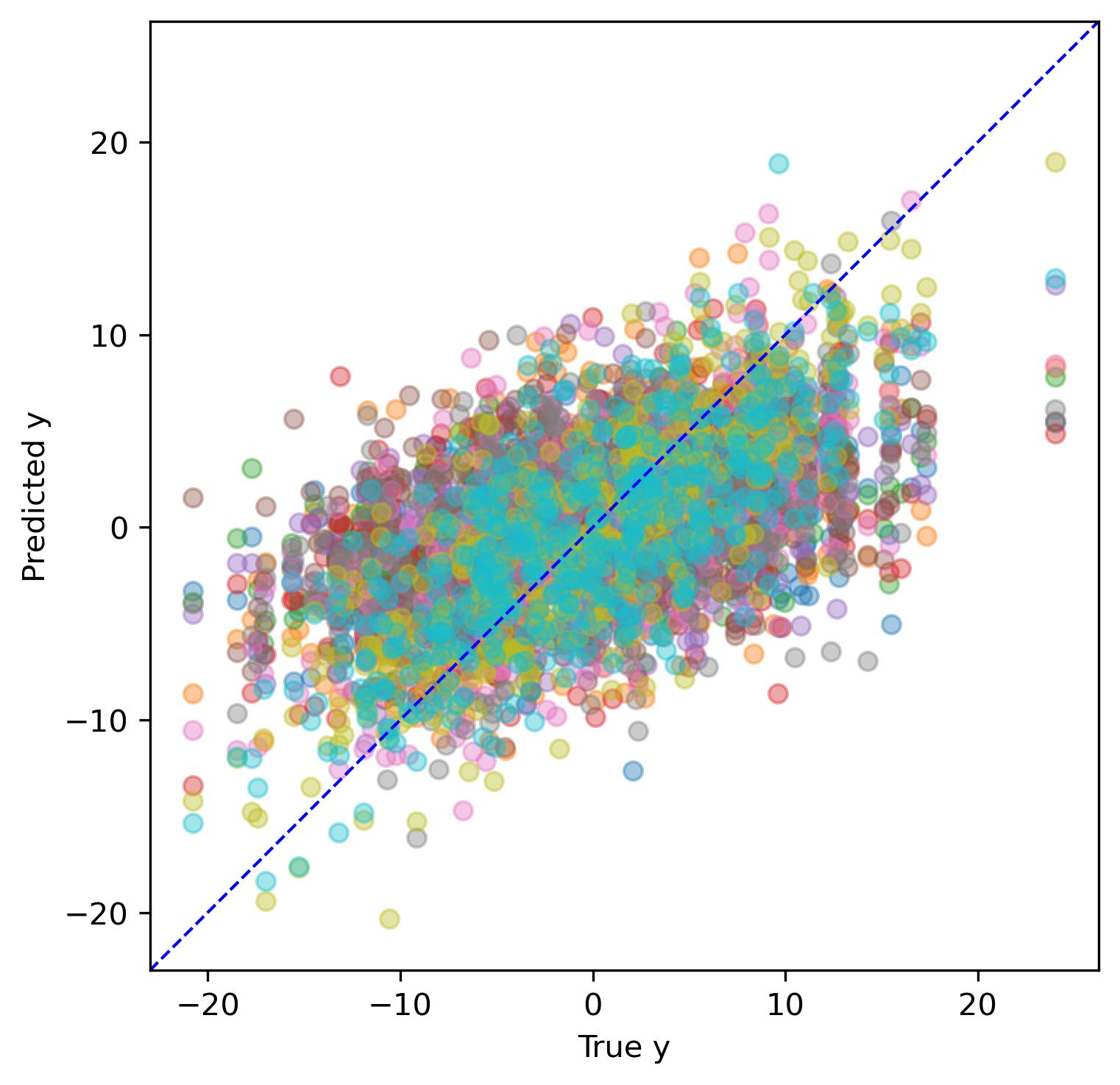} & \includegraphics[width=.16\linewidth]{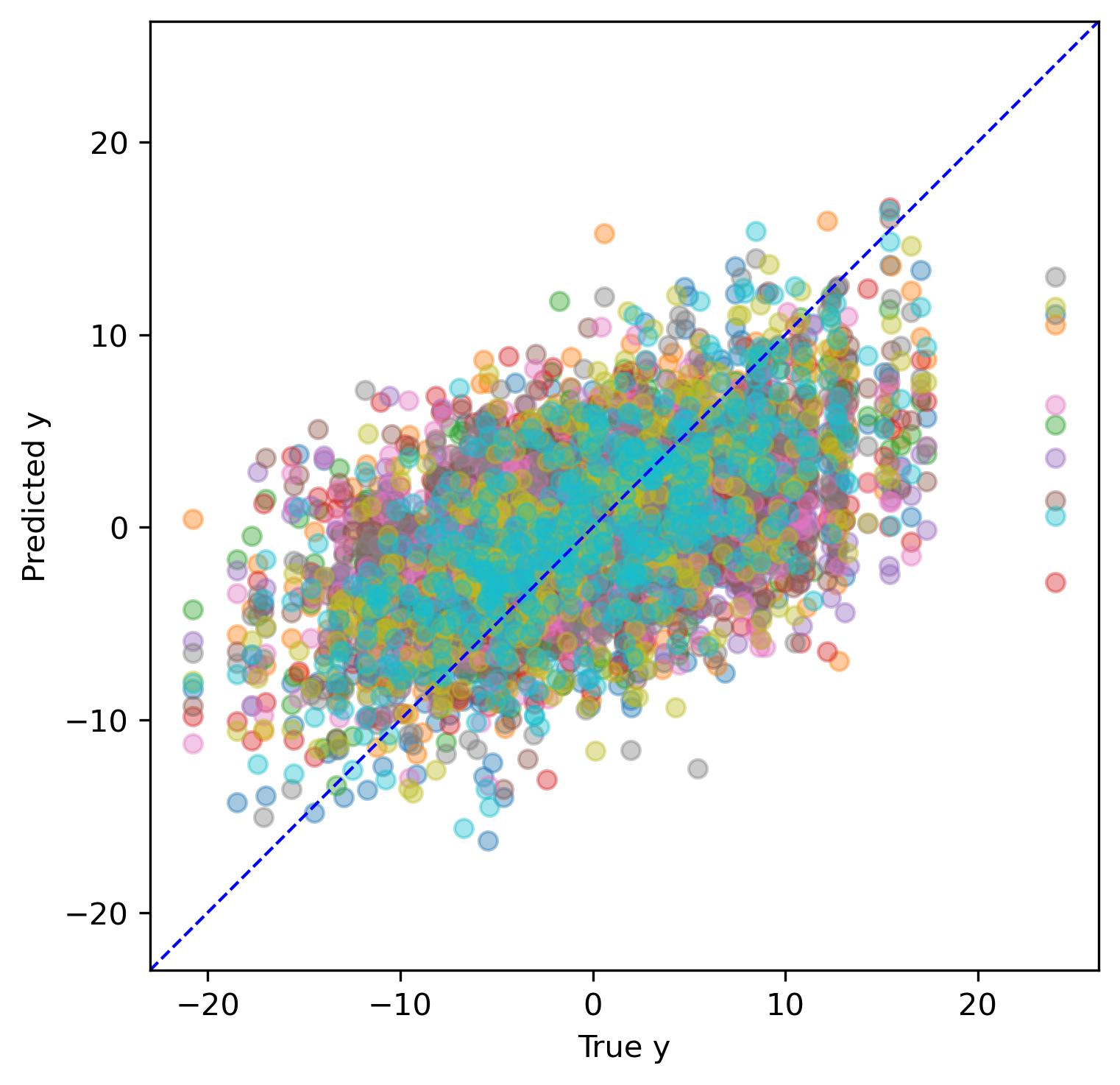} & \includegraphics[width=.16\linewidth]{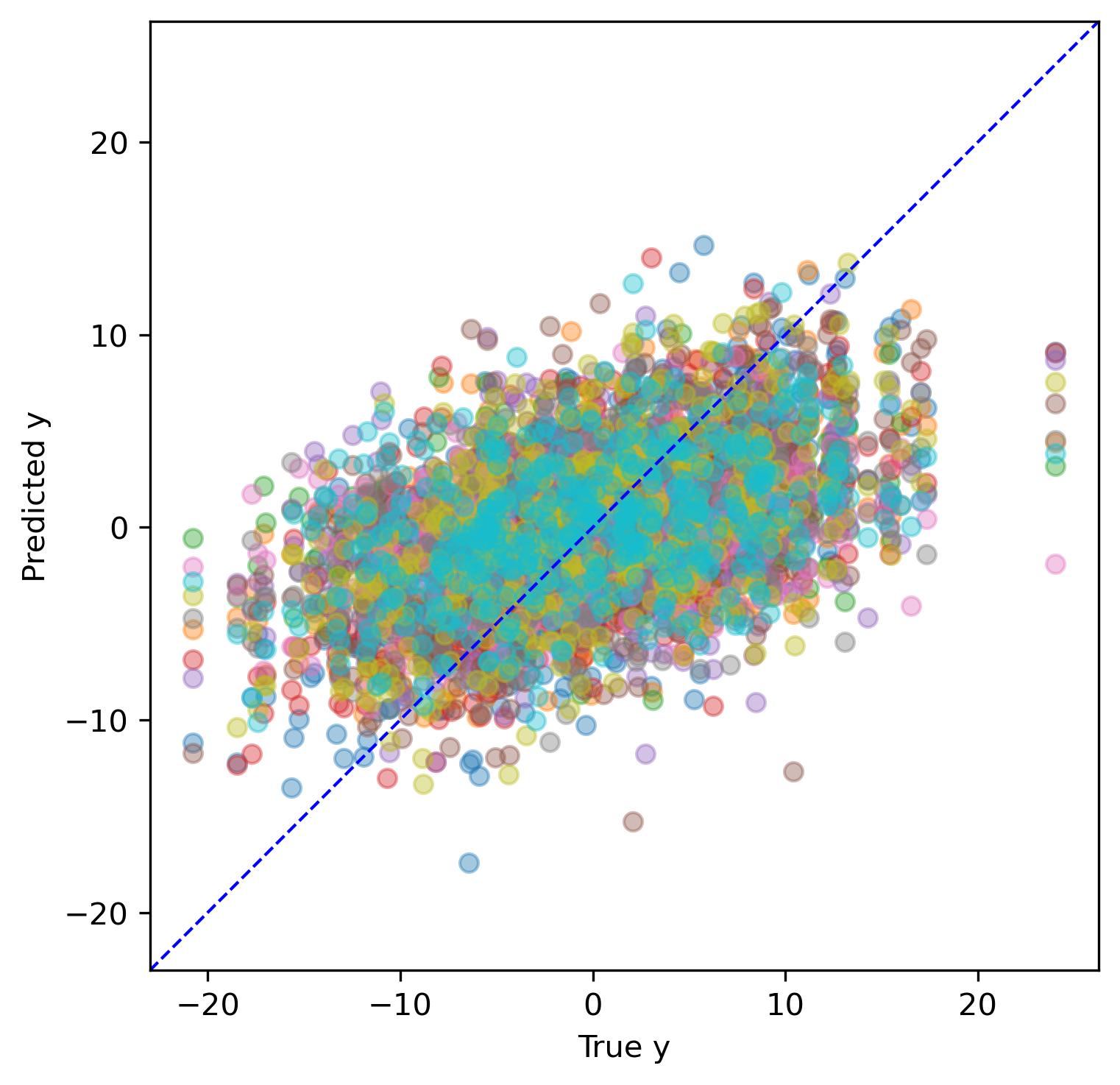}\\
        \raisebox{2mm}{\includegraphics[width=.19\linewidth]{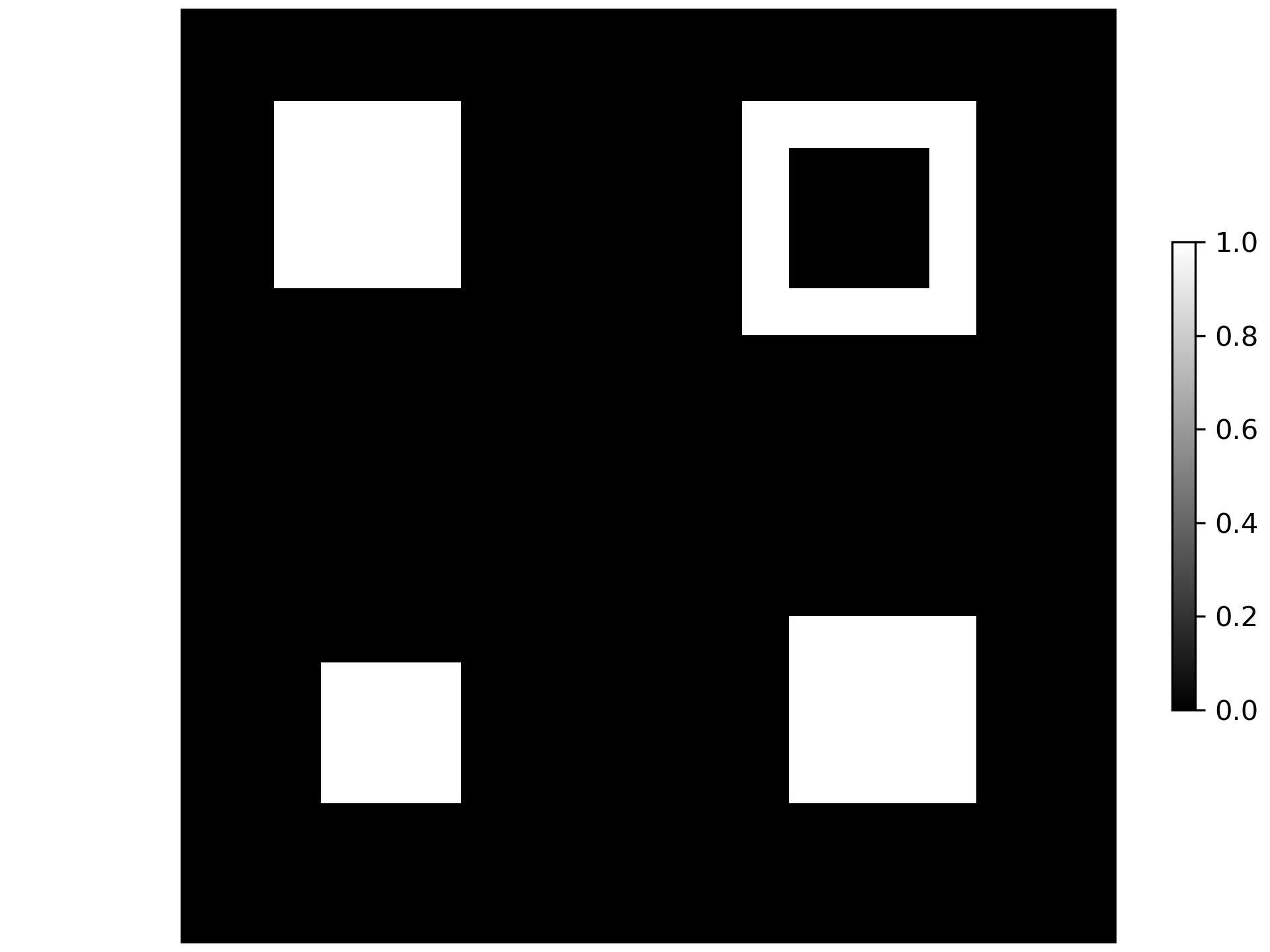}} & \includegraphics[width=.16\linewidth]{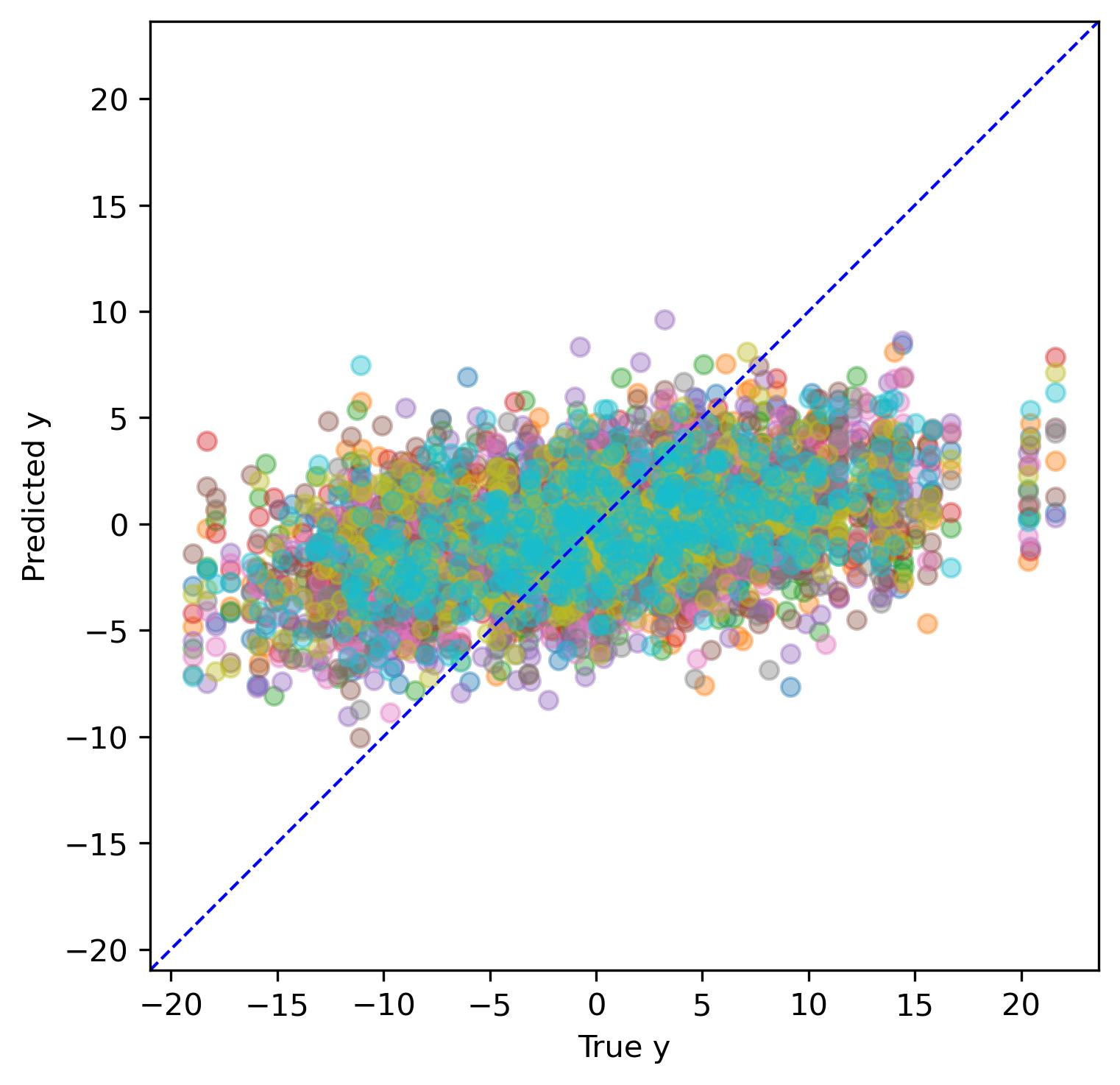} & \includegraphics[width=.16\linewidth]{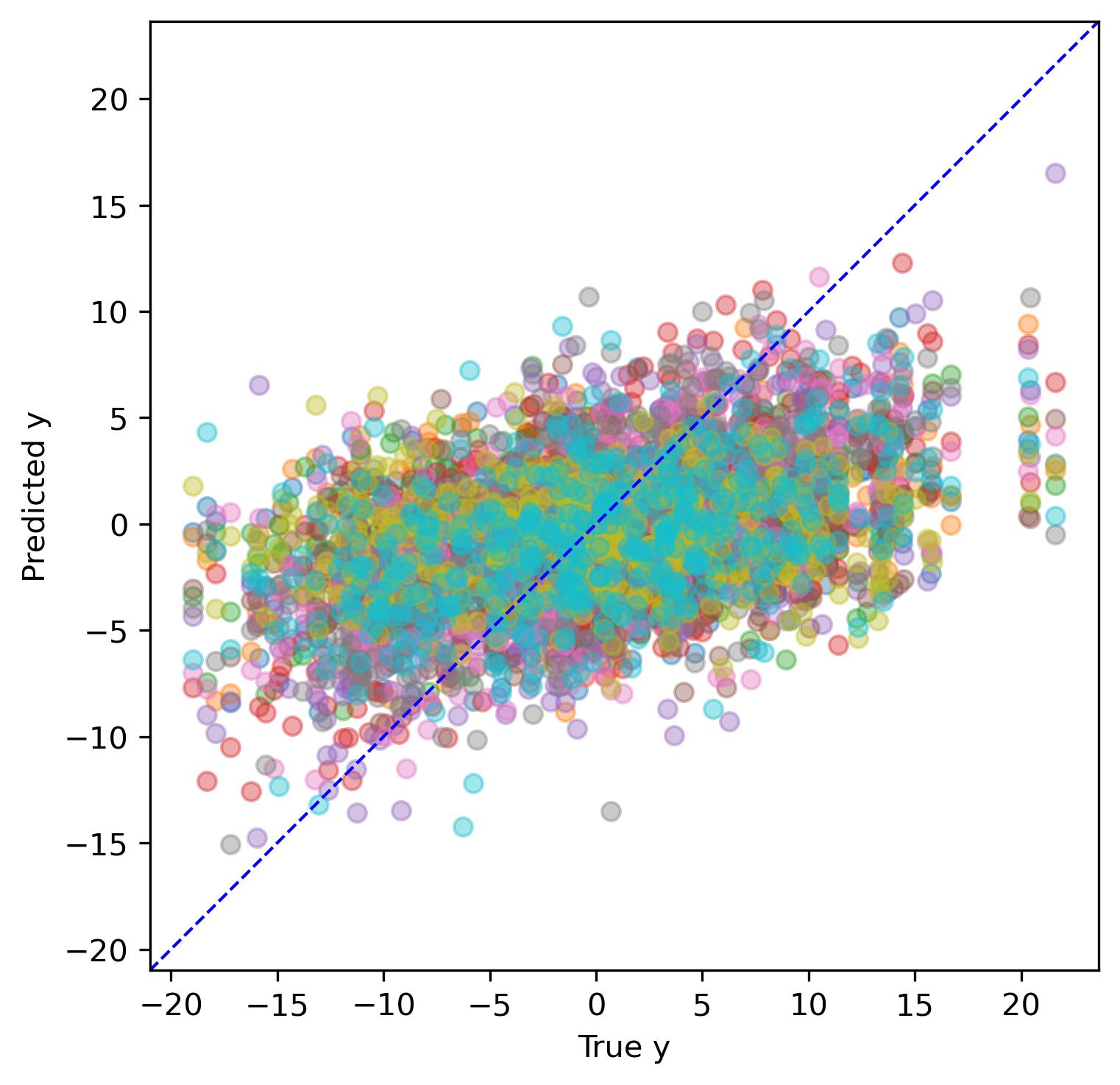} & \includegraphics[width=.16\linewidth]{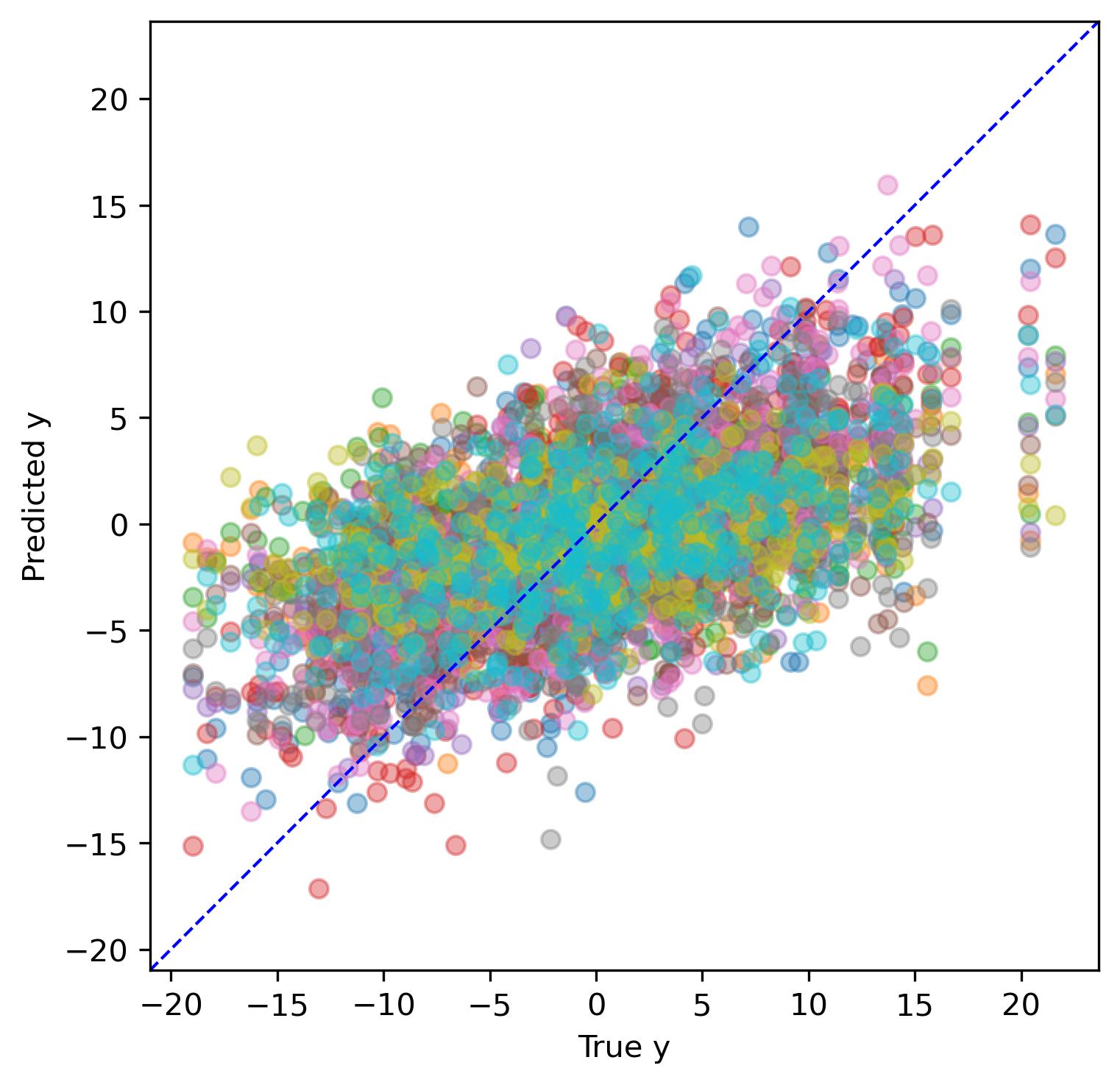} & \includegraphics[width=.16\linewidth]{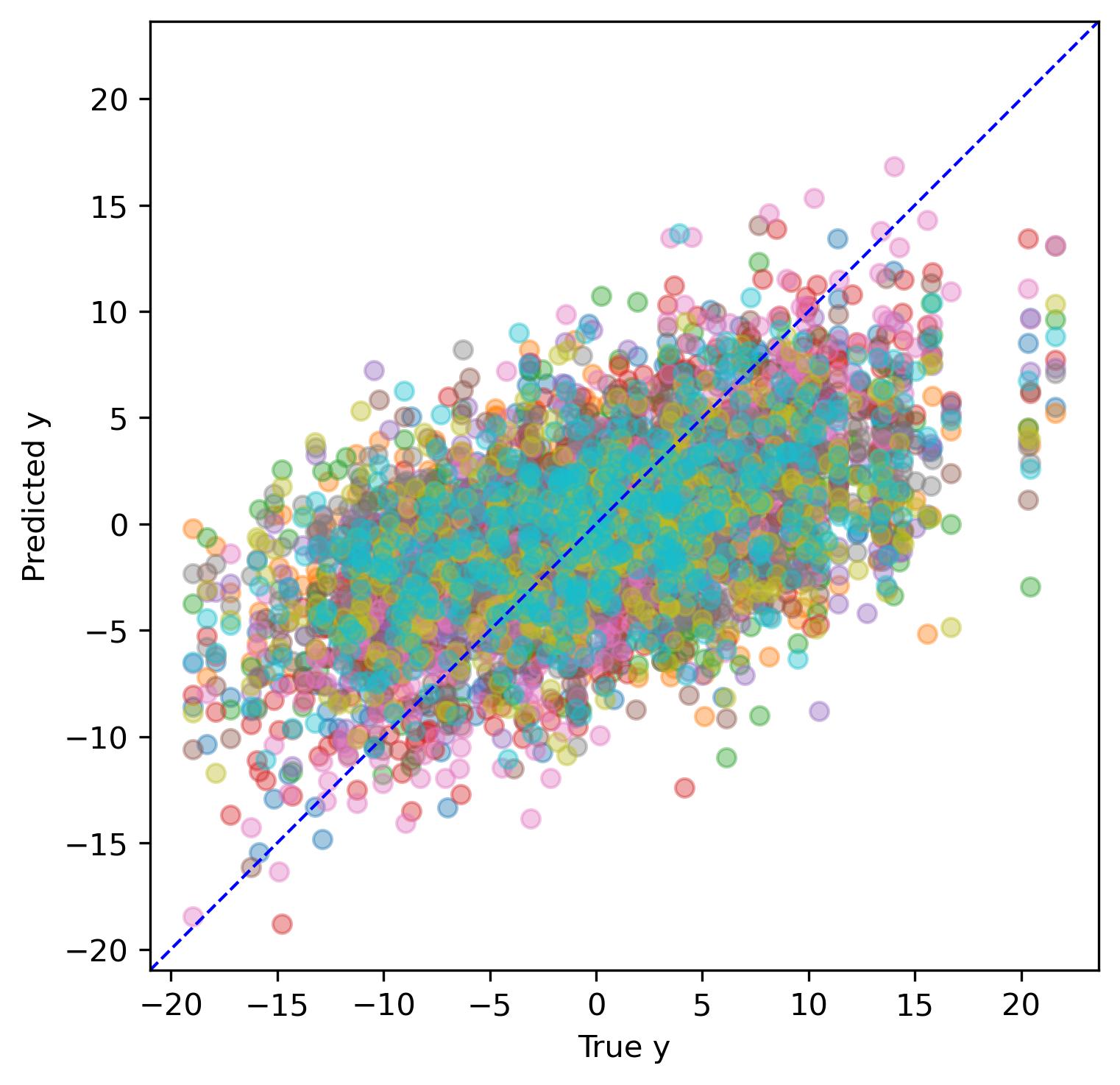}
    \end{tabular}
    \caption{True coefficient (panel a) and forecast (panel b). In each scatter plot: actual data (horizontal axis) against the predicted data (vertical axis) for different sparsity levels and structures (rows) and different types of random projections (columns), using $L=10$ independent projection matrices of the same random projection type (colors). In the experiments: training sample size $n=1000$, compression rate: $r=0.36$, sparsity parameter $\psi=3$.}
    \label{fig: sim_types}
\end{figure}

\begin{figure}[h!]
    \centering
    \begin{tabular}{ccc}
    (a) RMSE vs distance & (b) Variance within MSE & (c) Bias within MSE\\
        \includegraphics[width=.31\linewidth]{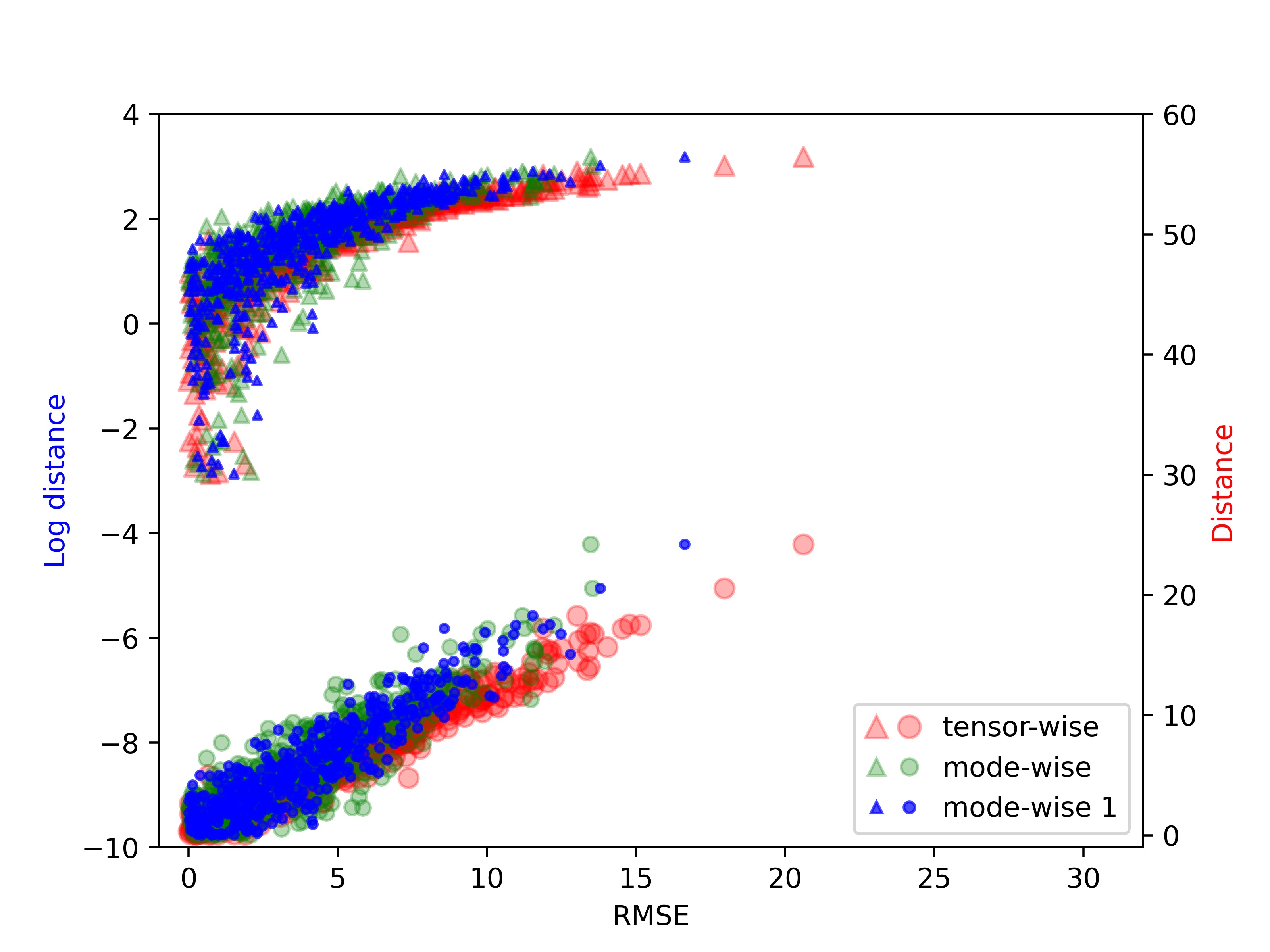}& \includegraphics[width=.28\linewidth]{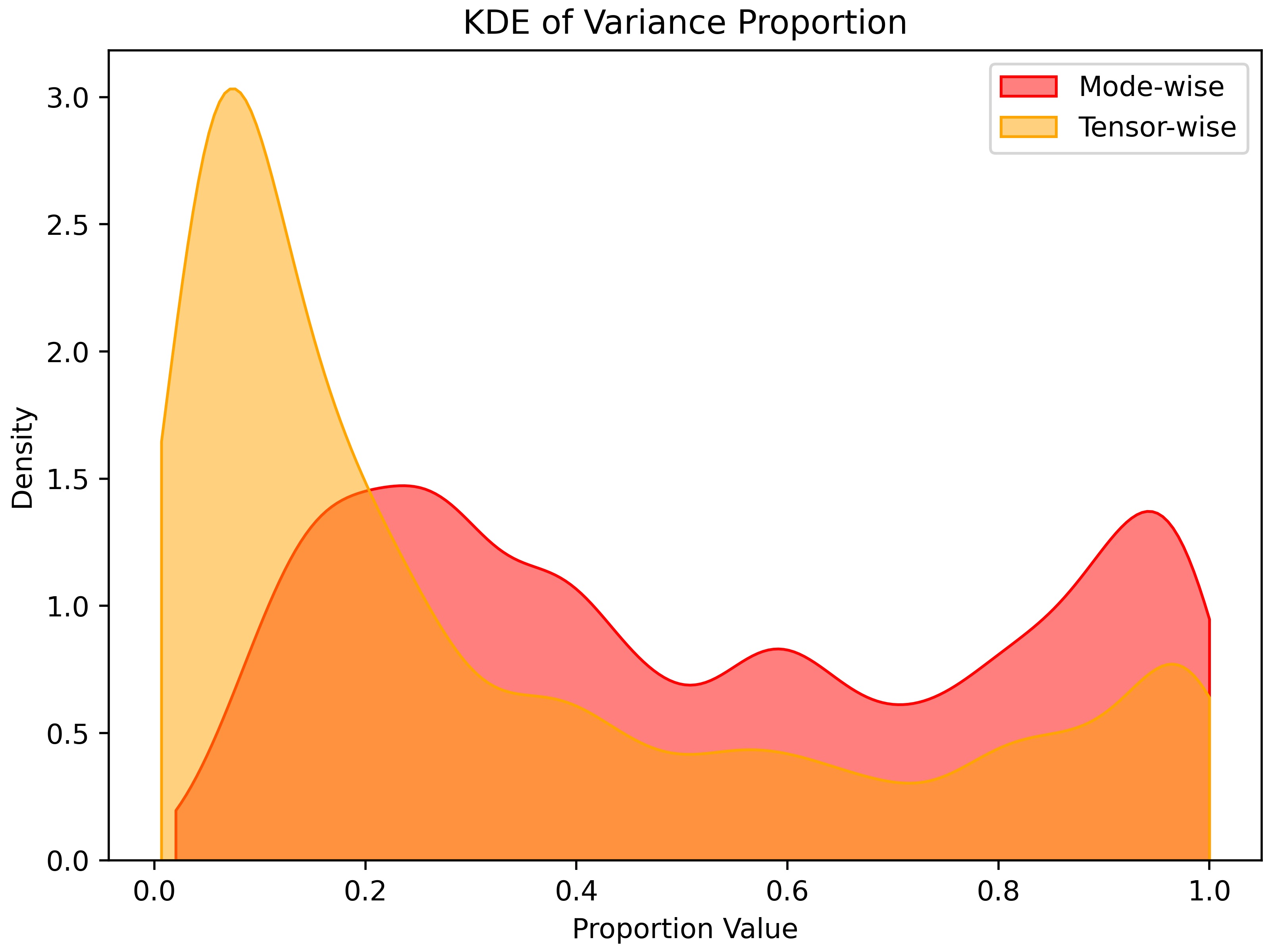} & \includegraphics[width=.28\linewidth]{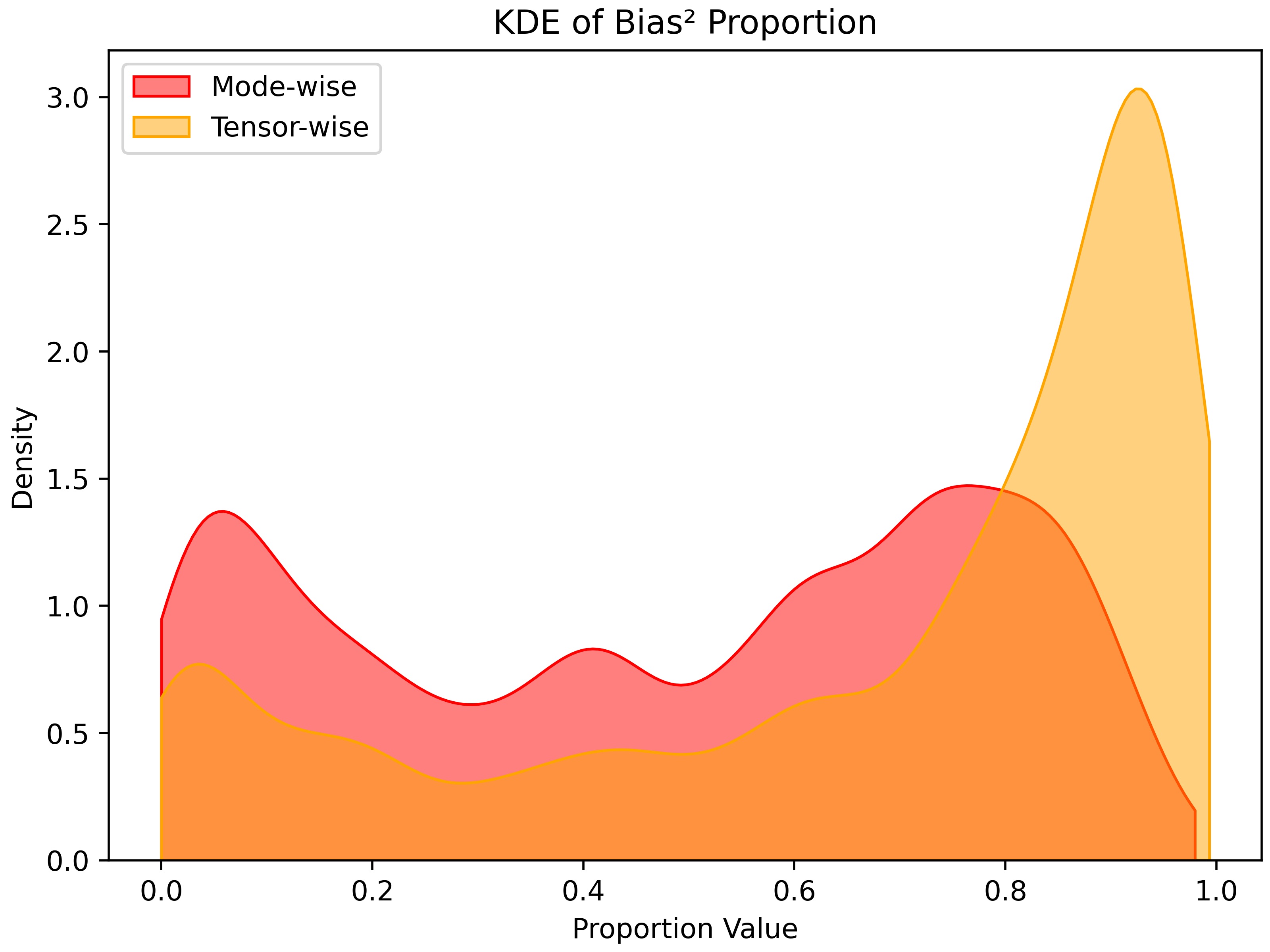}\\
        \includegraphics[width=.31\linewidth]{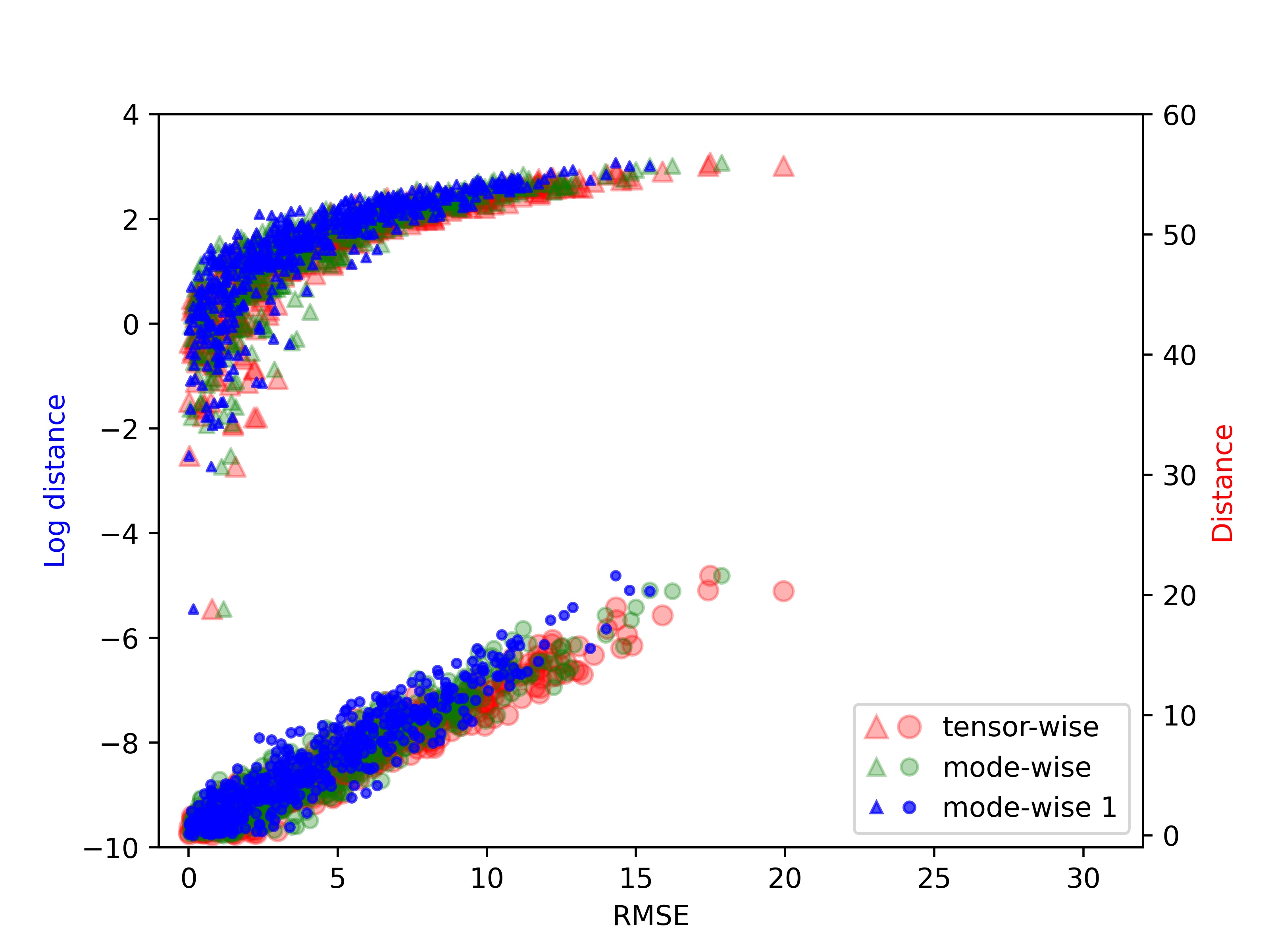}& \includegraphics[width=.28\linewidth]{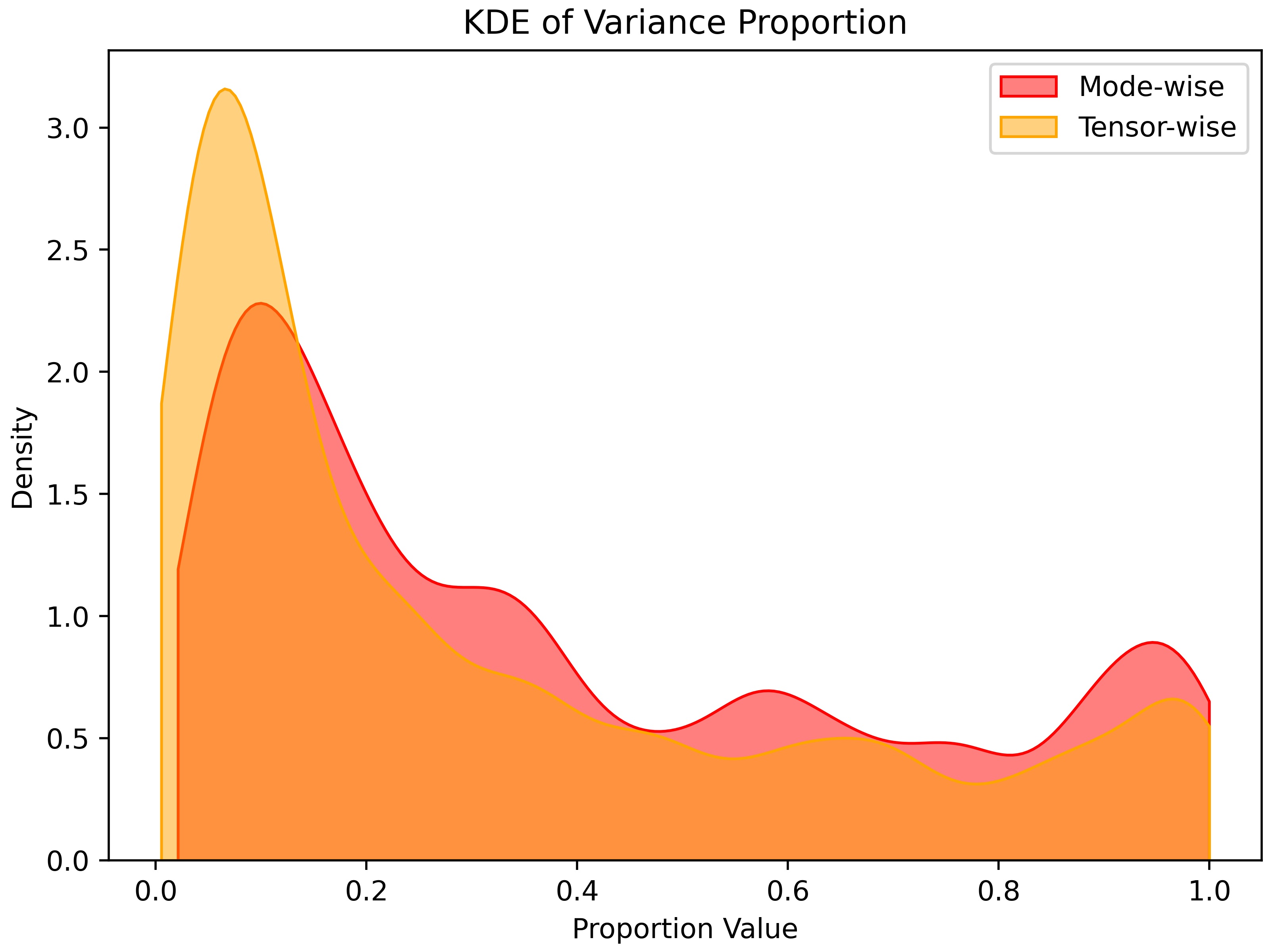} & \includegraphics[width=.28\linewidth]{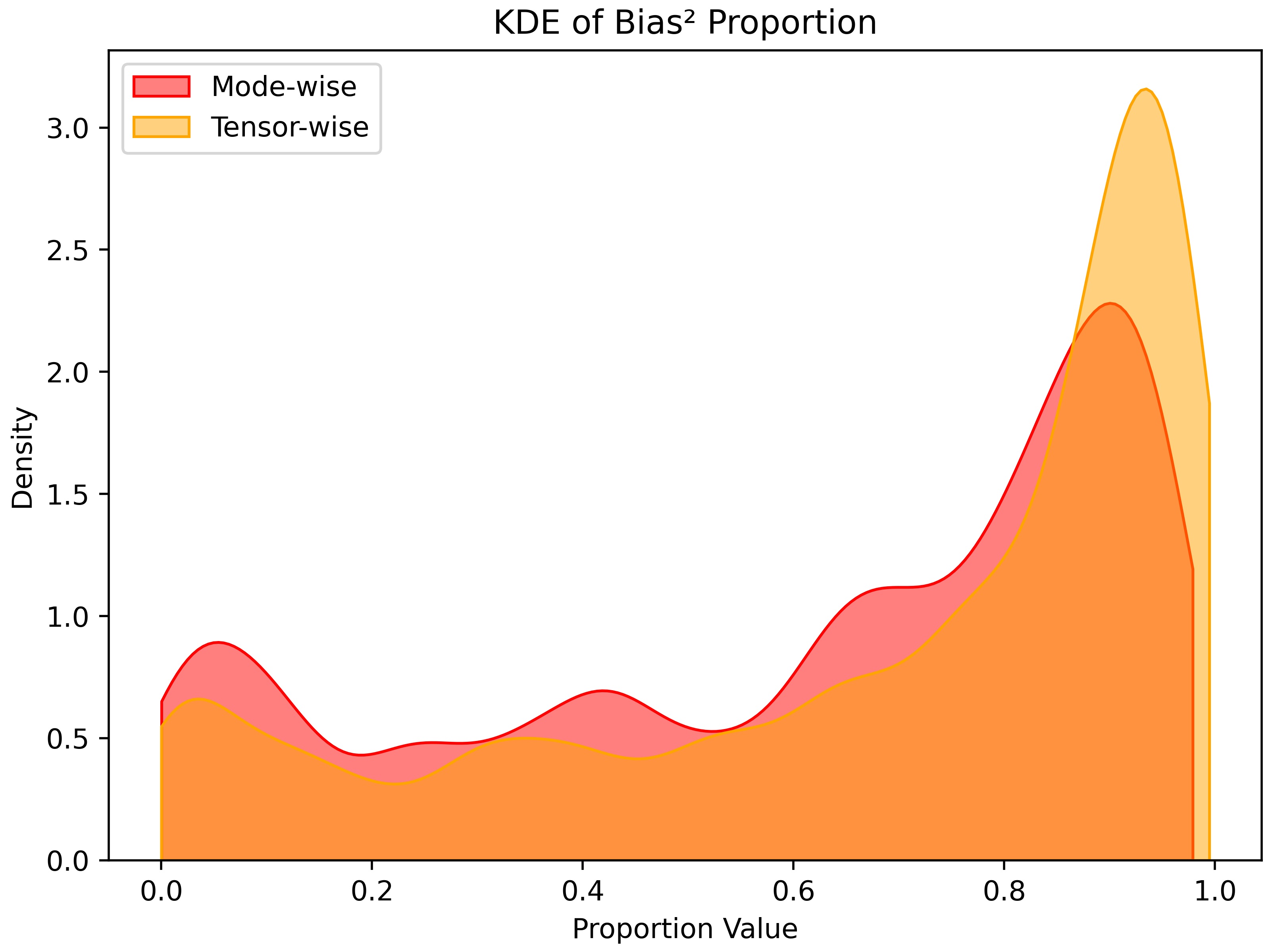}\\
    \end{tabular}
    \caption{Prediction errors. Panel (a) shows RMSE vs actual distance $d_j$ as defined in \eqref{eq: dist} (circle plotting symbols, right axis) and log-distance (triangle plotting symbols, left axis) between $m=500$ data points and their mean obtained from different types of random projections: TW (pink), MW (green), and MW(1) (blue). Panels (b) and (c) show the decomposition of MSE obtained from the $m=500$ test samples for two different types of random projections: TW (yellow) and MW (pink). Panel (b) shows the variance contribution to the MSE, and panel (c) shows the bias contribution to the MSE.}
    \label{fig: rmse1}
\end{figure}

The same prediction evaluation has also been carried out for random projection types: Mode-wise (\texttt{GTRP-MW}), Mode-wise preserving mode 1 (\texttt{GTRP-MW(1)}), and Mode-wise preserving mode 2 (\texttt{GTRP-MW(2)}) (columns from 2 to 4, respectively).
The plots in panel (b) show that \texttt{GTRP-TW} has difficulties in fitting the actual data (comparing the distance of the clouds from the 45$^{\circ}$ reference line). In contrast, \texttt{GTRP-MW}, \texttt{GTRP-MW(1)}, \texttt{GTRP-MW(2)} perform better for values of the actual data both close and far from the mean. 

To further investigate the relationship between the incurred errors and the relative distance of an observation from its distribution's mean,  
we produce graphical representations of  the relationship between the  distance $d_j$ defined in \eqref{eq: dist} and the forecasting error RMSE$_{j,n}$ $j=n+1,\ldots,n+m$. 

In the leftmost column of Figure \ref{fig: rmse1} we show scatter plots of distance (marked with circles), and scatter plots of distances on the log-scale (marked with triangles) versus RMSE. We use two different scales for distances because we are interested in regions where distances are small (and the log scale goes to $-\infty$) and in regions in the right tail where the log scale is more interpretable.  In every plot, the top cloud (triangle symbols) shows the tail behavior, while the bottom one (circle symbols) shows the relationship in the center of the distribution. Blue symbols are generally to the left of the other color symbols, suggesting that mode-wise random projection with mode-preserving yields smaller RMSE for the same distances. 

The right columns present the empirical distributions of the variance and bias proportions for the $m$ points in the test sample.  The forecasts for $m=500$ points in the test sample are obtained using a training sample of size $n=1000$. The decomposition of MSE shows that tensor-wise random projection yields smaller variances but higher bias across all four different simulation settings than mode-wise random projection.

\subsubsection{Sparsity and compression rates}
Parameter $\psi$ controls the sparsity level in the random projection tensor. When $\psi=1$, the entries of the random projection tensor are essentially drawn from $\{-1, 1\}$ with equal probabilities (a Rademacher distribution), a non-sparse projection tensor. As $\psi$ increases, the entries of the random projection tensor will be drawn from $\{-1, 0, 1\}$ with increasing probability that $0$ is drawn, and the projection tensor becomes sparser as $\psi$ increases. 

Figure \ref{fig:rmse_psi} reports the RMSE for simulation configurations of {\bf L} and {\bf B} using dense to sparse random projection tensors, i.e. $\psi \in \{2,3,4\}$ (other configurations can be found in the Supplementary Materials). Model averaging is performed across different random projections and different training sample sizes, and the BMA's performance is evaluated using the RMSE. Figure \ref{fig:rmse_psi} suggests that tensor-wise random projection is not as sensitive as mode-wise random projection for varying sparsity of the random projection matrices. Mode-wise random projections still outperform tensor-wise random projections. In most scenarios ({\bf CI}, {\bf CR}, and {\bf L}), mode-wise random projection has the lowest RMSE. A V-shape curve is observed for mode-wise random projection, suggesting that a moderate sparsity in the random projection process is preferred and helps preserve more information.

\begin{figure}
\vspace*{3pt}
    \centering
    \begin{tabular}{cc}
        \includegraphics[width=.4\linewidth]{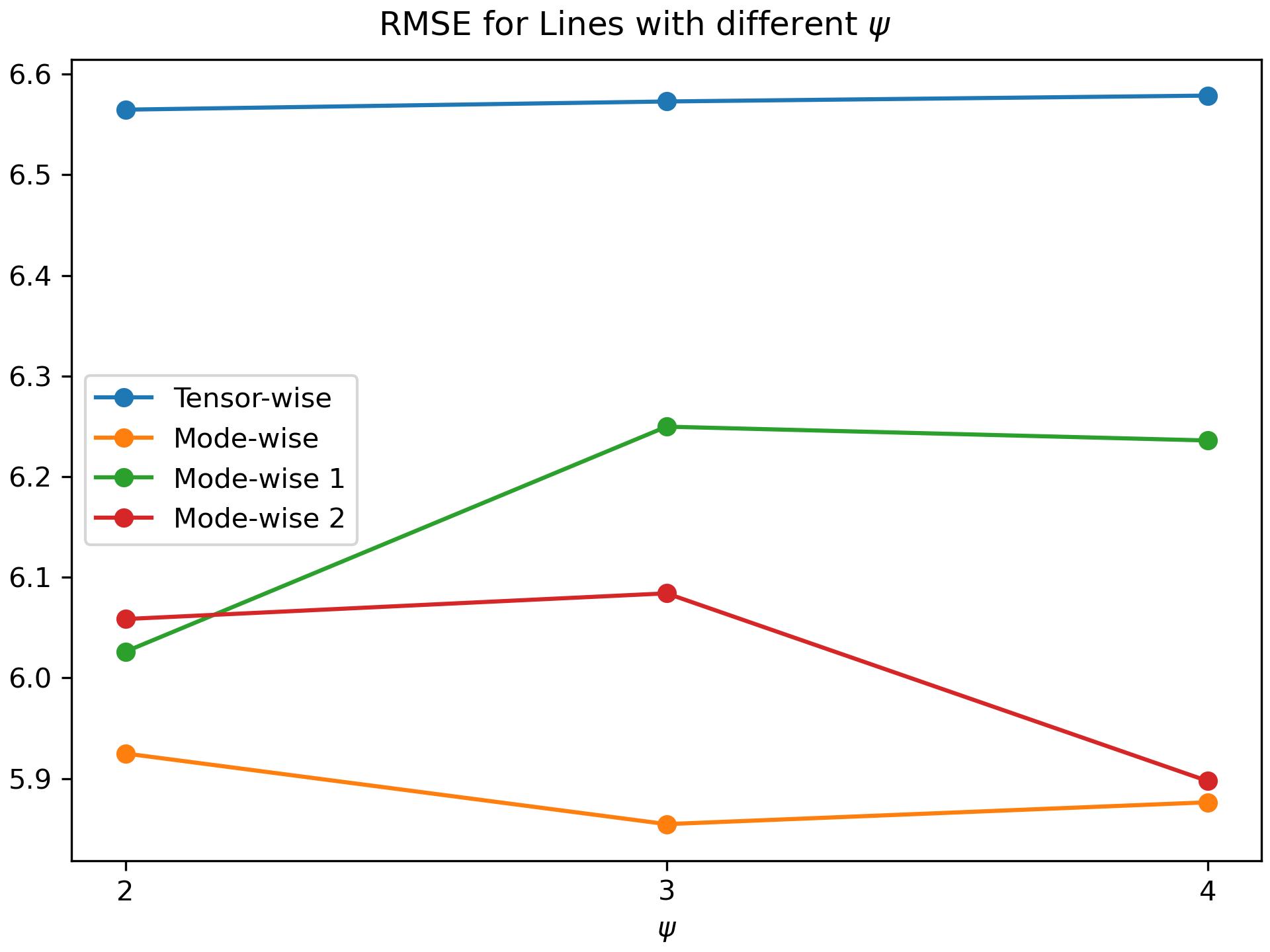}
         & \includegraphics[width=.4\linewidth]{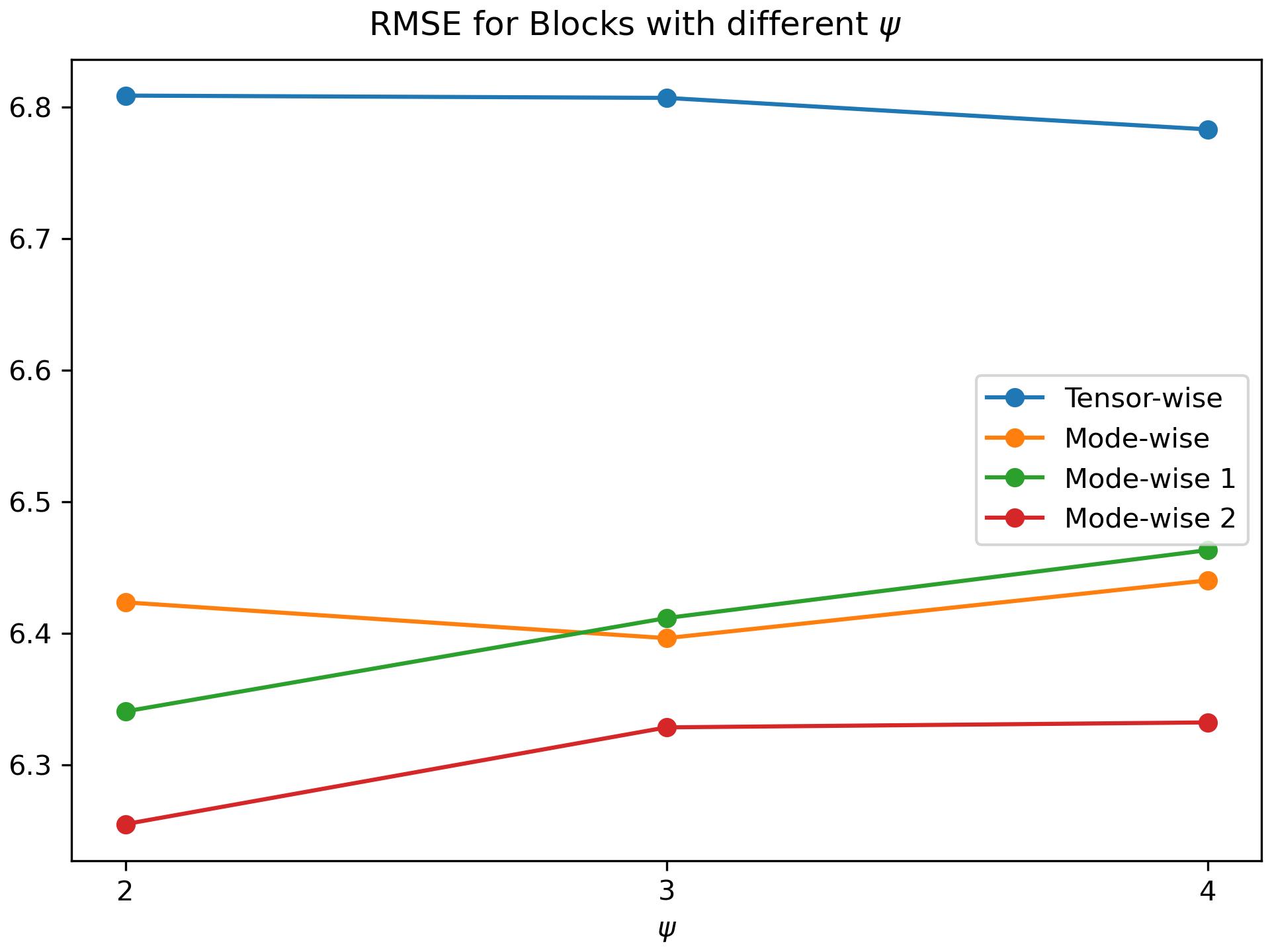}
    \end{tabular}
    \caption{Effects of using random projection matrices of different sparsity levels on prediction errors (RMSE) for {\bf L} (left) and {\bf B} simulation settings (right). In the two plots: the RMSE (vertical axis) obtained from $m=500$ test samples versus the sparsity levels ($\psi\in\{2,3,4\}$) (horizontal axis) for random projection types: TW (blue), MW (orange), MW(1) (green), and MW(2) (red).}
    \label{fig:rmse_psi}
\end{figure}

Figure \ref{fig: comp_rate} in the Supplementary materials shows the prediction performance (out-of-sample scatter plots and RMSE plots) for different compression rates and training sample sizes in the `Cross' setting. The random projection is performed with the first mode preserved. From the scatter plots, it's clear that as the compression rate increases, the regression line's slope increases, suggesting improved predictive performance. This is also shown in the RMSE plots in the right column of Figure \ref{fig: comp_rate}. 


The computational cost of CBTR with different compression rates ($r\in\{0.09, 0.16, 0.25, 0.36 \}$) is compared to that of Bayesian tensor regression using Gaussian priors, Lasso priors and PARAFAC priors. The computational time is reported for the simulation setting {\bf CR} with a tensor coefficient size of $60 \times 60$ and a training sample size of $n=2000$. $1000$ Gibbs iterations are used to sample the unknowns. Table \ref{tab: rmse_sim} reports the RMSE and computational time for all competing models. As the compression rate increases (i.e., more predictors are retained after compression), the computational time also increases. Nevertheless, the CBTR models remain substantially faster than the BTR models with Gaussian, LASSO and PARAFAC priors, with computational gains of approximately one order of magnitude. Among all methods, the BTR model with Gaussian priors achieves the best predictive performance in terms of RMSE, although this comes at a considerably higher computational cost. In contrast, the BTR model with LASSO and PARAFAC priors fail to provide competitive predictive performance relative to the other approaches. The CBTR models exhibit slightly worse predictive accuracy compared with the BTR model with Gaussian priors, but they achieve this performance at a substantially lower computational cost. To further evaluate the trade-off between predictive accuracy and computational efficiency, we consider the following efficiency measure: $\text{Efficiency Score}=1/(\text{RMSE}\times \text{Computational Time})$, which quantifies predictive performance per unit of computational cost. A higher score indicates a more favorable balance between accuracy and computational efficiency. The efficiency scores, reported in the last row of Table \ref{tab: rmse_sim}, show that the CBTR models substantially outperform the BTR models according to this criterion. A figure illustrating the relationship between computational cost and efficiency scores is provided in the Supplementary Material.

\begin{table}[ht]
\caption{Root Mean Square Errors for out-of-sample forecasting of Compressed Bayesian Tensor Regressions (CBTR) with different compression rates ($r\in\{0.09, 0.16, 0.25, 0.36 \}$) as well as Bayesian Tensor Regression (BTR) with LASSO, Gaussian and PARAFAC priors.}
\label{tab: rmse_sim}
\centering
\resizebox{\textwidth}{!}{%
\begin{tabular}{ccccc|ccc}

&  \multicolumn{4}{c|}{CBTR} & \multicolumn{3}{c}{BTR} \\

& $r=0.09$ & $r=0.16$& $r=0.25$ & $r=0.36$ & LASSO & Gaussian & PARAFAC\\
\hline
RMSE     &  $31.47$ & $32.37$ & $32.48$ & $32.33$  & $138.73$ & $\textbf{19.94}$  & $154.77$  \\

Time (mins) &  $\textbf{6.88}$ & $7.71$ & $8.64$ & $9.08$ & $1077$ & $1124$ & $15.57$ \\

Efficiency Score & $\textbf{0.28}$& $0.24$&$0.21$ &$0.20$ & $0.0004$& $0.003$ & $0.025$

\end{tabular}
}
\end{table}

\subsection{Guidelines for implementations}
The proposed methodology entails several choices, and predictive performances may depend strongly on the projection settings. Thus, here we summarize some guidelines suggested by the theory and numerical experiments.

Regarding the type of projection, in general, tensor--wise projection should be avoided, and mode-preserving projections should be preferred, since in the worst case, they return results comparable to the tensor--wise. The identification of the modes to preserve and the choice of the compression rates, when not clear from subject--matter knowledge, can be aided by an exploratory analysis based on mode-specific measures of sparsity. For instance, a screen-then-compress strategy, as proposed by \cite{mukhopadhyay_targeted_2020} or \cite{gailliot2024data}, can be adapted for this purpose.

Concerning the projection matrix sparsity, our numerical studies suggest that a moderate level, e.g., $\psi=3$, should serve as a baseline, and that a more conservative level, e.g., $\psi=2$, could be included in the set of projections if the computational cost is manageable.

Given the difficulties that can arise in choosing a projection setting in many practical applications, Bayesian model averaging and stacking should be used to incorporate model uncertainty into the prediction distribution.

\subsection{Empirical application}
We demonstrate the performance of compressed Bayesian tensor regression (CBTR) using a real-world application studying the effects of oil volatility on the return of stock markets (S\&P $500$). We apply our tensor regression framework to a large dataset with mixed-frequency variables, as used in \cite{CASARIN2025105427}. We regress the monthly log-returns of S\&P $500$ (SP) on covariates sampled at daily frequency with monthly lags ranging from one to four. The daily observations that we included are good oil volatility (GV), bad oil volatility (BV), US dollar index (ER), TED spread (IR), VIX index (VI), T-bill rate (TB), and bond spread (BD). Thus, the tensor predictors and coefficients are of size $(4, 7, 22)$, corresponding to the number of temporal lags, the number of regressors, and the number of daily observations per month. We use $350$ observations for training and $31$ for testing. A representation of the model is:
\begin{align}
    y_{t} &= \mu + \sum_{i_3=1}^{4}\left<B_{\tilde{I}(i_3)},
    A
    \right> + \sigma\varepsilon_{t},
    \label{eq:oil}
    \end{align}

where $\tilde{I}(i_3) = \{(i_1, i_2, i_3), i_h\in\{1,\ldots,p_h\}, \forall h\neq 3\}$ and $B_{\tilde{I}(i_3)}$ denotes the $i_{3}$th slice of tensor coefficients $B$ along the third mode and the $i$th line in the matrix $A$ has the form $X_{t-1/22-i_3+1},\ldots, X_{t-i_3}$ and $X=GV, BV, ER, IR, VI, TB, BD$ for, respectively, rows $1$ to $7$. The conditional mean of the model in \eqref{eq:oil} is given as the sum over slices corresponding to different temporal lags (third mode).

In Figure \ref{fig: pred_comp} of Appendix \ref{app:RealRes}, we compare the in-sample fittings as well as out-of-sample predictions of tensor regression without applying random projection and with different random projection methods (TW: tensor-wise without mode preservation, MW: mode-wise without mode preservation, MW$(1)$: mode-wise preserving first mode, MW$(1,2)$: mode-wise preserving first and second mode). As shown in the figure, the in-sample fittings of BTR and CBTR are relatively similar. This is also reflected in the RMSE reported in Table \ref{tab: rmse_macro}.
\begin{table}[h!]
\vspace*{2pt}
\caption{Root Mean Square (Forecasting) Errors for in-sample fitting (out-of-sample forecasting) of Bayesian Tensor Regression (BTR) and Compressed Bayesian Tensor Regressions (CBTR) with different random projection types.}
\label{tab: rmse_macro}
\makebox[\textwidth][c]{
\begin{tabular}{cc|cccccc}
              &  BTR   & \multicolumn{6}{c}{CBTR} \\
              &  & TW & MW & MW$(1)$ & MW$(1, 2)$ & MW$(1, 3)$ & MW$(2, 3)$\\
              \hline
In-sample     & $0.0338$ & $0.0355$ & $0.0346$ & $0.0356$ & $0.0333$  & $0.0323$ & $0.0329$    \\
Out-of-sample & $0.1148$ & $0.0676$ & $0.0623$ & $0.0723$ & $0.0383$ & $0.0600$ & $0.0508$    
\end{tabular}
}
\end{table}

What differentiates CBTR from Bayesian tensor regression (BTR) is out-of-sample forecasting performances, where every random projection method outperforms BTR. Among the different CBTRs, MW performs better than TW in terms of RMSE, consistently with the simulation results. Among MW models, those preserving modes (MW$(1)$ and MW$(1,2)$) perform better than those not preserving modes (MW). The performances of preserving $1$ and $2$ modes are very close, where preserving $2$ modes offers slightly better in-sample fitting but worse out-of-sample forecasting.

The empirical application demonstrates the validity of random projection for reducing data dimensionality while preserving important information for inference and forecasting. The fact that CBTR outperforms BTR in forecasting is encouraging. Moreover, we explore different random projection methods and find that CBTR-MW performs better than CBTR-TW in both simulation and empirical applications.


\section{Conclusion} \label{sec:conclu}
This paper introduces a Compressed Bayesian Tensor Regression (CBTR) framework that efficiently addresses the challenges of high-dimensional tensor covariates through a novel Generalized Tensor Random Projection (\texttt{GTRP}) strategy. The proposed method extends existing tensor projection approaches by allowing both mode-wise and tensor-wise projections, offering flexibility to preserve or reduce tensor modes and dimensions. Theoretical guarantees are provided by concentration inequalities and posterior consistency results, ensuring that inference and prediction remain valid after compression.

We design a Gibbs sampling algorithm tailored to hierarchical priors, including PARAFAC-based shrinkage priors, and introduce Bayesian model averaging to account for variability introduced by random projections. Our extensive simulation studies demonstrate that CBTR achieves substantial computational gains and improved prediction accuracy compared to standard Bayesian tensor regression, especially when the random projection preserves meaningful tensor structures. These findings are reinforced by an empirical application to financial data, where CBTR outperforms its uncompressed counterpart in out-of-sample forecasting.

Overall, our work establishes CBTR as a scalable, theoretically grounded alternative to conventional tensor regression methods, with potential applications across a wide range of domains involving structured, high-dimensional data. 
Current work can be expanded in several directions. Since a random projection may degrade predictive performance by compressing large sets of uninformative features, a pre-screening step can be applied. Discarding predictors with very low marginal association to the response prior to compression as in \cite{mukhopadhyay_targeted_2020} or  \cite{gailliot2024data} is promising. Moreover, Bayesian predictive stacking \citep{gailliot2024data} can be used as an alternative to BMA for aggregating inference results across different projections. Finally, alternative constructions of the random projection tensors  (e.g., Kronecker-based, tensor train-based, ect.) will be explored in a future communication.

\section*{Acknowledgments and Funding}
We thank David Ardia, Federico Bassetti, Rajarshi Guhaniyogi, Jim Griffin, Alessandra Luati, and Francesco Sanna Passino for their suggestions, which have improved the manuscript. RC was supported by the MUR - PRIN project `\textit{Discrete random structures for Bayesian learning and prediction}' under g.a. n. 2022CLTYP4, QW by the Next Generation EU - `\textit{GRINS - Growing Resilient, INclusive and Sustainable}' project (PE0000018), National Recovery and Resilience Plan (NRRP), and RVC by NSERC of Canada discovery grants RGPIN-2018-249547 and RGPIN-2024-04506.

\begin{appendices}
\renewcommand{\thesection}{A}
\renewcommand{\theequation}{A.\arabic{equation}}
\renewcommand{\thefigure}{A.\arabic{figure}}
\renewcommand{\thetable}{A.\arabic{table}}
\setcounter{table}{0}
\setcounter{figure}{0}
\setcounter{equation}{0}

\section{Proofs of the results} \label{app: proofs}
\subsection{Proof of Proposition \ref{cor: jl}} \label{prf: propjl}
When $R = 0$ and $M=1$ the projection writes as a scalar product between vector and a matrix, that is $\texttt{GTRP}(\mathcal{X}_{j})=\mathcal{X}_{j}\times_{1:N}\mathcal{H}_{1:N}=\sum_{j_1=1}^{p_1}\ldots\sum_{j_N=1}^{p_N}\mathcal{X}_{j,j_1,\ldots,j_N}\mathcal{H}_{j_1,\ldots,j_N,:}=\texttt{vec}(\mathcal{X}_{j})\texttt{mat}_{1:N}(\mathcal{H})$ where $\mathcal{H}$ is a $N+1$-mode projection tensor with iid entries. $\texttt{vec}(\cdot)$ is a vectorization operator and $\texttt{mat}_{1:N}(\cdot)$ is a matricisation operator stacking in one mode all elements from mode 1 to mode $N$ \citep[e.g., see ][Ch. 5]{hackbusch_tensor_2019}. The proof follows by setting $d=p_1\cdots p_N$ and $k=q_1$ in JL's Lemma of \cite{achlioptas_database-friendly_2003}.

\subsection{Proof of Theorem \ref{thm: jl}}
Before proving the theorem, we provide some preliminary results.


\begin{lemma}\label{lem1}
Let $\mathcal{T}=\boldsymbol{\tau}_1 \otimes \cdots \otimes \boldsymbol{\tau}_N$ be a $q_1\times \cdots \times q_N$ tensor with $\boldsymbol{\tau}_{m}\in\mathbb{R}^{q_m}, \boldsymbol{\tau}_{m,i_m}\sim\mathcal{N}(0,1/p_m)$ independent normal. Entries of $\mathcal{T}$ are $\mathcal{T}_{i_1,\ldots,i_N}=\boldsymbol{\tau}_{1,i_1}\cdots \boldsymbol{\tau}_{N,i_N}$. Let
\begin{equation}
\mathcal{Q}=\frac{1}{p(N)}\sum_{j_1=1}^{p_1}\cdots\sum_{j_N=1}^{p_N}H_{1,j_1,:}\otimes \cdots \otimes H_{N,j_N,:}    
\end{equation}
be the $q_1\times \cdots \times q_N$ tensor obtained by projecting the rescaled $p_1\times \cdots \times p_N$ unit tensor (a tensor of ones normalized to have unit Frobenius norm) with random matrices $H_m$ defined in Theorem \ref{thm: jl}, $H_{m,j_m,:}$ indicates the $j_m$th row of $H_m$. Let $\mathcal{Q}(\mathcal{A})$ be the random projection of any arbitrary unit tensor $\mathcal{A}$. The entries $Q_{i_1,\ldots,i_N}(\mathcal{A})$ of the tensor
$\mathcal{Q}(\mathcal{A})$ satisfy the following properties
\begin{itemize}
    \item[i.] $\mathbb{E}(\mathcal{Q}_{i_1,\ldots,i_N}(\mathcal{A})^{2k})\leq \mathbb{E}(\mathcal{Q}_{i_1,\ldots,i_N}^{2k})$ 
    \item[ii.] $\mathbb{E}(\mathcal{Q}_{i_1,\ldots,i_N}^{2k})\leq \mathbb{E}(\mathcal{T}_{i_1,\ldots,i_N}^{2k})$
\end{itemize}
\end{lemma}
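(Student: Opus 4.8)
\emph{Proof plan.} The plan is to reduce the statement to two elementary comparisons. The key structural observation is that $\mathcal Q$ and $\mathcal T$ are projections of a rank-one (constant) tensor, so their entries factorize across modes: since $p(N)^{-1/2}=\prod_{m=1}^N p_m^{-1/2}$, one may write $\mathcal Q_{i_1,\ldots,i_N}=\prod_{m=1}^N S_m^{(i_m)}$ with $S_m^{(i_m)}=p_m^{-1/2}\sum_{j_m=1}^{p_m}H_{m,j_m,i_m}$, and likewise $\mathcal T_{i_1,\ldots,i_N}=\prod_{m=1}^N\boldsymbol\tau_{m,i_m}$ with the $\boldsymbol\tau_{m,i_m}$ independent standard normal. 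Each $S_m^{(i_m)}$ is a standardized sum (mean $0$, variance $1$) of $p_m$ i.i.d.\ copies of the variable $r$ in \eqref{eq:rpd}, so independence across modes gives $\mathbb E(\mathcal Q_{i_1,\ldots,i_N}^{2k})=\prod_{m=1}^N\mathbb E\big[(S_m^{(i_m)})^{2k}\big]$ and $\mathbb E(\mathcal T_{i_1,\ldots,i_N}^{2k})=\big((2k-1)!!\big)^N$. It then remains to prove: (i) that replacing the constant tensor by an arbitrary unit-Frobenius-norm tensor $\mathcal A$ can only lower the $2k$-th moment of a projected entry, i.e.\ $\mathbb E(\mathcal Q_{i_1,\ldots,i_N}(\mathcal A)^{2k})\le\mathbb E(\mathcal Q_{i_1,\ldots,i_N}^{2k})$; and (ii) that each per-mode factor obeys $\mathbb E\big[(S_m^{(i_m)})^{2k}\big]\le(2k-1)!!$.

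For (ii) I would expand $\mathbb E(S^{2k})$, $S=p^{-1/2}\sum_{j=1}^p r_j$, over set partitions of $\{1,\ldots,2k\}$: grouping the $2k$ summation slots by which summand they select, $\mathbb E(S^{2k})=p^{-k}\sum_{\pi}(p)_{|\pi|}\prod_{B\in\pi}\mu_{|B|}$, where $(p)_b=p(p-1)\cdots(p-b+1)$, $|\pi|$ is the number of blocks, and $\mu_\ell=\mathbb E(r^\ell)$ vanishes for odd $\ell$ and equals $\psi^{\ell/2-1}$ otherwise. Since $\mu_2=1$ and $\psi^{\ell-1}\le(2\ell-1)!!$ for $\psi\le3$ (using $(2\ell-1)!!=1\cdot3\cdots(2\ell-1)\ge3^{\ell-1}$), the $(2k-1)!!$ pair-partitions contribute at most $(p)_k(2k-1)!!/p^k\le(2k-1)!!$, and a term-by-term comparison shows the surplus from partitions having a block of size $>2$ is absorbed by the deficit $p^b-(p)_b\ge0$ of the falling factorials; this yields $\mathbb E(S^{2k})\le(2k-1)!!$ (and for $\psi=3$ the two agree up to an $O(p^{-1})$ term). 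Equivalently, one may verify the sub-Gaussian bound $\mathbb E(e^{tr})\le e^{t^2/2}$ — the relevant inequality on the Taylor coefficients is again $\psi^{\ell-1}\le(2\ell-1)!!$ — and pass to moments.

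For (i) I would expand $\mathbb E(\mathcal Q_{i_1,\ldots,i_N}(\mathcal A)^{2k})$, which up to the common normalizing constant equals $\mathbb E\big[\big(\sum_{\mathbf j}\mathcal A_{\mathbf j}\prod_{m=1}^N H_{m,j_m,i_m}\big)^{2k}\big]$, over $2k$-tuples of multi-indices $\mathbf j^{(1)},\ldots,\mathbf j^{(2k)}$; independence of $H_1,\ldots,H_N$ factorizes the expectation across modes, and symmetry of the entries of each $H_m$ forces every surviving term to have, in each mode $m$, all index-coincidence classes of even size, so the sum is indexed by an $N$-tuple $(\pi_1,\ldots,\pi_N)$ of even-block partitions of $\{1,\ldots,2k\}$. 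For a fixed configuration the residual coefficient sum $\sum\prod_{t=1}^{2k}\mathcal A_{\mathbf j^{(t)}}$ can, by applying Cauchy--Schwarz one mode at a time, be reorganized into a product of Frobenius-type norms of iterated self-contractions of $\mathcal A$, each bounded by the matching power of $\|\mathcal A\|_F=1$; evaluating the same configuration at the constant tensor reproduces exactly the counting factor $\prod_m p_m^{|\pi_m|}/p(N)^k$ against which those norms are compared, which gives the domination configuration by configuration and hence (i). This step needs $\psi\le3$: for larger $\psi$ a spiked $\mathcal A$ already breaks the $2k=4$ case. Chaining (i) and (ii) gives $\mathbb E(\mathcal Q_{i_1,\ldots,i_N}(\mathcal A)^{2k})\le((2k-1)!!)^N$, which is the input that later drives the Meijer $G$-function evaluation of the moment generating function of $\|f(\mathcal A)\|^2$ in the proof of Theorem \ref{thm: jl}.

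I expect the main obstacle to be step (i): organizing the $2k$-fold sum by the per-mode partitions, and above all showing that the \emph{signed} coefficient sums $\sum\prod_t\mathcal A_{\mathbf j^{(t)}}$ — not manifestly nonnegative once the partitions differ across modes — regroup into genuine squares and Frobenius norms of contractions, so that the unit-norm constraint can be invoked and the constant tensor seen to be extremal. Step (ii), together with the factorization that reduces the whole lemma to it, should be comparatively routine once the elementary inequality $\psi^{\ell-1}\le(2\ell-1)!!$ is in hand.
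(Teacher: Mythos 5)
Your route is the same two-step reduction the paper uses: read item (i) as moment domination of $\mathcal{Q}_{i_1,\ldots,i_N}(\mathcal{A})$ by the unit-tensor projection $\mathcal{Q}_{i_1,\ldots,i_N}$, then prove item (ii) by factorizing $\mathcal{Q}_{i_1,\ldots,i_N}=\prod_{m}S_m^{(i_m)}$ into independent standardized column sums and comparing each factor's even moments with Gaussian ones. Your step (ii) is exactly the paper's computation; where the paper appeals to Lemma 6.2 of \cite{achlioptas_database-friendly_2003} for the per-mode inequality, you re-derive it from $\mathbb{E}(r^{2\ell})=\psi^{\ell-1}\le(2\ell-1)!!$, and in doing so you make explicit the restriction $\psi\le 3$ that the paper never states. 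Your normalization (variance-one factors on both sides, so $\mathbb{E}(\mathcal{T}_{i_1,\ldots,i_N}^{2k})=((2k-1)!!)^N$) is also the only consistent reading: with the $1/\sqrt{p(N)}$ prefactor on $\mathcal{Q}$ and $\boldsymbol{\tau}_{m,i_m}\sim\mathcal{N}(0,1/p_m)$ as literally stated, inequality (ii) would fail already at $k=1$, and the paper's own proof effectively uses your scaling.

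The gap is your step (i), and you have named it yourself: the assertion that the signed coefficient sums $\sum\prod_{t}\mathcal{A}_{\mathbf{j}^{(t)}}$ attached to each $N$-tuple of per-mode even partitions regroup into squares and Frobenius norms of contractions, dominated configuration by configuration by the constant tensor, is a plan rather than a proof. Unlike the linear case, the comparison is not term-by-term positive once the partitions differ across modes: already for $N=2$, $2k=4$ the cross-pairing configurations produce quantities of the type $\lVert\mathcal{A}\mathcal{A}^{\top}\rVert_F^2$, and one must show these are maximized (subject to $\lVert\mathcal{A}\rVert_F=1$) in a way compatible with the $\psi$-dependent diagonal terms — this is genuine work, not a routine transcription of the vector argument. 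To be fair, this is precisely where the paper is thinnest: it disposes of (i) with ``the same argument as in the proof of Lemma 6.1 in \cite{achlioptas_database-friendly_2003}'', which concerns linear forms $\sum_j\alpha_j r_j$, so the multilinear extension is exactly the step you could not close. Your observation that (i) really does require $\psi\le 3$ (a spiked $\mathcal{A}$ violates the $2k=4$ case for very sparse projections) is correct and is a caveat absent from the paper; but as it stands your proposal reproduces the paper's structure without supplying the missing argument for part (i).
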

\begin{proof}
Without loss of generality, we prove the results for the case $N=3$.
\begin{itemize}
\item[i.] This follows by the same argument as in the proof of Lemma 6.1 in \cite{achlioptas_database-friendly_2003}.
\item[ii.]
\begin{align*}
    &\mathbb{E}(\mathcal{Q}_{i_1,i_2,i_3}^{2k})=\mathbb{E}\left(\left(\frac{1}{p_1 p_2 p_3}\sum_{j_1=1}^{p_1}\sum_{j_2=1}^{p_2}\sum_{j_3=1}^{p_3}H_{1,j_1,i_1}H_{2,j_2,i_2}H_{3,j_3,i_3}\right)^{2k}\right)\\
    & =\mathbb{E}\left(\left(\frac{1}{p_1 p_2 p_3}\sum_{j_1=1}^{p_1}\sum_{j_2=1}^{p_2}H_{1,j_1,i_1}H_{2,j_2,i_2}\sum_{j_3=1}^{p_3}H_{3,j_3,i_3}\right)^{2k}\right)\\
    & =\mathbb{E}\left(\left(\frac{1}{p_1 p_2}\frac{H_{3,i_3}}{p_3}\sum_{j_1=1}^{p_1}\sum_{j_2=1}^{p_2}H_{1,j_1,i_1}H_{2,j_2,i_2}\right)^{2k}\right)\\
    & =\mathbb{E}\left(\left(\frac{H_{1,i_1}}{p_1}\frac{H_{2,i_2}}{p_2}\mathbf{h}_{3,i_3}\right)^{2k}\right)=\mathbb{E}\left(\left(\mathbf{h}_{1,i_1}\right)^{2k}\right)\mathbb{E}\left(\left(\mathbf{h}_{2,i_2}\right)^{2k}\right)\mathbb{E}\left(\left(\mathbf{h}_{3,i_3}\right)^{2k}\right)\\ 
    &\leq \mathbb{E}\left(\left(\boldsymbol{\tau}_{1,i_1}\right)^{2k}\right)\mathbb{E}\left(\left(\boldsymbol{\tau}_{2,i_2}\right)^{2k}\right)\mathbb{E}\left(\left(\boldsymbol{\tau}_{3,i_3}\right)^{2k}\right)=\mathbb{E}\left(\left(\boldsymbol{\tau}_{1,i_1}\boldsymbol{\tau}_{2,i_2}\boldsymbol{\tau}_{3,i_3}\right)^{2k}\right)\\&=\mathbb{E}\left(\left(\mathcal{T}_{i_1,i_2,i_3}\right)^{2k}\right)
\end{align*}
where $H_{m,i_m}=\sum_{j_m=1}^{p_m}H_{m,j_m,i_m}$ with $\mathbb{E}(H_{m,i_m})=0, \mathbb{V}(H_{m,i_m})=p_m$, $\mathbf{h}_{m,i_m}=H_{m,i_m}/p_m$ with $\mathbb{E}(\mathbf{h}_{m,i_m})=0, \mathbb{V}(\mathbf{h}_{m,i_m})=1/p_m$, the inequality  follows using the same argument as in \cite[Lemma 6.2]{achlioptas_database-friendly_2003}.
\end{itemize}
\end{proof}

\begin{lemma}\label{lem3}
Let $x_j \overset{ind}{\sim} \mathcal{G}a(\alpha, \beta_j)$ with pdf
\begin{align*}
    f(x) = \frac{\beta_j^{\alpha}}{\Gamma(\alpha)}x^{\alpha-1}e^{-\beta_j x}, \; x>0
\end{align*}
$\mathbb{E}\left(e^{hx_1\cdots x_N}\right)=\left(\frac{1}{\Gamma(\alpha)}\right)^N G^{1,N}_{N,1}\left(-\frac{h}{\beta_1 \cdots \beta_N}\left| \begin{array}{c}1-\alpha, \ldots, 1-\alpha\\0\end{array}\right.\right)$, where $G^{m,n}_{p,q}(\cdot|\begin{array}{c}a_1,\ldots,a_p\\b_1,\ldots,b_q\end{array})$ is the Meijer G-function given in \citep[][ Def. 1.5]{mathai_h-function_2010}.
\end{lemma}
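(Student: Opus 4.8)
The approach is to expand the exponential in a power series, evaluate the moments of the product $x_1\cdots x_N$ via independence and the gamma moment formula, and then recognise the resulting series as the Mellin--Barnes (residue) expansion of the Meijer $G$-function on the right-hand side.

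First I would write $e^{hx_1\cdots x_N}=\sum_{k\ge 0}(h^k/k!)(x_1\cdots x_N)^k$ and exchange expectation and summation. By independence, $\mathbb{E}[(x_1\cdots x_N)^k]=\prod_{j=1}^N\mathbb{E}[x_j^k]$, and for $x_j\sim\mathcal{G}a(\alpha,\beta_j)$ the elementary identity $\int_0^\infty x^{\alpha+k-1}e^{-\beta_j x}\,dx=\Gamma(\alpha+k)\,\beta_j^{-(\alpha+k)}$ gives $\mathbb{E}[x_j^k]=\Gamma(\alpha+k)/(\beta_j^k\Gamma(\alpha))$. Collecting the factors yields
\[
\mathbb{E}\big(e^{hx_1\cdots x_N}\big)=\frac{1}{\Gamma(\alpha)^N}\sum_{k\ge 0}\frac{h^k}{k!}\,\frac{\Gamma(\alpha+k)^N}{(\beta_1\cdots\beta_N)^k}.
\]

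Second, I would match this series against the $G$-function. Starting from the Mellin--Barnes definition \citep[][Def.~1.5]{mathai_h-function_2010}, with all upper parameters equal to $1-\alpha$ (repeated $N$ times), the single lower parameter equal to $0$, and the two denominator Gamma-products in the integrand empty, the integrand reduces to $\Gamma(-s)\,\Gamma(\alpha+s)^N\,z^s$. Closing the contour so as to enclose the simple poles $s=k$ ($k\ge 0$) of $\Gamma(-s)$, whose residue is $(-1)^{k+1}/k!$, the standard residue calculus gives $G^{1,N}_{N,1}(z)=\sum_{k\ge0}((-1)^k/k!)\,\Gamma(\alpha+k)^N z^k$. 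Taking $z=-h/(\beta_1\cdots\beta_N)$, so that $(-1)^k z^k = h^k/(\beta_1\cdots\beta_N)^k$, and multiplying by $\Gamma(\alpha)^{-N}$ reproduces exactly the series above, which establishes the identity. The case $N=1$ specialises to $\mathbb{E}(e^{hx_1})=(1-h/\beta_1)^{-\alpha}$, a convenient consistency check.

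The main obstacle is that, for $N\ge 2$, the power series $\sum_k h^k\Gamma(\alpha+k)^N/(k!(\beta_1\cdots\beta_N)^k)$ diverges for every $h\ne 0$, since $\Gamma(\alpha+k)^N/k!$ grows factorially (like $(k!)^{N-1}$ up to polynomial factors); correspondingly $\mathbb{E}(e^{hX})$ with $X=x_1\cdots x_N$ is infinite for $h>0$ and finite only for $h\le 0$. Hence the identity must be read through the Mellin--Barnes integral itself, which converges for $z$ in the appropriate sector --- in particular for $z>0$, i.e. $h<0$, the range relevant to the Chernoff--Cram\'er argument in the proof of Theorem~\ref{thm: jl} --- with the power series serving as its asymptotic expansion. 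To make the $h\le 0$ case fully rigorous I would instead compute $\mathbb{E}(e^{hX})=\int_0^\infty e^{hx}g(x)\,dx$, where $g$, the density of the product $X$, is itself the Meijer $G$-function $\Gamma(\alpha)^{-N}G^{N,0}_{0,N}(\cdot)$ obtained by inverting the Mellin transform $s\mapsto(\Gamma(\alpha+s-1)/\Gamma(\alpha))^N$ of $X$, and then invoke the Laplace-transform formula for $G$-functions, checking the attendant parameter conditions in \cite{mathai_h-function_2010}; the residue expansion of the resulting $G^{1,N}_{N,1}$ then matches the moment series, closing the argument.
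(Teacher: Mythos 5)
Your primary route (expand $e^{hX}$, compute $\mathbb{E}[X^k]=\prod_j\Gamma(\alpha+k)/(\Gamma(\alpha)\beta_j^k)$, and match against a residue expansion of the $G$-function) is genuinely different from the paper's proof, but as you yourself observe it cannot by itself establish the identity for $N\ge 2$: the moment series has zero radius of convergence, the interchange of expectation and summation is unjustified for $h>0$, and since $p=N>q=1$ the contour in Mathai's Definition~1.5 cannot be closed around the poles of $\Gamma(-s)$ (the series $\sum_k\frac{(-1)^k}{k!}\Gamma(\alpha+k)^Nz^k$ you write for $G^{1,N}_{N,1}$ is likewise divergent; the convergent expansion is the large-argument power--log one, with order-$N$ poles, which is exactly what the paper exploits later). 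So the series step is a formal/asymptotic identification, not a proof. The rigorous repair you sketch --- take the density of $X=x_1\cdots x_N$, which is $\Gamma(\alpha)^{-N}G^{N,0}_{0,N}$ by Mellin inversion, and apply the Laplace-transform formula for $G$-functions --- is sound (for $h\le 0$) and is essentially the paper's own argument compressed into one step: the paper conditions on $z=x_2\cdots x_N$, writes $e^{hxz}=H^{1,0}_{0,1}$, and applies the Laplace-transform formula (Mathai, Eq.~2.19) once per coordinate, each integration adding a parameter pair until it reaches $H^{1,N}_{N,1}=G^{1,N}_{N,1}$. Your consistency check at $N=1$ and, more importantly, your observation that $\mathbb{E}(e^{hx_1\cdots x_N})=+\infty$ for $h>0$ when $N\ge 2$ are both correct; the latter is a real caveat to the lemma as stated (it holds only in the Laplace regime $h\le 0$), which the paper leaves implicit even though Lemma~\ref{lem2} is invoked with $h>0$ in the upper-tail part of the proof of Theorem~\ref{thm: jl}. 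To turn your proposal into a complete proof, drop the divergent-series matching as the main argument and carry out the $G^{N,0}_{0,N}$-density-plus-Laplace-transform computation in full, stating explicitly the sign restriction on $h$ and the parameter conditions of the transform formula.
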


\begin{proof}
Let $H^{m,n}_{p,q}\left(\cdot\left| \begin{array}{c}  (a_1,A_1),\ldots,(a_p,A_p)\\(b_1,B_1),\ldots,(b_q,B_q)\end{array}\right.\right)$ be the Fox H-function given in \citep[][ Def. 1.1]{mathai_h-function_2010} and define $z=x_2\cdots x_N$. Since $\exp\{hxz\}=H^{1,0}_{0,1}\left(-hzx \left| \begin{array}{c} - \\(0, 1)\end{array}\right. \right)$ \citep[][ Eq. 1.39]{mathai_h-function_2010}, then by the law of iterated expectation
\begin{align}
    & \mathbb{E}\left(\mathbb{E}\left(e^{hx_1z} \mid z\right)\right)  = \mathbb{E} \left(\frac{\beta_1^{\alpha}}{\Gamma(\alpha)}\int^{\infty}_0 e^{-\beta_1 x} x^{\alpha - 1} e^{hxz}dx \right) \notag \\
    & = \mathbb{E} \left(\frac{\beta_1^{\alpha}}{\Gamma(\alpha)}\int^{\infty}_0 e^{-\beta_1 x} x^{\alpha - 1} H^{1,0}_{0,1}\left(-hzx \left| \begin{array}{c} - \\(0, 1)\end{array}\right. \right)dx \right) \label{eq: e2h}
    \end{align}
which is the Laplace transform of $x^{\alpha-1} H^{1,0}_{0,1}\left(-hzx \left| \begin{array}{c} - \\(0, 1)\end{array}\right. \right)$. From Eq. 2.19 in \cite{mathai_h-function_2010}, with $\varrho=\alpha$, $a=-hz$ and $s=\beta$, Eq. \ref{eq: e2h} becomes    
   \begin{align}
    & \mathbb{E} \left(\frac{\beta_1^{\alpha}}{\Gamma(\alpha)} \beta_1^{-\alpha} H^{1,1}_{1,1}\left(-\frac{hz}{\beta_1} \left| \begin{array}{c} (1-\alpha, 1)\\(0, 1)\end{array}\right. \right) \right)\label{eq:laplace}=...\\
    & = \mathbb{E} \left(\left(\frac{1}{\Gamma(\alpha)}\right)^{N-1} H^{1,N-1}_{N-1,1}\left(-\frac{h}{\beta_1\cdots\beta_{N-1}}\left|\begin{array}{c}(1-\alpha, 1), \ldots, (1-\alpha, 1) \\(0,1) \end{array}\right.\right)\right)\notag\\
    & = \frac{1}{\Gamma(\alpha)^N}\beta_N^{\alpha}\int_0^{\infty} e^{-\beta_N x}x^{\alpha -1}H^{1,N-1}_{N-1,1}\left(-\frac{h}{\beta_1\cdots\beta_{N-1}}\left|\begin{array}{c} (1-\alpha, 1), \ldots, (1-\alpha, 1) \\(0,1)\end{array}\right.\right)dx \notag \\
    & = \Gamma(\alpha)^{-N}H^{1,N}_{N,1}\left(-\frac{h}{\beta_1\cdots\beta_{N} }\left|\begin{array}{c}(1-\alpha, 1), \ldots, (1-\alpha, 1) \\(0,1)\end{array}\right.\right)\notag\\
    & = \Gamma(\alpha)^{-N}G^{1,N}_{N,1}\left(-\frac{h}{\beta_1\cdots\beta_{N} }\left|\begin{array}{c} 1-\alpha, \ldots, 1-\alpha \\0\end{array}\right.\right)\notag\\
    & = \Gamma(\alpha)^{-N}G^{N,1}_{1,N}\left(-\frac{\beta_1\cdots\beta_{N}}{h} \left|\begin{array}{c} 1\\1-\alpha, \ldots, 1-\alpha \end{array}\right.\right)  \notag  
\end{align}
where the before last equality follows from the definition of Meijer G-function $G^{m,n}_{p,q}(\cdot|\begin{array}{c}a_1,\ldots,a_p\\b_1,\ldots,b_q\end{array})$ given in \citep[][ Def. 1.5]{mathai_h-function_2010}, and the last equality from Eq. (1.58) in \cite{mathai_h-function_2010}.
\end{proof}

\begin{lemma}\label{lem2}
\begin{equation}
\mathbb{E}\left(\exp\{h \mathcal{Q}_{1,\ldots,1}(\mathcal{A})^2\}\right) \leq  \frac{1}{\pi^{N/2}}G_{1,N}^{N,1}\left(\frac{1}{p(N) 2^N h}\left| \begin{array}{c}1\\1/2,\ldots,1/2\end{array}\right.\right) 
\end{equation}
\end{lemma}

\begin{proof}
By  Monotone Convergence Theorem
\begin{align}
    &\mathbb{E}\left(\exp\{h \mathcal{Q}_{1,\ldots,1}(\mathcal{A})^2\}\right) = \sum_{k=0}^{\infty}\frac{h^{k}}{k!} \mathbb{E}\left(\mathcal{Q}_{1,\ldots,1}(\mathcal{A})^{2k}\right)\leq \sum_{k=0}^{\infty}\frac{h^{k}}{k!} \mathbb{E}\left(\mathcal{T}_{1,\ldots,1}^{2k}\right)\notag\\&=\mathbb{E}\left(\exp\{h \mathcal{T}_{1,\ldots,1}^2\}\right)=\frac{1}{\pi^{N/2}}G_{1,N}^{N,1}\left(-\frac{p(N)}{2^N h }\left| \begin{array}{c}1\\1/2,\ldots,1/2\end{array}\right.\right) \label{eq: mct}
\end{align}
where the inequality follows from Lemma \ref{lem1} and the last equality from Lemma \ref{lem3}, where we set $\alpha=1/2$ and $\beta_j=p_j/2$ in the Meijer G-function, and from $\Gamma(1/2)=\sqrt{\pi}$.
\end{proof}

\subsubsection{Proof of \autoref{thm: jl}}
 To facilitate the proof in this section, we will work with random projection matrices scaled by $1/\sqrt{p_m}$, denoting $\tilde{H}_m=H_m/\sqrt{p_m}$. As a result, the $(i_1, \ldots, i_N)$-th element of $f(\mathcal{U})$ write as:
\begin{equation}
f(\mathcal{U})_{i_1,\ldots,i_N}=\sqrt{C(N,M)}\sum_{j_1=1}^{p_1}\cdots\sum_{j_N=1}^{p_N}\mathcal{X}_{t,j_1,\ldots,j_N}\tilde{H}_{1,j_1,i_1}\cdots \tilde{H}_{N,j_N,i_N} \notag
\end{equation}
where $C(N,M)=p(N)/q(M)$, $p(N)=\prod_{m=1}^{N}p_m$, and $q(M)=\prod_{m=1}^{M}q_m$. We denote with $||f(\mathcal{U})||$ the Frobenius' norm of $f(\mathcal{U})$ and prove that 
\begin{equation}
||\mathcal{U}-\mathcal{V}||^2(1-\varepsilon)\leq ||f(\mathcal{U})-f(\mathcal{V})||^2\leq ||\mathcal{U}-\mathcal{V}||^2(1+\varepsilon)    \notag
\end{equation}
with probability at least $1-\kappa_n$ for any pair $\mathcal{U},\mathcal{V}\in\mathbb{R}^{p_1\times \ldots\times p_N}$.

Since the map satisfies $f(\mathcal{U})-f(\mathcal{V})=f(\mathcal{U}-\mathcal{V})$ the statement becomes
\begin{equation}
||\mathcal{A}||^2(1-\varepsilon) \leq ||f(\mathcal{A})||^2\leq ||\mathcal{A}||^2(1+\varepsilon)    \label{eq: fofa}
\end{equation}
with probability at least $1-\kappa_n$. 
Without loss of generality, it is sufficient to prove that \eqref{eq: fofa} holds for arbitrary unit tensor ($||\mathcal{A}||=1$): \begin{equation}
(1-\varepsilon) \leq ||f(\mathcal{A})||^2\leq (1+\varepsilon)    \notag
\end{equation}
Define $S(\mathcal{A})= ||f(\mathcal{A})||^2/ C(N,M)$ and $\mathcal{Q}(\mathcal{A})$ as the tensor with elements 
\begin{align}
\mathcal{Q}_{i_1,\ldots,i_N}(\mathcal{A})&=\sum_{j_1=1}^{p_1}\cdots\sum_{j_N=1}^{p_N}\mathcal{A}_{j_1,\ldots,j_N}\tilde{H}_{1,j_1,i_1}\cdots \tilde{H}_{N,j_N,i_N}
\end{align}

Then $S(\mathcal{A})=\sum_{i_1=1}^{q_1}\cdots\sum_{i_N=1}^{q_N} \mathcal{Q}_{i_1,\ldots,i_N}(\mathcal{A})^2$. By Markov's inequality, it follows
\begin{eqnarray}
&&P\left(\left\{||f(\mathcal{A})||^2> (1+\varepsilon)\right\}\right)=P\left(\left\{\exp\{h  S(\mathcal{A})\}>\exp\left\{\frac{h}{C(N,M)} (1+\varepsilon)\right\}\right\}\right) \nonumber\\
&&\leq  
\mathbb{E}\left(\exp\{h  S(\mathcal{A})\}\right)\exp\left\{-\frac{h}{C(N,M)} (1+\varepsilon)\right\}\nonumber\\
&&\leq  
\left(\mathbb{E}\left(\exp\{h \mathcal{Q}_{1,\ldots,1}(\mathcal{A})^2\}\right)\right)^{q(N)}\exp\left\{-\frac{h}{C(N,M)} (1+\varepsilon)\right\}\nonumber\\
&&\leq \left(f(h)\exp\left\{-\frac{h}{p(N)} (1+\varepsilon)\right\}\right)^{q(N)}\label{eq: fofh}
\end{eqnarray}

The inequality in \eqref{eq: fofh} follows from Lemma \ref{lem2}, where we defined
\begin{align}
    f(h) = \frac{1}{\pi^{N/2}}G_{1,N}^{N,1}\left(-\frac{p(N)}{ 2^N h}\left| \begin{array}{c}1\\1/2,\ldots,1/2\end{array}\right.\right)
\end{align}
 One can obtain the optimal exponential bound for the upper tail by optimizing in $h$. However, the first-order condition is intractable due to the presence of the Meijer G-function. But a ``good enough'' solution of $h$ can be obtained using power-log expansion for the Meijer G-function as in \cite{stojanac_products_2018}. 

The first order condition of \eqref{eq: fofh} with respect to $h$ after simplification is
\begin{align}
    \frac{1}{h} G^{1,N}_{N,1}\left(\frac{2^N h}{p(N)}\left| \begin{array}{c}1/2,\ldots,1/2\\1\end{array}\right.\right) + \frac{1+\epsilon}{p(N)}G^{1,N}_{N,1}\left(\frac{2^N h}{p(N)} \left| \begin{array}{c}1/2,\ldots,1/2\\0\end{array}\right.\right) = 0 \label{eq: foch}
\end{align}

Applying the lowest order power-log series expansion for the above Meijer G-function
\begin{align}
    G_{1,N}^{N,1}\left(\frac{2^N h}{p(N)}\left| \begin{array}{c}1/2,\ldots,1/2\\x\end{array}\right.\right) \approx \left(\frac{2^N h}{p(N)}\right)^{\frac{1}{2}} \bar{H}^x_{0, N-1}\left[\log ( \frac{2^N h}{p(N)})\right]^{N-1}
\end{align}
where 
\begin{align}
    \bar{H}^0_{0, N-1} = -\frac{1}{(N-1)!}\Gamma (\frac{1}{2}), \qquad \bar{H}^1_{0, N-1} = \frac{1}{2} \bar{H}^0_{0, N-1}. \notag
\end{align}
Equation \eqref{eq: foch} approximates as follows
\begin{align}
    &\frac{1}{h} \left(\frac{2^N h}{p(N)}\right)^{\frac{1}{2}}\frac{1}{2} \bar{H}^0_{0, N-1}\left[\log (\frac{2^N h}{p(N)})\right]^{N-1} + \notag \\ &\quad \frac{1+\epsilon}{p(N)}\left(\frac{2^N h}{p(N)}\right)^{\frac{1}{2}}  \bar{H}^0_{0, N-1}\left[\log ( \frac{2^N h}{p(N)})\right]^{N-1} = 0 \\
    &\left(\frac{2^N h}{p(N)}\right)^{\frac{1}{2}} \bar{H}^0_{0, N-1}\left[\log (\frac{2^N h}{p(N)})\right]^{N-1} \left(\frac{1}{2h} + \frac{1+\epsilon}{p(N)}\right) = 0
\end{align}
Since $h>0$, the only solution is $h = p(N)/2^N$, and it follows that
\begin{align}
    &P\left(\left\{||f(\mathcal{A})||^2> (1+\varepsilon)\right\}\right) \notag\\
    &\leq \left(\frac{1}{\pi^{N/2}}G_{1,N}^{N,1}\left(1\left| \begin{array}{c}1\\1/2,\ldots,1/2\end{array}\right.\right) \exp \left\{-\frac{1}{2^N }(1+\epsilon)\right\}\right)^{q(N)}\\
    & = \exp \left\{q(N)\left(-\frac{N}{2}\ln{\pi} + \ln G_{1,N}^{N,1}\left(1\left| \begin{array}{c}1\\1/2,\ldots,1/2\end{array}\right.\right) - \frac{1+\epsilon}{2^{N}}\right)\right\}\notag
\end{align}
Given that the Meijer G-function is fully specified, we can evaluate its value and the above bound can be approximated as
\begin{align*}
    P\left(\left\{||f(\mathcal{A})||^2> (1+\varepsilon)\right\}\right) \leq \exp \left\{q_1 q_2 q_3 \left(-\frac{7}{100} -\frac{1+\epsilon}{8}\right)\right\}    
\end{align*}
For the lower tail exponential bound, consider
\begin{align*}
    & P\left(\left\{||f(\mathcal{A})||^2 < (1-\varepsilon)\right\}\right)=P\left(\left\{\exp\{-h  S(\mathcal{A})\right\}>\exp\left\{-\frac{h}{C(N,M)} (1-\varepsilon)\right\}\right) \\
    & \leq \mathbb{E}\left(\exp\{-h  S(\mathcal{A})\}\right)\exp\left\{\frac{h}{C(N,M)} (1-\varepsilon)\right\}\\
    &\leq  
\left(\mathbb{E}\left(\exp\{-h \mathcal{Q}_{1,\ldots,1}(\mathcal{A})^2\}\right)\right)^{q(N)}\exp\left\{\frac{h}{C(N,M)} (1-\varepsilon)\right\} 
\end{align*}
By expanding $\exp\{-h \mathcal{Q}_{1,\ldots,1}(\mathcal{A})^2\}$ we have
\begin{align}
    &P\left(\left\{||f(\mathcal{A})||^2 < (1-\varepsilon)\right\}\right)\notag \\
    &\leq \left(1 - h\mathbb{E}\left(\mathcal{Q}_{1,\ldots,1}(\mathcal{A})^2\right) + \frac{h^2}{2}\mathbb{E}\left(\mathcal{Q}_{1,\ldots,1}(\mathcal{A})^4\right) \right)^{q(N)}\exp\left\{\frac{h}{C(N,M)} (1-\varepsilon)\right\}\notag\\
    &\leq\left(1 - \frac{h}{p(N)} + \frac{3^Nh^2}{2(p(N))^2} \right)^{q(N)}\exp\left\{\frac{h}{C(N,M)} (1-\varepsilon)\right\} \label{eq: lower bound}
\end{align}
\bigskip

\begin{figure}[t]
\vspace*{2pt}
    \centering
    \begin{tabular}{c}
    \includegraphics[width=.5\linewidth]{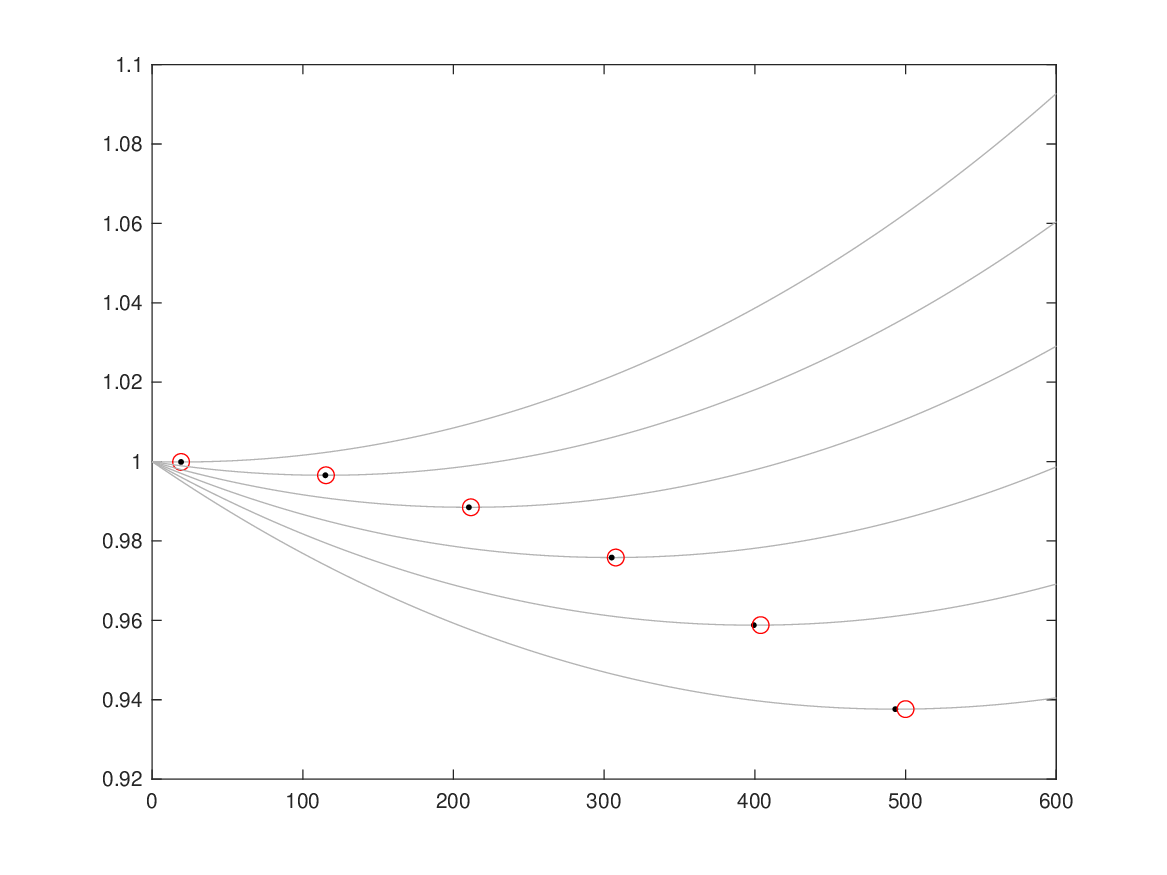} 
    \end{tabular}
    \caption{\small The plot of the lower bound as a function of $h$ for different values of $\varepsilon = 0.01, 0.06, 0.11, 0.16, 0.21, 0.26$. The optimal values of $h$ that minimize the bound are shown in the black dots, the approximated values of $h$ are shown as red circles.}
    \label{fig: bnd}
\end{figure}
To optimize the bound, solving the first order condition of \eqref{eq: lower bound} with respect to $h$, this gives $h = \sqrt{\frac{2(p(N))^2\varepsilon}{3^N(1-\varepsilon)}+\left(\frac{(p(N))(3^N-1+\varepsilon)}{3^N(1-\varepsilon)}\right)^2} - \frac{(p(N))(3^N-1+\varepsilon)}{3^N(1-\varepsilon)}$. Numerical studies (\autoref{fig: bnd}) suggest  $h^*=\frac{p(N)}{3^N-1}\varepsilon$ is a good approximation. Nevertheless, the absolute approximation error $\lvert h-h^*\rvert
\approx
\frac{p(N)(3^N-2)}{(3^N-1)^3}\varepsilon^2
$ confirms $h^*$ is a good approximation when $\varepsilon = O(p(N)^{-(1+\xi)})$, with $\xi>0$.  Substituting this value of $h$, we get \eqref{eq: lb_foc}, series expansion gives \eqref{eq: lb_expan}.
\begin{align}
    &P\left(\left\{||f(\mathcal{A})||^2 < (1-\varepsilon)\right\}\right) \notag\\
    &< \exp \left\{(q(N))\ln \left(1 - \frac{\epsilon}{3^N-1} + \frac{3^N\epsilon^2}{2(3^N-1)^2}\right)+\frac{q(N)}{3^N-1}\epsilon(1-\epsilon)\right\} \label{eq: lb_foc}\\
    &\approx \exp\left\{-q(N)\left(\frac{\epsilon^2}{2(3^N-1)}-\frac{(3^{N+1}-2)\epsilon^3}{6(3^N-1)^3}\right)\right\}\label{eq: lb_expan}
\end{align}

To get JL-embedding, we need $2\times \exp\left\{-q(N)\left(\frac{\epsilon^2}{2(3^N-1)}-\frac{(3^{N+1}-2)\epsilon^3}{6(3^N-1)^3}\right)\right\} \leq \frac{2}{n^{2+\beta}}$, thus $q(N) \geq \frac{4+2\beta}{\frac{\epsilon^2}{3^N-1}-\frac{(3^{N+1}-2)\epsilon^3}{3(3^N-1)^3}} \log n$.

\subsection{Proof of \autoref{thm: alt_bnd}}
Note that the $(i_1, i_2,\ldots, i_N)$-th entry from our mode-wise random projection can be written equivalently as the inner product of the tensor $\mathcal{X}$ and a rank $1$ tensor constructed by the outer product of the corresponding columns of matrices $H_{n,:,i_n}$:
\begin{align*}
f(\mathcal{X})_{i_1,\ldots,i_N} = \frac{1}{\sqrt{q(N)}}\left<H_{1,:,i_1} \circ H_{2,:,i_2} \circ \cdots \circ H_{N,:,i_N}, \mathcal{X}\right>=\frac{1}{\sqrt{q(N)}}u_{i_1,\ldots,i_N}\\
\end{align*}
To find the bound on the embedding dimensions, we follow the similar arguments from \cite{rakhshan_tensorized_2020} to first bound the variance of the Frobenius norm of $f(\mathcal{X})$ and then applying Hypercontractivity Concentration Inequality \citep{schudy2012concentration} to bound the embedding dimension.

\begin{align*}
\mathbb{V}\left(||f(\mathcal{X}||_F^2\right) = \mathbb{E}||f(\mathcal{X})||_F^4-\left(\mathbb{E}||f(\mathcal{X})||_F^2\right)^2
\end{align*}

Due to expected isometry, it can be shown that $\mathbb{E}||f(\mathcal{X})||_F^2 = ||\mathcal{X}||_F^2=1$, and
\begin{align*}
    \mathbb{E}||\mathcal{U}||_F^4 = \sum_{i_1=1}^{q_1}\cdots\sum_{i_N=1}^{q_N}\mathbb{E} u_{i_1,\ldots,i_N}^4 + \sum_{i_1\ldots i_N \neq i_1'\ldots i_N'}\mathbb{E}(u_{i_1,\ldots,i_N}^2u_{i_1',\ldots,i_N'}^2)
\end{align*}

Since $u_{i_1,\ldots,i_N}^2$ and $u_{i_1',\ldots,i_N'}^2$ are independent, the second term on the right hand side amounts to $q(N)(q(N)-1)||\mathcal{X}||_F^4=q(N)(q(N)-1)$. Using the same argument in \cite{rakhshan_tensorized_2020} we can bound $\mathbb{E}u_{i_1,\ldots,i_N}^4$,

\begin{align*}
    \mathbb{E}u_{i_1,\ldots,i_N}^4 =\mathbb{E}\left<H_{1,:,i_1} \circ H_{2,:,i_2} \circ \cdots \circ H_{N,:,i_N}, \mathcal{X}\right>^4\leq 3^N ||f(\mathcal{X})||_F^4= 3^N
\end{align*}

Then,
\begin{align*}
    \mathbb{V}\left(||f(\mathcal{X}||_F^2\right) &= \mathbb{V}\left(||\frac{1}{\sqrt{q(N)}}\mathcal{U}||_F^2\right) = \frac{1}{q(N)^2}\left(\mathbb{E}||\mathcal{U}||_F^4-\left(\mathbb{E}||\mathcal{U}||_F^2\right)^2\right)\\
    &\leq  \frac{1}{q(N)^2} \left[q(N)3^N + q(N)(q(N)-1)\right] -1 = \frac{3^N-1}{q(N)}
\end{align*}

By Hypercontractivity Concentration Inequality, for some positive constants $C$ and $K$ we have,
\begin{eqnarray*}
    \mathbb{P}\left(\left|||f(\mathcal{X})||_F^2-||\mathcal{X}||_F^2\right| \geq \varepsilon ||\mathcal{X}||_F^2 \right) &\leq& C\exp \left[-\left(\frac{\varepsilon^2}{K \mathbb{V}(||f(\mathcal{X})||_F^2)}\right)^{\frac{1}{2N}}\right]\\
    &\leq& C\exp \left[-\frac{(\sqrt{q(N)}\varepsilon)^{\frac{1}{N}}}{(K 3^N)^{\frac{1}{2N}}}\right]
\end{eqnarray*}

\subsection{Proof of \autoref{thm: pos_cons}}
Let $\mathcal{P}_n$ denote a sequence of sets of probability densities, $N(\varepsilon_n, \mathcal{P}_n)$ the minimum number of Hellinger balls of radius $\varepsilon_n$ needed to cover $\mathcal{P}_n$. Define the following conditions:
\begin{enumerate}
    \item[a)] $\log N(\varepsilon_n, \mathcal{P}_n) \leq n\varepsilon_n^2$ for all large $n$
    \item[b)] $\pi(\mathcal{P}_n^c) \leq e^{-2n\varepsilon_n^2}$ for all large $n$
    \item[c)] $\pi \left[f: d_t(f, f_0) < \frac{\varepsilon_n^2}{4}\right] \geq e^{-n\varepsilon_n^2/4}$ for all large $n$.
\end{enumerate}
\begin{proposition} \label{prop: pos_cons}
    If $n\varepsilon_n^2 \to \infty$, then under conditions $a, b, c$ (for some $t > 0$), we have
    \begin{align}
        E_{f_0}\pi \left[d(f, f_0) > 4\varepsilon_n \mid \left(y_i, \mathcal{X}_i\right)_{i=1}^{n}\right] \leq 4e^{-n\varepsilon_n^2 \min (1/2, t/4)} \notag
    \end{align}
\end{proposition}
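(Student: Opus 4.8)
\noindent\emph{Plan of proof.} Proposition~\ref{prop: pos_cons} restates Theorem~4 of \cite{jiang_bayesian_2007}, so the plan is to reproduce that argument, which is a variant of the Ghosal--Ghosh--van der Vaart testing scheme adapted to the quasi-divergence $d_t$. Writing $L_n(f)=\prod_{i=1}^n f(y_i\mid\mathcal{X}_i)/f_0(y_i\mid\mathcal{X}_i)$ and $U_n=\{f:d(f,f_0)>4\varepsilon_n\}$, the posterior mass of the ``bad set'' is
\begin{equation*}
  \pi\!\left[U_n\mid (y_i,\mathcal{X}_i)_{i=1}^n\right]=\frac{N_n}{Z_n},\qquad N_n=\int_{U_n} L_n(f)\,d\pi(f),\qquad Z_n=\int L_n(f)\,d\pi(f).
\end{equation*}
The two steps are: (i) lower bound $Z_n$ on an event of probability at least $1-e^{-nt\varepsilon_n^2/4}$, using condition $c$; and (ii) upper bound $E_{f_0}N_n$ after multiplying by a suitable test, using conditions $a$ and $b$. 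The two elementary facts used throughout are the likelihood-ratio identity $E_{f_0}[L_n(f)]=1$ and the moment identity $E_{f_0}[(f_0/f)^t]=1+t\,d_t(f,f_0)$, the latter being just the definition of $d_t$ integrated against $\nu_{\mathcal{X}}$ and $\nu_y$.

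For the denominator, I would restrict the integral to $A_n=\{f:d_t(f,f_0)<\varepsilon_n^2/4\}$, whose prior mass is at least $e^{-n\varepsilon_n^2/4}$ by condition $c$. Applying Jensen's inequality to the convex map $x\mapsto x^{-1/t}$ under the normalized prior $\pi|_{A_n}/\pi(A_n)$ gives $Z_n\ge \pi(A_n)\big(E_{\mu_n}[L_n(f)^{-t}]\big)^{-1/t}$, where $\mu_n=\pi|_{A_n}/\pi(A_n)$. Since the pairs $(y_i,\mathcal{X}_i)$ are i.i.d., $E_{f_0}[L_n(f)^{-t}]=(1+t\,d_t(f,f_0))^n\le e^{nt\varepsilon_n^2/4}$ for $f\in A_n$; Fubini and Markov's inequality then yield $P_{f_0}\big(Z_n<e^{-3n\varepsilon_n^2/4}\big)\le e^{-nt\varepsilon_n^2/4}$.

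For the numerator, condition $a$ supplies a cover of $\mathcal{P}_n$ by $N(\varepsilon_n,\mathcal{P}_n)$ Hellinger balls of radius $\varepsilon_n$ with $\log N(\varepsilon_n,\mathcal{P}_n)\le n\varepsilon_n^2$; combining the classical single-alternative Hellinger test with a maximum over the cover produces a test $\phi_n$ with $E_{f_0}\phi_n\le N(\varepsilon_n,\mathcal{P}_n)e^{-c n\varepsilon_n^2}$ and $\sup_{f\in\mathcal{P}_n\cap U_n}E_f(1-\phi_n)\le e^{-c n\varepsilon_n^2}$, the constant $c$ arising from the $3\varepsilon_n$-separation of $\mathcal{P}_n\cap U_n$ from $f_0$. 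Splitting $U_n$ along $\mathcal{P}_n$ and $\mathcal{P}_n^c$, using the change of measure $E_{f_0}[(1-\phi_n)L_n(f)]=E_f(1-\phi_n)$ on the first piece and condition $b$ with $E_{f_0}[L_n(f)]=1$ on the second, gives $E_{f_0}[(1-\phi_n)N_n]\le e^{-c n\varepsilon_n^2}+e^{-2n\varepsilon_n^2}$. Finally I would combine, via
\begin{equation*}
  E_{f_0}\pi\!\left[U_n\mid\cdot\right]\le E_{f_0}\phi_n+e^{3n\varepsilon_n^2/4}\,E_{f_0}[(1-\phi_n)N_n]+P_{f_0}\big(Z_n<e^{-3n\varepsilon_n^2/4}\big),
\end{equation*}
and bound each of the four resulting exponential terms by $e^{-n\varepsilon_n^2\min(1/2,t/4)}$, yielding the stated $4e^{-n\varepsilon_n^2\min(1/2,t/4)}$. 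The only delicate part is the test construction together with the exponent bookkeeping: one must check that the separation constant $c$ is large enough (here $c\ge 3/2$ suffices) so that both $N(\varepsilon_n,\mathcal{P}_n)e^{-cn\varepsilon_n^2}$ and $e^{3n\varepsilon_n^2/4-cn\varepsilon_n^2}$ are dominated by $e^{-n\varepsilon_n^2/2}$; the remaining estimates are routine once conditions $a$, $b$, $c$ are in hand, and indeed the whole statement may simply be cited from \cite{jiang_bayesian_2007}.
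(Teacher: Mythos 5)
Your proposal takes essentially the same route as the paper: the paper offers no proof of Proposition~\ref{prop: pos_cons} beyond citing \cite{jiang_bayesian_2007}, and your sketch is a correct reconstruction of exactly that cited testing-plus-prior-mass argument (denominator lower bound from condition $c$ via the identity $E_{f_0}[(f_0/f)^t]=1+t\,d_t(f,f_0)$ and Markov, tests built from the Hellinger cover of condition $a$, sieve complement handled by condition $b$ and $E_{f_0}[L_n(f)]=1$). The exponent bookkeeping you flag (needing $c\ge 3/2$, the threshold $e^{-3n\varepsilon_n^2/4}$, and the four-term split giving the factor $4$ and $\min(1/2,t/4)$) is consistent with the constants in the cited theorem, so nothing essential is missing.
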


Proposition \ref{prop: pos_cons} has been proved in \cite{jiang_bayesian_2007}. We prove Theorem \ref{thm: pos_cons} by showing conditions $a, b$ and $c$ hold in our case for some positive $t$.
\vspace{1mm}

\begin{proposition} \label{prop: a2}
    Assume $\mathcal{B} \sim \mathcal{TN}\left(\boldsymbol{0}, \boldsymbol{\Sigma}_1,\ldots, \boldsymbol{\Sigma}_N\right)$, where $\mathcal{TN}$ denotes the Tensor Normal distribution and $\boldsymbol{\Sigma}_n$ is the covariance matrix for mode $n$. Then
    \begin{align*}                      P\left(\left\lvert\left<f(\mathcal{X}), \mathcal{B}\right> - \left<\mathcal{X}, \mathcal{B}_0\right>\right\rvert< \Delta\right) > P(X-Y\geq 2),
    \end{align*}
    where $X\sim Poi\left(\frac{\Delta_1}{2}\right), Y\sim Poi(\frac{\lambda}{2})$ with $\Delta_1 = \frac{\Delta^2}{\text{Var}\left(\left<f(\mathcal{X}), \mathcal{B}\right>\right)}, \lambda = \frac{\left<\mathcal{X}, \mathcal{B}_0\right>^2}{\text{Var}\left(\left<f(\mathcal{X}), \mathcal{B}\right>\right)}, \text{Var}\left(\left<f(\mathcal{X}), \mathcal{B}\right>\right) = \text{vec}(f(\mathcal{X}))'(\boldsymbol{\Sigma}_1 \otimes \cdots \otimes \boldsymbol{\Sigma}_N) \text{vec}(f(\mathcal{X}))$.
\end{proposition}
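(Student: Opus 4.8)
The plan is to identify the exact distribution of $\langle f(\mathcal{X}),\mathcal{B}\rangle$, recast the event as a lower tail of a noncentral chi--squared random variable, and then exploit the Poisson mixture representation of the noncentral chi--squared together with the Erlang--Poisson identity.

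First I would determine the law of $\langle f(\mathcal{X}),\mathcal{B}\rangle$. Conditionally on $\mathcal{X}$ this is the linear functional $\mathrm{vec}(f(\mathcal{X}))'\mathrm{vec}(\mathcal{B})$ of $\mathcal{B}$, and $\mathcal{B}\sim\mathcal{TN}(\boldsymbol{0},\boldsymbol{\Sigma}_1,\ldots,\boldsymbol{\Sigma}_N)$ is equivalent to $\mathrm{vec}(\mathcal{B})\sim\mathcal{N}(\boldsymbol{0},\boldsymbol{\Sigma}_1\otimes\cdots\otimes\boldsymbol{\Sigma}_N)$, with the Kronecker factors ordered consistently with the vectorization convention. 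Hence $\langle f(\mathcal{X}),\mathcal{B}\rangle\sim\mathcal{N}(0,\sigma^2)$ with $\sigma^2=\mathrm{vec}(f(\mathcal{X}))'(\boldsymbol{\Sigma}_1\otimes\cdots\otimes\boldsymbol{\Sigma}_N)\mathrm{vec}(f(\mathcal{X}))=\mathrm{Var}(\langle f(\mathcal{X}),\mathcal{B}\rangle)$. Writing $c=\langle\mathcal{X},\mathcal{B}_0\rangle$ (a constant, as $\mathcal{X}$ and $\mathcal{B}_0$ are fixed) and $W=\langle f(\mathcal{X}),\mathcal{B}\rangle/\sigma\sim\mathcal{N}(0,1)$, the event becomes $\{(W-c/\sigma)^2<\Delta^2/\sigma^2\}=\{(W-\sqrt{\lambda})^2<\Delta_1\}$, since $(c/\sigma)^2=\lambda$ and $\Delta^2/\sigma^2=\Delta_1$.

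Next I would use that $(W-\sqrt{\lambda})^2$ is a noncentral $\chi^2$ random variable with one degree of freedom and noncentrality $\lambda$, and therefore admits the representation $(W-\sqrt{\lambda})^2\stackrel{d}{=}\chi^2_{1+2Y}$ with $Y\sim\mathrm{Poi}(\lambda/2)$. Because the chi--squared distribution is monotone in its degrees of freedom for the stochastic order, conditioning on $Y$ gives $\chi^2_{1+2Y}\preceq_{\mathrm{st}}\chi^2_{2+2Y}$, so that
\begin{align*}
P\big(|\langle f(\mathcal{X}),\mathcal{B}\rangle-\langle\mathcal{X},\mathcal{B}_0\rangle|<\Delta\big)=P\big((W-\sqrt{\lambda})^2<\Delta_1\big)\;\ge\;\mathbb{E}_Y\!\left[P\big(\chi^2_{2(1+Y)}<\Delta_1\mid Y\big)\right].
\end{align*}
Finally, by the Erlang--Poisson identity $P(\chi^2_{2m}<x)=P(\mathrm{Poi}(x/2)\ge m)$, introducing $X\sim\mathrm{Poi}(\Delta_1/2)$ independent of $Y$ turns the right--hand side into $\mathbb{E}_Y[P(X\ge 1+Y\mid Y)]=P(X-Y\ge 1)\ge P(X-Y\ge 2)$, and the first of these last two inequalities is strict because $P(X-Y=1)\ge P(X=1)P(Y=0)>0$; this gives the claimed bound.

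The genuinely substantive step is the parity reduction $\chi^2_{1+2Y}\preceq_{\mathrm{st}}\chi^2_{2+2Y}$: the odd--degree chi--squared CDF has no clean Poisson form, so one trades it for the even--degree case where the Erlang--Poisson identity applies, at the affordable cost of replacing the exact probability by $P(X-Y\ge 1)$, which must still be checked to dominate $P(X-Y\ge 2)$. The only other point requiring attention is the bookkeeping that keeps the order of the Kronecker factors in the covariance of $\mathrm{vec}(\mathcal{B})$ consistent with the matricization convention, so that the stated formula for $\mathrm{Var}(\langle f(\mathcal{X}),\mathcal{B}\rangle)$ is reproduced exactly; everything else reduces to standard distributional identities.
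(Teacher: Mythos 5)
Your proof is correct and follows essentially the same route as the paper: identify $\left<f(\mathcal{X}),\mathcal{B}\right>$ as a centered Gaussian with the stated Kronecker-form variance, recast the event as the lower tail of a noncentral $\chi^2_1(\lambda)$, use its Poisson mixture of central chi-squareds, and reduce the odd degrees of freedom to even ones via stochastic ordering. The only difference is that the paper closes the argument by citing Proposition A.2 of Guhaniyogi and Dunson (2015), whereas you finish explicitly with the Erlang--Poisson identity and in fact obtain the slightly sharper intermediate bound $P(X-Y\geq 1)$ before weakening to $P(X-Y\geq 2)$.
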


\begin{proof}
    Note that $\left<f(\mathcal{X}), \mathcal{B}\right> \sim \mathcal{N}(0, \text{vec}(f(\mathcal{X}))'(\boldsymbol{\Sigma}_1 \otimes \cdots \otimes \boldsymbol{\Sigma}_N) \text{vec}(f(\mathcal{X}))$. This implies 
    \begin{align*}
        \frac{\left\lvert\left<f(\mathcal{X}), \mathcal{B}\right> - \left<\mathcal{X}, \mathcal{B}_0\right>\right\rvert^2}{\text{Var}\left(\left<f(\mathcal{X}), \mathcal{B}\right>\right)} \sim \chi^2_1(\lambda),
    \end{align*}
    where $\chi^2_1(\lambda)$ is the noncentral chi-squared distribution with degrees of freedom $1$ and noncentral parameter $|\lambda|\sqrt{\text{Var}(\left<f(\mathcal{X}), \mathcal{B}\right>)}=|\mathbb{E}(\left<f(\mathcal{X}), \mathcal{B}\right>-\left<\mathcal{X}, \mathcal{B}_0\right>)|=\left<\mathcal{X}, \mathcal{B}_0\right>$. It is known that a noncentral chi-squared distribution can also be written a gamma mixture with Poisson weights
    \begin{align}    P\left(\left\lvert\left<f(\mathcal{X}), \mathcal{B}\right> - \left<\mathcal{X}, \mathcal{B}_0\right>\right\rvert< \Delta\right) &= P\left(\frac{\left\lvert\left<f(\mathcal{X}), \mathcal{B}\right> - \left<\mathcal{X}, \mathcal{B}_0\right>\right\rvert^2}{\text{Var}\left(\left<f(\mathcal{X}), \mathcal{B}\right>\right)} < \Delta_1\right)\notag \\
        & = \sum_{i=0}^{\infty}\frac{e^{-\frac{\lambda}{2}}(\frac{\lambda}{2})^{i}}{i!}P(Z_{1+2i}<\Delta_1),\label{eq: chi-squ}
    \end{align}
    where $\Delta_1 = \Delta^2/\text{Var}(\left<f(\mathcal{X}), \mathcal{B}\right>)$ and $Z_{1+2i} \sim \chi^2_{1+2i}$. Note that $P(Z_{1+2i}<\Delta_1)>P(Z_{2+2i}<\Delta_1) = P(G<\Delta_1)$ where $G\sim \mathcal{G}a(1+i, \frac{1}{2})$. From Proposition A.2 in \cite{guhaniyogi_bayesian_2015}, we obtain $P\left(\left\lvert\left<f(\mathcal{X}), \mathcal{B}\right> - \left<\mathcal{X}, \mathcal{B}_0\right>\right\rvert< \Delta\right) > P(X-Y\geq 2)$.
\end{proof}

\textit{Proof of Theorem \ref{thm: pos_cons}}. We will check the three conditions with $t=1$. Let $b_n = \sqrt{8\Tilde{\lambda}_nn\varepsilon_n^2}$.
\vspace{1mm}

\textit{Condition a.} Let $\mathcal{P}_n$ be the set of all densities that can be represented by the $\mathcal{B}$ with entries $\lvert b_{jkl}\rvert < b_n$, $j = 1, \ldots, q_{1,n}, k = 1, \ldots, q_{2,n}, l = 1, \ldots, q_{3,n}$. Let's consider the $l^{\infty}$ balls of the form $(a_{jkl}-\delta, a_{jkl}+\delta)$ for each entry of the tensor coefficients with the center of each ball inside $\mathcal{P}_n$, the external covering number of $\mathcal{P}_n$ is bounded above by $\left(\frac{b_n}{\delta}+1\right)^{q_n}$ where $q_n=q_{1,n}q_{2,n}q_{3,n}$.

Let $f_u$ be any density in $\mathcal{P}_n$, $\exists \mathcal{B}$ s.t. $u = \left<\texttt{GTRP}(\mathcal{X}_i), \mathcal{B}\right>$, $\lvert b_{jkl}\rvert \leq b_n$ and 
\begin{align*}
    f_u(y) = \exp \left\{ya(u)+b(u)+c(y) \right\}
\end{align*}
    
Let $b_{jkl} \in (c_{jkl}-\delta, c_{jkl}+\delta)$, s.t. $\lvert b_{jkl}-c_{jkl}\rvert\leq \delta$ and $\lvert c_{jkl}\rvert \leq b_n$. Let $v = \left<\texttt{GTRP}(\mathcal{X}_i), \mathcal{C}\right>$, and 
\begin{align*}
    f_v(y) = \exp \left\{ya(v)+b(v)+c(y)\right\}
\end{align*}

We then find the number of Hellinger balls that required to cover $\mathcal{P}_n$ by using the fact $d(f, f_0) \leq \left(d_{KL} (f, f_0)\right)^{1/2}$, where 
\begin{align}
    d_{KL}(f_u,f_v) 
    & = \iint f_v\log \left(\frac{f_v}{f_u}\right)\nu_y(dy)\nu_{\mathcal{X}}(d\mathcal{X}) \notag \\
    & = \iint \left[y\left(a(v) - a(u)\right) + \left(b(v) - b(u)\right)\right]f_v\nu_y(dy)\nu_{\mathcal{X}}(d\mathcal{X}) \notag \\
    & = \int \left[ \left(a(v) - a(u)\right)E\left[y\mid \mathcal{X}\right] + \left(b(v)-b(u)\right)\right]\nu_{\mathcal{X}}(d\mathcal{X}) \notag \\
    & = \int (v-u)\left[ a'(u_v)\left(-\frac{b'(v)}{a'(v)}\right)+b'(u_v)\right]\nu_{\mathcal{X}}(d\mathcal{X}). \label{eq: d_0}
\end{align}
The last two steps are achieved by first integrating with respect to $y$ and then applying the mean value theorem, where $u_v$ is the intermediate point between $u$ and $v$. By Cauchy-Schwartz inequality, the condition $|b_{jkl}-c_{jkl}|<\delta$ and the assumption $|x_{jkl}|<1$ we have,
\begin{align*}
    \lvert u - v \rvert 
    & = \lvert \left<\texttt{GTRP}(\mathcal{X}_i), \mathcal{B}\right> - \left<\texttt{GTRP}(\mathcal{X}_i), \mathcal{C}\right>\rvert = \lvert \left<\texttt{GTRP}(\mathcal{X}_i), \mathcal{B - C}\right>\rvert \\
    & \leq \lVert\texttt{GTRP}(\mathcal{X}_i) \rVert \lVert\mathcal{B - C} \rVert \leq \lVert \mathcal{X}_i\rVert \sqrt{q_n}\delta \leq \sqrt{p_nq_n}\delta = \theta_n\delta,
\end{align*}
where we defined $\theta_n = \sqrt{q_np_n}$. Since $\lvert u\rvert = \lvert \left<\texttt{GTRP}(\mathcal{X}_i), \mathcal{B}\right> \rvert \leq \lVert \texttt{GTRP}(\mathcal{X}_i) \rVert \lVert \mathcal{B}\rVert \leq \sqrt{q_np_n}b_n \leq b_n\theta_n$, similarly, $\lvert v\rvert \leq b_n\theta_n$, thus $\lvert u_v\rvert \leq b_n\theta_n$. Combining the results and \eqref{eq: d_0}, we have,
\begin{align*}
    d(f_u, f_v) &\leq \sqrt{d_{KL}(f_u, f_v)} \\
    &\leq \sqrt{\int |v-u|\left| a'(u_v)\left(-\frac{b'(v)}{a'(v)}\right)+b'(u_v)\right|\nu_{\mathcal{X}}(d\mathcal{X})}\\
    &\leq \sqrt{2\theta_n \delta \sup_{\lvert h\rvert \leq b_n\theta_n} \lvert a'(h)\rvert \sup_{\lvert h\rvert \leq b_n\theta_n} \left\lvert \frac{b'(h)}{a'(h)}\right\rvert \int \nu_{\mathcal{X}}(d\mathcal{X})}.
\end{align*}

Let $\delta = \varepsilon_n^2/(2\theta_n\sup_{\lvert h\rvert \leq b_n\theta_n} \lvert a'(h)\rvert \sup_{\lvert h\rvert \leq b_n\theta_n} \left\lvert \frac{b'(h)}{a'(h)}\right\rvert)$, one gets $d(f_u, f_v) \leq \varepsilon_n$. The entropy of $\mathcal{P}_n$ is therefore bounded from above by 
\begin{align*}
    \left( 1+\frac{2b_n\theta_n}{\varepsilon_n^2}\sup_{\lvert h\rvert \leq b_n\theta_n} \lvert a'(h)\rvert \sup_{\lvert h\rvert \leq b_n\theta_n} \left\lvert \frac{b'(h)}{a'(h)}\right\rvert\right)^{q_n} & = \left(1-\frac{1}{\varepsilon_n^2}+\frac{D(b_n\theta_n)}{\varepsilon_n^2}\right)^{q_n} \\
    & \leq \left(\frac{D(b_n\theta_n)}{\varepsilon_n^2}\right)^{q_n}
\end{align*}
where we defined $G(R) = 1 + R\sup_{\lvert h \rvert \leq R}\lvert a'(h)\rvert \sup_{\lvert h \rvert \leq R} \lvert \frac{b'(h)}{a'(h)}\rvert$ and the inequality follows from the assumption $\varepsilon_n^2<1$. Thus the Hellinger covering number satisfied $N(\varepsilon_n, \mathcal{P}_n) \leq \left(\frac{D(b_n\theta_n)}{\varepsilon_n^2}\right)^{q_n}$, implying $\log N(\varepsilon_n, \mathcal{P}_n) \leq q_n (\log D(b_n\theta_n)+\log(1/ \varepsilon_n^2))$. Using the assumptions in \textbf{A.1}, $\frac{q_n\log(1/\varepsilon_n^2)}{n\varepsilon_n^2} \to 0 \text{ and } \frac{q_n\log G(\theta_n\sqrt{8\Tilde{\lambda}_n n\varepsilon_n^2})}{n\varepsilon_n^2} \to 0$, \textit{condition a} follows.
\vspace{5mm}

\textit{Condition b.} 

By union bound inequality, it follows:

\begin{align*}
    \pi (\mathcal{P}_n^c) = \pi \left(\cup_{j=1}^{q_{1,n}}\cup_{k=1}^{q_{2,n}}\cup_{l=1}^{q_{3,n}} \lvert b_{jkl}\rvert > b_n\right) \leq \sum_{j=1}^{q_{1,n}}\sum_{k=1}^{q_{2,n}}\sum_{l=1}^{q_{3,n}} \pi (\lvert b_{jkl}\rvert > b_n)
\end{align*}
Since $b_{jkl} \sim \mathcal{N}(0, \sigma_{jkl}^2), \frac{1}{\sigma_{jkl}^2} > \frac{1}{\Tilde{\lambda}_n}$. By Mill's ratio $\pi(\lvert \frac{b_{jkl}}{\sqrt{\Tilde{\lambda}_n}}\rvert > \frac{b_n}{\sqrt{\Tilde{\lambda}_n}})<2\frac{\exp \{-b_n^2/2\Tilde{\lambda}_n\}}{\sqrt{2\pi b_n^2/\Tilde{\lambda}_n}}$, the above quantity is bounded above by $2q_n\frac{\exp \{-b_n^2/2\Tilde{\lambda}_n\}}{\sqrt{2\pi b_n^2/\Tilde{\lambda}_n}} = 2q_n\frac{\exp \{-4n\varepsilon_n^2\}}{4\sqrt{\pi n\varepsilon_n^2}} \leq \exp \{-2n\varepsilon_n^2\}$ for sufficiently large $n$, since $\log(q_n)/(n\varepsilon_n^2)\to 0$ from the assumptions \textit{i)} of \autoref{thm: pos_cons} and $n\varepsilon_n^2\to\infty$. \textit{Condition b} follows.
\vspace{5mm}

\textit{Condition c.} We verify condition $c$ for $t=1$. From Proposition \ref{prop: a2}, we have,
\begin{align*}
    P\left(\left\lvert\left<\texttt{GTRP}(\mathcal{X}), \mathcal{B}\right> - \left<\mathcal{X}, \mathcal{B}_0\right>\right\rvert< \Delta\right) > P(X-Y\geq 2).
\end{align*} 

Since $X\sim Poi\left(\frac{\Delta_1}{2}\right), Y\sim Poi(\frac{\lambda}{2})$, $X-Y$ follows Skellam distribution with PMF
\begin{align}
    P(X-Y=k) = \exp\{-(\lambda + \Delta_1)\}\left(\frac{\Delta_1}{\lambda}\right)I_k\left(2\sqrt{\lambda \Delta_1}\right)\notag 
\end{align}

Plug in $\lambda$, $\Delta_1$ and $k=2$ we have,
\begin{align}
    P(X-Y=2) = \exp \left\{-\frac{\Delta^2 + \left<\mathcal{X}, \mathcal{B}_0\right>^2}{\text{Var}(\left<\texttt{GTRP}(\mathcal{X}), \mathcal{B}\right>)}\right\}\left(\frac{\Delta^2}{\left<\mathcal{X}, \mathcal{B}_0\right>^2}\right)I_2\left(2\frac{\Delta \lvert\left<\mathcal{X}, \mathcal{B}_0\right>\rvert}{\text{Var}(\left<\texttt{GTRP}(\mathcal{X}), \mathcal{B}\right>)}\right)
\end{align}
using the fact that for $z>0, I_k(z)>2^{k}z^{k}\Gamma(k+1)$ \citep{joshi1991some}, we have
\begin{align*}
    &P(X-Y\geq 2) > P(X-Y = 2) \\
    & > \exp \left\{-\frac{\Delta^2 + \left<\mathcal{X}, \mathcal{B}_0\right>^2}{\text{Var}(\left<\texttt{GTRP}(\mathcal{X}), \mathcal{B}\right>)}\right\}\left(\frac{\Delta}{\left<\mathcal{X}, \mathcal{B}_0\right>}\right)^{2}2^2\left(2\frac{\Delta \left<\mathcal{X}, \mathcal{B}_0\right>}{\text{Var}(\left<\texttt{GTRP}(\mathcal{X}), \mathcal{B}\right>)}\right)^2\Gamma (3)\\
    & > \exp \left\{-\frac{\Delta^2 + \left<\mathcal{X}, \mathcal{B}_0\right>^2}{\text{Var}(\left<\texttt{GTRP}(\mathcal{X}), \mathcal{B}\right>)}\right\} \frac{2^5 \Delta^4}{\text{Var}(\left<\texttt{GTRP}(\mathcal{X}), \mathcal{B}\right>)^2}\\
    & > \exp \left\{-\frac{\Delta^2 + \left<\mathcal{X}, \mathcal{B}_0\right>^2}{\underline{\lambda}\lVert\texttt{GTRP}(\mathcal{X})\rVert_F^2}\right\}\frac{2^5 \Delta^4}{\Tilde{\lambda}^2\lVert\texttt{GTRP}(\mathcal{X})\rVert_F^4}> \exp \left\{-\frac{n\varepsilon_n^2}{4}\right\} 
\end{align*}
where the last inequality follows from 
\begin{align*}
    &\exp \left\{-\frac{\Delta^2 + \left<\mathcal{X}, \mathcal{B}_0\right>^2}{\underline{\lambda}\lVert\texttt{GTRP}(\mathcal{X})\rVert_F^2}\right\}>\exp \left\{-\frac{n\varepsilon_n^2}{8}\frac{8}{n\varepsilon_n^2}\frac{\Delta^2 + K^2}{\underline{\lambda}\lVert\texttt{GTRP}(\mathcal{X})\rVert_F^2}\right\}\\
    &>\exp \left\{-\frac{n\varepsilon_n^2}{8}\frac{8}{n\varepsilon_n^2}\frac{\log(q_n)(1 + K^2)}{B_1\lVert\texttt{GTRP}(\mathcal{X})\rVert_F^2}\right\}> \exp \left\{-\frac{n\varepsilon_n^2}{8}\right\} 
\end{align*}    
choosing $\Delta=\varepsilon_n^2/(4\eta)$ and assuming $\underline{\lambda}>B_1/\log(q_n)$ as in \textbf{A.2} and $\lVert\texttt{GTRP}(\mathcal{X})\rVert_F^2>8(1 + K^2)\log(q_n)/(n\varepsilon_n^2B_1)$ as in \textbf{A.3}, and from
\begin{align*}
&\frac{2^5 \Delta^4}{\Tilde{\lambda}^2\lVert\texttt{GTRP}(\mathcal{X})\rVert_F^4}=\exp\left\{-\frac{n\varepsilon_n^2}{8}\left(\frac{8\log(\Tilde{\lambda}^2)-8\log(2^5 \Delta^4)}{n\varepsilon_n^2}+\frac{8\log(\lVert\texttt{GTRP}(\mathcal{X})\rVert_F^4)}{n\varepsilon_n^2}\right)\right\}\\
&\exp\left\{-\frac{n\varepsilon_n^2}{8}\left(8\frac{2\log(B)+2v\log(q_n)-\log(2^5 \Delta^4)}{n\varepsilon_n^2}+\frac{8\log(\lVert\texttt{GTRP}(\mathcal{X})\rVert_F^4)}{n\varepsilon_n^2}\right)\right\}> \exp \left\{-\frac{n\varepsilon_n^2}{8}\right\} 
\end{align*}    
due to assumptions $\log(q_n)/(n\varepsilon^2_n)\to 0$ in \textbf{A.1}, $\tilde{\lambda}_n\leq Bq_n^v$ in \textit{ii)} and $\log\left(\lVert \texttt{GTRP}(\mathcal{X}) \rVert\right)/(n\varepsilon_n^2) \to 0$ in \textbf{A.3} as $n\to\infty$. We conclude that for all large $n$
\begin{align*}
    P\left(\lvert\left<\texttt{GTRP}(\mathcal{X}_i), \mathcal{B}\right> - \left<\mathcal{X}_i, \mathcal{B}_0\right>\rvert < \frac{\varepsilon_n^2}{4\eta}\right) > \exp \left\{-\frac{n\varepsilon_n^2}{4}\right\}.
\end{align*}

For $\mathcal{X} = \mathcal{X}_1, \ldots, \mathcal{X}_n$, let $\mathcal{S} = \left\{\mathcal{B}: \lvert\left<\texttt{GTRP}(\mathcal{X}_i), \mathcal{B}\right> - \left<\mathcal{X}_i, \mathcal{B}_0\right>\rvert < \frac{\varepsilon_n^2}{4\eta}\right\}$. For $t=1$,
\begin{align*}%
    d_{t=1} &= \iint f_0\left(\frac{f_0}{f}-1\right)\nu_y(dy)\nu_{\mathcal{X}}(d\mathcal{X})\\
    &= \int E_{y\mid\mathcal{X}}\left[\frac{f_0}{f}(Y)-1\right]\nu_{\mathcal{X}}(d\mathcal{X}) = E_{\mathcal{X}}\left[g(u^*)\left(\left<\texttt{GTRP}(\mathcal{X}_i), \mathcal{B}\right> - \left<\mathcal{X}_i, \mathcal{B}_0\right>\right)\right]
\end{align*}
where the last steps are achieved by first integrating out $y$ and applying mean value theorem. $g$ is a continuous derivative function and $u^*$ is an intermediate point between $\left<\texttt{GTRP}(\mathcal{X}_i), \mathcal{B}\right>$ and $\left<\mathcal{X}_i, \mathcal{B}_0\right>$. Since $\lvert \left<\mathcal{X}_i, \mathcal{B}_0\right> \rvert < \sum_n \lvert b_{jkl, 0} \rvert < K$, we can bound $u^*$ by the following,
\begin{align*}
    \lvert u^*\rvert < \lvert \left<\texttt{GTRP}(\mathcal{X}_i), \mathcal{B}\right> - \left<\mathcal{X}_i, \mathcal{B}_0\right>\rvert + \lvert \left<\mathcal{X}_i, \mathcal{B}_0\right>\rvert < \frac{\varepsilon_n^2}{4\eta} + K
\end{align*}

Choosing $\eta$ such that $\lvert g(u^*)\rvert < \eta$ in the interval $\left[-(K+1), (K+1)\right]$ for all large $n$, this implies $d_t(f, f_0) < \frac{\varepsilon_n^2}{4}$ is a subset of $\mathcal{S}$, hence confirming condition $c$.

\subsection{Proof of Theorem \ref{thm: pos_parafac}}
We show that the three conditions are also satisfied with PARAFAC priors. Following the prior imposed on the margins from the PARAFAC decomposition from \eqref{eq.gamma}, we have $\boldsymbol{\gamma}_{m }^{(d)} \sim\mathcal{N}_{p_{m }}(\boldsymbol{0},\tau\zeta^{(d)}W_{m }^{(d)})$.

\textit{Condition a} is easily verified with the same spirits as in the proof of Thm \ref{thm: pos_cons}.

\textit{Condition b.}

By PARAFAC decomposition, we have:
\begin{align*}
    \pi (\lvert b_{jkl}\rvert \leq b_n) & = \pi \left(\lvert\sum_{d=1}^D \gamma_{1,j}^{(d)}\gamma_{2,k}^{(d)}\gamma_{3,l}^{(d)} \rvert \leq b_n\right) \geq \pi \left(\sum_{d=1}^D\lvert \gamma_{1,j}^{(d)}\gamma_{2,k}^{(d)}\gamma_{3,l}^{(d)} \rvert \leq b_n\right) \\
    & \geq \pi \left(\lvert \gamma_{1,j}^{(d)}\gamma_{2,k}^{(d)}\gamma_{3,l}^{(d)} \rvert \leq \frac{b_n}{D}\right)  \geq \pi \left(\lvert \gamma_{m,j_m}^{(d)} \rvert \leq \left(\frac{b_n}{D}\right)^{1/M}\right)
\end{align*}

The inequalities follow the inclusion of events: let $A_d =\gamma_{1,j}^{(d)}\gamma_{2,k}^{(d)}\gamma_{3,l}^{(d)}$, then $\left\{|A_d|\le \frac{b_n}{D}\text{ for } d=1 \dots,D \right\}\subseteq\left\{\sum_{d=1}^D |A_d|\le b_n\right\} \subseteq \left\{\left|\sum_{d=1}^D A_d\right|\le b_n\right\}$, now consider $\lvert A_d\rvert=\prod_{m=1}^{M}\lvert\gamma_{m,j_m}^{(d)}\rvert$, $\left\{|\gamma_{m,j_m}^{(d)}|\le\left(\frac{b_n}{D}\right)^{1/M}\right\}\subseteq\left\{|A_d|\le \frac{b_n}{D} \right\}$.

Therefore, $\pi (\lvert b_{jkl}\rvert > b_n) \leq \pi \left(\lvert \gamma_{m,j_m}^{(d)} \rvert > \left(\frac{b_n}{D}\right)^{1/M}\right)$. By Mill's ratio $\pi(\lvert \frac{\gamma_{m,j_m}^{(d)}}{\sqrt{\Tilde{\lambda}_n}}\rvert > \frac{\left(\frac{b_n}{D}\right)^{1/M}}{\sqrt{\Tilde{\lambda}_n}})<2\frac{\exp \{-\left(\frac{b_n}{D}\right)^{2/M}/2\Tilde{\lambda}_n\}}{\sqrt{2\pi \left(\frac{b_n}{D}\right)^{2/M}/\Tilde{\lambda}_n}}$. Let $b_n = D(8\tilde{\lambda}_n n\varepsilon_n^2)^{M/2}$, the results follow from the same arguments used in proof of Thm \ref{thm: pos_cons} condition $b$.

\textit{Condition c.}

We are interested in a lower bound for
\begin{align}
    P\left(\lvert\left<\texttt{GTRP}(\mathcal{X}_i), \mathcal{B}\right> - \left<\mathcal{X}_i, \mathcal{B}_0\right>\rvert < \Delta_n\right).
\end{align}
Following Assumption A5, the projection error satisfies $\lvert\left<\texttt{GTRP}(\mathcal{X}_i), \mathcal{B}_0^c\right> - \left<\mathcal{X}_i, \mathcal{B}_0\right>\rvert = o(q_{0,n}^{-1})=\varepsilon_{n}$ a.s. w.r.t. $f_{0}$, where $q_{0,n}$ is the lower bound provided by the concentration inequalities in Prop. \ref{cor: jl} and Thm. \ref{thm: jl},  and $\lvert\left<\texttt{GTRP}(\mathcal{X}_i), \mathcal{B}\right> - \left<\mathcal{X}_i, \mathcal{B}_0\right>\rvert \leq \lvert\left<\texttt{GTRP}(\mathcal{X}_i), \mathcal{B}\right> - \left<\texttt{GTRP}(\mathcal{X}_i), \mathcal{B}_0^c\right>\rvert + \lvert\left<\texttt{GTRP}(\mathcal{X}_i), \mathcal{B}_0^c\right> - \left<\mathcal{X}_i, \mathcal{B}_0\right>\rvert=\lvert\left<\texttt{GTRP}(\mathcal{X}_i), \mathcal{B}\right>- \left<\mathcal{X}_i, \mathcal{B}_0\right>\rvert+\varepsilon_n$, then
\begin{align*}
    & P\left(\lvert\left<\texttt{GTRP}(\mathcal{X}_i), \mathcal{B}\right> - \left<\mathcal{X}_i, \mathcal{B}_0\right>\rvert < \Delta_n\right) \\
    \geq &P\left(\lvert\left<\texttt{GTRP}(\mathcal{X}_i), \mathcal{B}\right> - \left<\texttt{GTRP}(\mathcal{X}_i), \mathcal{B}_0^c\right>\rvert < \Delta_n-\varepsilon_n\right)\\
    = & P\left(\lvert\left<\texttt{GTRP}(\mathcal{X}_i), \mathcal{B} - \mathcal{B}_0^c\right>\rvert < \frac{\Delta_n}{2}\right)\\
    \geq &P\left(\lVert\texttt{GTRP}(\mathcal{X}_i)\rVert \lVert\mathcal{B} - \mathcal{B}_0^c\rVert< \frac{\Delta_n}{2}\right)\\
    =& P\left( \lVert\mathcal{B} - \mathcal{B}_0^c\rVert< \frac{\Delta_n}{2\lVert\texttt{GTRP}(\mathcal{X}_i)\rVert}\right)
\end{align*}

the second line follows from the triangular inequality, the third line from choosing $\varepsilon_n=\Delta_n/2$, and the fourth line from the Cauchy-Schwartz inequality $\lvert\left<\texttt{GTRP}(\mathcal{X}_i), \mathcal{B} - \mathcal{B}_0^c\right>\rvert \leq \lVert\texttt{GTRP}(\mathcal{X}_i)\rVert \lVert\mathcal{B} - \mathcal{B}_0^c\rVert$. Let $\omega_n = \frac{\Delta_n}{2\lVert\texttt{GTRP}(\mathcal{X}_i)\rVert}$. Assume $\mathcal{B}_0^c$ has rank at most $D$: $\mathcal{B}_0^c = \sum_{d=1}^D\boldsymbol{\beta}_1^{(d)}\circ\cdots\circ\boldsymbol{\beta}_M^{(d)}$ and $\mathcal{B} = \sum_{d=1}^D\boldsymbol{\gamma}_1^{(d)}\circ\cdots\circ\boldsymbol{\gamma}_M^{(d)}$, 
\begin{align}
    &P\left(\lVert\mathcal{B} - \mathcal{B}_0^c\rVert < \omega_n\right)\notag\\
    =& P \left(\lVert\sum_{d=1}^D\left(\boldsymbol{\gamma}_1^{(d)}\circ\cdots\circ\boldsymbol{\gamma}_M^{(d)}-\boldsymbol{\beta}_1^{(d)}\circ\cdots\circ\boldsymbol{\beta}_M^{(d)}\right)\rVert \leq \omega_n\right)\label{eq:nor_sum}\\
    \geq & P \left(\sum_{d=1}^D\lVert\boldsymbol{\gamma}_1^{(d)}\circ\cdots\circ\boldsymbol{\gamma}_M^{(d)} - \boldsymbol{\beta}_1^{(d)}\circ\cdots\circ\boldsymbol{\beta}_M^{(d)}\rVert \leq \omega_n\right) \label{eq:sum_nor}
    \end{align}
    where the inequality from \eqref{eq:nor_sum} to \eqref{eq:sum_nor} follows triangular inequality. By Lemma 7 of \cite{guhaniyogi2017bayesian}, the differences between two rank-one tensors can be decomposed into a finite sum of outer products involving the factor perturbation:
    \begin{align}
    \left\{\|\boldsymbol{\gamma}_m^{(d)}-\boldsymbol{\beta}_m^{(d)}\|\le \kappa_n,\; m=1,\ldots,M,\; d=1,\ldots,D\right\}\subseteq\left\{\|\mathcal{B}-\mathcal{B}_0^c\|<\omega_n\right\} \label{eq:fac_purt}
    \end{align}
    where $\kappa_n$ is chosen to be small enough such that the inclusion holds.
Therefore, 
\begin{align}
P\left(\|\mathcal{B}-\mathcal{B}_0^c\|<\omega_n\right) &\ge P\left(\bigcap_{d=1}^D\bigcap_{m=1}^M\left\{\|\boldsymbol{\gamma}_m^{(d)}-\boldsymbol{\beta}_m^{(d)}\|\le \kappa_n\right\}\right)\notag\\
&=\prod_{d=1}^D\prod_{m=1}^MP\left(\|\boldsymbol{\gamma}_m^{(d)}-\boldsymbol{\beta}_m^{(d)}\|\le \kappa_n\right)\notag
\end{align}
thus, it is sufficient to bound $P \left(\lVert\boldsymbol{\gamma}_m^{(d)}-\boldsymbol{\beta}_m^{(d)}\rVert \le \kappa_n \right)$.
\begin{align*}
    &P\left(\lVert\boldsymbol{\gamma}_m^{(d)}-\boldsymbol{\beta}_m^{(d)}\rVert \leq \kappa_n \vert \tau, \zeta^{(d)}, w^{(d)}_{m,j_m} \right) \\
    \geq & \prod_{j_m=1}^{q_{m,n}} P\left(\lvert\gamma^{(d)}_{m,j_m}-\beta^{(d)}_{m,j_m}\rvert \leq \frac{\kappa_n}{\sqrt{q_{m,n}}} \vert \tau, \zeta^{(d)}, w^{(d)}_{m,j_m}\right)\\
    \geq &  \prod_{j_m=1}^{q_{m,n}} \left(\frac{2\kappa_n}{\sqrt{q_{m,n}\tau\zeta^{(d)}w^{(d)}_{m,j_m}}} \exp \left\{-\frac{\lvert\beta^{(d)}_{m,j_m}\rvert^2+\kappa_n^2/q_{m,n}}{\tau\zeta^{(d)}w^{(d)}_{m,j_m}}\right\}\right)
\end{align*}

where the last step follows from the fact that $\int_a^be^{-x^2/2}dx\geq e^{-(a^2+b^2)/2}(b-a)$. Let $\varphi(\kappa_n) = \prod_{j_m=1}^{q_{m,n}} \left(\frac{2\kappa_n}{\sqrt{q_{m,n}\tau\zeta^{(d)}w^{(d)}_{m,j_m}}} \exp \left\{-\frac{\lvert\beta^{(d)}_{m,j_m}\rvert^2+\kappa_n^2/q_{m,n}}{\tau\zeta^{(d)}w^{(d)}_{m,j_m}}\right\}\right)$. We want to show $-\log \varphi(\kappa_n) < \frac{n\varepsilon_n^2}{a}$.

Note that
\begin{align*}
    &P\left(\lVert\boldsymbol{\gamma}_m^{(d)}-\boldsymbol{\beta}_m^{(d)}\rVert \leq \kappa_n \vert \tau, \zeta^{(d)} \right) =\mathbb{E}\left[P\left(\lVert\boldsymbol{\gamma}_m^{(d)}-\boldsymbol{\beta}_m^{(d)}\rVert \leq \kappa_n \vert \tau, \zeta^{(d)}, w^{(d)}_{m,j_m} \right)\right]\\
    \geq & \left(\frac{2\kappa_n}{\sqrt{q_{m,n}\tau\zeta^{(d)}}}\right)^{q_{m,n}}\prod_{j_m=1}^{q_{m,n}} \mathbb{E}\left[\left(\frac{1}{\sqrt{w^{(d)}_{m,j_m}}} \exp \left\{-\frac{\lvert\beta^{(d)}_{m,j_m}\rvert^2+\kappa_n^2/q_{m,n}}{\tau\zeta^{(d)}w^{(d)}_{m,j_m}}\right\}\right)\right]\\
    =& \left(\frac{2\kappa_n {\lambda^{(d)}_m}^2}{2\sqrt{q_{m,n}\tau\zeta^{(d)}}}\right)^{q_{m,n}}\prod_{j_m=1}^{q_{m,n}} \int\left(\frac{1}{\sqrt{w^{(d)}_{m,j_m}}} \exp \left\{-\frac{\lvert\beta^{(d)}_{m,j_m}\rvert^2+\kappa_n^2/q_{m,n}}{\tau\zeta^{(d)}w^{(d)}_{m,j_m}}-\frac{{\lambda^{(d)}_m}^2 w^{(d)}_{m,j_m}}{2}\right\}\right)dw^{(d)}_{m,j_m}\\
    =& \left(\frac{\kappa_n {\lambda^{(d)}_m}^2}{\sqrt{q_{m,n}\tau\zeta^{(d)}}}\right)^{q_{m,n}}\exp \left\{-\lambda^{(d)}_m\sqrt{\frac{2(\lvert\beta^{(d)}_{m,j_m}\rvert^2+\kappa_n^2/q_{m,n}}{\tau\zeta^{(d)}}}\right\}
\end{align*}
Following similar reasoning as in \cite{guhaniyogi2017bayesian} we move on to integrate out $\lambda_{m}^{(d)}, \tau$ and $\zeta^{(d)}$, and we end up with the following expression
\begin{align*}
    &P\left(\lVert\boldsymbol{\gamma}_m^{(d)}-\boldsymbol{\beta}_m^{(d)}\rVert \leq \kappa_n, d=1,\ldots,D, m=1,\ldots,M \right)\\
    \geq & \frac{\lambda_{2}^{\lambda_1}\Gamma(Da)}{\Gamma(\lambda_1)\Gamma(a)^D}\prod_{m=1}^M\prod_{d=1}^D\left[\left(\frac{\kappa_n}{\sqrt{q_{m,n}}b_{\lambda,d}}\right)^{q_{m,n}}\frac{\Gamma(q_{m,n}+a_{\lambda,d}\frac{M}{2})}{\Gamma(a_{\lambda,d})}\right]\\
    &\prod_{m=1}^M\prod_{d=1}^D\frac{1}{\left(\frac{\sqrt{2q_{m,n}}\kappa_n}{b_{\lambda,d}}+1\right)^{q_{m,n}+a_{\lambda,d}}}\frac{\exp\{-\lambda_2\}}{(\lambda_1+\sum_{d=1}^{D}a_{\lambda,d}\frac{M}{2})}\frac{\prod_{d=1}^D[\Gamma(a+a_{\lambda,d}\frac{M}{2})]}{\Gamma(Da+\frac{M}{2}\sum_{d=1}^Da_{\lambda,d})}
\end{align*}

Let $C_1 = \frac{\lambda_{2}^{\lambda_1}\Gamma(Da)}{\Gamma(\lambda_1)\Gamma(a)^D}\frac{\exp\{-\lambda_2\}}{(\lambda_1+\sum_{d=1}^{D}a_{\lambda,d}\frac{M}{2})}\frac{\prod_{d=1}^D[\Gamma(a+a_{\lambda,d}\frac{M}{2})]}{\Gamma(Da+\frac{M}{2}\sum_{d=1}^Da_{\lambda,d})}$, then we have
\begin{align*}
    &-\log P\left(\lVert\boldsymbol{\gamma}_m^{(d)}-\boldsymbol{\beta}_m^{(d)}\rVert \leq \kappa_n\right) \leq -\log C_1 \\
    +&\sum_{m=1}^M\sum_{d=1}^D\left(q_{m,n}\left[-\log \kappa_n +\frac{1}{2}\log q_{m,n} +\log b_{\lambda,d}\right]-\log \Gamma(q_{m,n}+a_{\lambda,d})+\log \Gamma(a_{\lambda,d})\right)\\
    +&\sum_{m=1}^M\sum_{d=1}^D(q_{m,n}+a_{\lambda,d})\log\left(\frac{2\sqrt{q_{m,n}}\kappa_n}{b_{\lambda,d}}+1\right)\\
    =&\frac{n\varepsilon^2_n}{4}\left(-\frac{4\log C_1}{n\varepsilon^2_n}+\sum_{m=1}^M\sum_{d=1}^D\left(4\left[-\frac{q_{m,n}\log \kappa_n}{n\varepsilon^2_n}+\frac{q_{m,n}\log q_{m,n}}{2n\varepsilon^2_n}+\frac{q_{m,n}\log b_{\lambda,d}}{n\varepsilon^2_n}\right]\right.\right.\\&\left.\left.-\frac{4 \log \Gamma(q_{m,n}+a_{\lambda,d})}{n\varepsilon^2_n} +\frac{4 \log \Gamma(a_{\lambda,d})}{n\varepsilon^2_n}\right)+\sum_{m=1}^M\sum_{d=1}^D\frac{4(q_{m,n}+a_{\lambda,d})}{n\varepsilon^2_n}\log\left(\frac{2\sqrt{q_{m,n}}\kappa_n}{b_{\lambda,d}}+1\right)\right)
\end{align*}

Notice that $-\frac{\log C_1}{n\varepsilon_n^2}\rightarrow0$ as $n\varepsilon_n^2\rightarrow \infty$. By choosing $\kappa_n = \frac{\Delta_n}{C_0DM\lVert\texttt{GTRP}(\mathcal{X}_i)\rVert}$, where $C_0$ is some positive constant, we have that
    \begin{align*}
    &-\sum_{m=1}^M\sum_{d=1}^D\frac{q_{m,n}\log \kappa_n}{n\varepsilon_n^2}\\
    =&-\frac{D\left[\log \Delta_n-\log (\lVert\texttt{GTRP}(\mathcal{X}_i)\rVert)-\log D\right]}{M}\frac{\sum_{m=1}^M q_{m,n}}{n\varepsilon_n^2} \\
    = &-\frac{D}{M}\frac{\log \Delta_n\sum_{m=1}^M q_{m,n}}{n\varepsilon_n^2} + \frac{D\log (\lVert\texttt{GTRP}(\mathcal{X}_i)\rVert)+D\log C_0DM}{M}\frac{\sum_{m=1}^M q_{m,n}}{n\varepsilon_n^2}\\
    > & -\frac{D}{M}\frac{\log \Delta_n\sum_{m=1}^M q_{m,n}}{n\varepsilon_n^2} + C 
    \end{align*}
    by assumption \textbf{A.5}. From assumption \textbf{A.6} it follows that
    \begin{align*}
        \frac{\log \Delta_n\sum_{m=1}^M q_{m,n}}{n\varepsilon_n^2} \rightarrow 0.
    \end{align*}
    and \(\sum_{m=1}^M q_{m,n}\log q_{m,n} / n\varepsilon_n^2 \rightarrow 0\), which implies
 \(\sum_{m=1}^M q_{m,n} / n\varepsilon_n^2 \rightarrow 0,\quad \sum_{m=1}^M \log q_{m,n} / n\varepsilon_n^2 \rightarrow 0\) and \(\frac{4(q_{m,n}+a_{\lambda,d})}{n\varepsilon^2_n}\log\left(\frac{2\sqrt{q_{m,n}}\kappa_n}{b_{\lambda,d}}+1\right)\rightarrow 0\). By the Stirling approximation of the Gamma function and from assumption \textbf{A.2}, it follows  \(\frac{4 \log \Gamma(q_{m,n}+a_{\lambda,d})}{n\varepsilon^2_n}\rightarrow 0\). Thus we can claim that \(-\log P\left(\lVert\boldsymbol{\gamma}_m^{(d)}-\boldsymbol{\beta}_m^{(d)}\rVert \leq \kappa_n\right)\leq \frac{n\varepsilon_n^2}{4}\), thus \(P\left(\lVert\boldsymbol{\gamma}_m^{(d)}-\boldsymbol{\beta}_m^{(d)}\rVert \leq \kappa_n\right) \geq \exp \left\{-\frac{n\varepsilon_n^2}{4}\right\}\), which implies \[ P\left(\lvert\left<\texttt{GTRP}(\mathcal{X}_i), \mathcal{B}\right> - \left<\mathcal{X}_i, \mathcal{B}_0\right>\rvert < \Delta\right) \geq \exp \left\{-\frac{n\varepsilon_n^2}{4}\right\}.\]
 The result follows from the same arguments used in the proof of Proposition \ref{prop: pos_cons}.

\renewcommand{\thesection}{B}
\renewcommand{\theequation}{B.\arabic{equation}}
\renewcommand{\thefigure}{B.\arabic{figure}}
\renewcommand{\thetable}{B.\arabic{table}}
\setcounter{table}{0}
\setcounter{figure}{0}
\setcounter{equation}{0}

\section{Full conditional distributions} \label{app:Gibbs}
\subsection{PARAFAC priors}
Given the PARAFAC priors, the posterior of the unknowns of the model is given by
\begin{align}
p(\boldsymbol{\gamma}_{m}^{(d)}, \sigma^2,\mu, w_{m,j_{m}}^{(d)},\lambda_{m}^{(d)},\tau,\zeta^{(d)} \mid \mathbf{y}, \mathcal{X})
\end{align}
We adopt the MCMC procedure based on the Gibbs sampling algorithm to sample the unknowns from 3 blocks to reduce autocorrelation.
\subsubsection{Block 1: Sampling \texorpdfstring{$\zeta^{(d)}$}{zeta^(d)} and \texorpdfstring{$\tau$}{tau} from \texorpdfstring{$p(\zeta^{(d)},\tau\mid \boldsymbol{\gamma},\boldsymbol{w})$}{p(zeta^(d), tau | gamma, w)}}

\begin{align}
    p(\zeta^{(d)} \mid \boldsymbol{\gamma}, \tau, \boldsymbol{w}) &\propto p(\boldsymbol{\gamma}\mid \boldsymbol{\zeta}, \tau, \boldsymbol{w})p(\boldsymbol{\zeta})\notag\\
    & \propto \prod_{d=1}^D\prod_{m=1}^M{\zeta^{(d)}}^{-\frac{p_m}{2}}\exp \left\{-\frac{1}{2}{\boldsymbol{\gamma}_m^{(d)}}^T\frac{{W_m^{(d)}}^{-1}}{\tau \zeta^{(d)}}\boldsymbol{\gamma}_m^{(d)}\right\}\prod_{d=1}^D {\zeta^{(d)}}^{\alpha-1}\notag\\
    &= \prod_{d=1}^D {\zeta^{(d)}}^{-\sum_{m=1}^Mp_m / 2 + \alpha-1} \exp\left\{-\frac{1}{2\tau\zeta^{(d)}}\sum_{m=1}^M{\boldsymbol{\gamma}_m^{(d)}}^T{W_m^{(d)}}^{-1}\boldsymbol{\gamma}_m^{(d)}\right\}\notag\\
    &\sim \mathcal{G}i\mathcal{G} \left(\alpha-\frac{\sum_{m=1}^Mp_m}{2},0,\frac{\sum_{m=1}^M{\boldsymbol{\gamma}_m^{(d)}}^T{W_m^{(d)}}^{-1}\boldsymbol{\gamma}_m^{(d)}}{\tau}\right)\notag\\
    & \sim \mathcal{IG}\left(\frac{\sum_{m=1}^Mp_m}{2} - \alpha, \frac{\sum_{m=1}^M{\boldsymbol{\gamma}_m^{(d)}}^T{W_m^{(d)}}^{-1}\boldsymbol{\gamma}_m^{(d)}}{2\tau}\right) \notag
\end{align}

\begin{align}
    p(\tau \mid \boldsymbol{\gamma}, \boldsymbol{\zeta}, \boldsymbol{w}) &\propto p(\boldsymbol{\gamma}\mid \boldsymbol{\zeta}, \tau,  \boldsymbol{w})p(\tau) \notag \\
    &\propto \prod_{d=1}^D\prod_{m=1}^M \tau^{-\frac{p_m}{2}}\exp \left\{-\frac{1}{2}{\boldsymbol{\gamma}_m^{(d)}}^T\frac{{W_m^{(d)}}^{-1}}{\tau \zeta^{(d)}}\boldsymbol{\gamma}_m^{(d)}\right\}\tau^{a_{\tau}-1}\exp \left\{-b_{\tau}\tau\right\}\notag\\
    & = \tau^{a_{\tau}-\frac{D\sum_{m=1}^Mp_m}{2}-1}\exp\left\{-\frac{1}{2\tau}\sum_{d=1}^D\frac{\sum_{m=1}^M {\boldsymbol{\gamma}_m^{(d)}}^T{W_m^{(d)}}^{-1}\boldsymbol{\gamma}_m^{(d)}}{\zeta^{(d)}}-b_{\tau}\tau\right\}\notag\\
    &\sim \mathcal{G}i\mathcal{G} \left(a_{\tau}-\frac{D\sum_{m=1}^Mp_m}{2}, 2b_{\tau}, \sum_{d=1}^D\frac{\sum_{m=1}^M {\boldsymbol{\gamma}_m^{(d)}}^T{W_m^{(d)}}^{-1}\boldsymbol{\gamma}_m^{(d)}}{\zeta^{(d)}}\right)\notag
\end{align}

\subsubsection{Block 2: Sampling \texorpdfstring{$\lambda_{m}^{(d)}$}{lambda} and \texorpdfstring{$w_{m,j_{m}}^{(d)}$}{w} from \texorpdfstring{$p(\lambda_{m}^{(d)}, w_{m,j_{m}}^{(d)}|\gamma_{m,j_{m}}^{(d)}, \tau, \zeta^{(d)})$}{p(lambda, w | gamma, tau, zeta)}}

Notice that by the construction of the prior distributions, $\gamma_{m,j_{m}}^{(d)}$ follows a double exponential distribution given $\lambda_{m}^{(d)}$, $\tau$, $\zeta^{(d)}$, that is $\gamma_{m,j_{m}}^{(d)} \sim \mathcal{D}\mathcal{E }\left(0,\sqrt{\tau\zeta^{(d)}}/\lambda_{m}^{(d)}\right)$. The full conditional of $\lambda_{m}^{(d)}$ can be written as
\begin{align}
&p\left(\lambda_{m}^{(d)}\mid\gamma_{m,j_{m}}^{(d)}, \tau, \zeta^{(d)}\right)\propto \pi(\lambda_{m}^{(d)}) p\left(\gamma_{m,j_{m}}^{(d)}\mid\lambda_{m}^{(d)},\tau, \zeta^{(d)}\right)\notag\\
&\propto\left(\tau\zeta^{(d)}\right)^{-\frac{p_{m}}{2}}\left(\lambda_{m}^{(d)}\right)^{a_{\lambda}+p_{m}-1}\exp\left\{-\left(\frac{\sum_{j_{m}=1}^{p_{m}}\left|\gamma_{m,j_{m}}^{(d)}\right|}{\sqrt{\tau\zeta^{(d)}}}+b_{\lambda}\right)\lambda_{m}^{(d)}\right\}\notag\\
& \propto \mathcal{G}a \left(a_{\lambda}+p_{m}, \sum_{j_{m}=1}^{p_{m}}\left|\gamma_{m,j_{m}}^{(d)}\right|/\sqrt{\tau\zeta^{(d)}}+b_{\lambda}\right)\notag
\end{align}

The full conditional for $w_{m,j_{m}}^{(d)}$ is
\begin{align}
&p\left(w_{m,j_{m}}^{(d)}\mid\gamma_{m,j_{m}}^{(d)},\lambda_{m}^{(d)}, \tau, \zeta^{(d)}\right) \propto \pi\left(w_{m,j_{m}}^{(d)}\right)p\left(\gamma_{m,j_{m}}^{(d)}\mid\lambda_{m}^{(d)}, \tau, \zeta^{(d)}, w_{m,j_{m}}^{(d)}\right)\notag\\
&\propto {w_{m,j_{m}}^{(d)}}^{\frac{1}{2}-1}\exp \left\{-\frac{1}{2}\left({\lambda_{m}^{(d)}}^2w_{m,j_{m}}^{(d)}+\frac{{\gamma_{m,j_{m}}^{(d)}}^2}{\tau\zeta^{(d)}w_{m,j_{m}}^{(d)}}\right)\right\}\notag\\
&\propto \mathcal{G}i\mathcal{G}\left(1/2, {\lambda_{m}^{(d)}}^2, {\gamma_{m,j_{m}}^{(d)}}^2/\tau\zeta^{(d)}\right)\notag
\end{align}

\subsubsection{Block 3: Sampling \texorpdfstring{$\boldsymbol{\gamma}_{m}^{(d)}, \mu, \sigma^2$}{gamma, mu, sigma^2}}

\begin{align}
&p(\boldsymbol{\gamma}_m^{(d)}\mid\mathbf{y}, \mathcal{X},\tau, \boldsymbol{\zeta},\boldsymbol{w},\mu,\sigma^2) \propto p(\mathbf{y}\mid \boldsymbol{\gamma}_m^{(d)},\mathcal{X},\tau, \boldsymbol{\zeta},\boldsymbol{w},\mu,\sigma^2) p(\boldsymbol{\gamma}_m^{(d)})\notag\\
    & \propto \prod_{t=1}^T \exp\left\{-\frac{1}{2}\frac{\left(y_t - \mu -\left<\mathcal{B},\mathcal{X}_t\right>\right)^2}{\sigma^2}\right\}\exp\left\{-\frac{1}{2\tau\zeta^{(d)}}{\boldsymbol{\gamma}_m^{(d)}}^T{W_m^{(d)}}^{-1}\boldsymbol{\gamma}_m^{(d)}\right\}\notag
\end{align}
notice that
\begin{align}
    \left<\mathcal{B},\mathcal{X}_t\right> & = \left<\mathcal{B}^{(d)},\mathcal{X}_t\right> + \sum_{d' \neq d}^D \left<\mathcal{B}^{(d')},\mathcal{X}_t\right> \notag\\
    &= {\boldsymbol{\gamma}_m^{(d)}}^T \left(\mathcal{X}_t \times_1 \boldsymbol{\gamma}_1^{(d)} \cdots \times_{m-1} \boldsymbol{\gamma}_{m-1}^{(d)} \times_{m+1} \boldsymbol{\gamma}_{m+1}^{(d)} \cdots \times_M \boldsymbol{\gamma}_M^{(d)}\right) + \sum_{d' \neq d}^D \left<\mathcal{B}^{(d')},\mathcal{X}_t\right> \notag\\
    & = {\boldsymbol{\gamma}_m^{(d)}}^T \psi_{mt}^{(d)} + R_{t}^{(d)} \notag
\end{align}
where 
\begin{align}
    \psi_{mt}^{(d)} & = \mathcal{X}_t \times_1 \boldsymbol{\gamma}_1^{(d)} \cdots \times_{m-1} \boldsymbol{\gamma}_{m-1}^{(d)} \times_{m+1} \boldsymbol{\gamma}_{m+1}^{(d)} \cdots \times_M \boldsymbol{\gamma}_M^{(d)}\notag\\
    R_{t}^{(d)} & = \sum_{d' \neq d}^D \left<\mathcal{B}^{(d')},\mathcal{X}_t\right>. \notag
\end{align}
The  quadratic term in the likelihood becomes 
\begin{align}
    &\left(y_t - \mu -\left<\mathcal{B}, \mathcal{X}_t\right>\right)^2\notag \\
    & = \left(y_t - \mu - R_t^{(d)}\right)^2 -2\left(y_t - \mu - R_t^{(d)}\right){\boldsymbol{\gamma}_m^{(d)}}^T \psi_{mt}^{(d)}+{\boldsymbol{\gamma}_m^{(d)}}^T \psi_{mt}^{(d)} {\psi_{mt}^{(d)}}^T\boldsymbol{\gamma}_m^{(d)}\notag\\
    & = (\tilde{y}_{t}^{(d)})^{2}-2\tilde{y}_t^{(d)}{\boldsymbol{\gamma}_m^{(d)}}^T \psi_{mt}^{(d)} +{\boldsymbol{\gamma}_m^{(d)}}^T \psi_{mt}^{(d)} {\psi_{mt}^{(d)}}^T\boldsymbol{\gamma}_m^{(d)} \notag
\end{align}
where $\tilde{y}_t^{(d)} = y_t - \mu - R_t^{(d)}$.

Then we have the full conditional for $\boldsymbol{\gamma}_m^{(d)}$
\begin{align}
    &p(\boldsymbol{\gamma}_m^{(d)}\mid\mathbf{y}, \mathcal{X},\tau, \boldsymbol{\zeta},\boldsymbol{w},\mu,\sigma^2)\notag\\
    &\propto \exp\left\{-\frac{1}{2\sigma^2}\left[{\boldsymbol{\gamma}_m^{(d)}}^T\sum_{t=1}^T\psi_{mt}^{(d)} {\psi_{mt}^{(d)}}^T\boldsymbol{\gamma}_m^{(d)}-2{\boldsymbol{\gamma}_m^{(d)}}^T\sum_{t=1}^T\tilde{y}_t^{(d)} \psi_{mt}^{(d)} \right]-\frac{1}{2\tau\zeta^{(d)}}{\boldsymbol{\gamma}_m^{(d)}}^T{W_m^{(d)}}^{-1}\boldsymbol{\gamma}_m^{(d)}\right\} \notag\\
    & \propto \exp\left\{-\frac{1}{2}\left[{\boldsymbol{\gamma}_m^{(d)}}^T\left(\frac{\sum_{t=1}^T\psi_{mt}^{(d)} {\psi_{mt}^{(d)}}^T}{\sigma^2}+\frac{{W_m^{(d)}}^{-1}}{\tau\zeta^{(d)}}\right)\boldsymbol{\gamma}_m^{(d)}-2{\boldsymbol{\gamma}_m^{(d)}}^T\frac{\sum_{t=1}^T\tilde{y}_t^{(d)} \psi_{mt}^{(d)}}{\sigma^2}\right]\right\}\notag \\
    & \sim \mathcal{MN}_{p_m}(\boldsymbol{\mu}^*, \Sigma^*) \notag
\end{align}
where 
\begin{align}
    \Sigma^* & = \left(\frac{\sum_{t=1}^T\psi_{mt}^{(d)} {\psi_{mt}^{(d)}}^T}{\sigma^2}+\frac{{W_m^{(d)}}^{-1}}{\tau\zeta^{(d)}}\right)^{-1}\notag\\
     \boldsymbol{\mu}^* & = \left(\frac{\sum_{t=1}^T\psi_{mt}^{(d)} {\psi_{mt}^{(d)}}^T}{\sigma^2}+\frac{{W_m^{(d)}}^{-1}}{\tau\zeta^{(d)}}\right)^{-1}\frac{\sum_{t=1}^T\tilde{y}_t^{(d)} \psi_{mt}^{(d)}}{\sigma^2}\notag
\end{align}

The full conditional of $\sigma^2$ can be written as:
\begin{align}
    p\left(\sigma^2 \mid \mathbf{y},\mathcal{X}, \mu, \boldsymbol{\gamma}\right) &\propto p\left(\mathbf{y} \mid \mathcal{X},\mu,\boldsymbol{\gamma},\sigma^2\right)p\left(\sigma^2\right) \notag \\
    & \propto \left(\sigma^2\right)^{-\left(a_{\sigma}+\frac{T}{2}\right)-1}\exp \left\{-\frac{1}{\sigma^2}\left(\frac{1}{2}\sum_{t=1}^T\left(y_t - \left<\mathcal{B}, \mathcal{X}_t\right>-\mu\right)^2 +b_{\sigma}\right)\right\},\notag
\end{align}

which is the kernel of the IG distribution $\mathcal{IG} \left(a_{\sigma}^*, b_{\sigma}^*\right)$, where $a_{\sigma}^* = a_{\sigma}+\frac{T}{2}$ and $b_{\sigma}^*  = \frac{1}{2}\sum_{t=1}^T\left(y_t - \left<\mathcal{B}, \mathcal{X}_t\right> - \mu\right)^2 +b_{\sigma}$. Finally, let $\mu^* = \sum_{t=1}^T\left(y_t - \left<\mathcal{B}, \mathcal{X}_t\right>\right) {\sigma_{\mu}^*}^2 / \sigma^2$ and 
    ${\sigma_{\mu}^*}^2  = \left(T/\sigma^2+1/\sigma_{\mu}^2\right)^{-1}$, the full conditional of $\mu$ is:
\begin{align}
    &p\left(\mu \mid \mathbf{y},\mathcal{X}, \boldsymbol{\gamma},\sigma^2\right) \\&\propto p\left(\mathbf{y} \mid  \mathcal{X}, \mu, \boldsymbol{\gamma},\sigma^2 \right) \pi \left(\mu\right)
     \propto \exp \left\{-\frac{1}{2\sigma^2}\left[T\mu^2 -2\mu\sum_{t=1}^T\left(y_t - \left<\mathcal{B}, \mathcal{X}_t\right>\right)\right] - \frac{1}{2}\frac{\mu^2}{\sigma_{\mu}^2}\right\} \notag\\
    &= \exp \left\{-\frac{1}{2}\left[\left(\frac{T}{\sigma^2}+\frac{1}{\sigma_{\mu}^2}\right)\mu^2 - 2\mu\frac{\sum_{t=1}^T\left(y_t - \left<\mathcal{B}, \mathcal{X}_t\right>\right)}{\sigma^2}\right]\right\}
     \propto \mathcal{N}\left(\mu^*, {\sigma_{\mu}^*}^2\right). \notag
\end{align}

\subsection{Gaussian priors}
Given the Gaussian prior for the tensor coefficients specified in Theorem \ref{thm: pos_cons}, we further more assume that $\sigma^2 \sim \mathcal{IG}(a_{\sigma}, b_{\sigma})$ and $\mu \sim \mathcal{N}(0, \sigma_{\mu}^2)$, w.o.l.g we assume the tensor coefficient is a mode-$2$ tensor, then we have the following full conditionals for the tensor coefficients, $\sigma^2$ and $\mu$:
\begin{align*}
    &p\left(\mathcal{B}_{\text{vec}} \mid \boldsymbol{y}, X, \boldsymbol{\mu}, \sigma^2\right) \propto p\left(\boldsymbol{y} \mid \mathcal{B}_{\text{vec}}, X\right)p\left(\mathcal{B}_{\text{vec}}\right)\\
    & \propto \exp\left\{-\frac{1}{2\sigma^2}\left(\boldsymbol{y}-\boldsymbol{\mu}-X\mathcal{B}_{\text{vec}}\right)^{\top}\left(\boldsymbol{y}-\boldsymbol{\mu}-X\mathcal{B}_{\text{vec}}\right)\right\}\exp\left\{-\frac{1}{2}\mathcal{B}_{\text{vec}}^{\top}\left(\Sigma_1\otimes\Sigma_2\right)^{-1}\mathcal{B}_{\text{vec}}\right\}\\
    &\sim \mathcal{MN}\left(\mu_{\mathcal{B}_{\text{vec}}}, \Sigma_{\mathcal{B}_{\text{vec}}}\right)
\end{align*}
where $\mathcal{B}_{\text{vec}}$ is the vectorized tensor coefficient $\mathcal{B}$ and $X$ is the matrix obtained stacking vertically vectorized covariate tensors $\text{vec}(\mathcal{X}_t)^{\top}$, $t=1,\ldots,T$, $\boldsymbol{\mu}=\mu\boldsymbol{\iota}_{T}$, $\Sigma_{\mathcal{B}_{\text{vec}}} = (X^{\top}X/\sigma^2 + \left(\Sigma_1\otimes\Sigma_2\right)^{-1})^{-1}$ and $\mu_{\mathcal{B}_{\text{vec}}} = \Sigma_{\mathcal{B}_{\text{vec}}} X^{\top}(\boldsymbol{y}-\boldsymbol{\mu})/\sigma^2$.

\begin{align*}
    &p\left(\sigma^2\mid \boldsymbol{y}, X, \mathcal{B}_{\text{vec}}, \boldsymbol{\mu}\right) \propto p\left(\boldsymbol{y} \mid X, \boldsymbol{\mu}, \mathcal{B}_{\text{vec}}, \sigma^2\right) p(\sigma^2)\\
    &\propto (\sigma^2)^{\frac{T}{2}}\exp\left\{-\frac{1}{2\sigma^2}\left(\boldsymbol{y}-\boldsymbol{\mu}-X\mathcal{B}_{\text{vec}}\right)^{\top}\left(\boldsymbol{y}-\boldsymbol{\mu}-X\mathcal{B}_{\text{vec}}\right)\right\}(\sigma^2)^{-a_{\sigma}-1}\exp\left\{-\frac{b_{\sigma}}{\sigma^2}\right\}\\
    &\sim \mathcal{IG}\left(a_{\sigma}^*, b_{\sigma}^*\right)
\end{align*}
where $a_{\sigma}^*=a_{\sigma} + T/2$ and $b_{\sigma}^*=b_{\sigma}+(\boldsymbol{y}-\boldsymbol{\mu})^{\top}(\boldsymbol{y}-\boldsymbol{\mu})/2$.

\begin{align*}
    &p(\mu \mid \boldsymbol{y}, X,  \mathcal{B}_{\text{vec}}, \sigma^2) \propto p\left(\boldsymbol{y}\mid \mu, X,  \mathcal{B}_{\text{vec}}, \sigma^2\right)p(\mu)\\
    &\propto \prod_{t=1}^T\exp\left\{-\frac{1}{2\sigma^2}(y_t-\mu-\mathcal{X}_t^{\top}\mathcal{B}_{\text{vec}})^2\right\}\exp\left\{-\frac{\mu^2}{2\sigma_{\mu}^2}\right\}\\
    &\sim \mathcal{N}\left(\mu^*, {\sigma_{\mu}^*}^2\right)
\end{align*}
where $\mu^*=(T/\sigma^2+1/\sigma_{\mu}^2)^{-1}\sum_{t=1}^T(y_t-\mathcal{X}_t^{\top}\mathcal{B}_{\text{vec}})/\sigma^2$ and ${\sigma_{\mu}^*}^2=(T/\sigma^2+1/\sigma_{\mu}^2)^{-1}$.

\renewcommand{\thesection}{C}
\renewcommand{\theequation}{C.\arabic{equation}}
\renewcommand{\thefigure}{C.\arabic{figure}}
\renewcommand{\thetable}{C.\arabic{table}}
\setcounter{table}{0}
\setcounter{figure}{0}
\setcounter{equation}{0}

\section{Further numerical results}\label{app:NumRes}
In this section, we provide further illustration of the effectiveness of the Bayesian compressed tensor regression model proposed in Section \ref{sec:model}.

\subsection{Sample size}
We consider three different simulation settings. In each setting, a different $20\times 20$ true tensor coefficient is used to generate the $n=1,500$ i.i.d. samples. The tensor covariates, which are also $20 \times 20$, are drawn i.i.d. from the standard normal distribution. The simulated results are presented in Fig. \ref{fig: sim_parafac}.

\begin{figure}[h!]
\vspace*{2pt}
    \centering
    \begin{tabular}{ccc}
        \includegraphics[width=.31\linewidth]{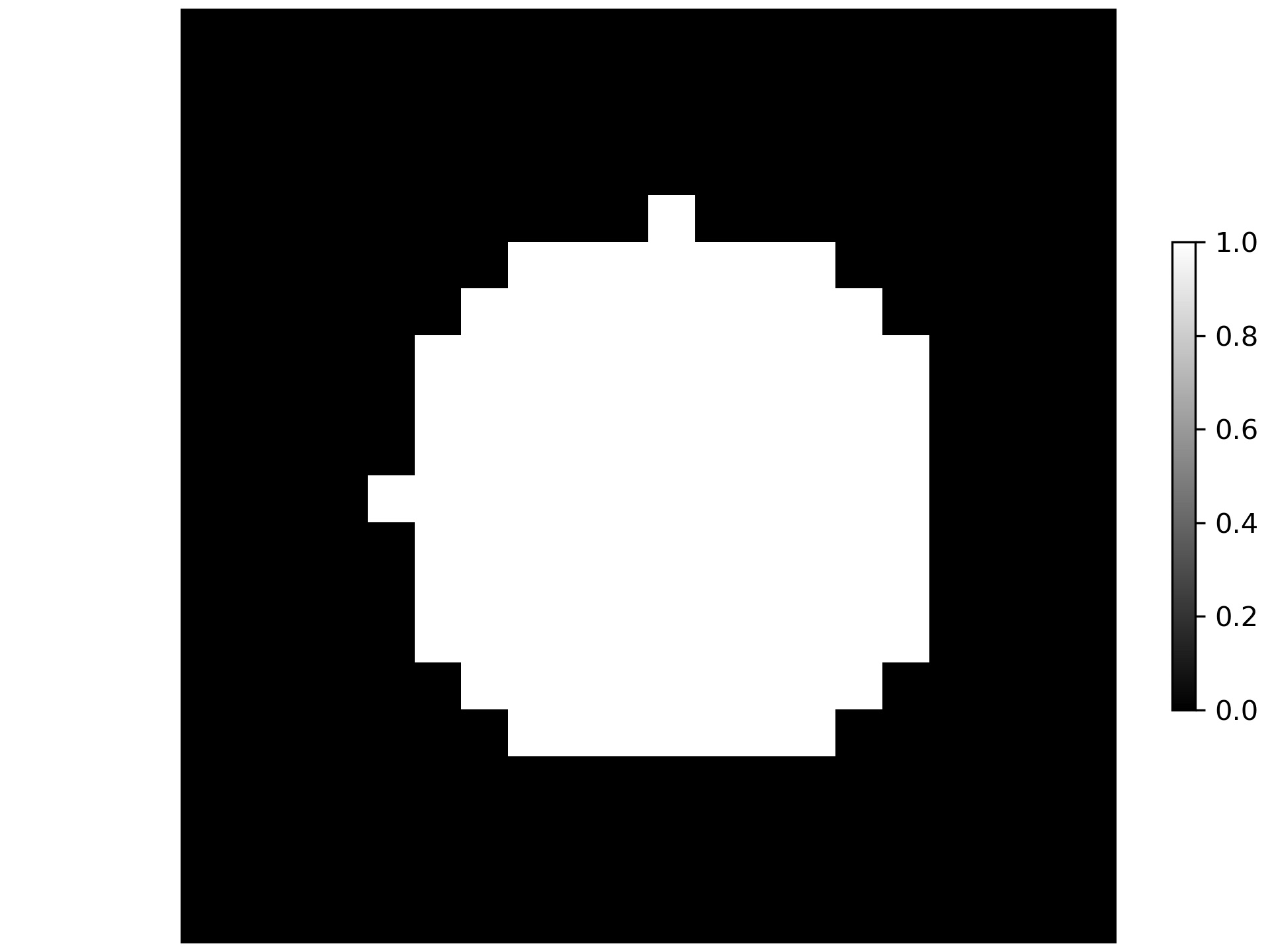} & \includegraphics[width=.31\linewidth]{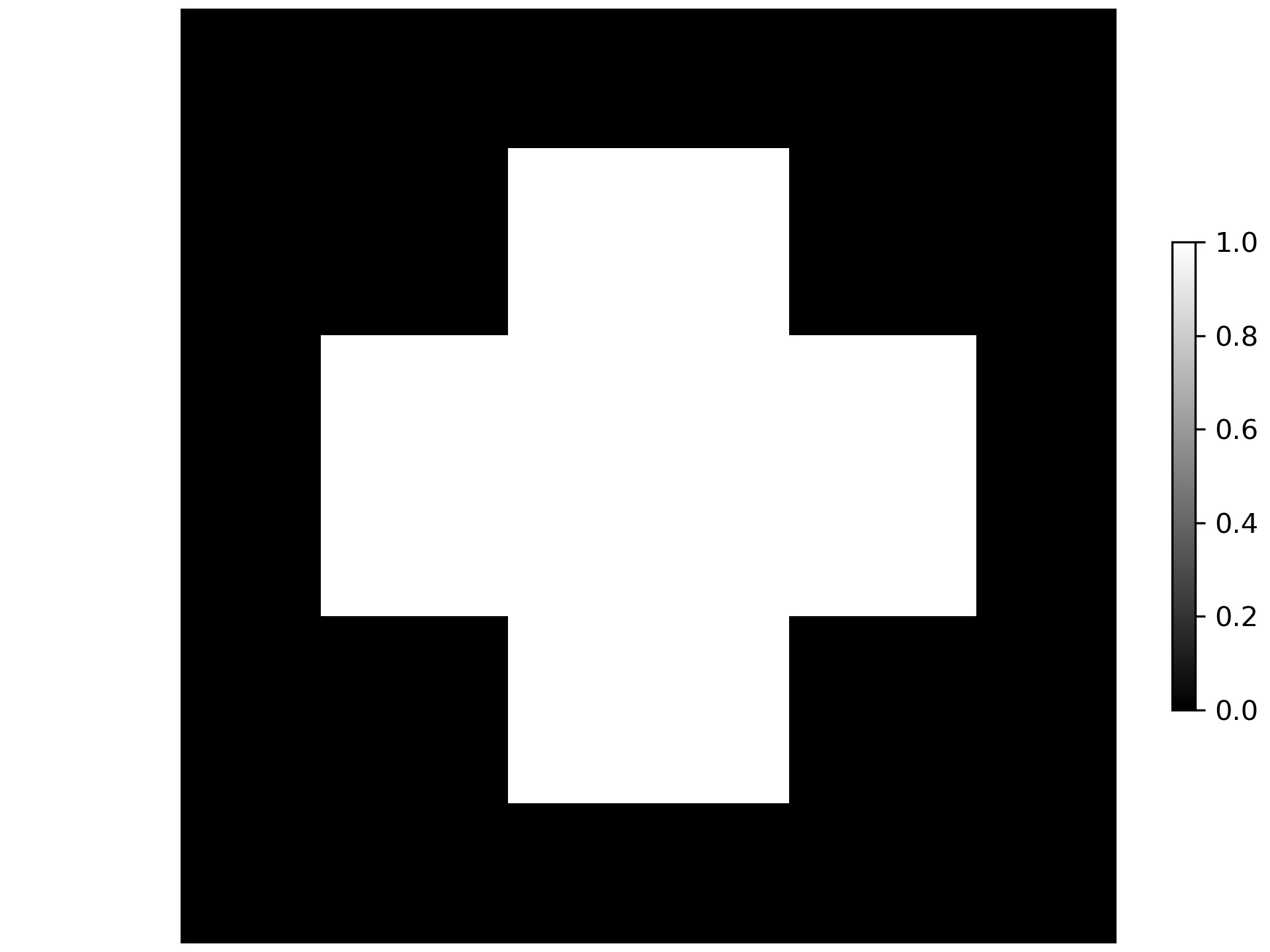} & \includegraphics[width=.31\linewidth]{imgs/coefficients_setting3.jpeg} \\
        \includegraphics[width=.31\linewidth]{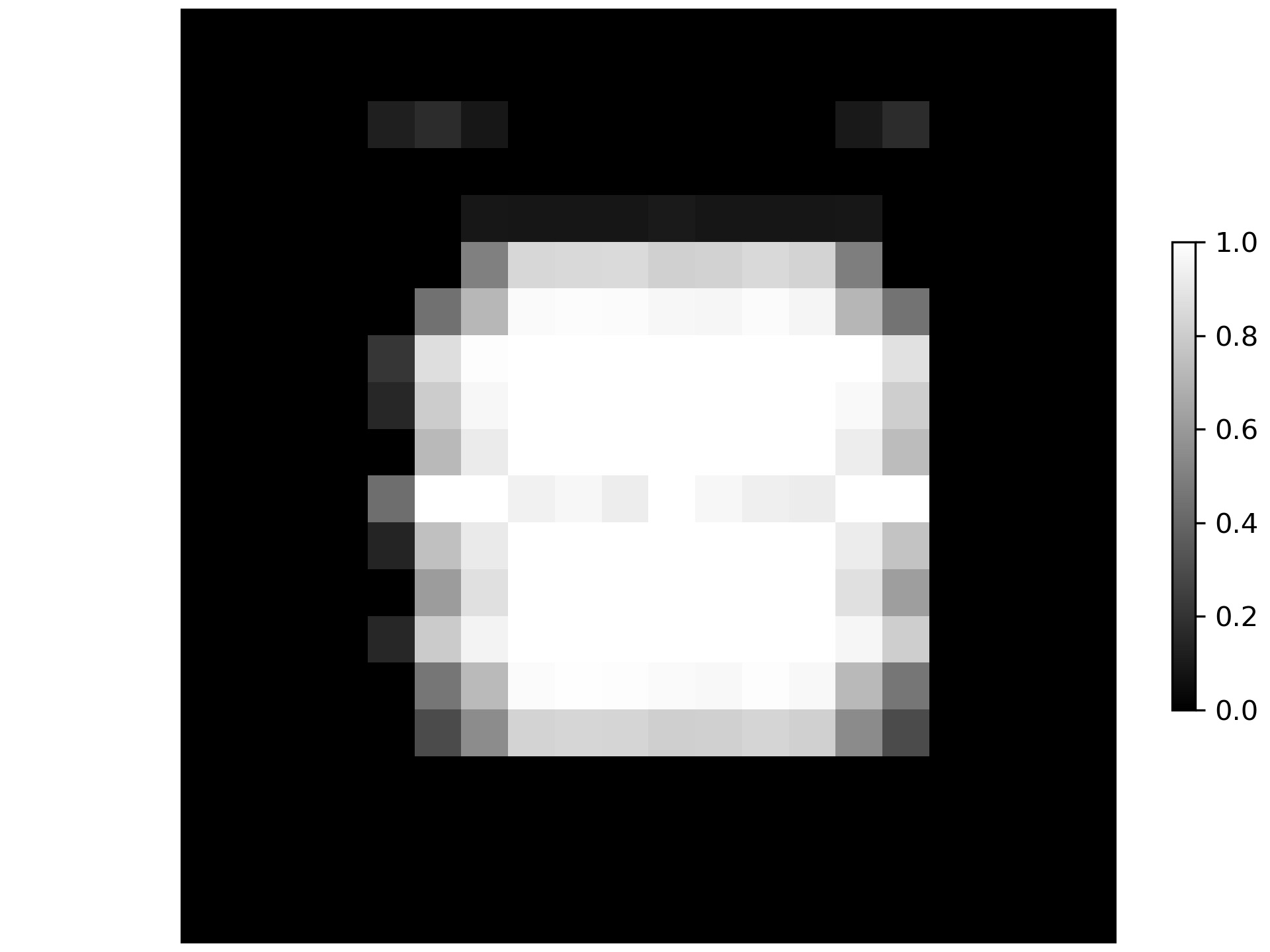} & \includegraphics[width=.31\linewidth]{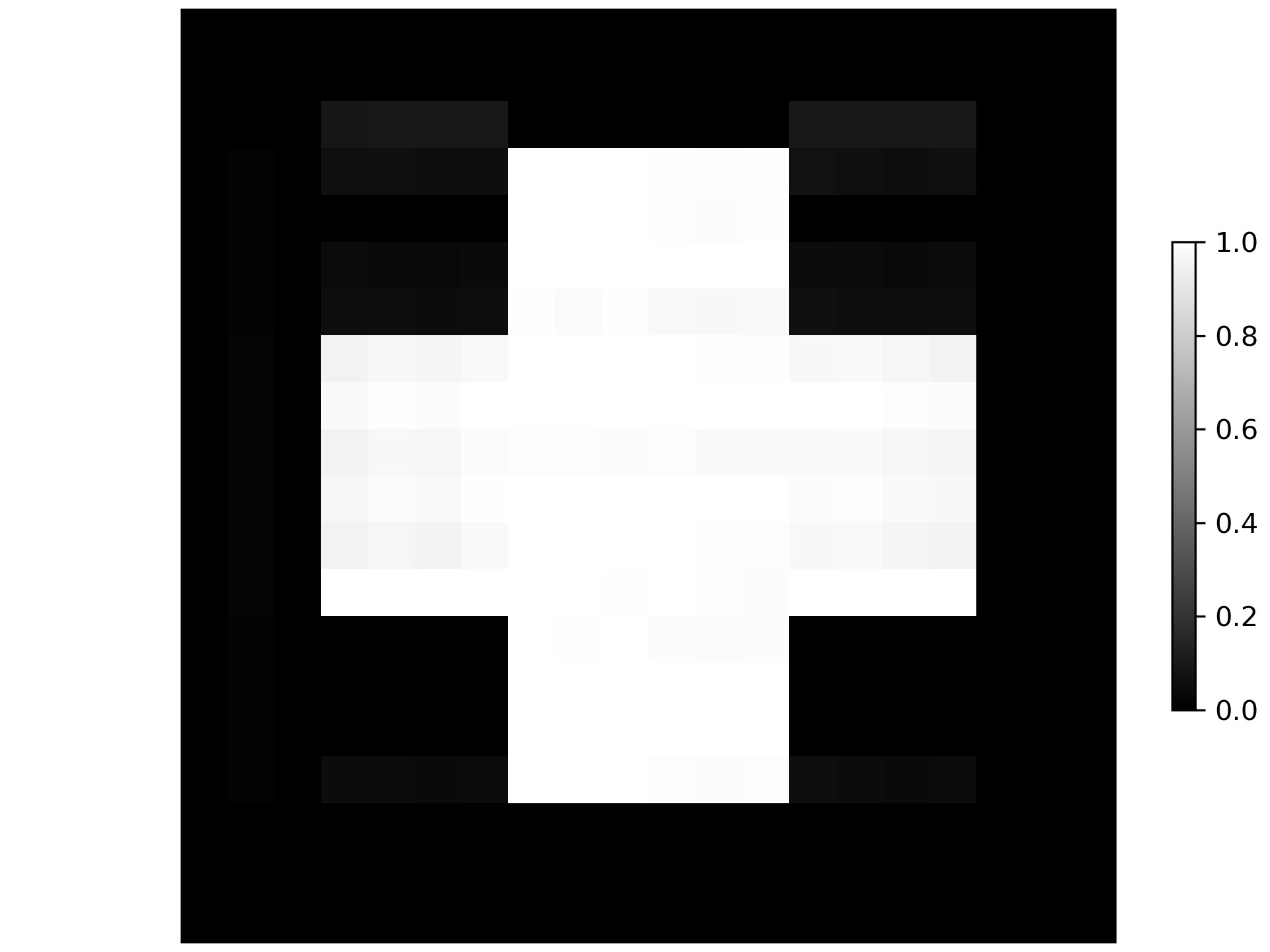} & \includegraphics[width=.31\linewidth]{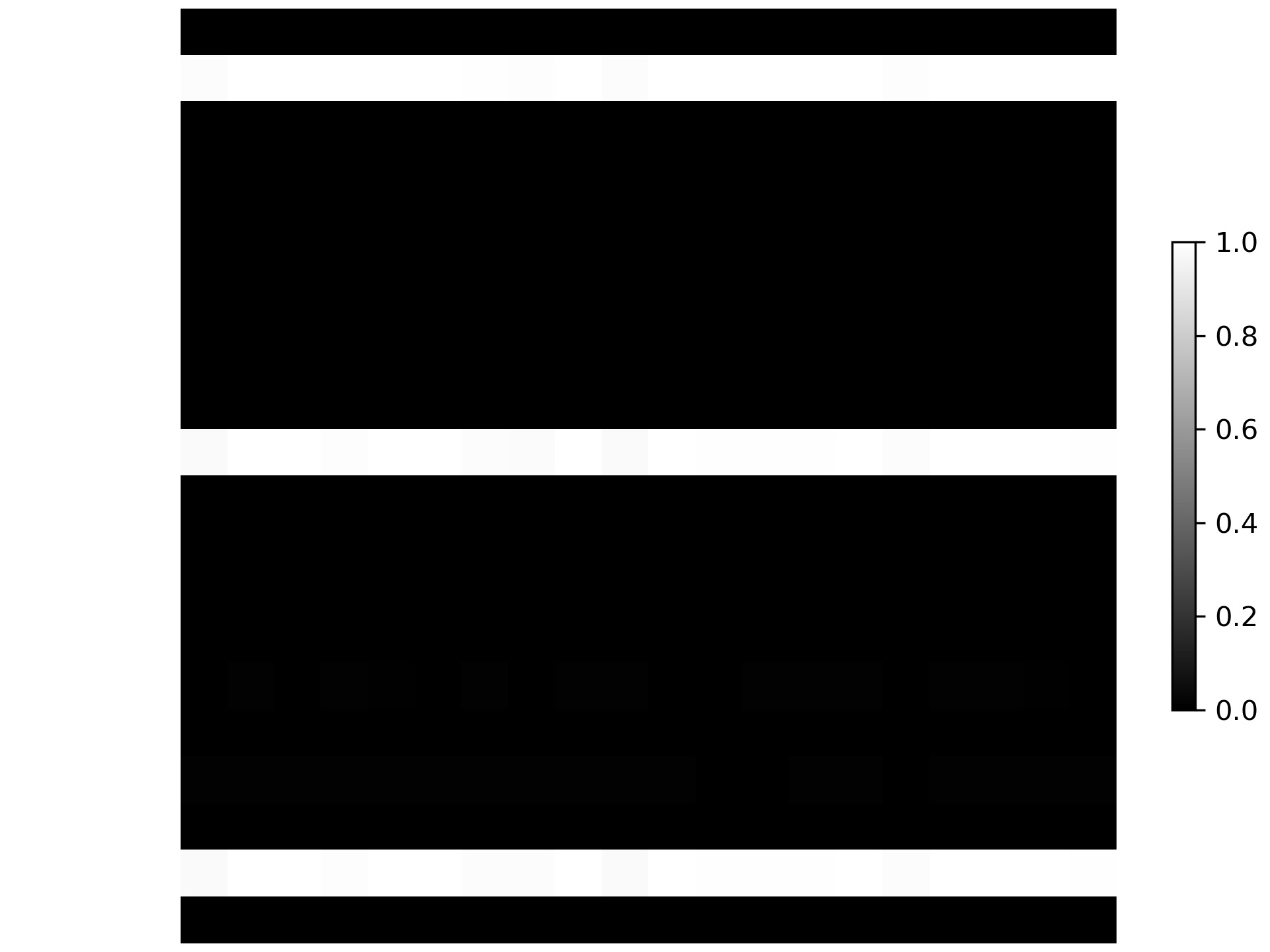} \\
        \includegraphics[width=.31\linewidth]{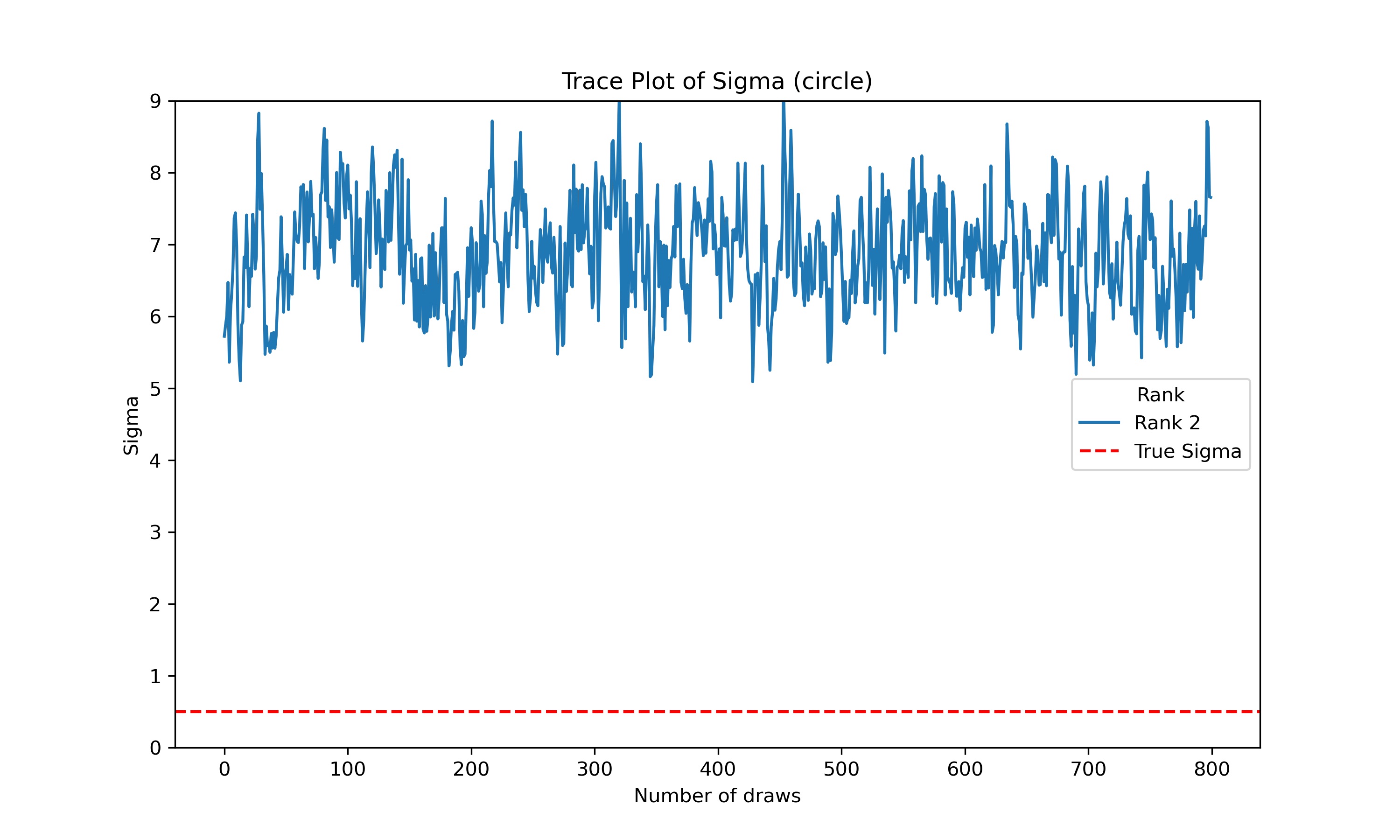} & \includegraphics[width=.31\linewidth]{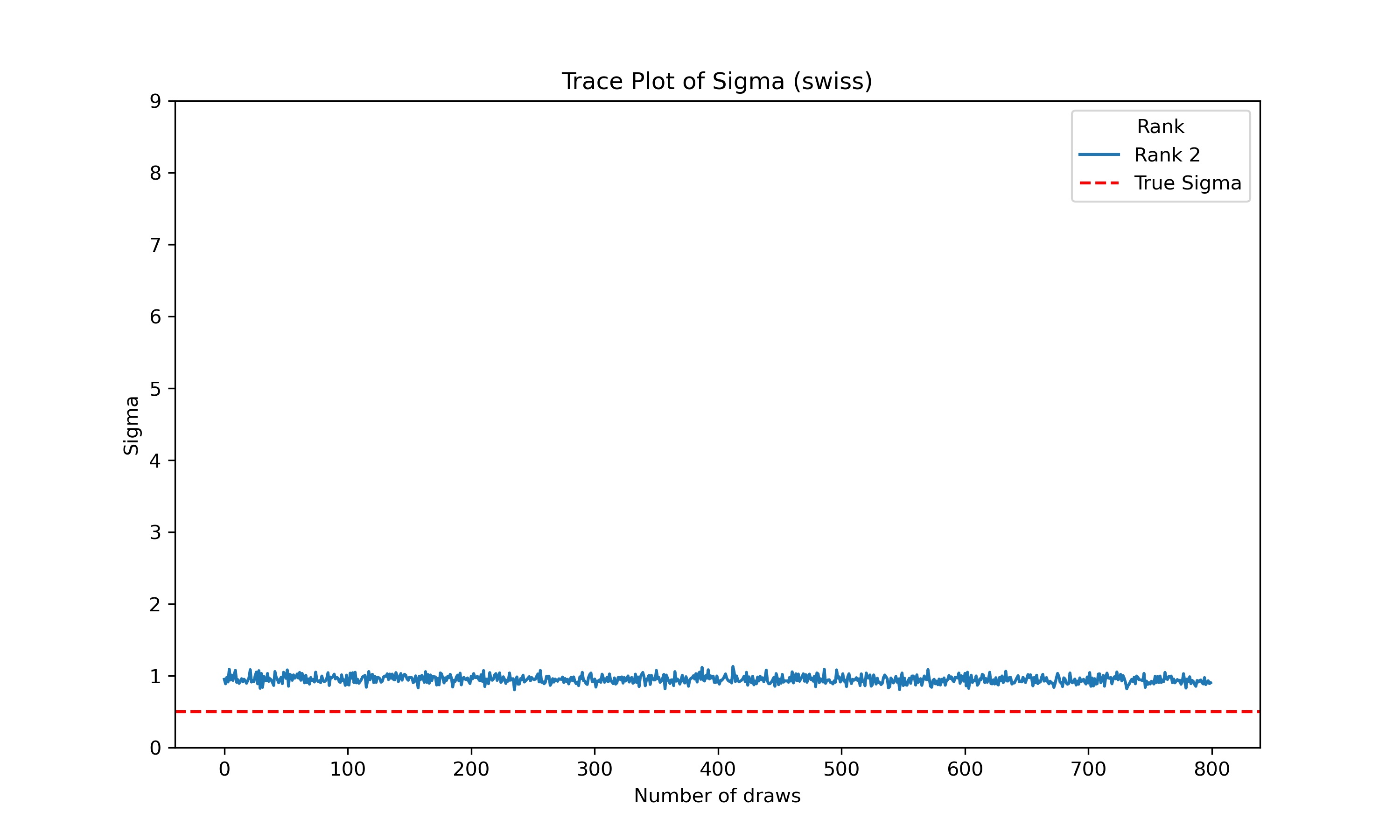} & \includegraphics[width=.31\linewidth]{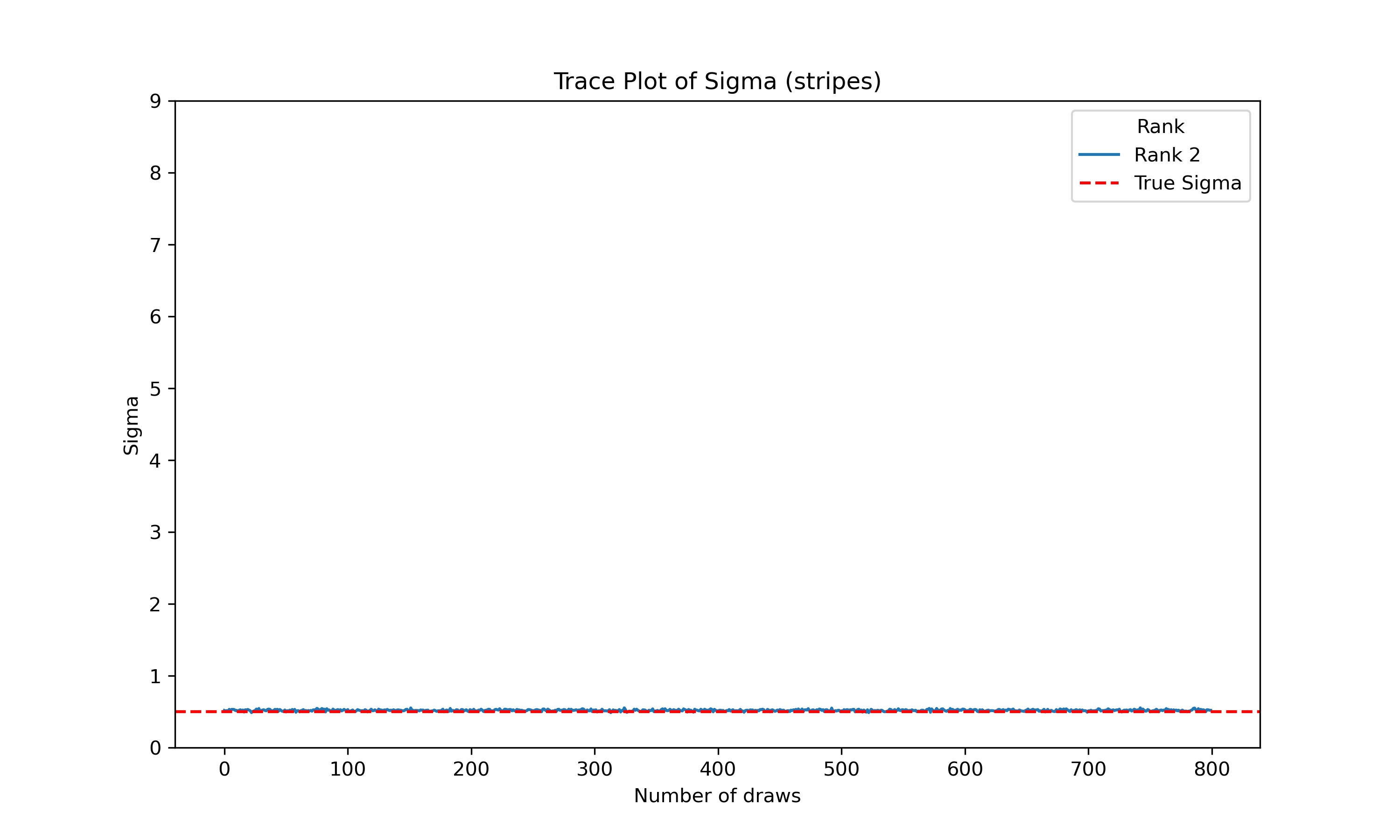} \\
        \includegraphics[width=.31\linewidth]{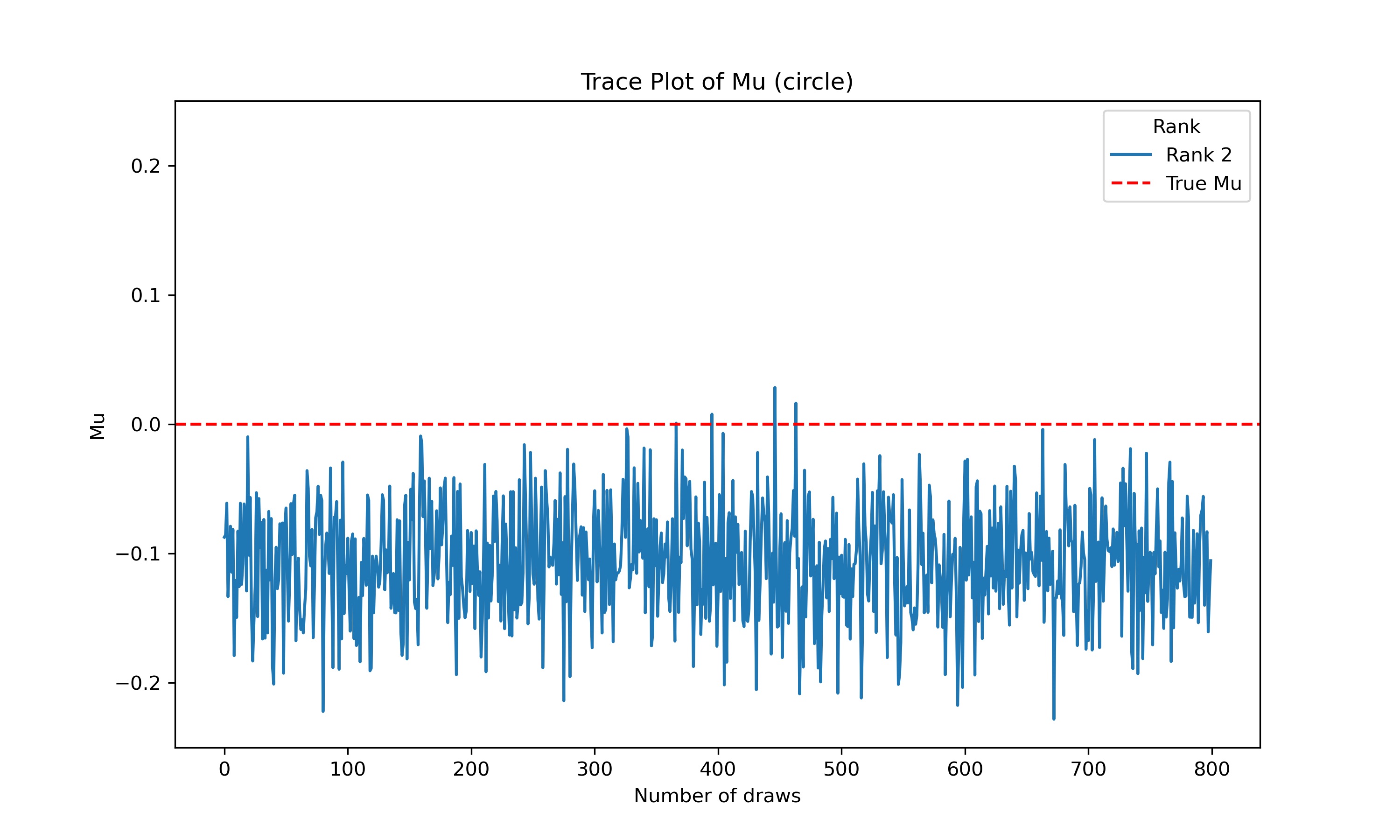} & \includegraphics[width=.31\linewidth]{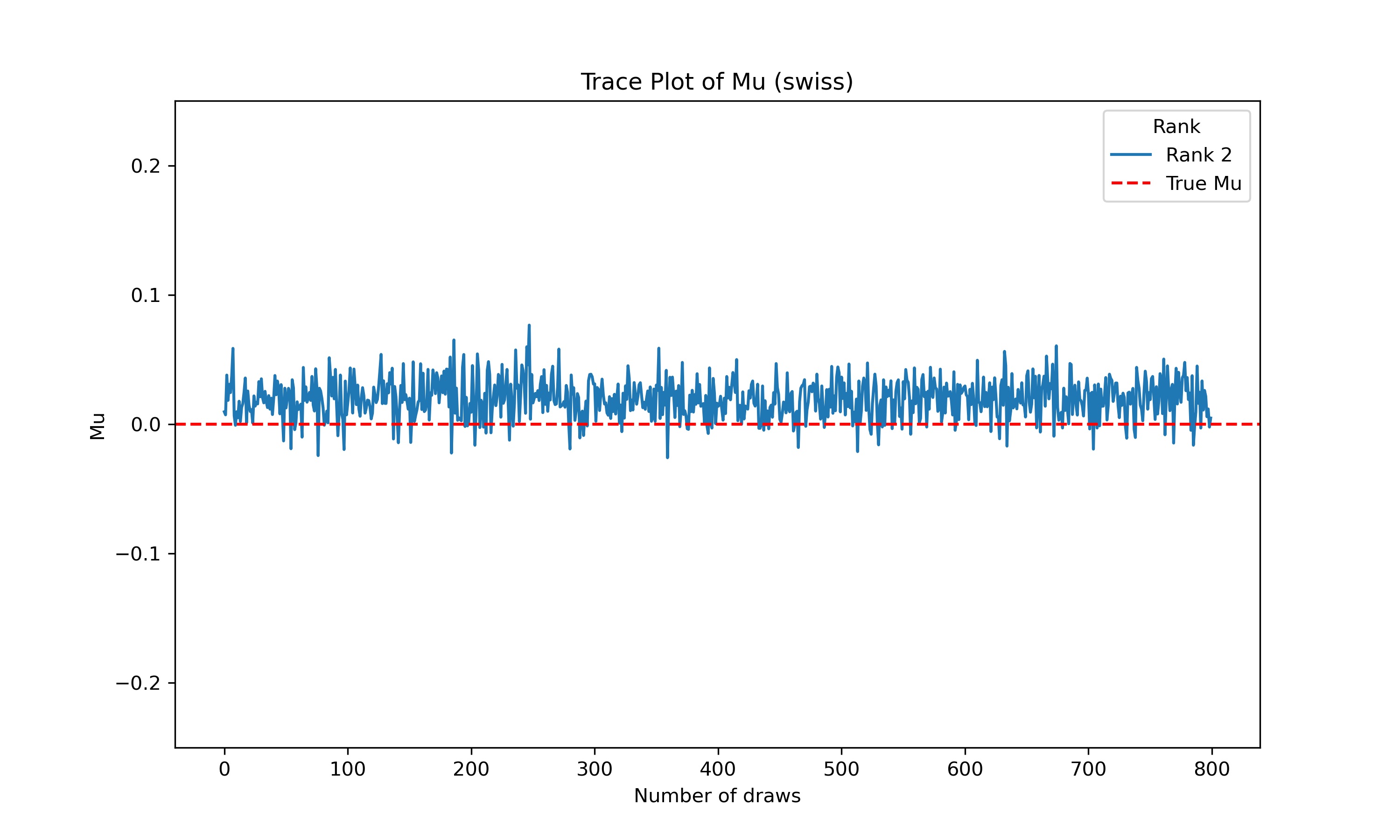} & \includegraphics[width=.31\linewidth]{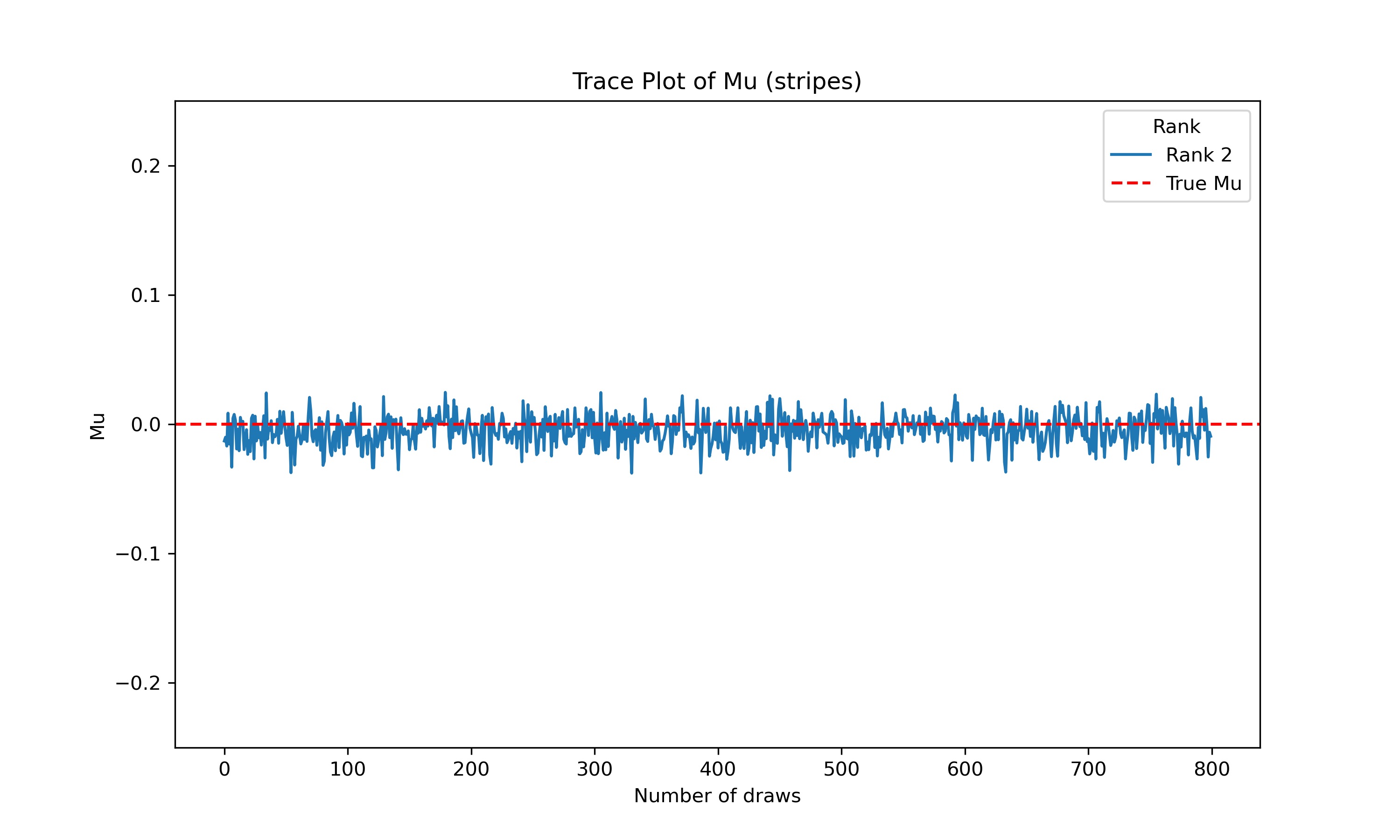} \\
        \includegraphics[width=.31\linewidth]{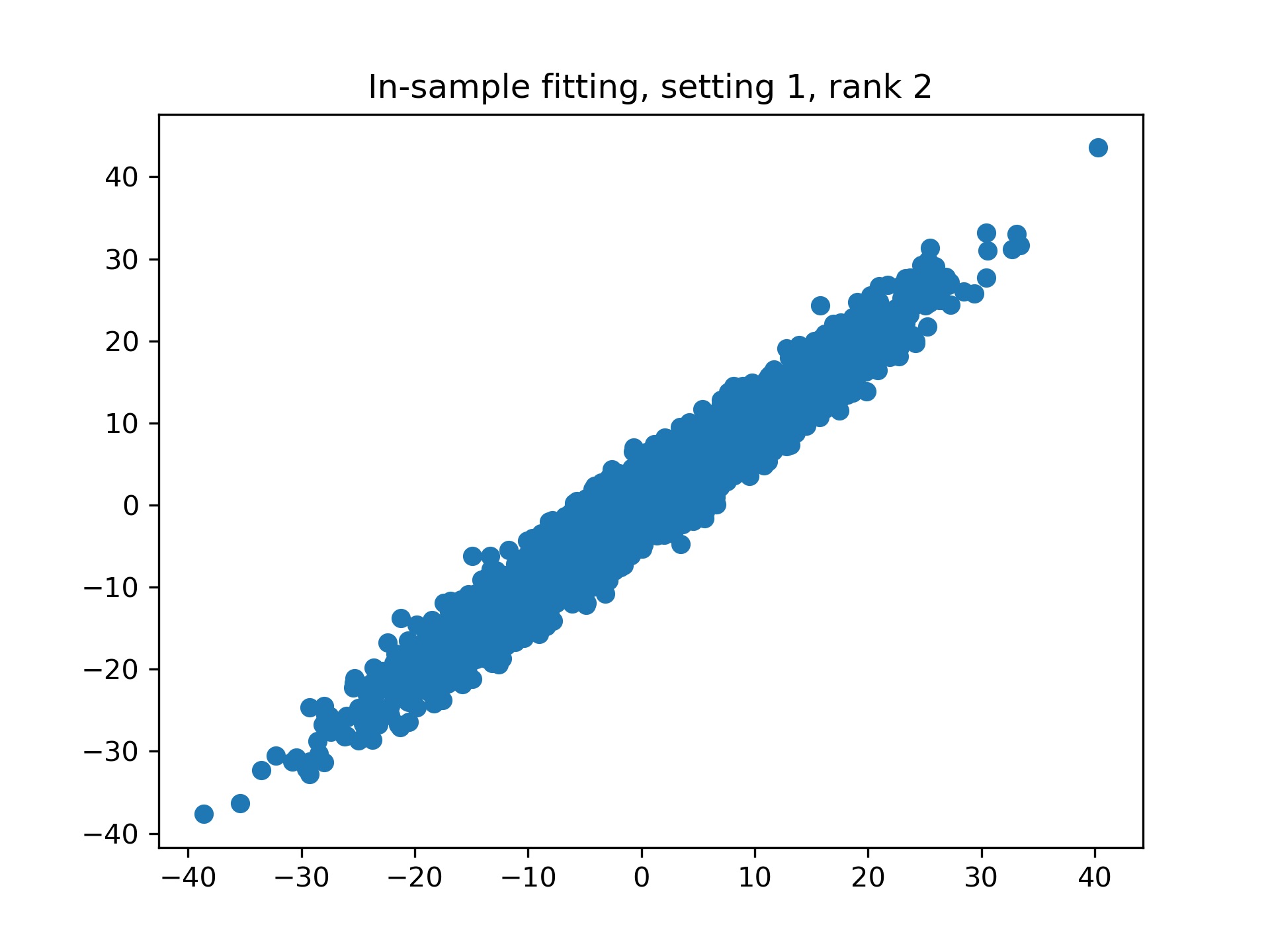} & \includegraphics[width=.31\linewidth]{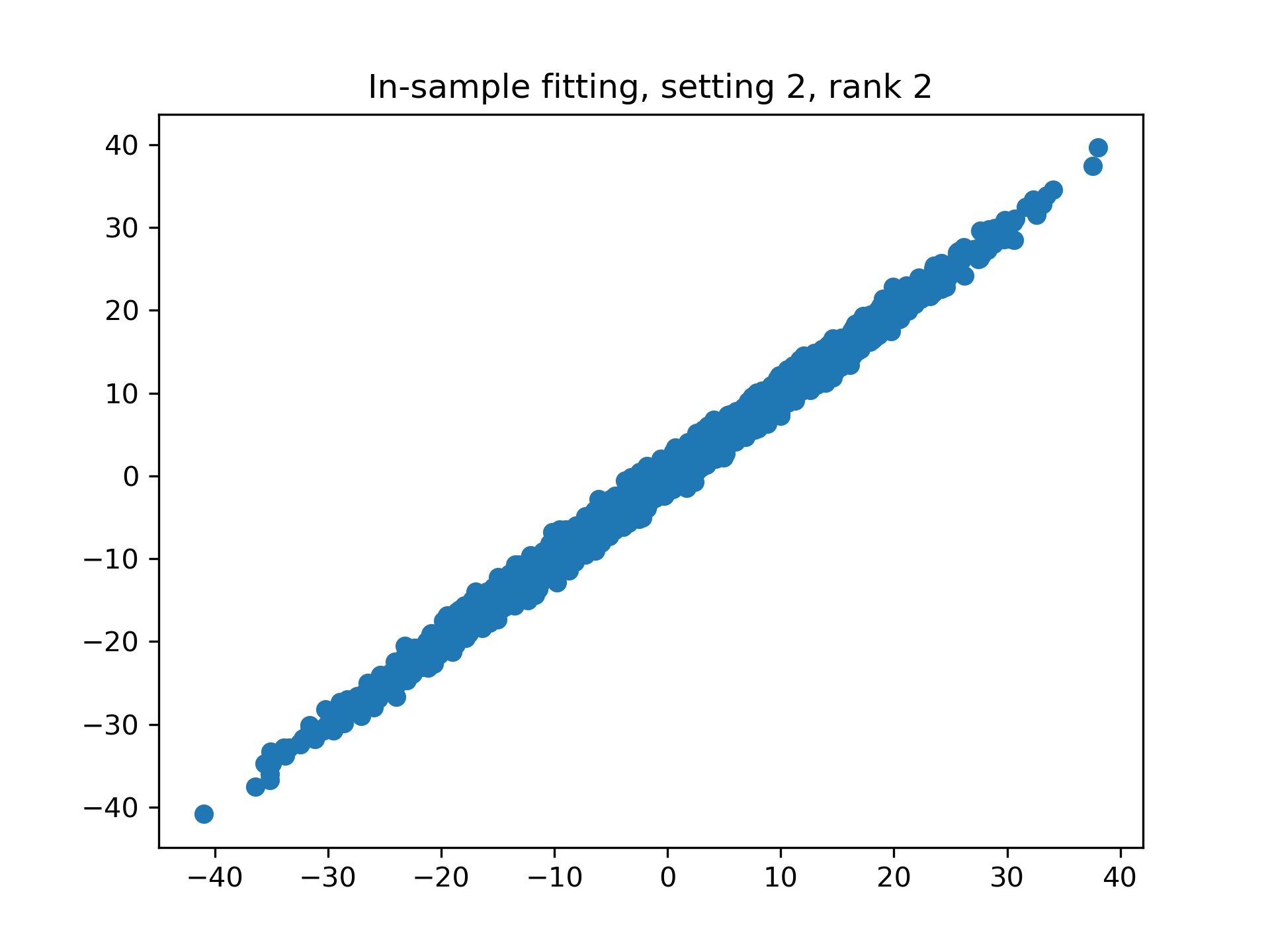} & \includegraphics[width=.31\linewidth]{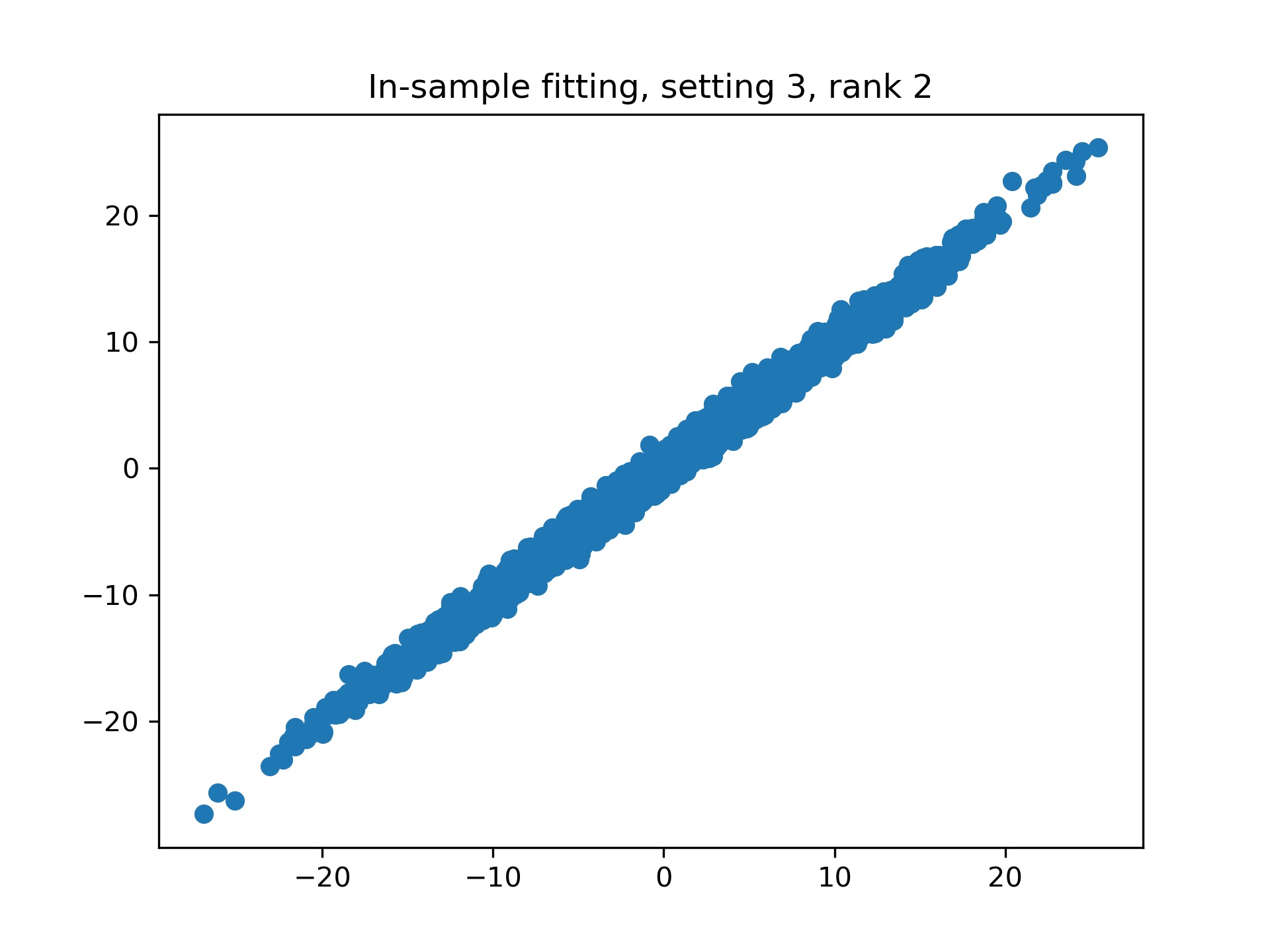}
    \end{tabular}
    \caption{Simulation results for Bayesian tensor regression. First row: true coefficients. Second row: estimated coefficients. Third and fourth row: trace plots of $\sigma^2$ and $\mu$, true values are the red dashed lines. Fifth row: scatter plots for in-sample fitting, true values (horizontal axis) versus fitted values (vertical axis).}
    \label{fig: sim_parafac}
\end{figure}

Parameter estimation is based on the first 1,000 observations, and out-of-sample forecasts are generated for the remaining 500 samples. The following hyper-parameter setting is considered: $D=5, \alpha=D^{-2}, a_{\tau}=3, b_{\tau}=100, a_{\lambda}=20, b_{\lambda}=2, a_{\sigma}=3, b_{\sigma}=1, \sigma^2_{\mu}=1$. We ran the Gibbs sampler for $1,000$ iterations and removed $200$ burn-in samples.

\subsection{Non-structured coefficients}
To explore the effects of random projection on tensor coefficients without underlying structure, unlike the settings in previous simulations, we carry out further simulations with the true coefficients, where the entries are i.i.d. drawn from $\{0, 1\}$ at sparsity levels of $75\%, 50\%$, and $25\%$.

Fig. \ref{fig: non-stru} shows the scatter plots of predicted data against the actual data across different random projection methods for coefficients with different sparsity levels using compression rate $=0.36$, training sample size $=1000$, and $\psi=3$. When the sparsity level of the true coefficients is moderate ($25\%$ and $50\%$), mode-wise random projection and mode-wise random projection with mode preservation still outperform the tensor-wise random projection as in the case of coefficients with some underlying structures. However, when the true coefficients become highly sparse ($75\%$), the performance of the different random projections becomes very close, with tensor-wise random projection slightly outperforming mode-wise random projection. This can also be seen in Fig. \ref{fig: rmse_no_stru}, which shows the RMSE across different random projection methods for different true coefficients.

\begin{figure}
    \centering
    \includegraphics[width=0.5\linewidth]{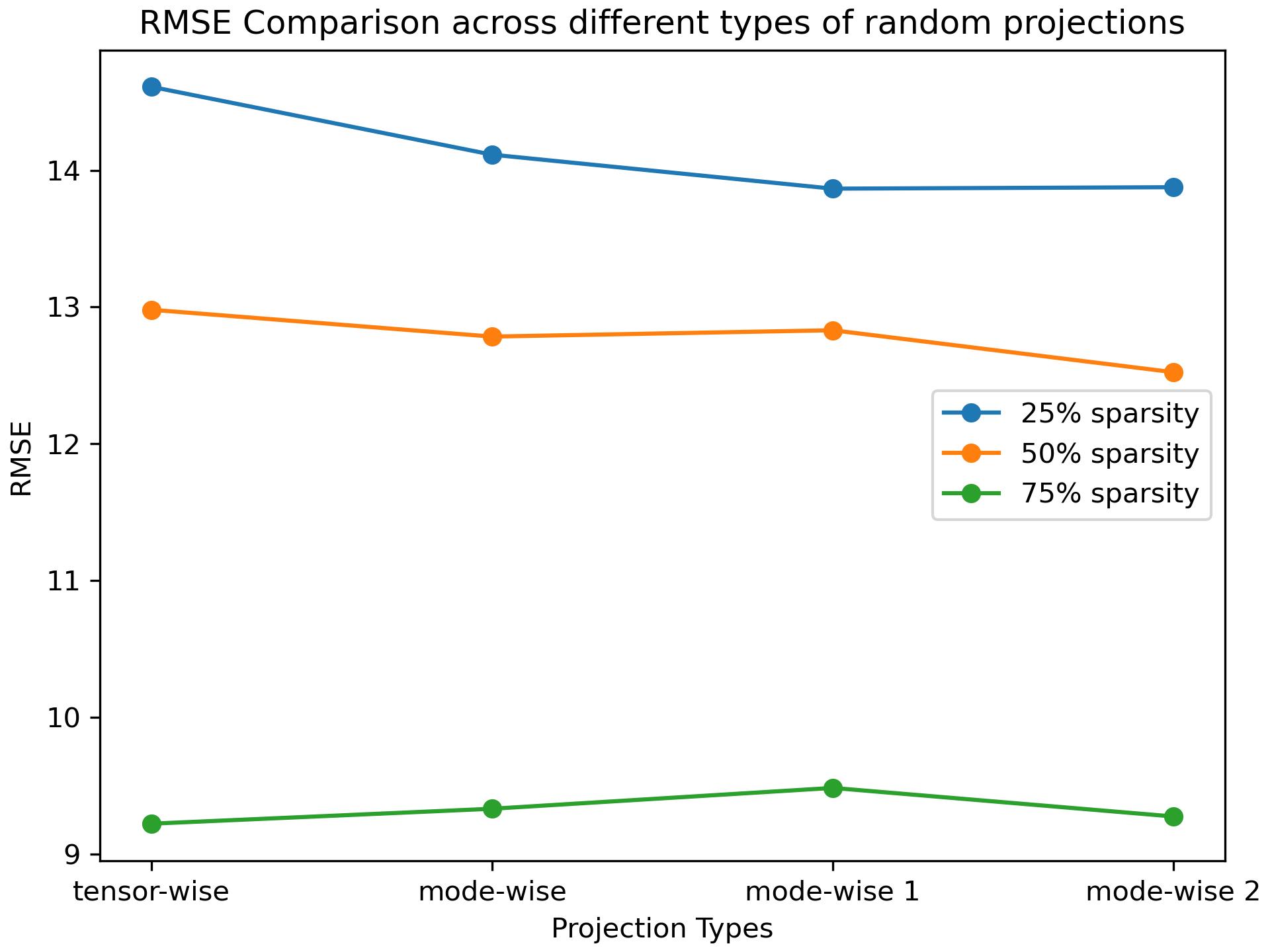}
    \caption{RMSE (vertical axis) comparison across different random projection methods (horizontal axis) and coefficients with different sparsity levels shown as lines in different colors (25\%: blue, 50\%: yellow, 75\%: green).}
    \label{fig: rmse_no_stru}
\end{figure}

\begin{figure}[]
    \centering
    \begin{tabular}{ccccc}
    \scriptsize (a) True Coefficient & \multicolumn{4}{c}{\scriptsize (b) Model Fitting}\\
    &\scriptsize Tensor-wise & \scriptsize Mode-wise & \scriptsize Mode-wise $1$ & \scriptsize Mode-wise $2$ \\
        \raisebox{2mm}{\includegraphics[width=.16\linewidth]{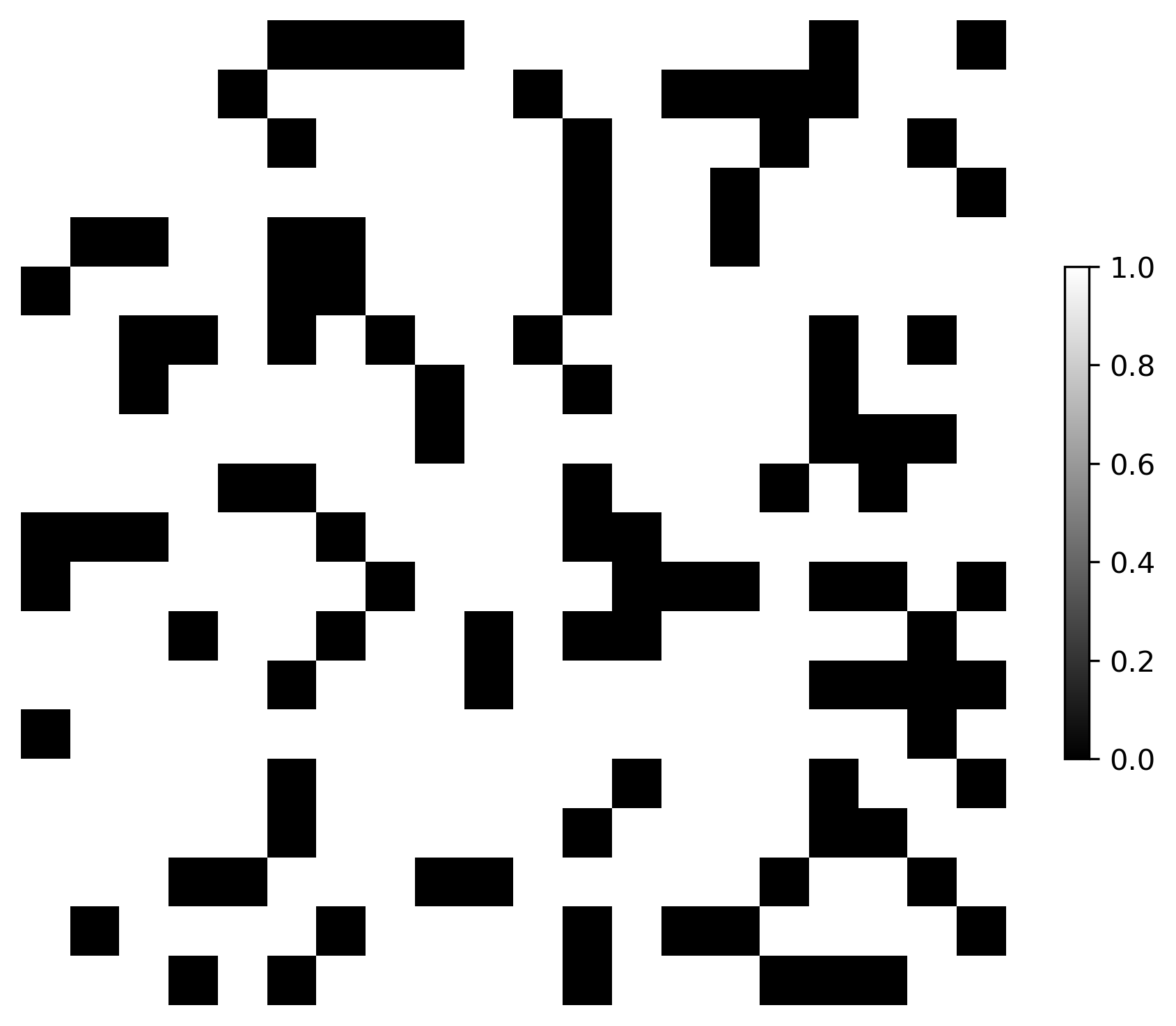}} & \includegraphics[width=.16\linewidth]{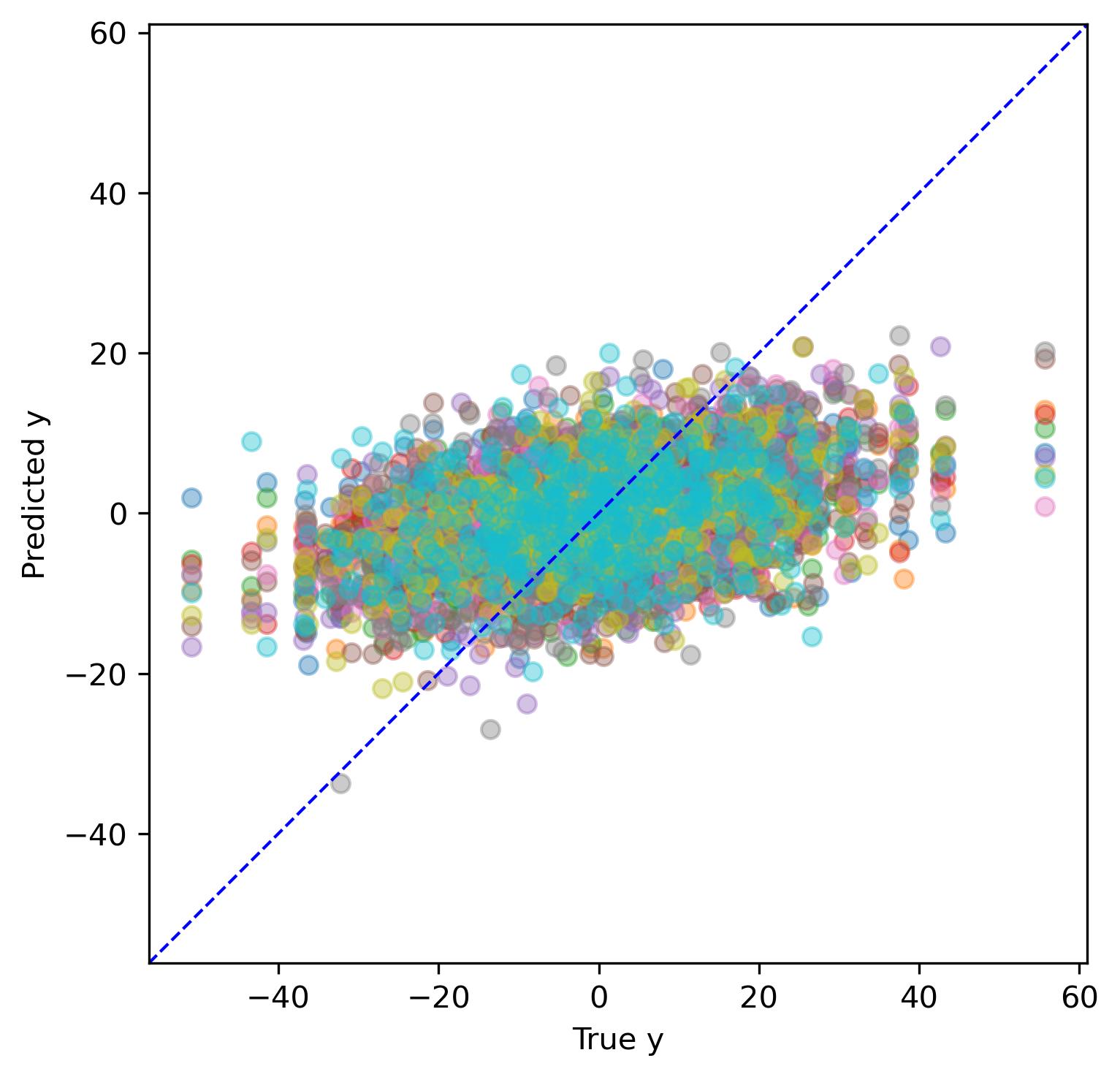} & \includegraphics[width=.16\linewidth]{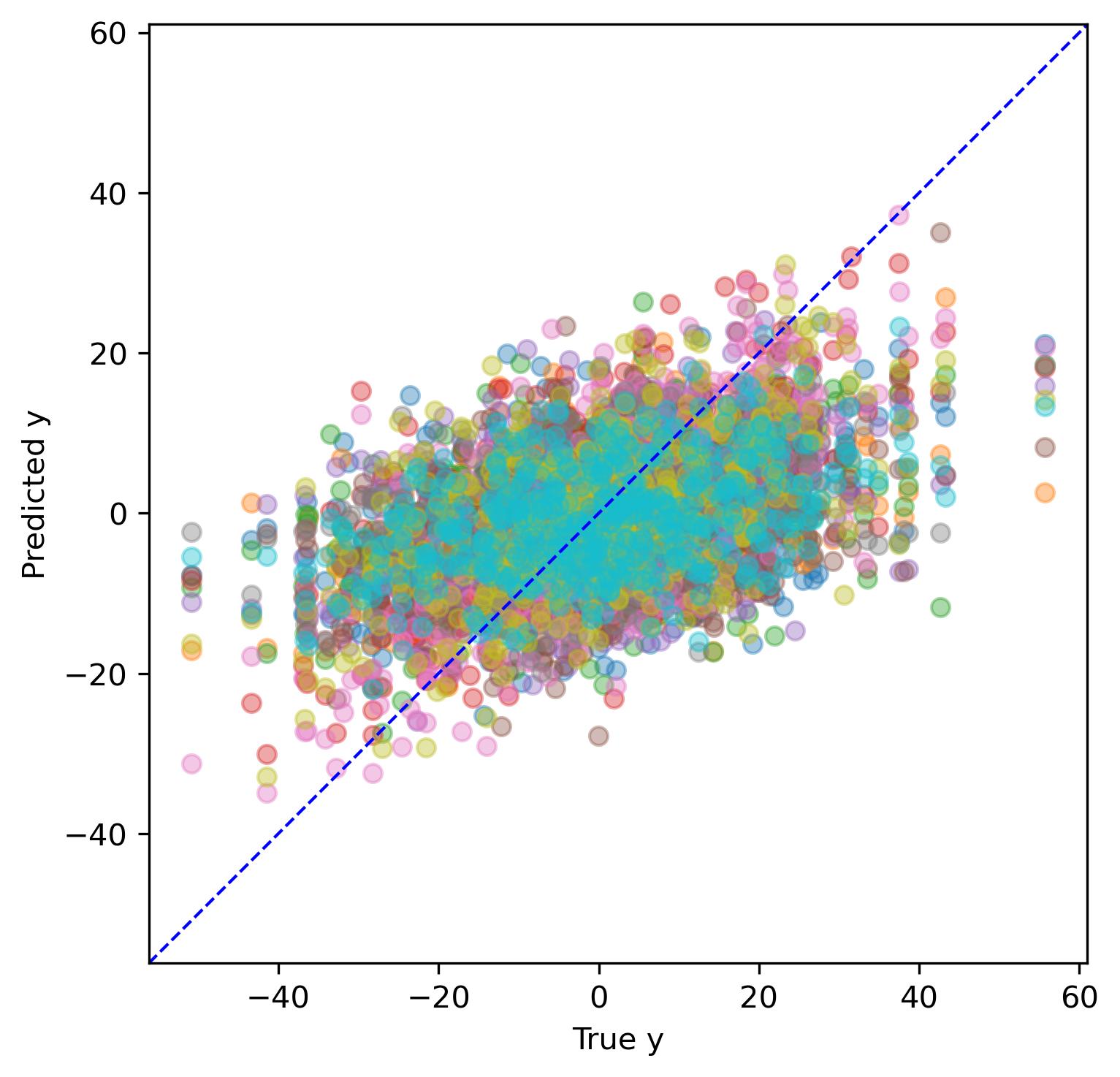} & \includegraphics[width=.16\linewidth]{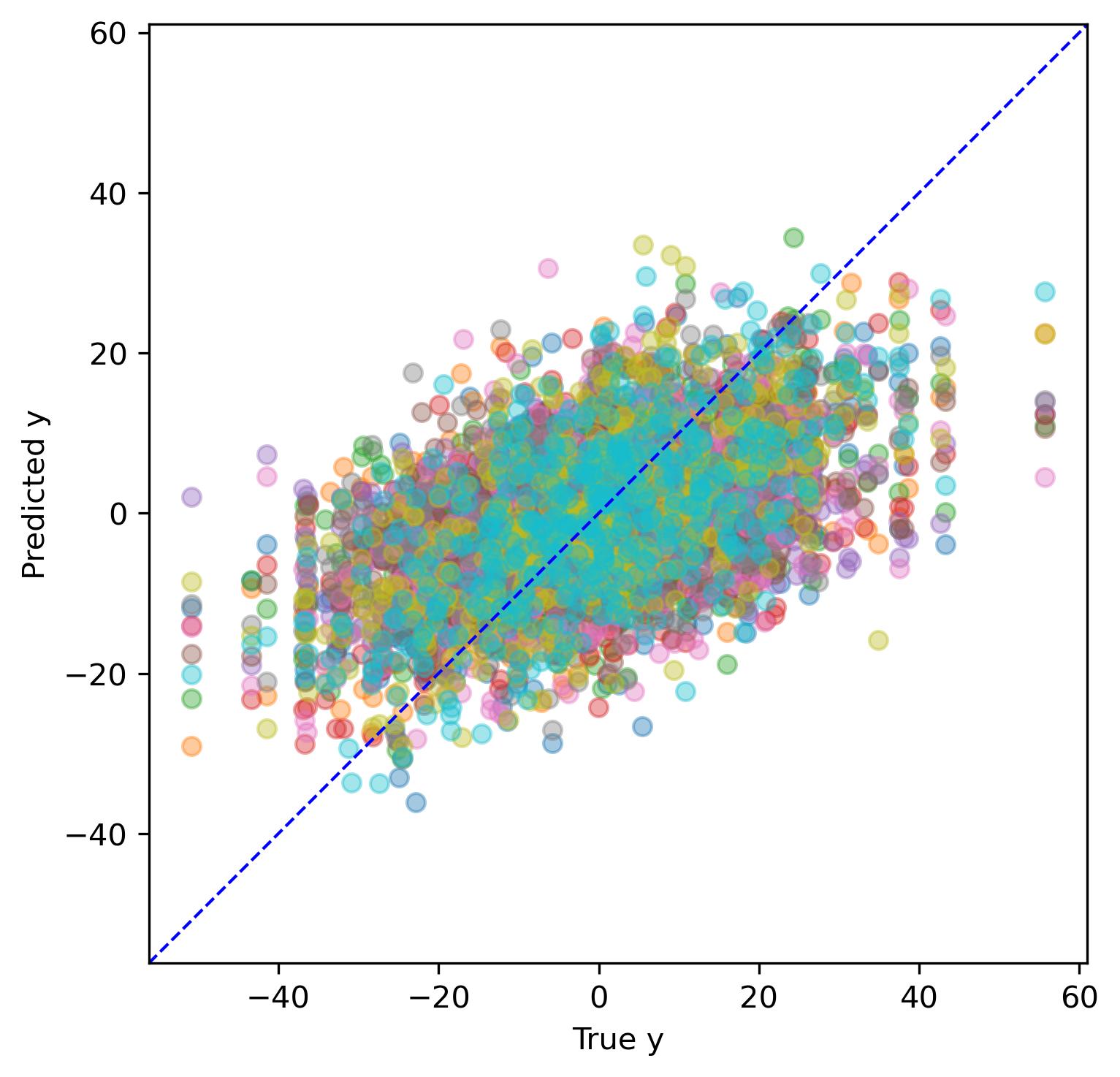} & \includegraphics[width=.16\linewidth]{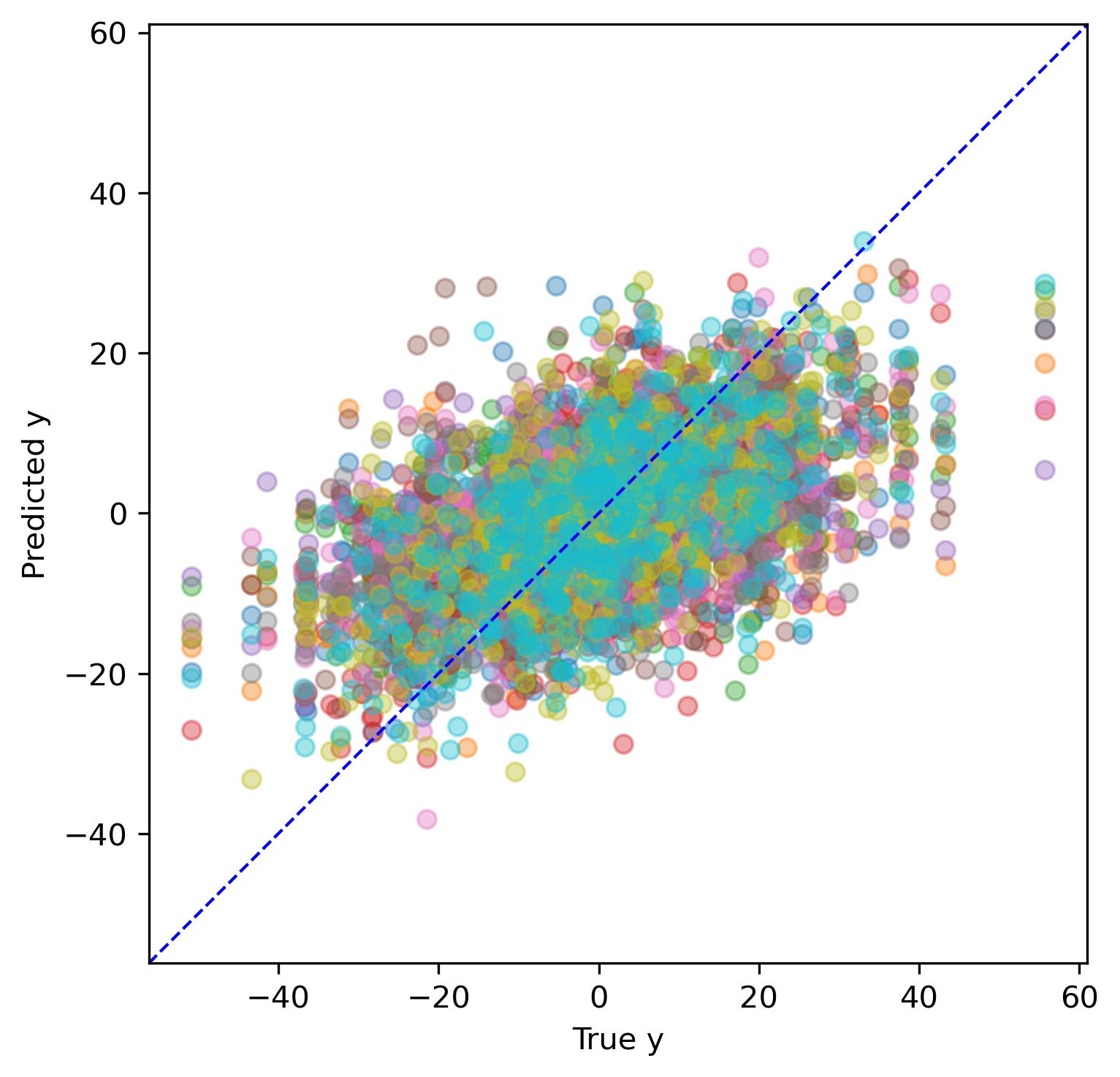}\\
        \raisebox{2mm}{\includegraphics[width=.16\linewidth]{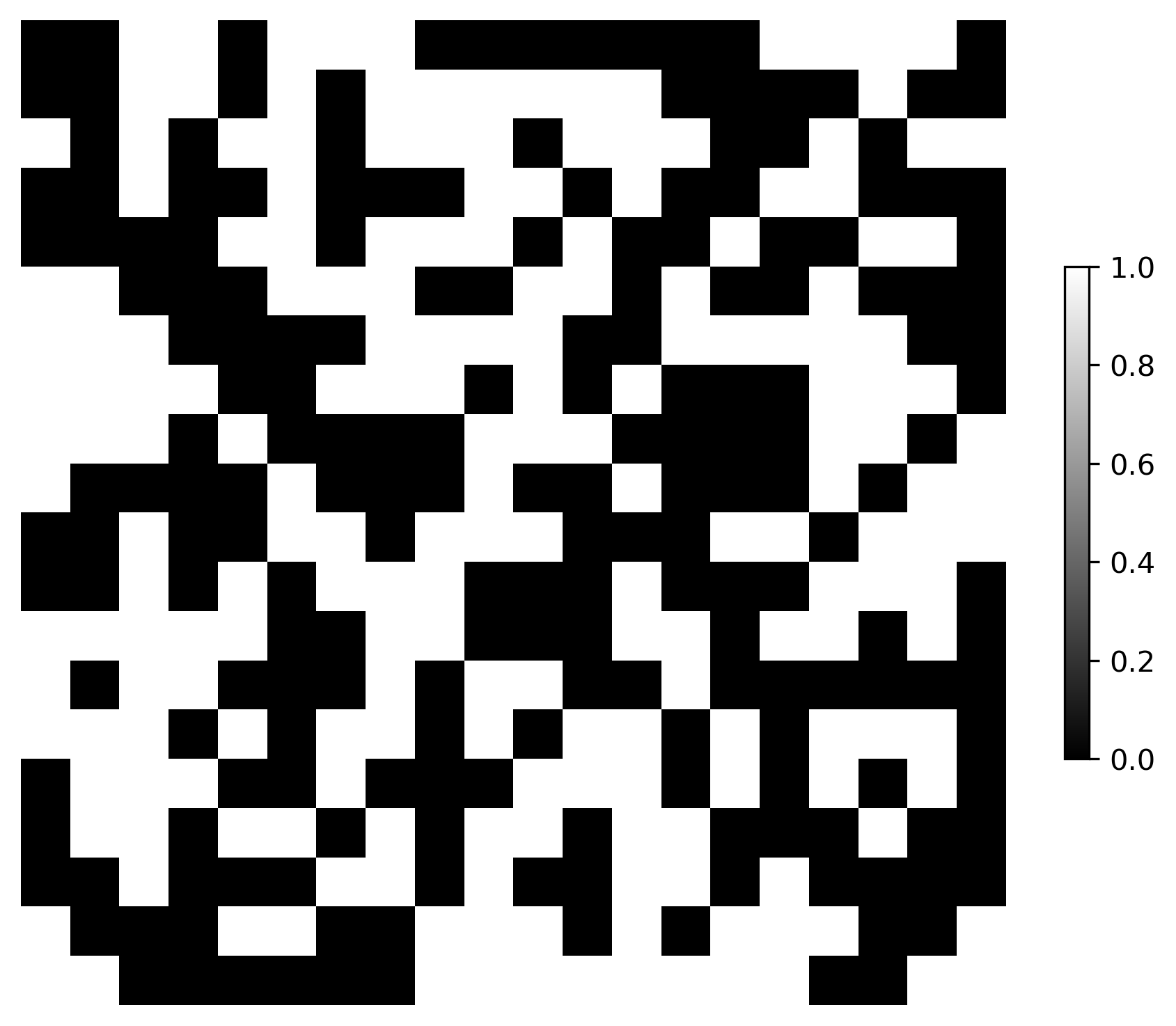}} & \includegraphics[width=.16\linewidth]{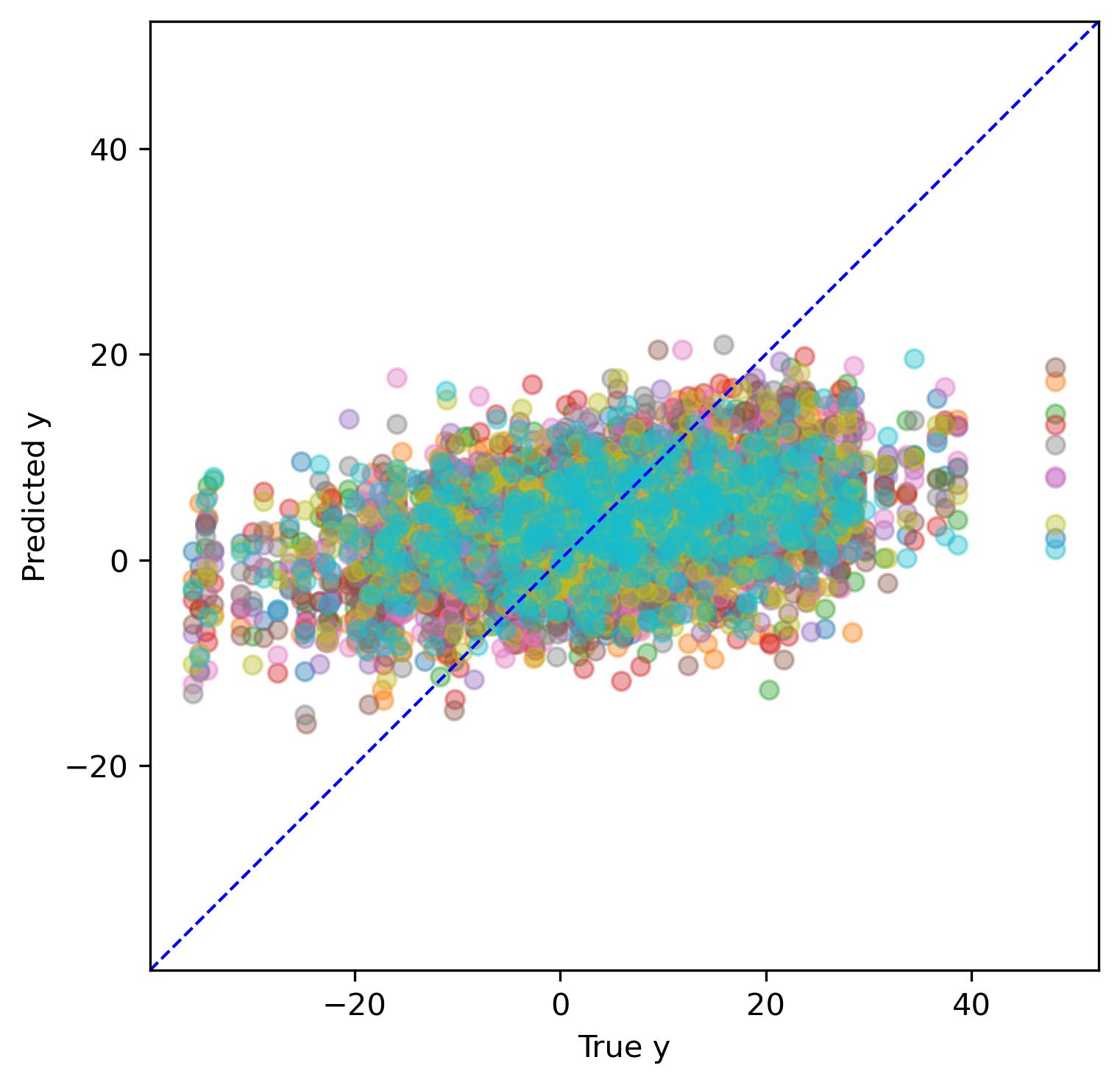} & \includegraphics[width=.16\linewidth]{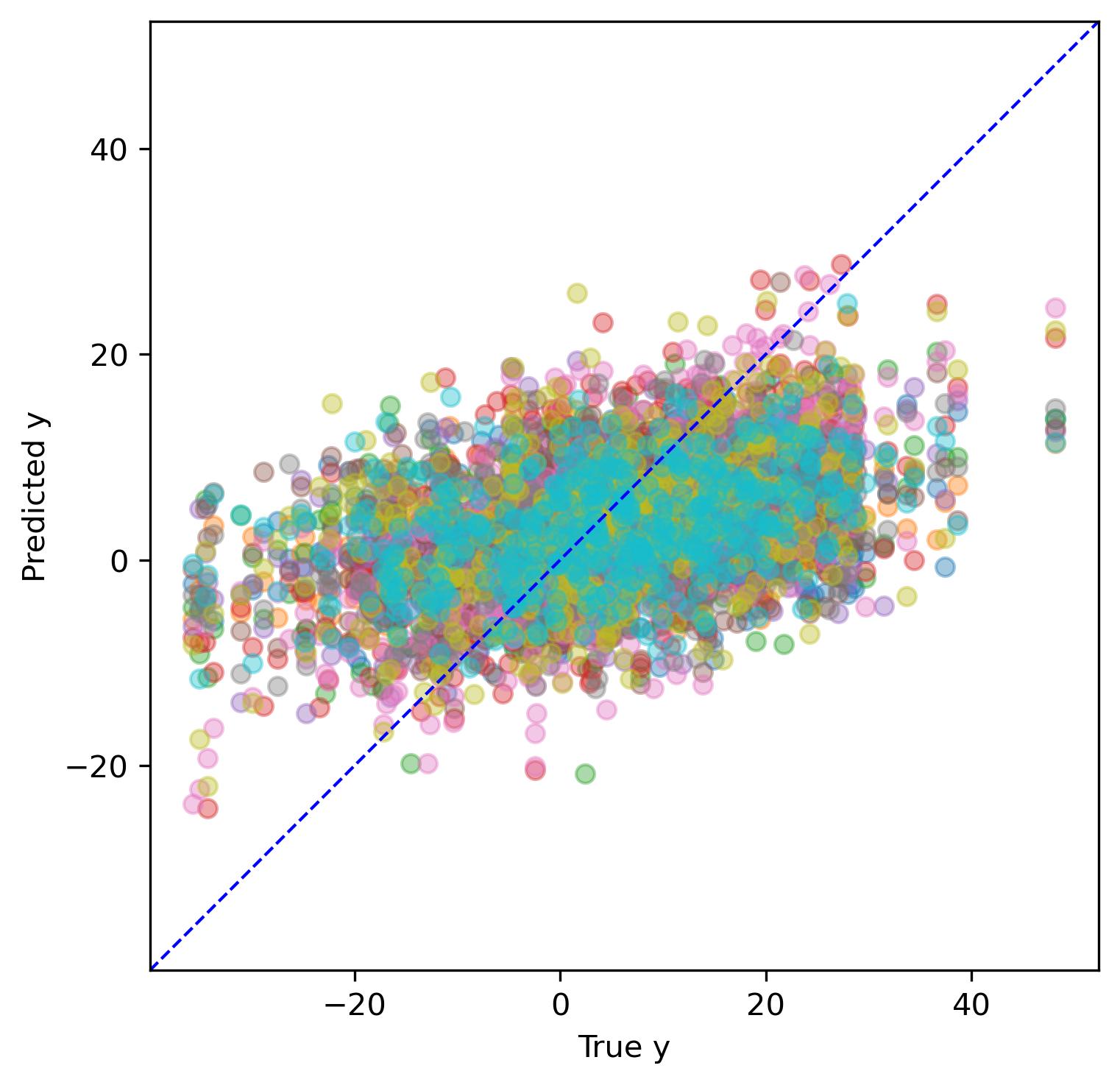} & \includegraphics[width=.16\linewidth]{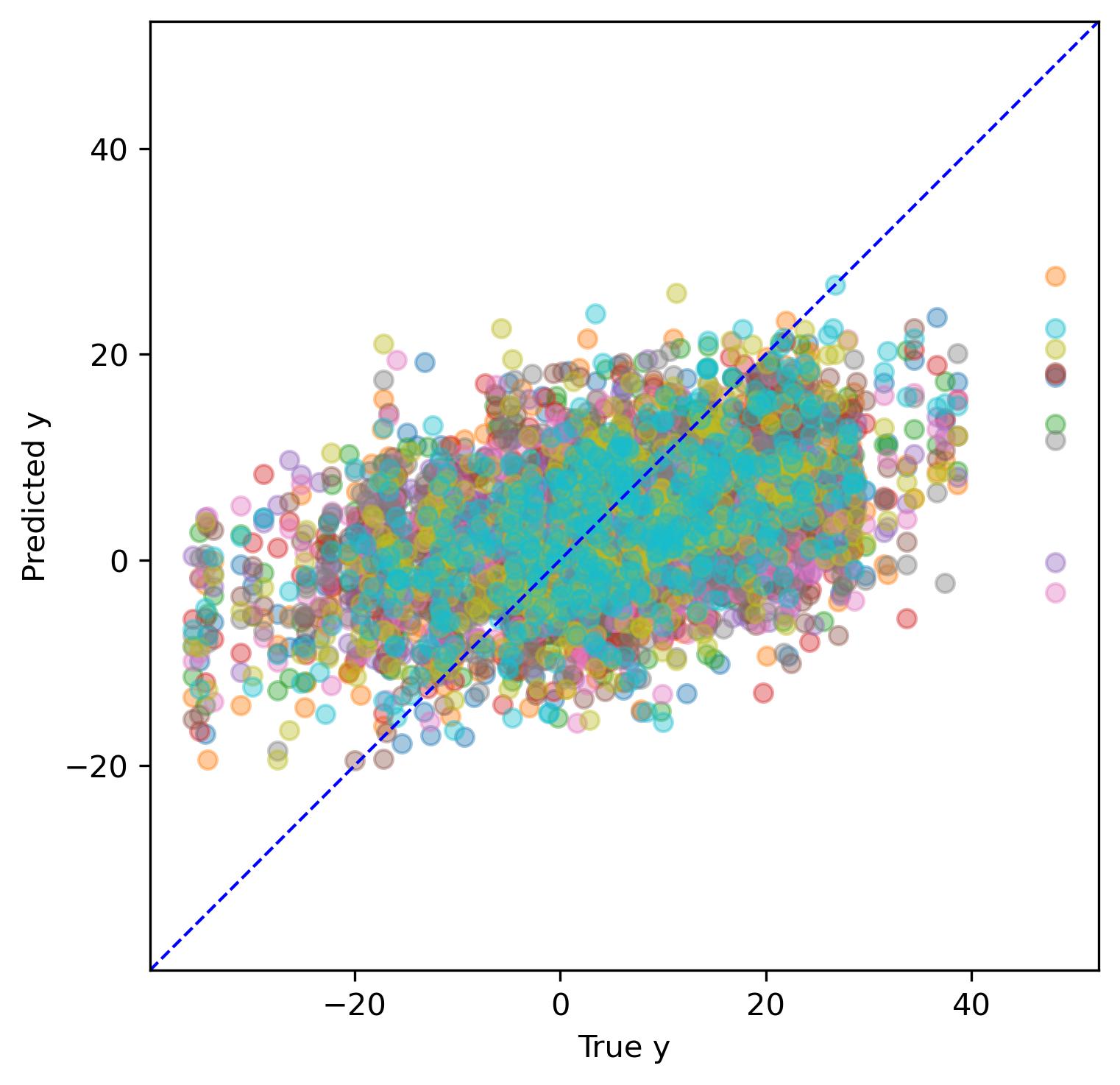} & \includegraphics[width=.16\linewidth]{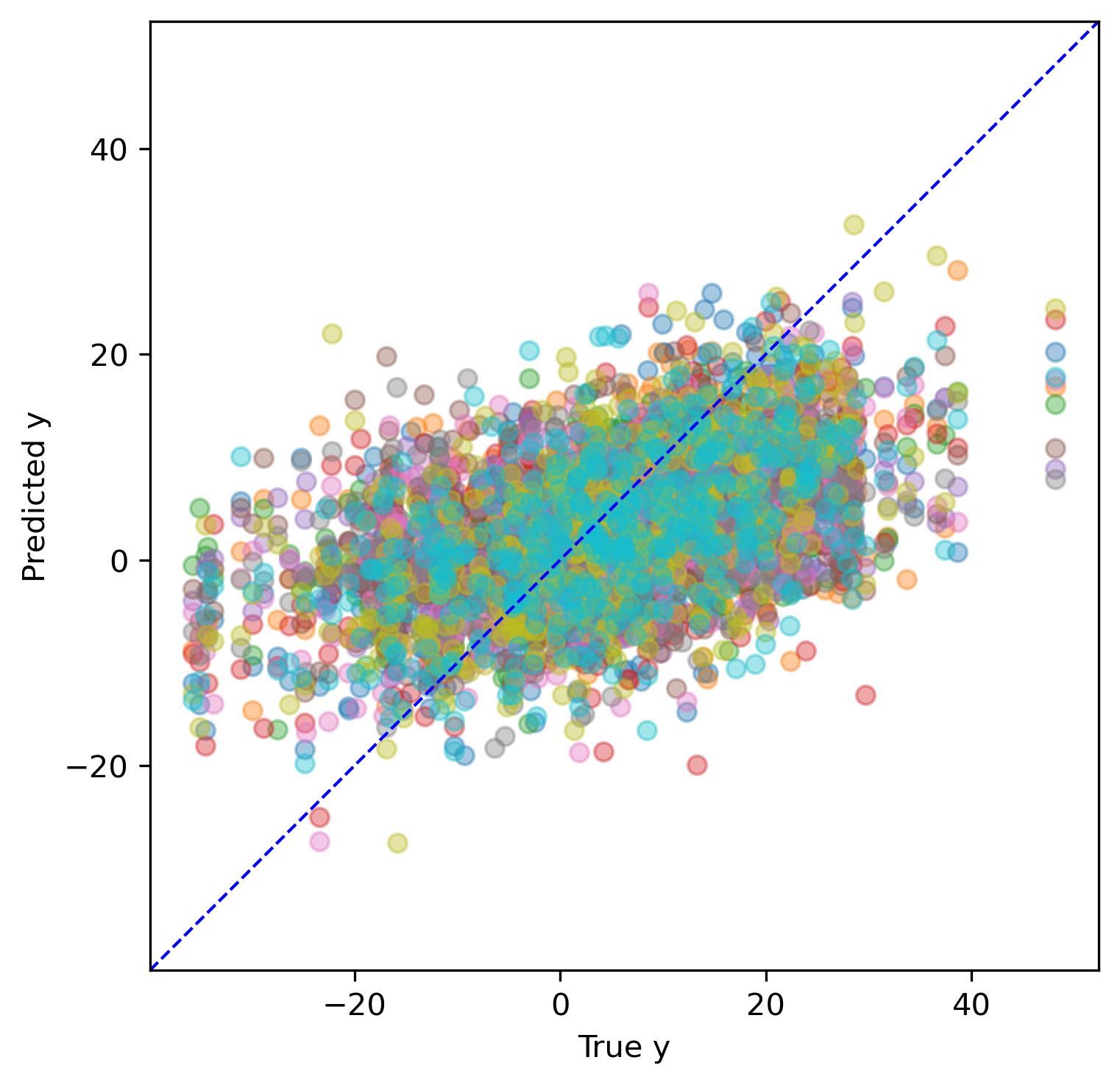} \\
        \raisebox{2mm}{\includegraphics[width=.16\linewidth]{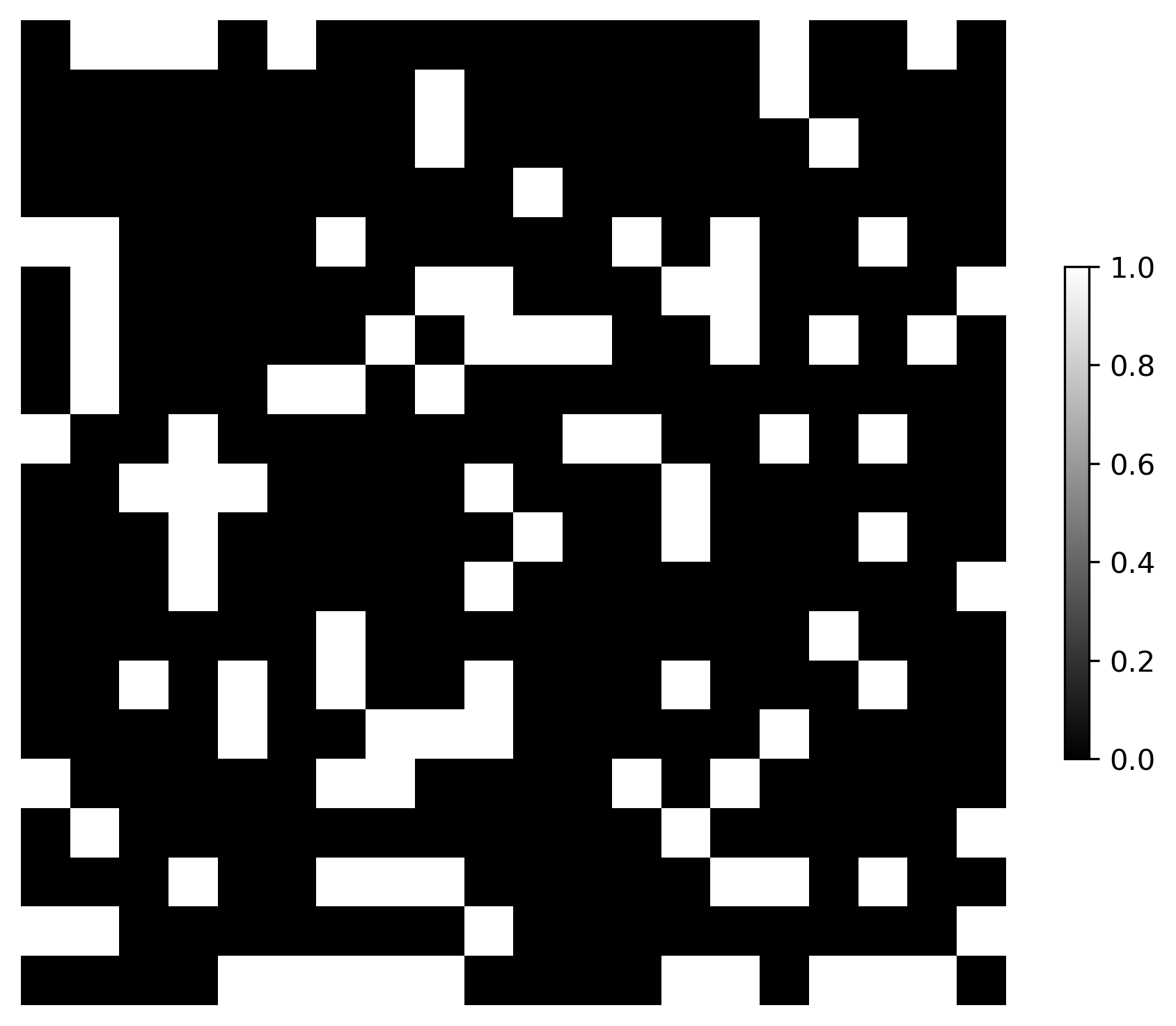}} & \includegraphics[width=.16\linewidth]{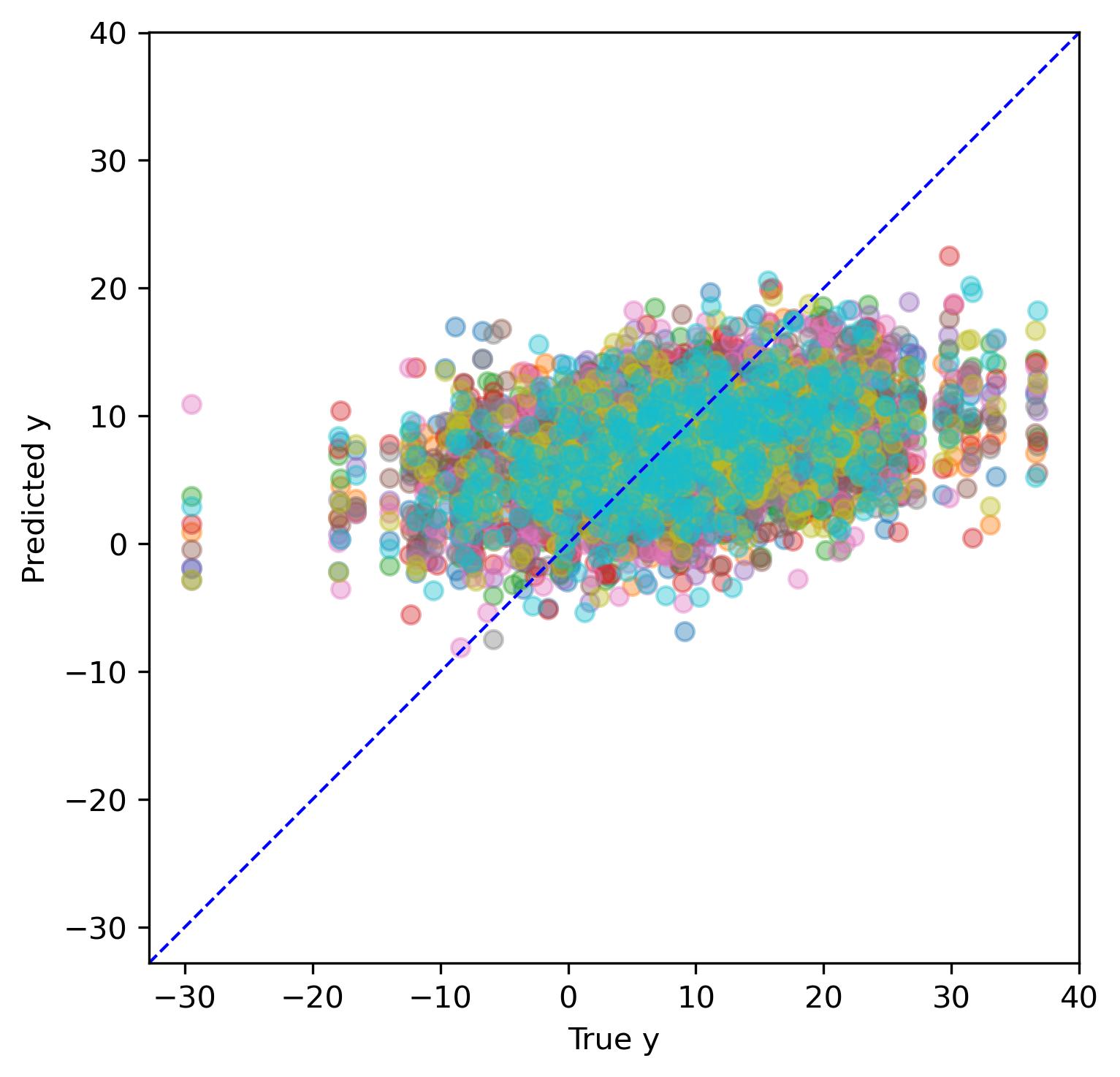} & \includegraphics[width=.16\linewidth]{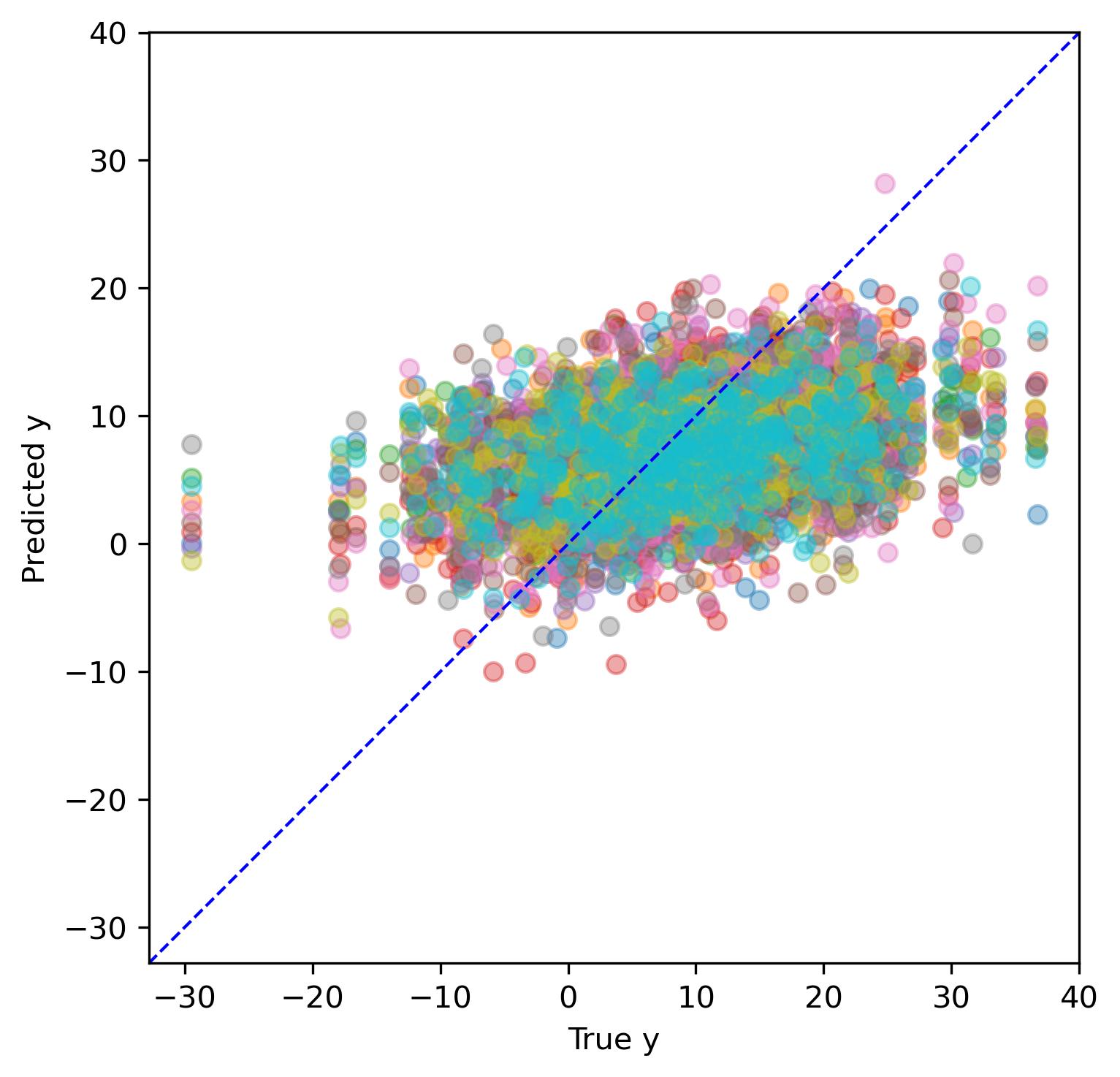} & \includegraphics[width=.16\linewidth]{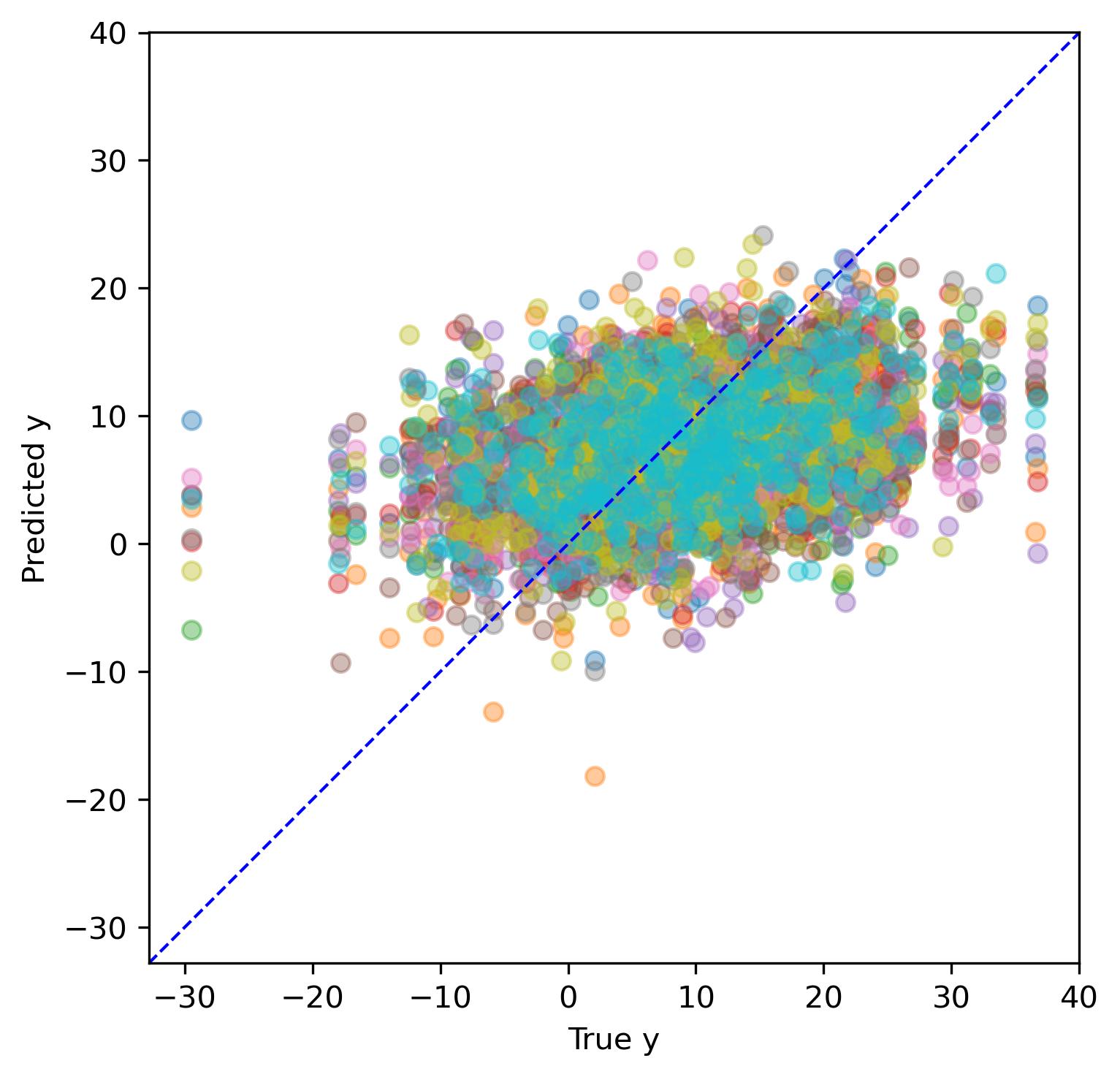} & \includegraphics[width=.16\linewidth]{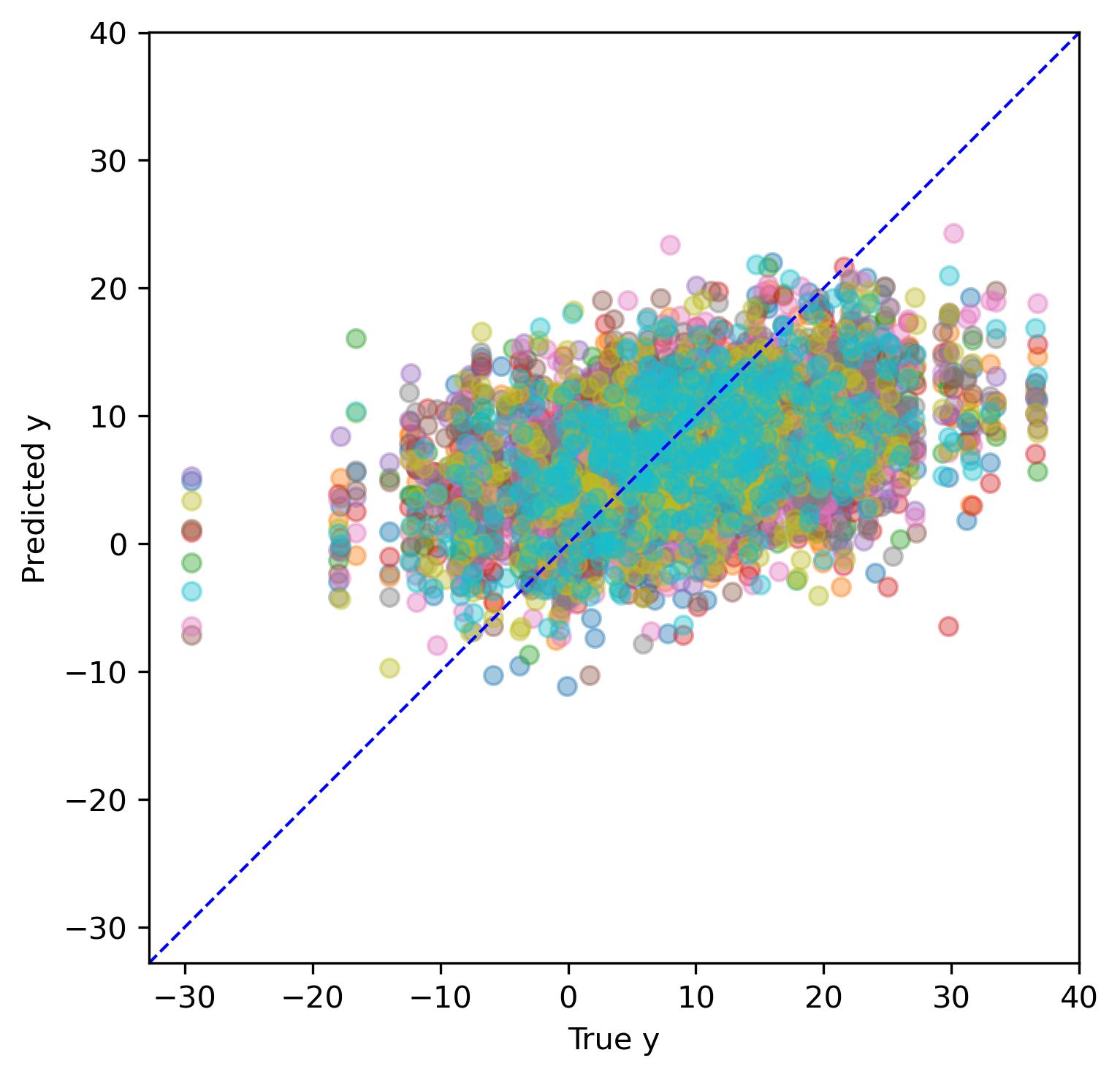}         
    
    \end{tabular}
    \caption{Scatter plots of actual data versus the predicted for three sets of coefficients with no underlying structures at sparsity levels of $25\%, 50\%$, and $75\%$. True coefficients are shown in panel (a) and forecasts are shown in panel (b). In each scatter plot: actual data (horizontal axis) against the predicted data (vertical axis) for different levels of sparsity (rows) and different types of random projections (columns), using $L=10$ independent projection matrices of the same type (colors) for each simulation. In each experiment: training sample size: $n=1000$, compression rate: $0.36$, $\psi=3$.}
    \label{fig: non-stru}
\end{figure}

\begin{figure}
    \centering
    \begin{tabular}{cccc}
   $n=500$ & $n=1000$ & $n=1500$ & $n=2000$\\
   \includegraphics[width=.23\linewidth]{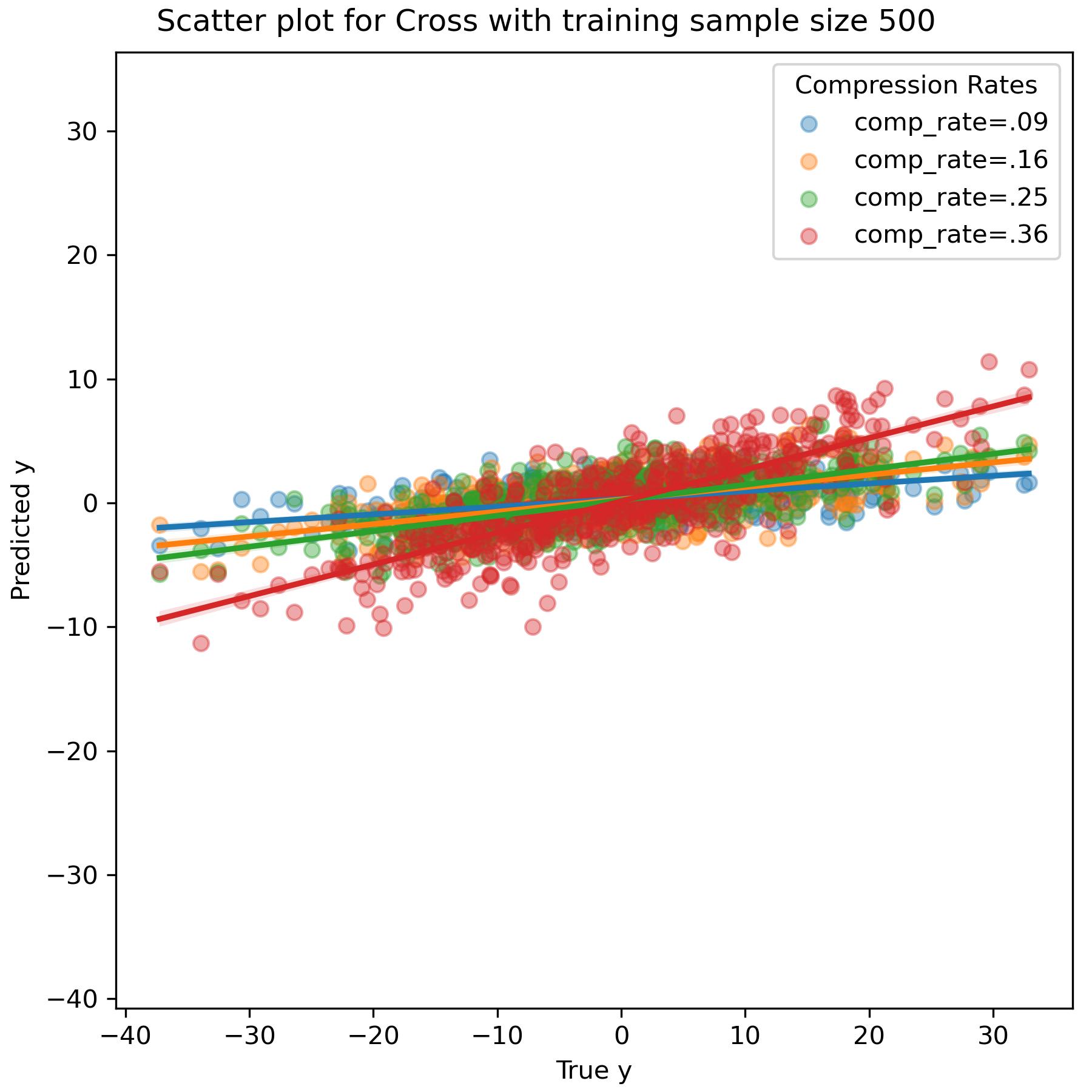} & \includegraphics[width=.23\linewidth]{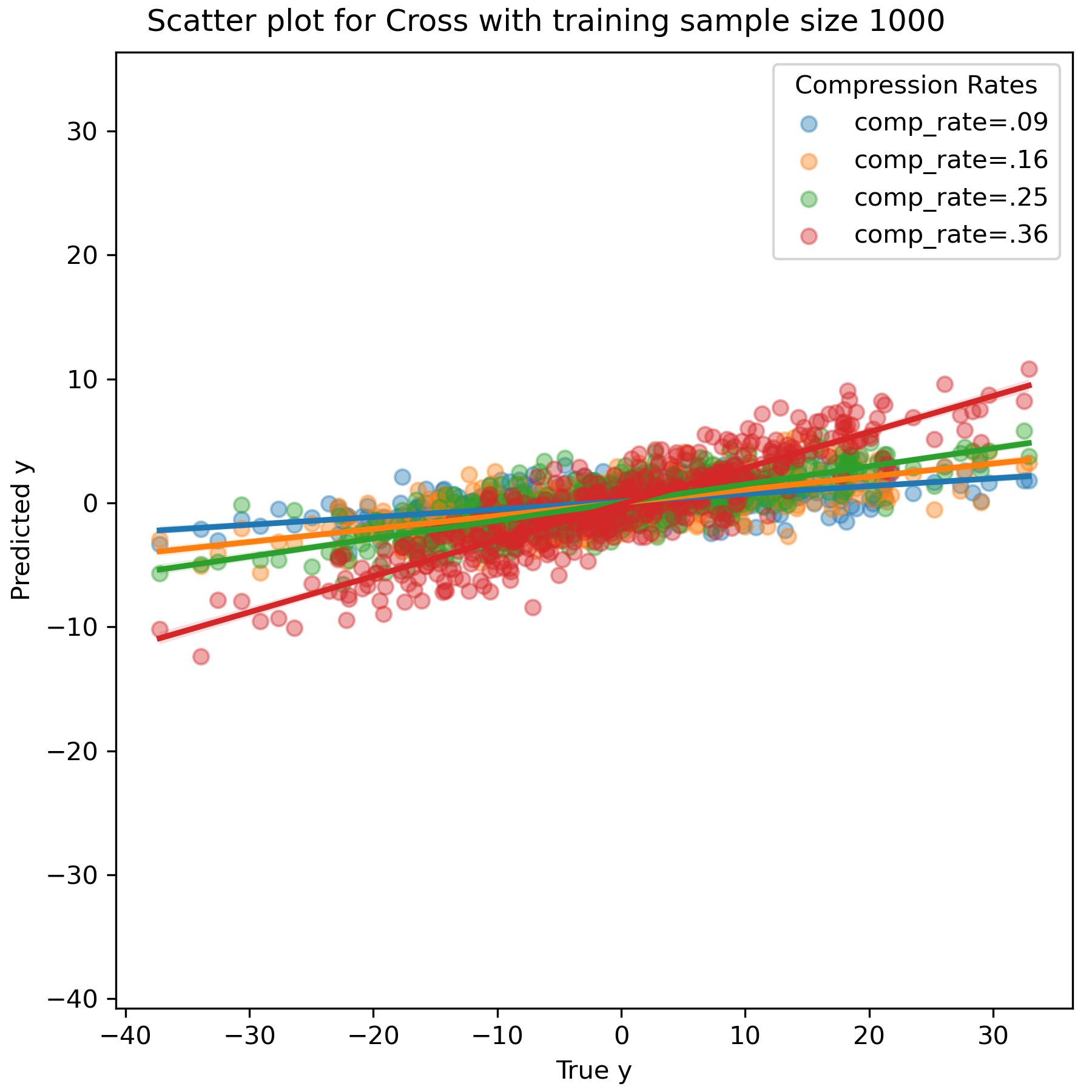} & \includegraphics[width=.23\linewidth]{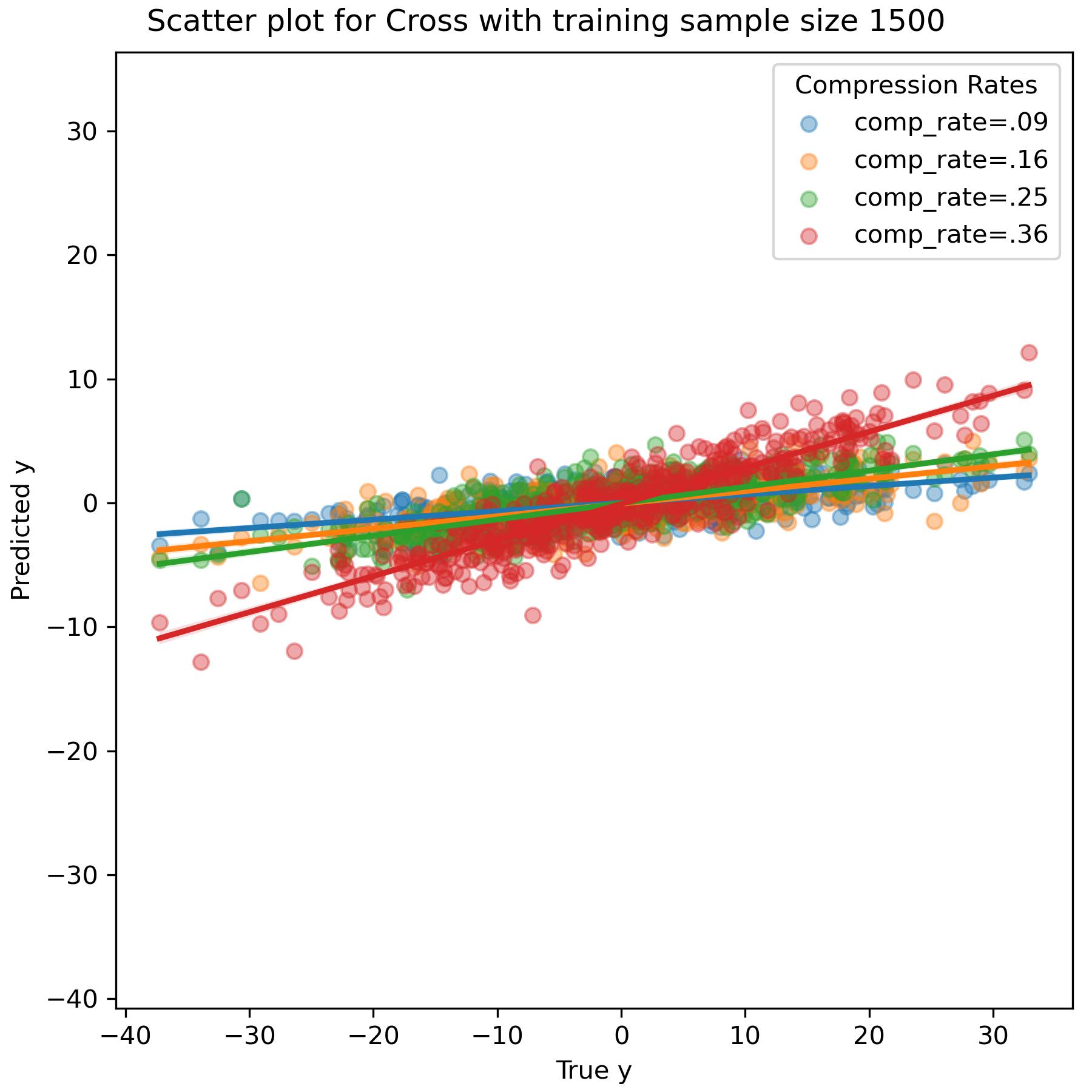} & \includegraphics[width=.23\textwidth]{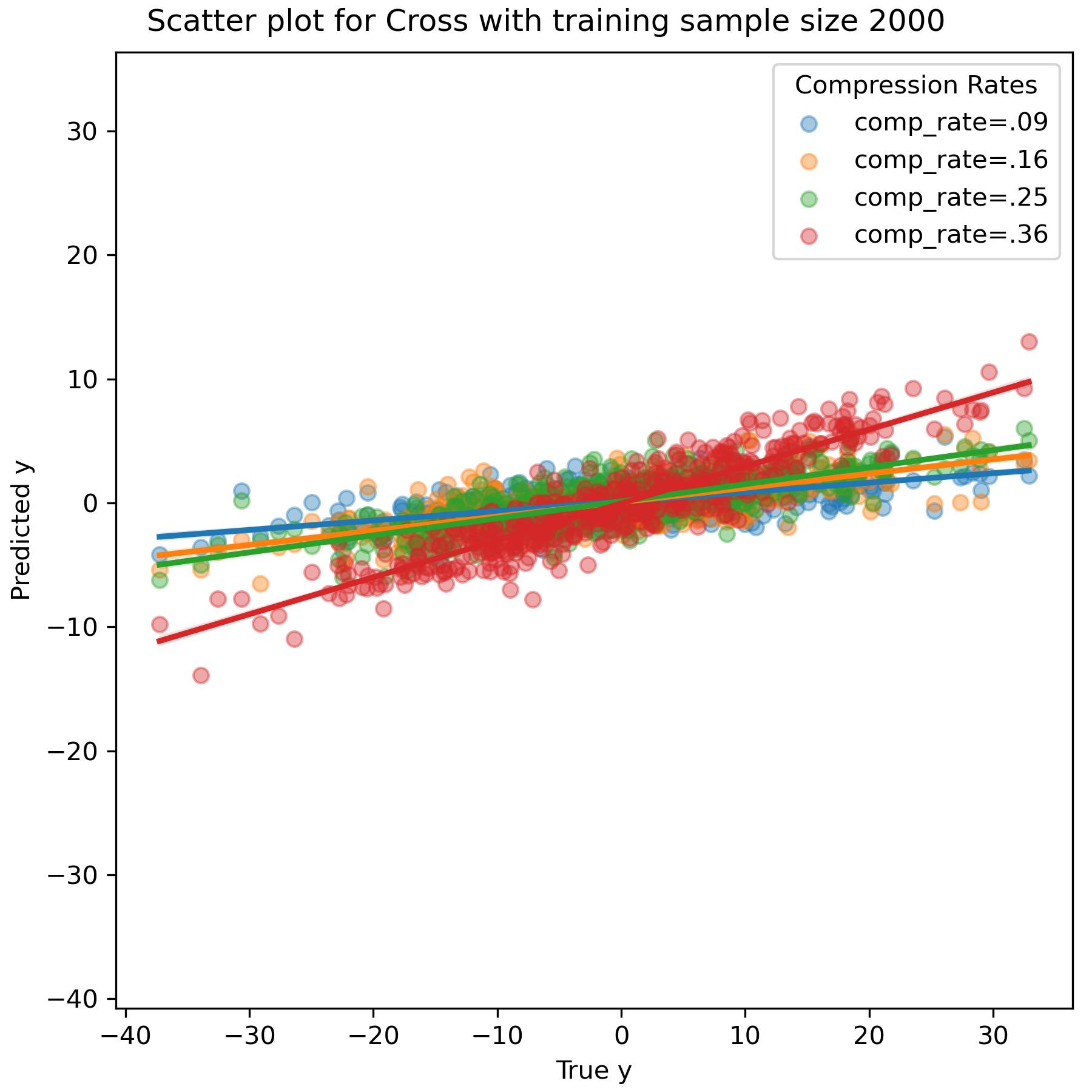} \\
   \includegraphics[width=.23\textwidth]{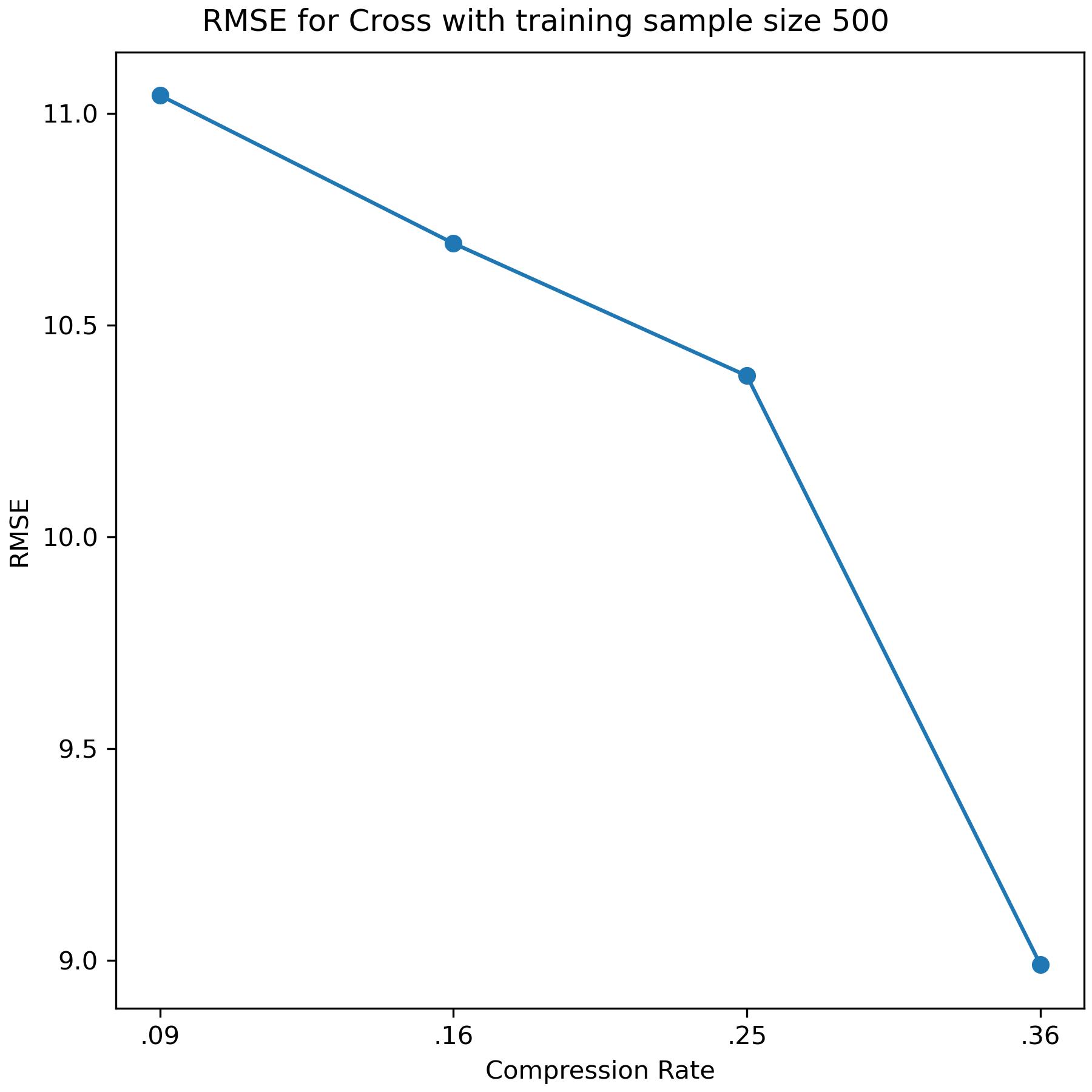} & \includegraphics[width=.23\textwidth]{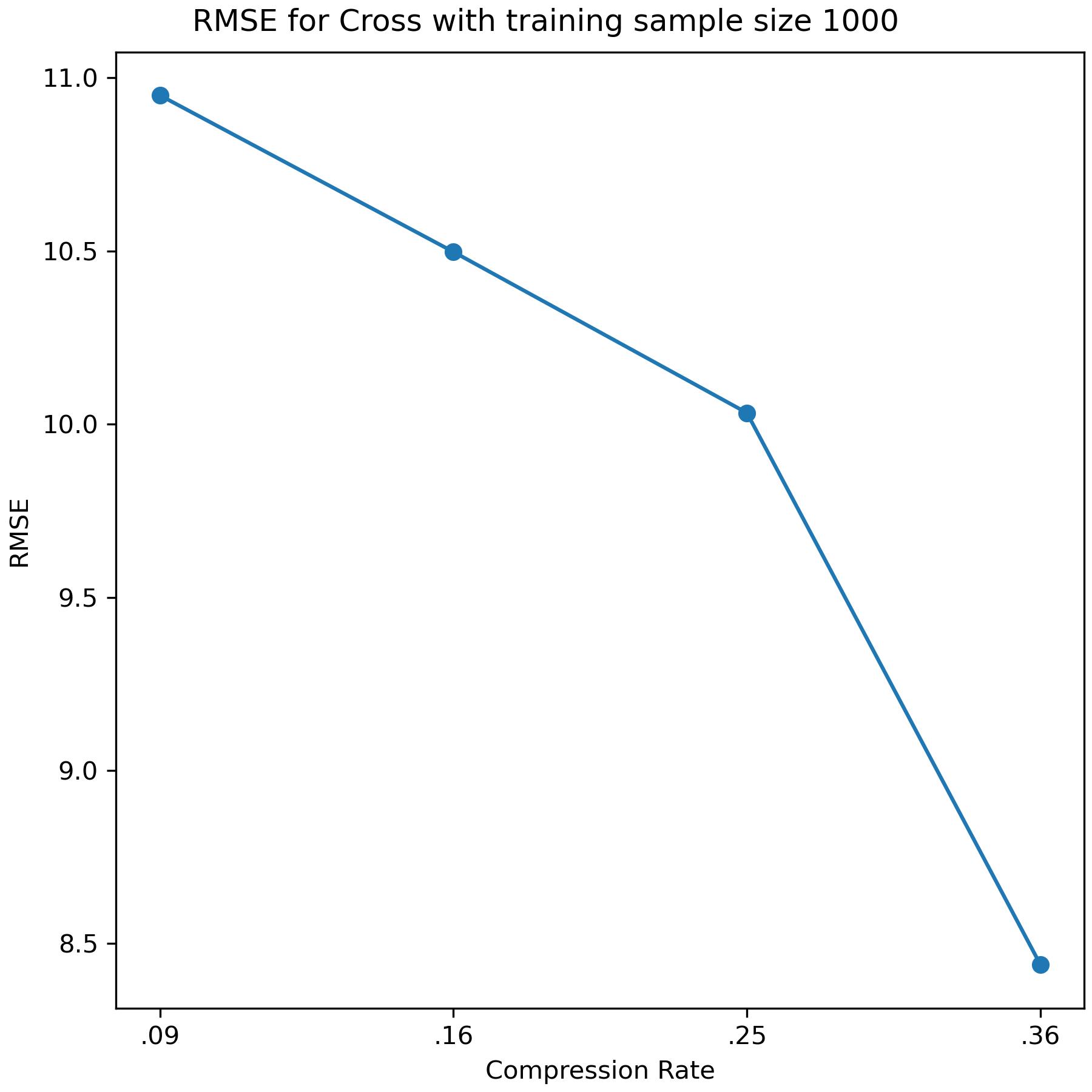} & \includegraphics[width=.23\textwidth]{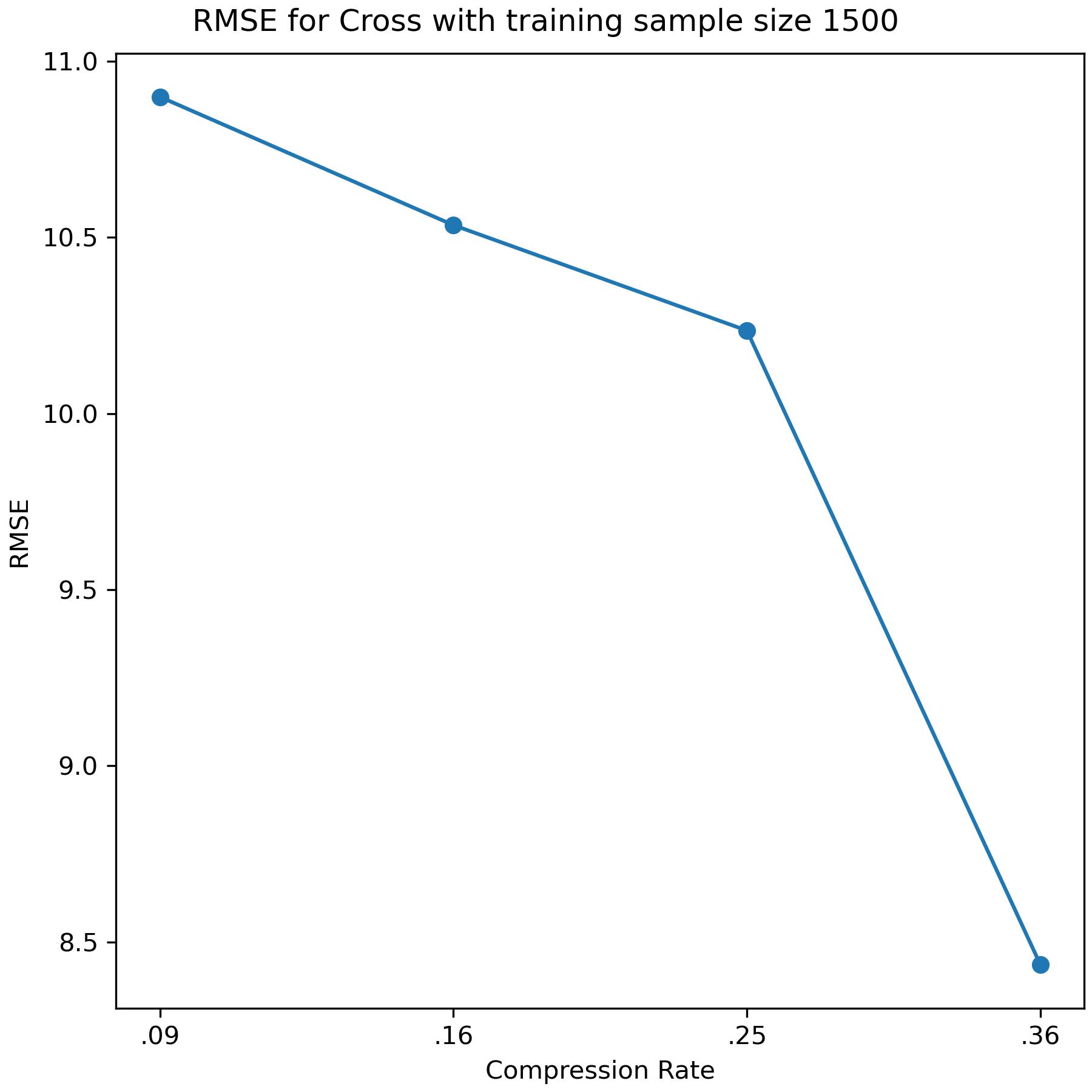} & \includegraphics[width=.23\textwidth]{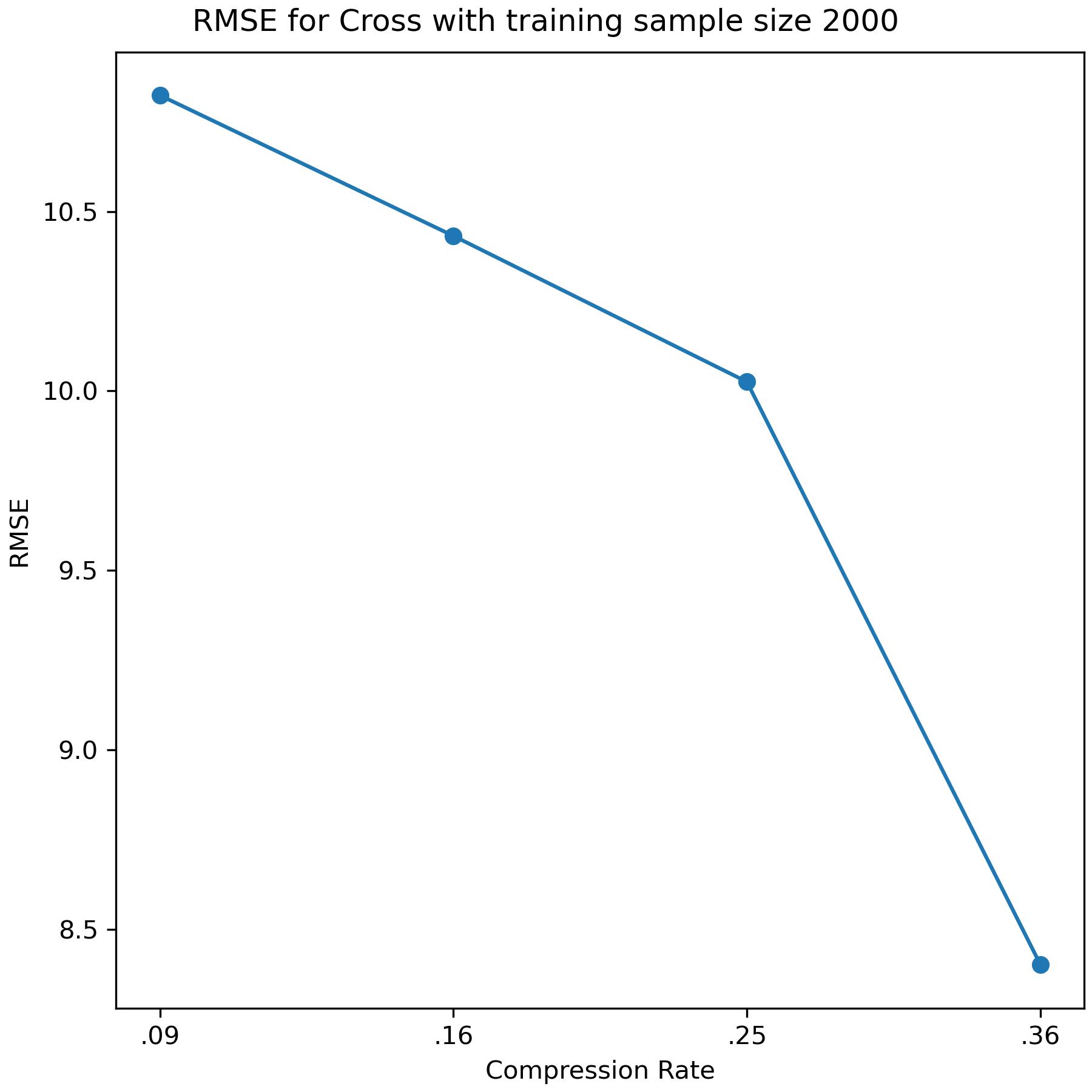}
    \end{tabular}
    \caption{Prediction performances of different compression rates $r\in\{0.09, 0.16,0.25,$ $0.36\}$ using different training sample size $n\in\{500, 1000,1500,2000\}$ (rows). First row: scatter plots of actual data (horizontal axis) versus predicted data (vertical axis) with regression lines for different compression rates in different colors ($r=0.09$: blue, $r=0.16$: orange, $r=0.25$: green, and $r=0.36$: red). Second row: prediction RMSE (vertical axis) for different compression rates (horizontal axis).}
    \label{fig: comp_rate}
\end{figure}

\renewcommand{\thesection}{D}
\renewcommand{\theequation}{D.\arabic{equation}}
\renewcommand{\thefigure}{D.\arabic{figure}}
\renewcommand{\thetable}{D.\arabic{table}}
\setcounter{table}{0}
\setcounter{figure}{0}
\setcounter{equation}{0}

\section{Further empirical results}\label{app:RealRes}

\begin{figure}[ht!]
\vspace*{2pt}
    \centering
    \makebox[\textwidth][c]{ 
    \begin{tabular}{ccc}
          \raisebox{20mm}{BTR} & \includegraphics[width=.4\linewidth]{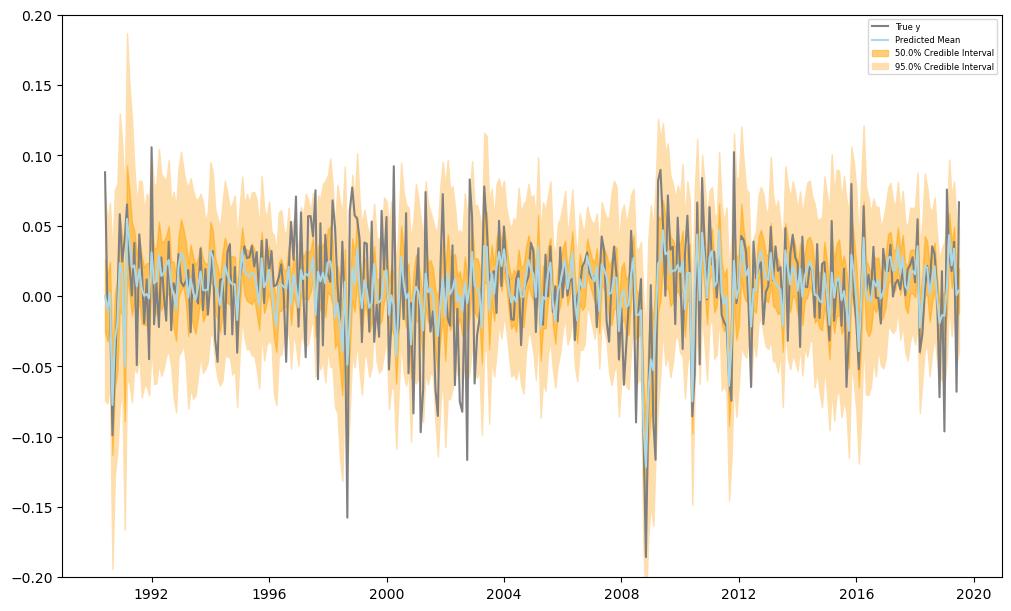} & \includegraphics[width=.4\linewidth]{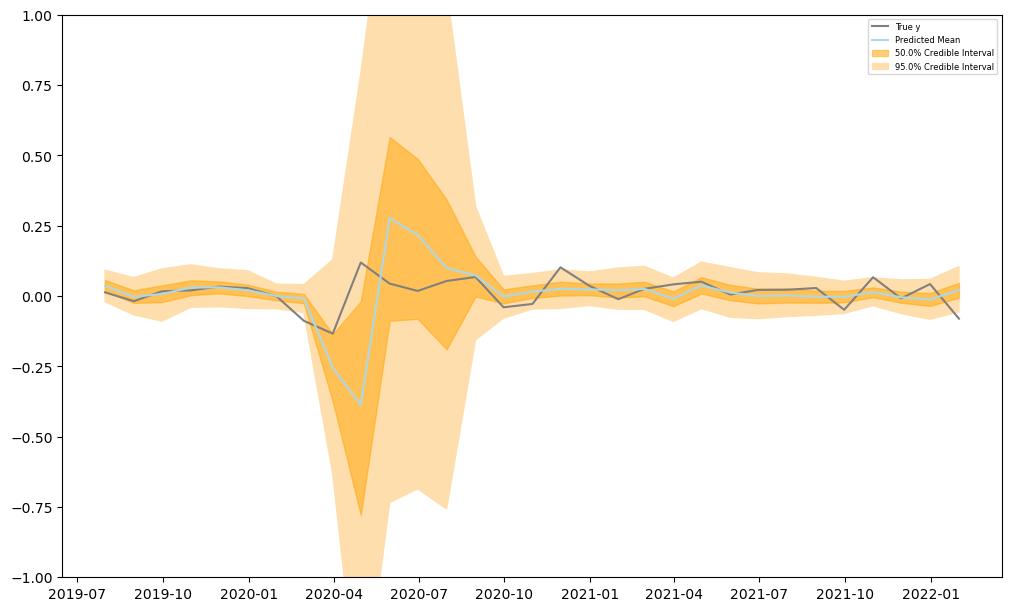} \\
          \raisebox{20mm}{TW$(0)$} & \includegraphics[width=.4\linewidth]{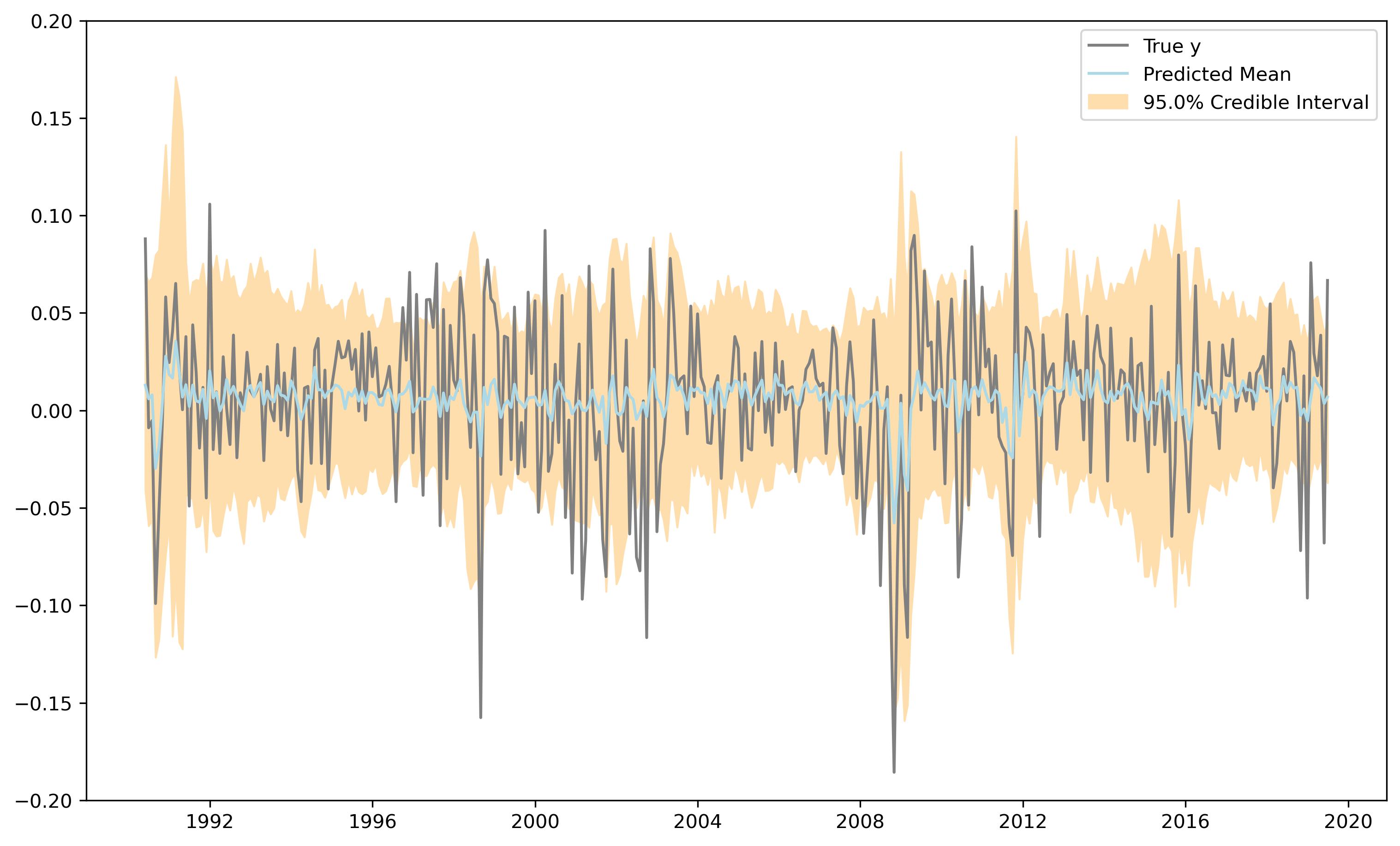}& \includegraphics[width=.4\linewidth]{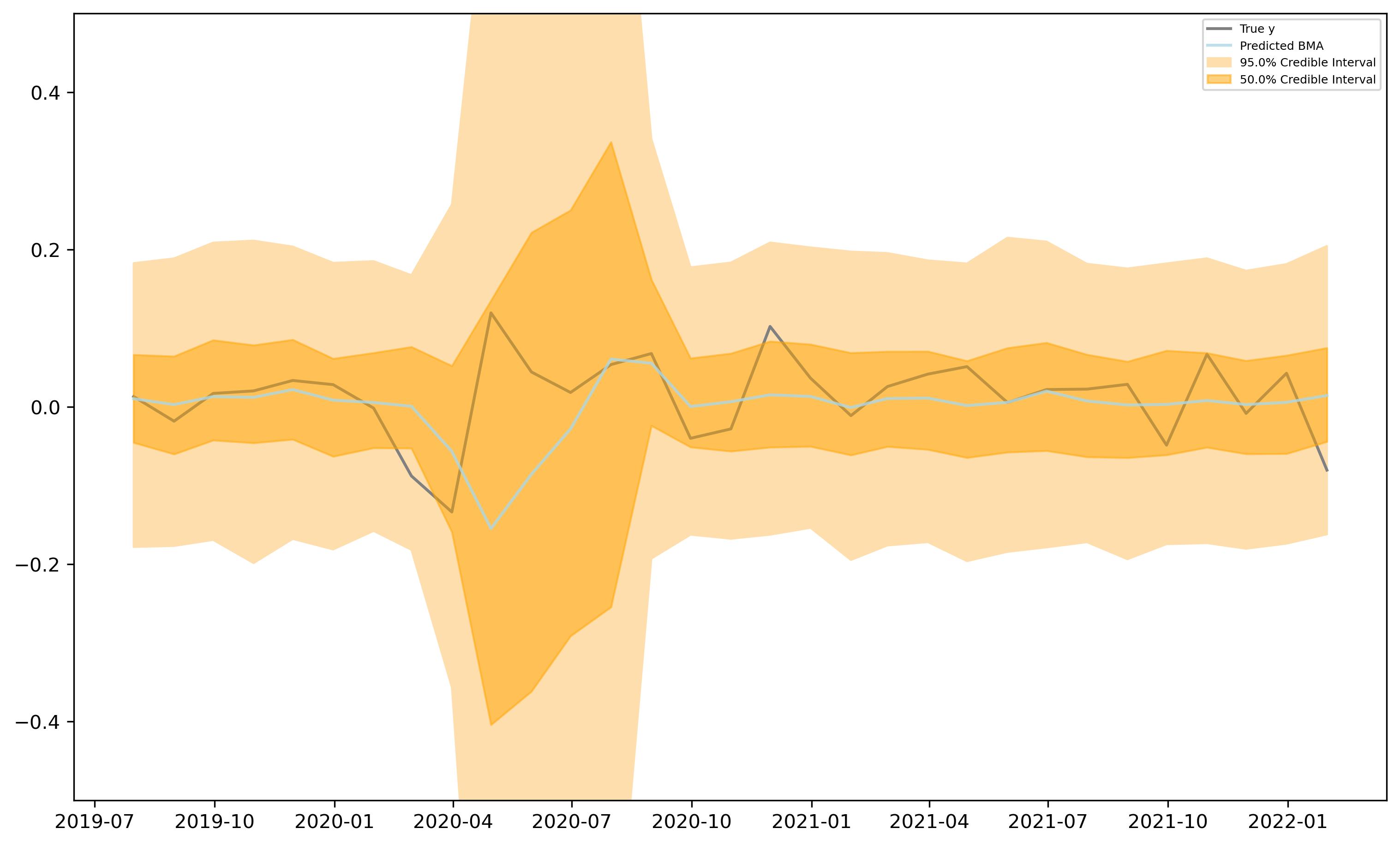}\\
          \raisebox{20mm}{MW$(0)$} & \includegraphics[width=.4\linewidth]{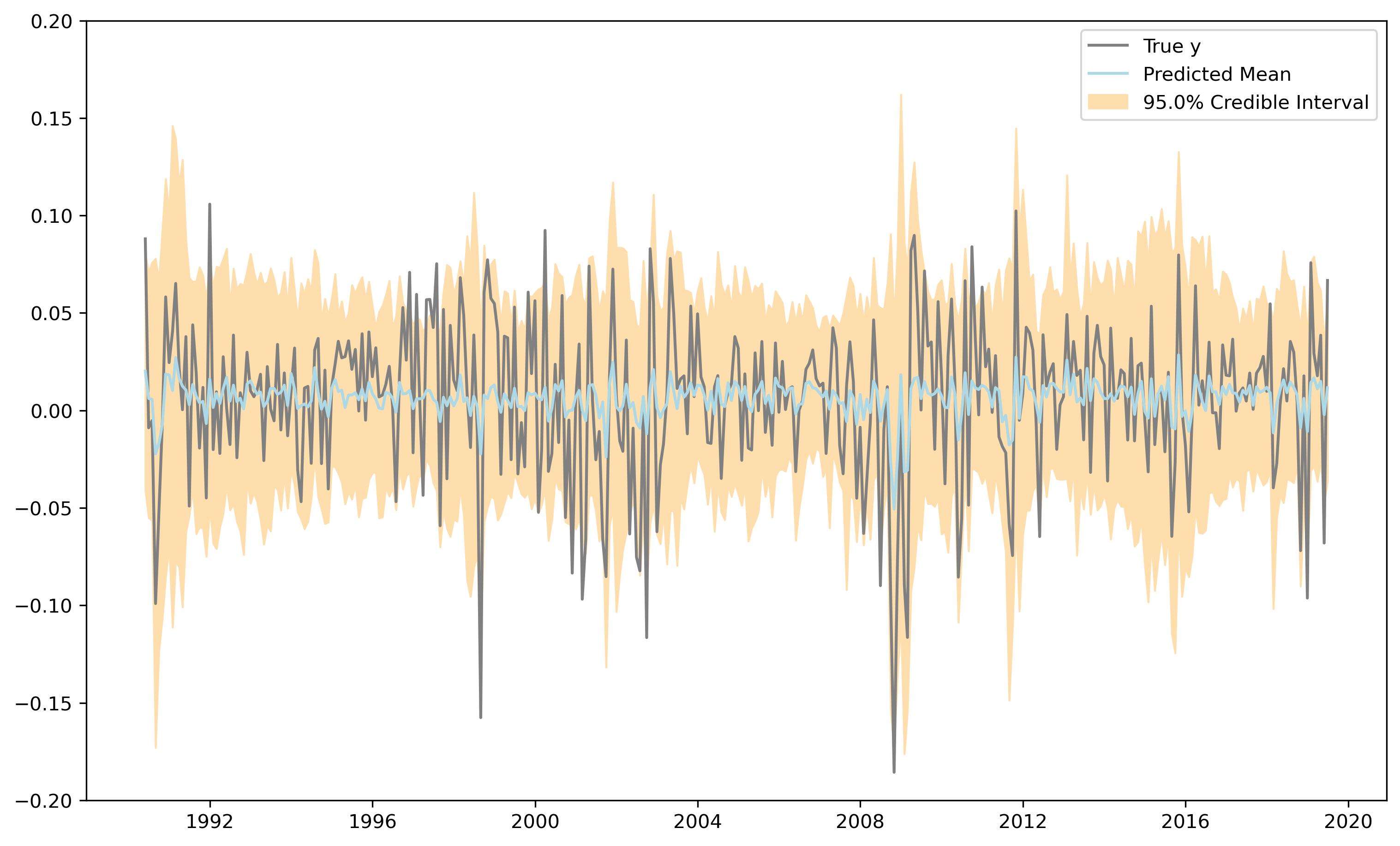} & \includegraphics[width=.4\linewidth]{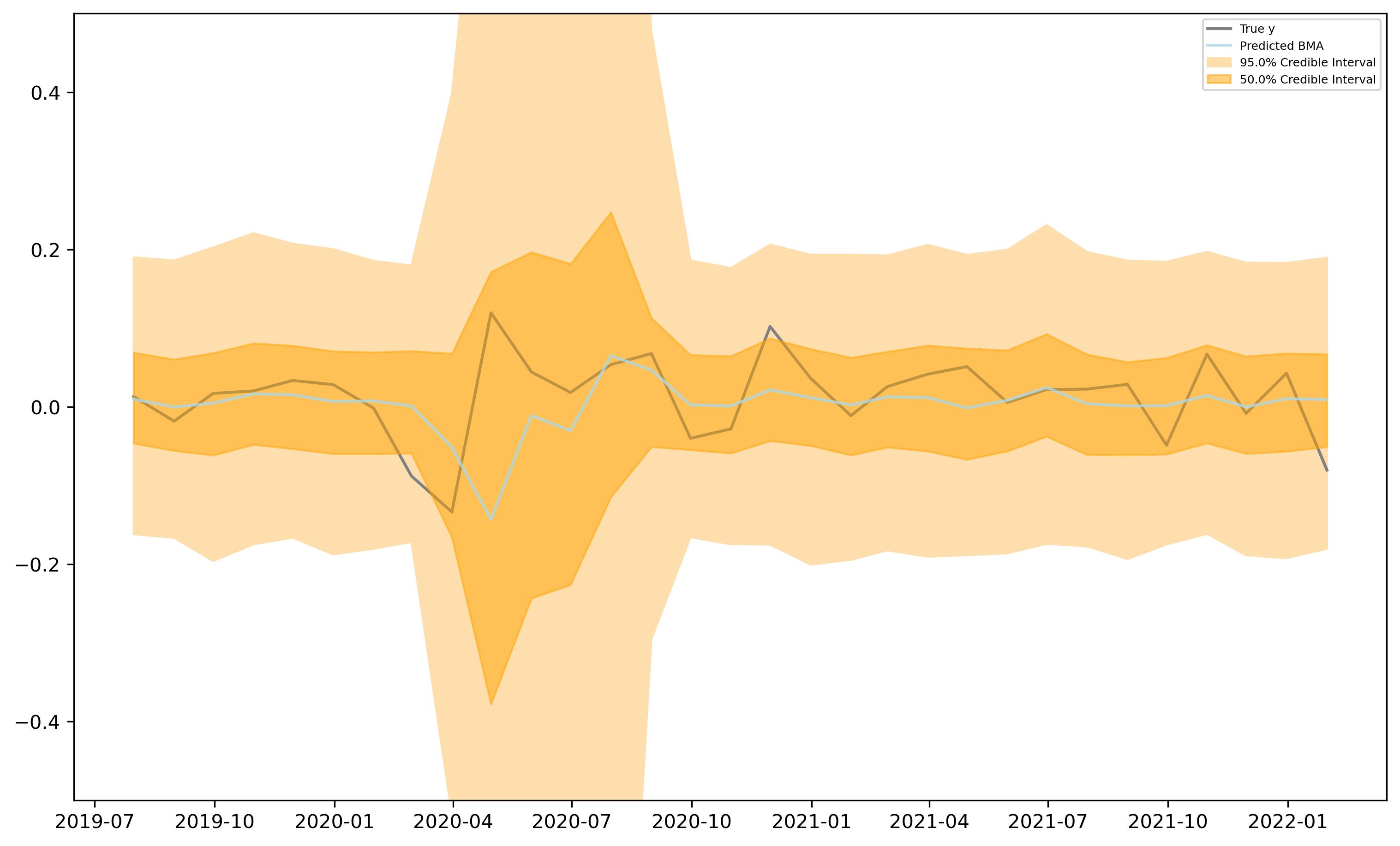}\\
          \raisebox{20mm}{MW$(1)$} & \includegraphics[width=.4\linewidth]{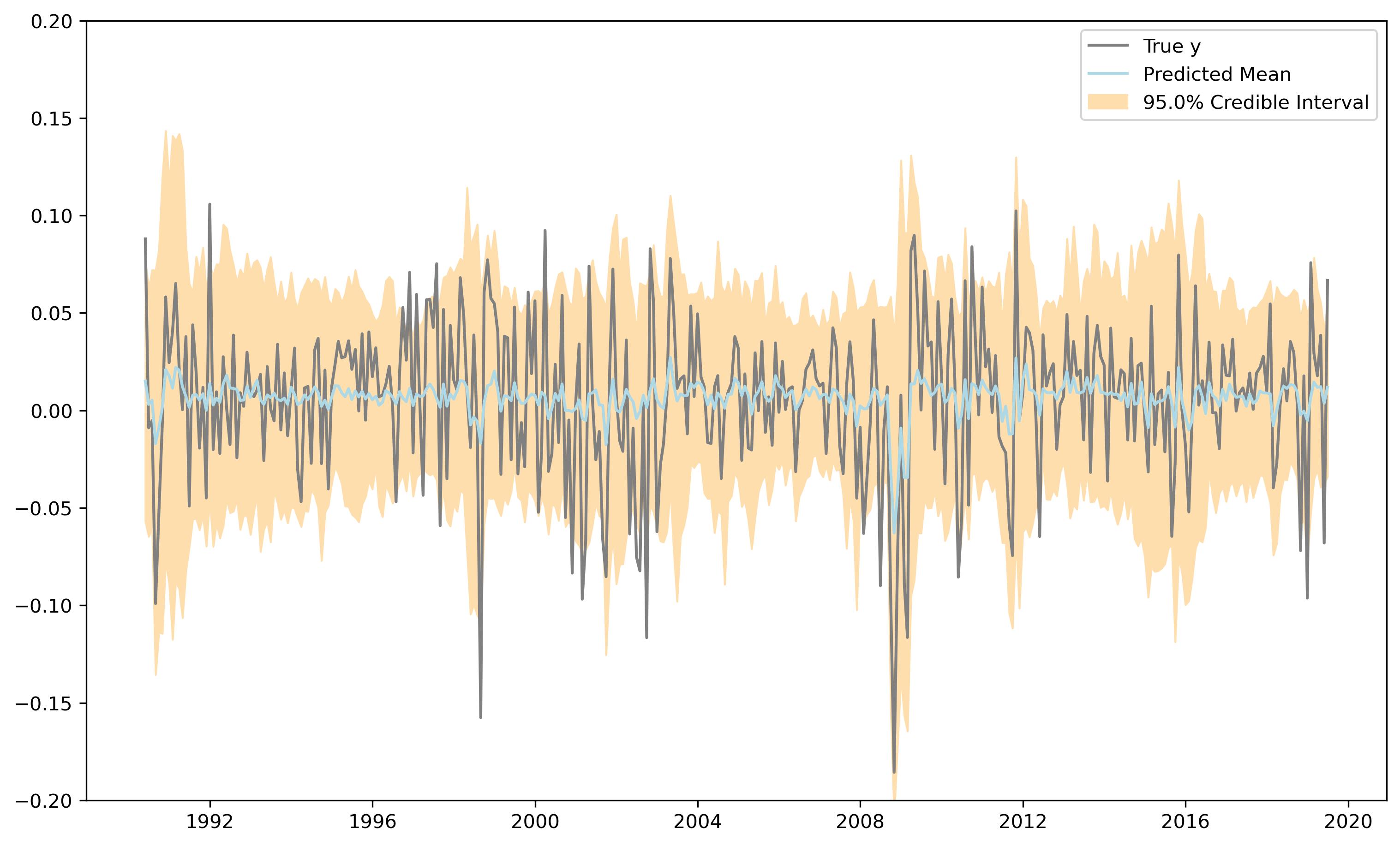} & \includegraphics[width=.4\linewidth]{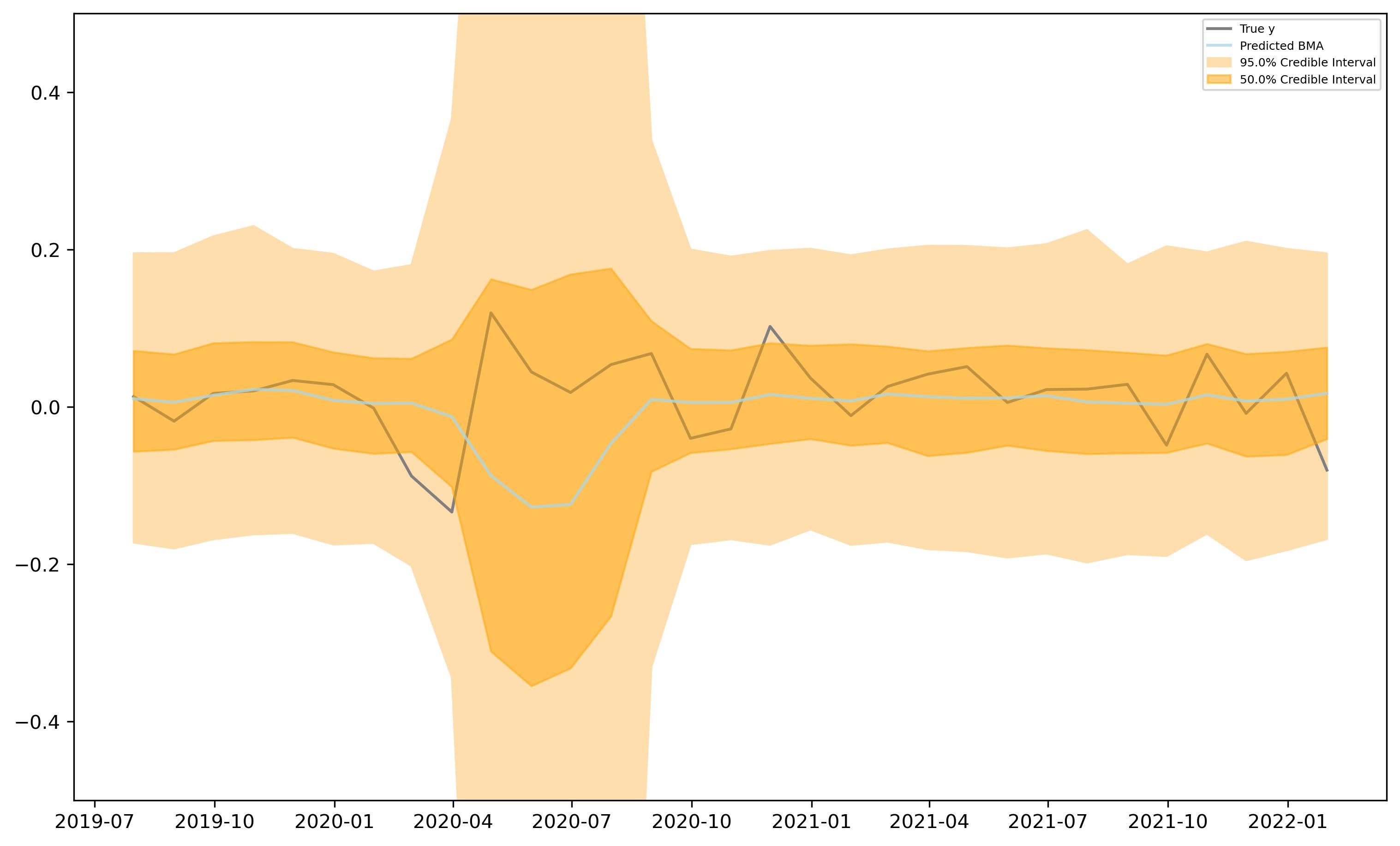} \\
           \raisebox{20mm}{MW$(1,2)$} & \includegraphics[width=.4\linewidth]{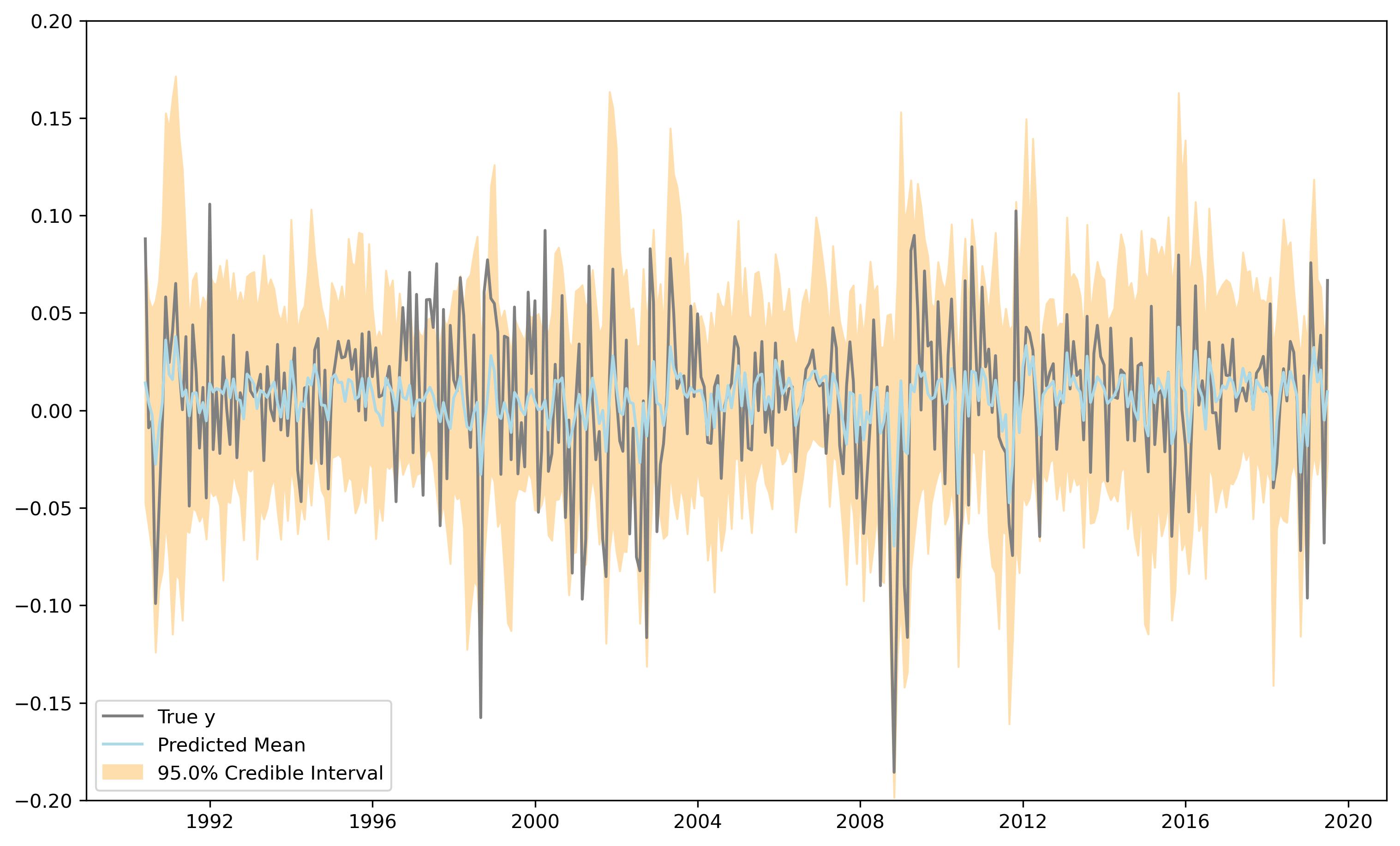} & \includegraphics[width=.4\linewidth]{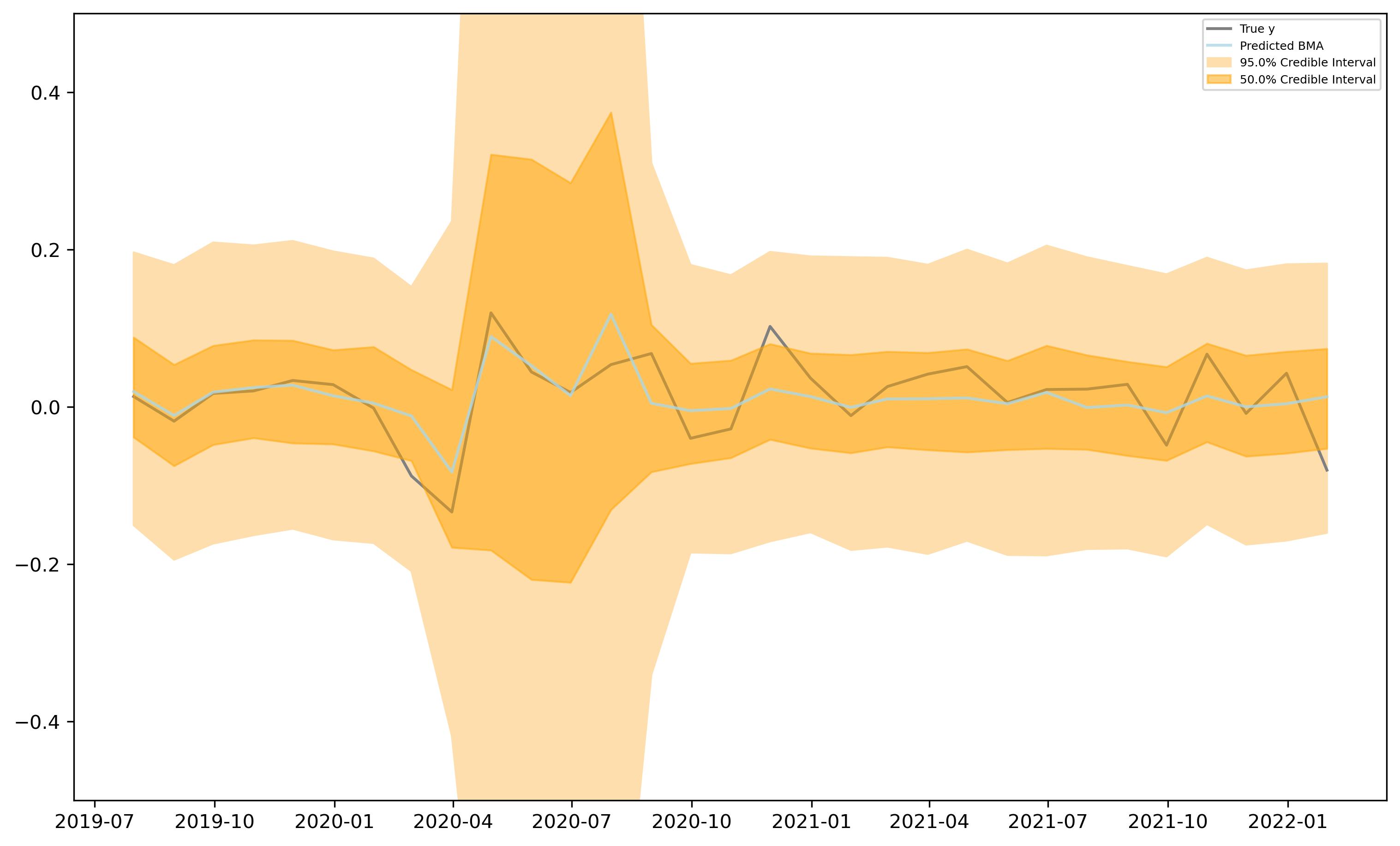}
    \end{tabular}
    }
    \caption{Fitting comparison between BTR and CBTR with different random projection methods. First column: in-sample fitting. Second column: out-of-sample prediction. Actual data are shown in gray solid line, predicted values are shown in blue solid line, light and dark orange colors represent $95\%$ and $50\%$ credible intervals, respectively.}
    \label{fig: pred_comp}
\end{figure}

The left axis of Figure \ref{fig:comp_time} presents the computational time on a logarithmic scale. As the compression rate increases, the computational time also increases because more predictors are retained after compression. Nevertheless, compared with the BTR models using LASSO and Gaussian priors, the CBTR models remain approximately two orders of magnitude faster.

To evaluate the trade-off between predictive performance and computational cost, we report the efficiency score defined as
$\text{Efficiency Score}=\frac{1}{\text{RMSE}\times \text{Cost}}$,
where ``Cost’’ denotes the computational time measured in hours. A higher efficiency score indicates better predictive performance per unit of computational cost.

The black dashed line with red markers in Figure \ref{fig:comp_time} displays the efficiency scores across the competing models. The CBTR models achieve substantially higher efficiency scores than the BTR models with LASSO and Gaussian priors, highlighting their favorable balance between predictive accuracy and computational efficiency. As expected, the efficiency scores decrease as the compression rate increases, reflecting the higher computational burden associated with retaining more predictors after compression.

\begin{figure}[h!]
\vspace*{2pt}
    \centering
    \includegraphics[width=0.6\linewidth]{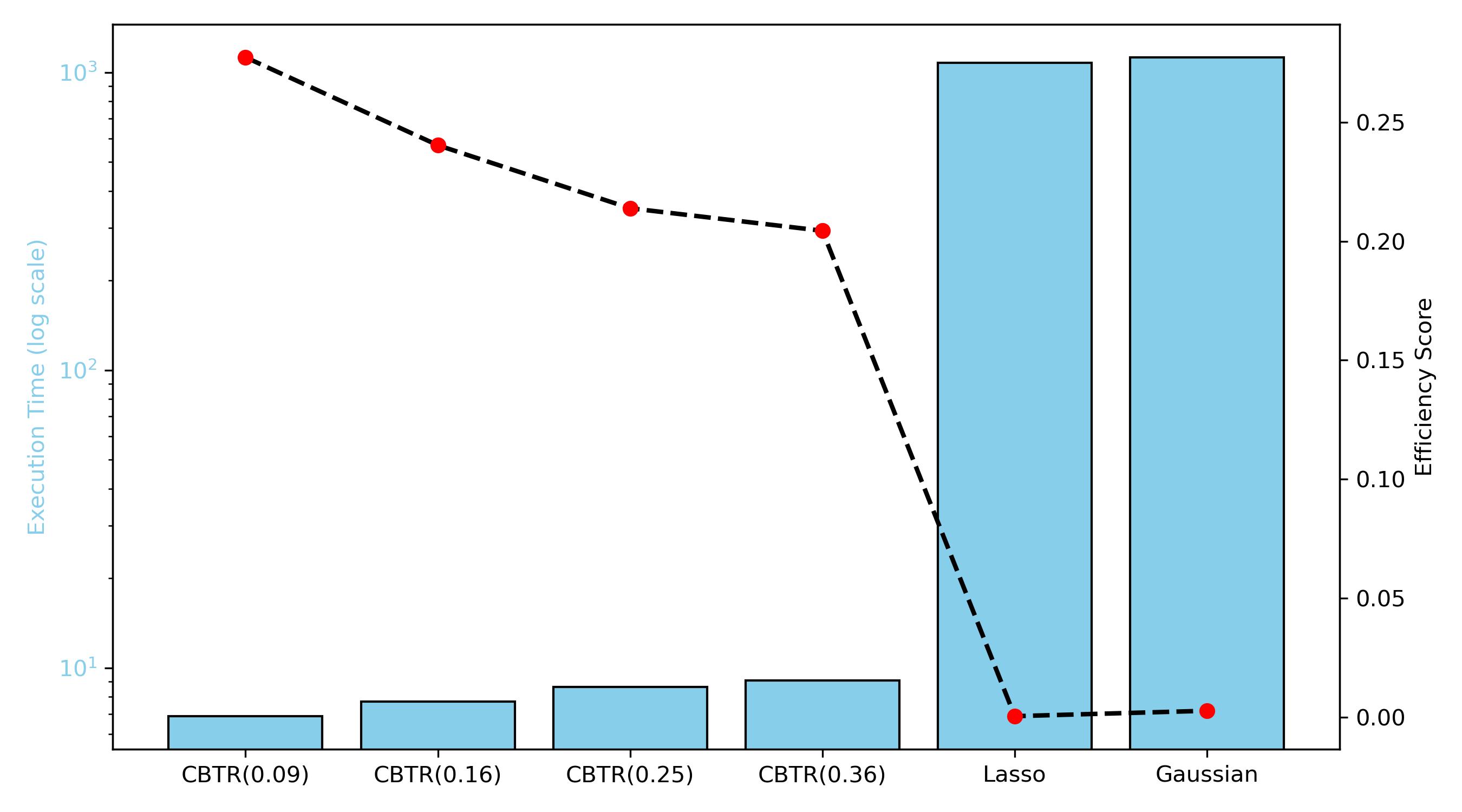}
    \caption{Total computational cost in a $\log$ scale (blue bars, left vertical axis) and efficiency scores (red dots, right vertical axis) for the Compressed Bayesian Tensor Regression with different compression rates $r\in\{0.09, 0.16,0.25,0.36\}$ (CBTR($r$)), the Bayesian Lasso regression (Lasso) and the Gaussian regression (Gaussian).}
    \label{fig:comp_time}
\end{figure}

\end{appendices}

\clearpage
\bibliography{bibliography}

\end{document}